\documentclass[aps,prx,twocolumn,superscriptaddress]{revtex4-2}
\pdfoutput=1

\usepackage[page,toc,titletoc,title]{appendix}

\usepackage{footmisc}
\usepackage{subcaption}
\usepackage{hyperref}
\usepackage{mathtools}
\usepackage{amsmath,amssymb,graphicx,xcolor,braket,makeidx,comment,amsthm}

\usepackage{calc}
\newlength\myheight
\newlength\mydepth
\settototalheight\myheight{Xygp}
\settodepth\mydepth{Xygp}
\setlength\fboxsep{0pt}

\usepackage{cancel}
\usepackage{xfrac}
\usepackage{lipsum} 

\theoremstyle{theorem}

\newtheorem{theorem}{Theorem}
\newtheorem{prop}{Proposition}
\newtheorem{corollary}[prop]{Corollary}
\newtheorem{lemma}[prop]{Lemma}
\newtheorem{definition}[prop]{Definition}
\theoremstyle{remark}
\newtheorem{remark}[prop]{Remark}
\newtheorem{conj}[prop]{Conjecture}

\usepackage{dsfont}
\usepackage{bbm} 
\usepackage[normalem]{ulem} 
\usepackage{color} 
\usepackage{braket} 
\usepackage{float}  



\newcommand{\ketbra}[2]{\ket{#1}\!\!\bra{#2}}
\newcommand{\proj}[1]{\ketbra{#1}{#1}}
\newcommand{\tr}{\textup{tr}}
\newcommand{\be}{\begin{equation}}
	\newcommand{\ee}{\end{equation}}
\newcommand{\nn}{{\mathbbm{N}}}
\newcommand{\rr}{{\mathbbm{R}}}
\newcommand{\cc}{{\mathbbm{C}}}

\newcommand{\me}{\mathrm{e}}
\newcommand{\mi}{\mathrm{i}}
\newcommand{\id}{{\mathbbm{1}}} 
\newcommand{\idv}{{{I}}}
\newcommand{\bo}{\mathcal{O}}

\theoremstyle{definition}

\newcommand{\Mspace}{\vspace{0.15cm}} 

\def\ba#1\ea{\begin{align}#1\end{align}} 


\newcommand{\doc}{\text{paper}}  
\newcommand{\App}{\text{Appendix}}
\newcommand{\app}{\text{appendix}}
\newcommand{\Supp}{\text{Supplementary}}
\newcommand{\supp}{\text{supplementary}}




\newcommand\mpwS[1]{{\let\helpcmd\sout\parhelp#1\par\relax\relax} }
\long\def\parhelp#1\par#2\relax{%
	\helpcmd{#1}\ifx\relax#2\else\par\parhelp#2\relax\fi%
}




\theoremstyle{definition}

\newcommand{\Sy}{\textup{S}}
\newcommand{\cat}{\textup{Cat}}
\newcommand{\D}{\textup{D}}
\newcommand{\cl}{\textup{Cl}}
\newcommand{\G}{\textup{G}}
\newcommand{\A}{\textup{A}}
\newcommand{\B}{\textup{B}}
\newcommand{\C}{\textup{C}}

\newcommand{\tauT}{{\tilde\tau}}

\newcommand{\ACDlong}{\text{clock}}
\newcommand{\Exp}{\textup{Exp}}
\newcommand{\ccc}{c}

\newcommand{\NO}{\underset{\textup{N\!O}}{\xrightarrow{\hspace*{1cm}}}}

\newcommand{\TO}{\underset{\textup{T\!O}}{\xrightarrow{\hspace*{1cm}}}}

\newcommand{\vertiii}[1]{{\left\vert\kern-0.25ex\left\vert\kern-0.25ex\left\vert #1 
		\right\vert\kern-0.25ex\right\vert\kern-0.25ex\right\vert}}

\newcommand{\psidelta}{{\delta_\psi}}
\newcommand{\potdelta}{{\delta_V}}

\newcommand{\cP}{\mathcal{P}} 

\newcommand{\mpwno}[1]{#1}

\newcommand\mpwSo[1]{MW:{\let\helpcmd\sout\parhelp#1\par\relax\relax} }
\long\def\parhelp#1\par#2\relax{%
	\helpcmd{#1}\ifx\relax#2\else\par\parhelp#2\relax\fi%
}

\newcommand{\vep}{\varepsilon}

\newcommand{\cblack}{\color{black}}

\def\alphazero{{\alpha_{\min}}}
\def\alphamax{{\alpha_{\max}}}
\def\epzero{\ep_{0}}
\def\epres{\ep_{res}}
\def\epemb{\ep_{emb}}

\def\epsig{\ep_{H}} 
\def\epSycatG{{\varepsilon_{\Sy\cat\G}}} 
\def\epcl{{\varepsilon_{\cl}}}
\def\tauGibb{{\tau}_\G}
\def\dt{{\rm d\,}}
\def\fa{\tilde f_{\alpha}}
\def\hatf{\theta}
\def\bed{\begin{definition}}
\def\eed{\end{definition}}
\def\bel{\begin{lemma}}
\def\eel{\end{lemma}}
\def\bet{\begin{theorem}}
\def\eet{\end{theorem}}
\def\be{\begin{equation}}
\def\ee{\end{equation}}
\def\bei{\begin{itemize}}
\def\eei{\end{itemize}}
\def\ben{\begin{eqnarray}}
\def\een{\end{eqnarray}}
\def\bea{\begin{array}}
\def\eea{\end{array}}

\def\ot{\otimes}
\def\ep{\epsilon}

\def\<{\langle}
\def\>{\rangle}
\def\hcal{{\cal H}}


\begin{document}

\title{Autonomous quantum devices: When are they realisable without additional thermodynamic costs?}

\begin{abstract}
	The resource theory of quantum thermodynamics has been a very successful theory and has generated much follow up work in the community. It requires energy preserving unitary operations to be implemented over a system, bath, and catalyst as part of its paradigm. So far, such unitary operations have been considered a ``free'' resource in the theory. However, this is only an idealisation of a necessarily inexact process. Here, we include an additional auxiliary control system which can autonomously implement the unitary by turning  ``on/off'' an interaction. However, the control system will inevitably be degraded by the back-action caused by the implementation of the unitary. We derive conditions on the quality of the control device so that the laws of thermodynamics do not change; and prove | by utilising a good quantum clock | that the laws of quantum mechanics allow the back-reaction to be small enough so that these conditions are satisfiable. Our inclusion of non-idealised control into the resource framework also  raises interesting prospects, which were absent when considering idealised control. Among other things, the emergence of a 3rd law | without the need for the assumption of a light-cone.
	
	Our results and framework unify the field of autonomous thermal machines with the thermodynamic quantum resource theoretic one, and lay the groundwork for all quantum processing devices to be unified with fully autonomous machines.
\end{abstract}

\author{Mischa P. Woods}
\affiliation{Institute for Theoretical Physics, ETH Zurich, Switzerland}
\affiliation{University Grenoble Alpes, Inria, Grenoble, France}
\email{mischa.woods@gmail.com}
\author{Micha\l{} Horodecki}
\affiliation{International Centre for Theory of Quantum Technologies, University of Gdansk, Poland}
\email{michal.horodecki@ug.edu.pl}
\maketitle

\section{Introduction}

Thermodynamics has been tremendously successful in describing the world around us. It has also been at the heart of developing new technologies, such as heat engines which powered the industrial revolution, jet and space rocket propulsion | just to name a few. In more recent times, scientists have been developing a theoretical understanding of thermodynamics for tiny systems for which often quantum effects cannot be ignored. These ongoing developments are influential in optimising current quantum technologies, or understanding important physical processes. Take for example molecular machines or nano-machines such as molecular motors \cite{howard1997molecular}, which are important in biological processes \cite{molecularBio}, or distant technologies such as nanorobots \cite{nanobots}, where quantum effects on the control mechanism, and the back-reaction they incur, are likely to be significant due to their small size.\Mspace

\begin{figure}[ht!]
	\includegraphics[width=\linewidth]{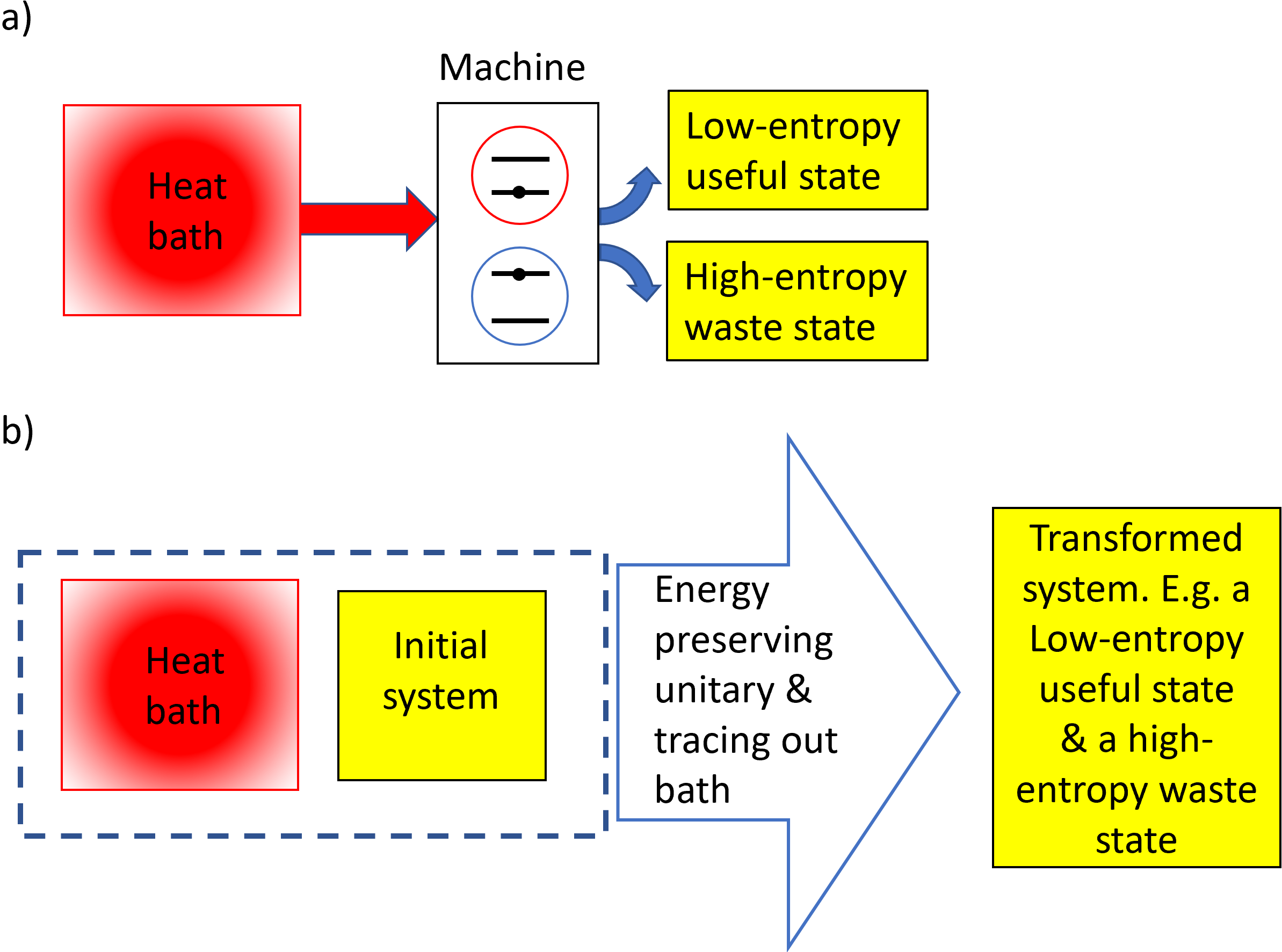}
	\caption{
	\textbf{Fully autonomous thermal machines vs. a type of non-autonomous cycle-based machines} \\{\bf a)} Depiction of a quantum thermal absorption machine: these devices do not need external control to operate (i.e. they are governed by time-independent Hamiltonians). Given enough time, they settle into a functioning steady state where heat from a heat bath is converted via a machine (composed of fine-tuned energy levels and couplings) into a low-entropy useful state (such as a charged battery) and a high-entropy ``waste'' state (such as a room temperature thermal state). See \cite{MitchisonReview,BENENTI20171} for reviews.
	{\bf b)}Schematic of a non-autonomous thermal machine.  In this resource-based framework~\cite{secondlaw}, an energy preserving unitary is performed over a heat bath and initial system state. The unitary is chosen so that the transformed system state is of high value (e.g. it could represent a charged battery). The control required to perform the energy-preserving unitary necessitates a time-dependent Hamiltonian and may not be thermodynamically cost-free. 
}\label{Fig:overview}
\end{figure}

The modern quantum thermodynamics literature tends to be about two types of processes: those which are fully autonomous (i.e. the processes described by time-independent Hamiltonians) and those which assume implicit external control at no extra cost (i.e. the processes described by  time-\emph{dependent} Hamiltonians). An example of processes described by a constant Hamiltonian  is the Brownian ratchet, popularised by Feynman \cite{brownianratchetfeynmann}, which simply sits between two thermal baths and extracts work in situ. There are many autonomous quantum thermal machines built on similar principles \cite{MitchisonReview,linden2010small,GAK-2013,brask2015autonomous,Mitchison_2015,Mitchison_2016,ExperimentAbsorptionFridge,AutoMachineTimeArrow2018,MarcusPauli,PhysRevX_Raam,Hewgill2019,AutonomousFridgeOnOff}. 
However, there are a number of processes, such as quantum Carnot cycles, that are described by time-dependent Hamiltonians  and thus require external control. This is true both in theory \cite{Alicki-1979,gelbwaser,Geusic1967quatum,MischaQHE,Ying_Ng_2017,PhysRevA.101.042116} and in 
experiment
\cite{SingleIon}. See Fig. \ref{Fig:overview} for a comparison of autonomous and non-autonomous processes.

The non-autonomous engines of the kind depicted in Fig. \ref{Fig:overview} require an external agent that
makes the changes. 
This does not happen in the engines used in our daily life. E.g. car engines do not require any external control | the passage via different strokes during the cycle is caused by suitable feedback mechanisms. 
An example of a thermal machine that requires switching between the strokes by an external agent  is the quantum heat engine of \cite{SingleIon}, where
alternating coupling to the hot and the cold bath is implemented by switching between two lasers | one producing thermal light of high temperature while the other one producing light at low temperature.

In this context the following problem appears. While the non-autonomous machines involve additional systems responsible for making the changes, those additional systems are by definition not considered explicitly. For  microscopic engines, 
such systems might actually be a place where a significant amount of entropy and/or energy is being deposited.
Such entropy production is actually likely to occur in microscopic regimes, due to the quantum back-reaction occurring between the controlling unit and the controlled system. There may be thus hidden thermodynamic costs, which are not accounted for.

Hence the following question can be posed: given a non-autonomous thermal machine, is it possible to provide an explicit control scheme, such that the overall (now autonomous) machine will exhibit no additional cost? 

This question is especially relevant in the context of the recently developed resource theory of thermodynamics \cite{janzing2000thermodynamic}, where any process is supposed to arise from the concatenation of basic operations which are energy preserving unitary transformations over a microscopic system of interest and a thermal bath. Thus we deal here with external control, represented by a time-dependent Hamiltonian 
that implements the subsequent 
unitaries. 
In such a microscopic regime, the hidden costs acquired by the control system may be indeed high, 
as is indicated by the phenomenon of so-called \emph{embezzling} \cite{Patrick_Embezzelment,Ng_2015} (see  Sec. \ref{Idealised control}). 

The problem of the cost of making the resource theoretic thermal machines autonomous 
was considered in \cite{Malabarba}.
 The control device was implemented by means of an idealized momentum clock.
Actually, any conceivable control system 
that enables one to 
go from a time-dependent Hamiltonian description to a time-independent one must  involve a clock as part of the control unit. I.e. a device for which the change  in  its state, due to time evolution, allows one to predict time. 

E.g. in a car engine the role of the clock is played by periodic motion of the piston  (arising via  so-called self-oscillation \cite{Jenkins-2013-self}); or in the already mentioned single-ion heat engine of \cite{SingleIon} the timing involved in the changing of the lasers is ultimately due to an external electronic device,  
which is a kind of clock. 


Unfortunately the clock used in \cite{Malabarba}  requires infinite energy.  It was first noted by Pauli that such clocks are unphysical \cite{pauli1958allgemeinen} and we will provide more weight to 
 Pauli's argument in this \doc.  
\Mspace

\begin{figure}[ht!]
	\includegraphics[width=0.7\linewidth]{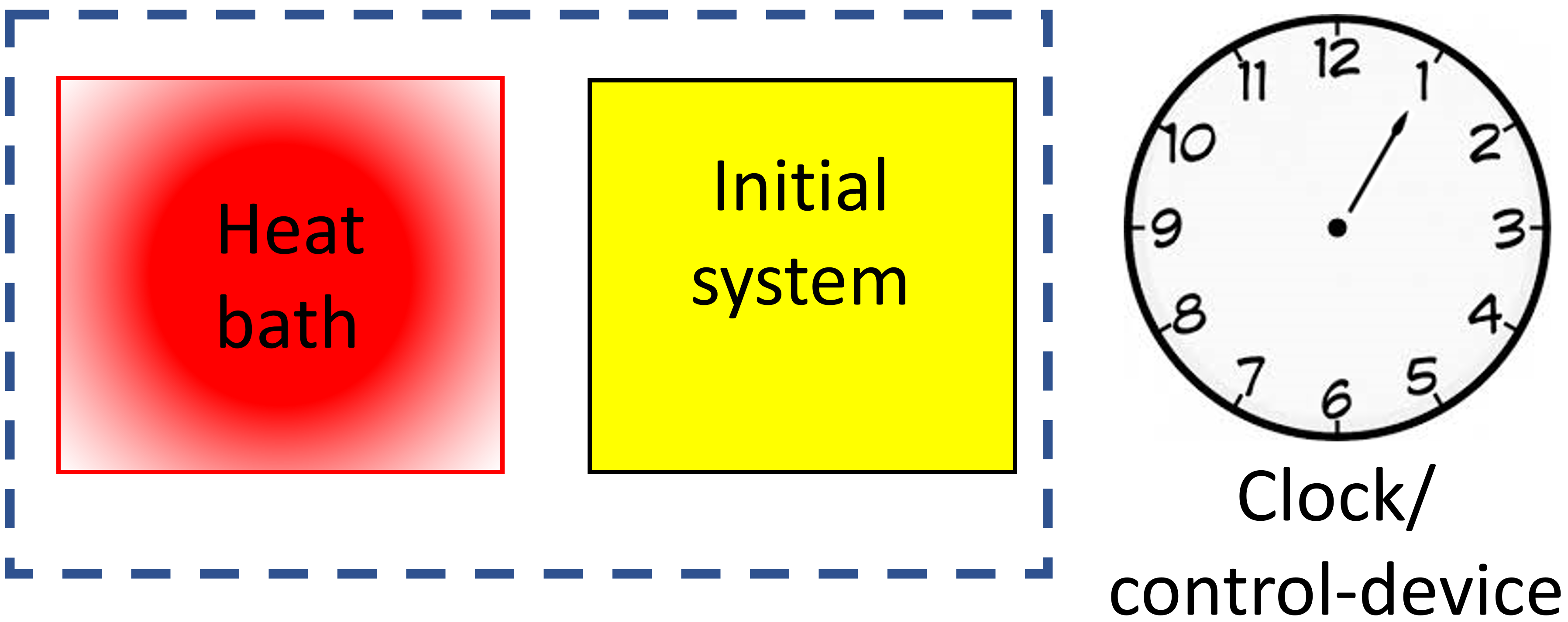}
	\caption{\textbf{Schematic of the autonomous quantum devices we focus on: A non-autonomous thermal machine complemented by a quantum clock.} The system in the dotted-line box is the same as in Fig. \ref{Fig:overview} b). It depicts the standard systems involved in the resource-theoretic approach to thermodynamics. If no other systems are involved, its dynamics are described by a time-dependent Hamiltonian. If one includes an additional quantum system whose state changes in a predictable fashion with the passing of time | i.e. a clock | then it can turn on and off interaction terms at specific times, leading to an autonomous implementation of the resource-theoretic approach to thermodynamics. Hence we use a clock as a control device.}\label{Fig:setup}
\end{figure}

In this paper the question of whether one can make the resource theoretic thermal machines autonomous, without incurring an extra thermodynamic cost
is reconsidered, and positively answered.


Namely, we start with non-autonomous scenario, where an external agent performs energy preserving
unitary on system plus bath. 
We will then examine the clock which turns on and off the interactions implementing the unitaries 
and derive conditions so that the change in the clock's state due to the back-reaction on it has a vanishingly small thermodynamic cost. We will then show that clocks exist which satisfy our criterion. In particular, we will find a family of clocks with different dimensions, for which there is no change in energy while the difference in entropy relative to before and after the unitary has been performed will be vanishingly small as the clock increases in size.
Importantly, since our clocks use finite energy, they avoid the issues of the clock of Ref. \cite{Malabarba}.
Our work thus demonstrates that the control needed to implement thermodynamic transformations in the resource-theoretic paradigm 
can indeed be neglected under certain achievable circumstances. 
In this way we show that non-autonomous resource theoretic thermal machines can be recast into autonomous ones, without additional cost. 


As a byproduct, our necessary conditions for the change in the clock to not have a significant additional thermodynamic cost 
reveal the emergence of a $3^\text{rd}$ law: if the clock implements the unitary too quickly (relative to the free dynamics of the system and clock), it will suffer a large back-reaction and \emph{will} represent a significant additional thermodynamic cost in addition to failing to implement correctly the required unitary.  The minimum time interval in which the unitary can be implemented without the clock suffering significant back-reaction is limited by the dimension of the clock. This demonstrates the emergence of a $3^\text{rd}$ law without the need to impose a light-cone or locality condition on how the unitary is implemented \cite{3rdLawLluis}. 
\Mspace

The rest of this \doc{} is divided into five main sections: Setting \ref{sec:Setting}, Results \ref{sec:Results}, Discussion \ref{sec:Discussion}, and Conclusions \ref{sec:Conclusions}. In Setting, we start by describing the thermodynamic transformations under the convention of idealised control.  This is summarized in definition  \ref{def:t-CTO}. 
Then in subsection \ref{Idealised control} 
we describe how to explicitly implement the control via  time-independent dynamics on the system of interest and an additional system called ``clock''. 
Finally, before moving to the Results section, we show why the cost of control can be counter-intuitive by showing how it is related to the established phenomenon of catalytic embezzlement and how idealised control requires infinite energy
 (see Prop. \ref{prop:idealised control}).  Our results discussed in Section \ref{sec:Results} start with the simplest case possible: the control of  so-called  noisy operations, in which baths are a source of entropy but not heat.  The result is quantified in Theorems \ref{thm:noemb physical} and \ref{Thm:Implementation with Quasi-Idela clock}.  The core of the Theorem \ref{thm:noemb physical} is what we can call ``no-embezzling conditions''. 
Namely, for the first time, we give a lower bound for the value of error on the catalyst that does not cause deviation 
from the $2^\text{nd}$ laws, i.e. from the limitations for transitions via noisy operations at zero error on the catalyst. 
We then move on to consider the full paradigm of control of thermodynamic operations in which the baths are a source of entropy and heat  | the so-called thermal operations.  This case is summarized in Theorems \ref{thm:noemb physical t CTO} and \ref{thm:4}.  In both cases  (i.e. noisy and thermal operations) we allow for catalysts and 
provide conditions under which the cost of control is neglectable. The situation is more nuanced in the case of thermal operations and has unforeseen consequences which we discuss.  Finally, in the last two sections, (Discussion \ref{sec:Discussion} and Conclusions \ref{sec:Conclusions}), we discuss in more detail the implications of our work followed by a summary.

The proofs of our results are given in the \App{} (section \ref{sec:appendix}). Additional technical details required for the proofs are relegated to the \Supp.

\section{Setting}\label{sec:Setting}

\subsection{Types of Thermodynamic Transformations}\label{sec:CTO def}
\subsubsection{Background: Thermal Operations and variants}
Resource theories have been applied to the study of quantum thermodynamics. In this setting, one considers transformations from a state $\rho_\A^0$ to $\rho_\A^1$ for which there exists a unitary $U_{\A\G}$ over system $\A$ and a Gibbs state $\tau_\G$ such that $\rho_\A^1= \tr_\G[U_{\A\G}\, (\rho_\A^0 \otimes \tau_\G)\, U_{\A\G}^\dag]$. This setup is entropy preserving since it is a unitary transformation. In order to call it a thermal operation (TO), we further require the process to be energy preserving, namely $[U_{\A\G}, \hat H_\A+ \hat H_\G]=0$, where $\hat H_\A$ is the local Hamiltonian of the $\A$ system and $\hat H_\G$ that of the thermal bath.\footnote{We often omit tensor products with the identity when adding operators on different spaces, e.g. $ \hat H_\A+ \hat H_\G\equiv \hat H_\A\otimes\id_\G+ \id_\A\otimes\hat H_\G$.} These operations can be extended to the strictly larger class of catalytic TOs (CTOs) by considering additional ``free'' objects called catalysts $\rho_\cat^0$. In this case the $\A$ system is bipartite with the requirement that the catalyst is returned to its initial state after the transformation;  $\rho_\Sy^1\otimes \rho_\cat^0 = \tr_\G[U_{\Sy\cat\G}\,( \rho_\Sy^0\otimes \rho_\cat^0 \otimes \tau_\G)\, U_{\Sy\cat\G}^\dag]$, with a Hamiltonian $\hat H_\A$ of the form $\hat H_\Sy+\hat H_\cat$. The bath provides a source of entropy and heat. In the special case in which its Hamiltonian is completely degenerate, its Gibbs state $\tau_\G$ becomes the maximally mixed state $\tau_\G\propto\id_\G$ and the bath can now only provide entropy. These are known as catalytic noisy operations (CNOs), or simply noisy operations (NOs) when there is no catalyst involved \cite{PhysRevA.67.062104,QmapsDuality}.
It is known that CNOs allow for transitions that are not possible by NOs \cite{Jonathan_1999,gour2015resource}.
\Mspace

In these frameworks, the operations (NOs, CNOs, TOs, CTOs) are considered to be \emph{free} from the resource perspective, since they preserve entropy and energy over system $\A$ and the bath $\G$ | the two resources in thermodynamics. However, note that there is the assumption that the external control (i.e. the ability to apply energy preserving unitaries over the setup) is ``perfect''. In order to challenge this perspective, we will now introduce an auxiliary system to represent explicitly the system which implements the external control, while aiming to show to what extent it can be free, from the resource theory perspective.\Mspace

\subsubsection{t-Catalytic Thermal Operations}

If the control system is a thermodynamically free resource, its final state after the transition must be as useful as the state it would have been in had it not implemented the unitary, and instead evolved unitarily according to its free Hamiltonian. One way to realise this within the resource theoretic paradigm, is to choose a control device whose free evolution is periodic and let the time taken to apply the unitary be an integer multiple of its period. In this scenario the control device fits nicely within the resource theory framework, since when viewed at integer multiples of the period, the control device is a catalyst according to CTOs.

The downside with this approach is that the times corresponding to multiples of the period are a measure zero of all possible times. Consequently, not only would one need an idealised clock which can tell the time with zero uncertainty to discern these particular times, but one would like to be able to say whether the transition was thermodynamically allowed during proper intervals of time. Fortunately, there is a simple generalisation of CTOs,\footnote{Note that this generalisation also generalises NOs, CNOs and CTOs by allowing for the inclusion of a catalyst in the initial and final state of the transition and or specialising to the case of a maximally mixed Gibbs state.} which naturally resolves this issue. We introduce t-CTO which take into account that the transition is not instantaneous, but moreover occurs over a finite time interval. In the following definition, one should think of the catalyst system as playing the role of the external control device. 


\begin{definition}[t-CTO and t-CNO]\label{def:t-CTO}
	A transition from $\rho_\Sy^0(t_1)$ to $\rho_\Sy^1(t_2)$ with $t_1\leq t_2$ is possible under t-CTO iff there exists a finite dimensional quantum state $\rho_\cat$ with Hamiltonian $\hat H_\cat$ such that
	\begin{equation}
	\rho_\Sy^0(0)\otimes \rho_\cat^0(0) \TO \bar\sigma_\Sy(t) \otimes \rho_\cat^0(t),
	\end{equation}
	where
	\be
	\bar\sigma_\Sy(t)=\begin{cases}
		\rho_\Sy^0(t) &\mbox{ if } t\in[0,t_1]\\
		\rho_\Sy^1(t) &\mbox{ if } t\in[t_2,t_3]\label{eq:sigma bar def}
	\end{cases}
	\ee
	 $\rho_\D^n(t):=\me^{-\mi t \hat H_\D} \rho_\D^n\, \me^{\mi t \hat H_\D}$, $\D\in\{\Sy,\cat\}, n\in\{0,1\}$, and $t_1$ is called ``the time when the TO began'' while $t_2$ ``the time at which the TO was finalised''. $[0,t_1)$ and $(t_2,t_3]$ are both proper intervals called ``the time before the TO began'' and ``the time after the TO was finalised'' respectiverly. In the special cases where the bath can only be maximally mixed, $\tau_\G\propto\id_\G$, it will be denoted $\tauT_\G$ and we will call the transition a t-CNO.
\end{definition}
Unless stated otherwise, we will always use the notation $\rho_\D^n(t)$, $n\in\{0,1\}$, to denote the free evolution of a normalised quantum state $\rho_\D^n$ on some Hilbert space $\mathcal{H}_\D$ according to its free Hamiltonian $\hat H_\D$; namely $\rho_\D^n(t)=\me^{-\mi t \hat H_\D} \rho_\D^n\, \me^{\mi t \hat H_\D}$. \Mspace

Definition \ref{def:t-CTO} captures two notions. On the one hand that the individual subsystems are effectively non interacting before and after the transition has taken place. On the other hand, that during the time interval $(t_1, t_2)$, in which the transition occurs, arbitrarily strong interactions could be realised. Note that there are two special cases for which t-CTOs reduce to CTOs at times $t_1,t_2$ | when the Hamiltonian of the catalyst is trivial, (i.e. if $\hat H_\cat\propto \id_\cat$), and when the catalyst is periodic with $t_1, t_2$ integer multiples of its period $T_0$ (i.e. if $\rho_\cat^0( t_1)=\rho_\cat^0( t_2)= \rho_\cat^0(T_0)$\,).\Mspace

From the resource theoretic perspective, the characterisation of t-CTOs is the same as CTOs as the following proposition shows.
\begin{prop}[t-CTO \& CTO operational equivalence]\label{prop:equiv of t-CNO and CNO} 
	A t-CTO from $\rho_\Sy^0(t_1)$ to $\rho_\Sy^1(t_2)$ using a catalyst $\rho_\cat^0(0)$, exists iff a CTO from $\rho_\Sy^0$ to $\rho_\Sy^1$ exists using catalyst $\rho_\cat^0(0)$. In other words 
	\begin{equation}\label{eq:t-CNO equiv CNO 1}
	\rho_\Sy^0(0)\otimes \rho_\cat^0(0) \TO \bar\sigma_\Sy(t) \otimes \rho_\cat^0(t), 
	\end{equation}
	where $\bar\sigma_\Sy(t)$ is defined in Eq. \eqref{eq:sigma bar def}, if and only if
	\begin{equation}\label{eq:t-CNO equiv CNO 2}
	\rho_\Sy^0\otimes \rho_\cat^0(0) \TO \rho_\Sy^1 \otimes \rho_\cat^0(0).
	\end{equation}
\end{prop}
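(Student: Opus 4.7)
I will prove both directions by explicitly constructing the unitary of one side from the unitary of the other. Two elementary facts drive the argument: (i) the Gibbs state satisfies $[\tau_\G,\hat H_\G]=0$, so it is invariant under the free bath evolution $\me^{\pm\mi s\hat H_\G}$; and (ii) any operator commuting with the total free Hamiltonian $\hat H:=\hat H_\Sy+\hat H_\cat+\hat H_\G$ still commutes with $\hat H$ after multiplication by $\me^{\pm\mi s\hat H}$ for any $s\in\rr$. Combined with the factorisation $\me^{-\mi s\hat H}=\me^{-\mi s\hat H_\Sy}\ot\me^{-\mi s\hat H_\cat}\ot\me^{-\mi s\hat H_\G}$ (the local terms commute), these let me shuttle free evolutions in and out of a TO without breaking energy conservation or altering the bath state.

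For the direction \eqref{eq:t-CNO equiv CNO 2}$\Rightarrow$\eqref{eq:t-CNO equiv CNO 1}, take the energy-preserving CTO unitary $U$ with $[U,\hat H]=0$ and define, for each $t\geq t_2$, $V_t:=\me^{-\mi t\hat H}\,U$; for $t\leq t_1$ set $V_t:=\me^{-\mi t\hat H}$. Both are energy-preserving by (ii). A short calculation of $\tr_\G[V_t(\rho_\Sy^0\ot\rho_\cat^0(0)\ot\tau_\G)V_t^\dag]$, factorising $\me^{-\mi t\hat H}$ across the three tensor factors and using cyclicity of the partial trace together with (i) to eliminate the bath factor $\me^{-\mi t\hat H_\G}$, returns $\rho_\Sy^1(t)\ot\rho_\cat^0(t)$ for $t\geq t_2$ and $\rho_\Sy^0(t)\ot\rho_\cat^0(t)$ for $t\leq t_1$, i.e.\ exactly $\bar\sigma_\Sy(t)\ot\rho_\cat^0(t)$ in both regimes of Eq.~\eqref{eq:sigma bar def}. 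The converse direction is symmetric: given an energy-preserving $V$ realising \eqref{eq:t-CNO equiv CNO 1} at some fixed $t\geq t_2$, the unitary $U:=\me^{\mi t\hat H}\,V$ is still energy-preserving, and the same trace manipulation undoes the free evolution to produce $\rho_\Sy^1\ot\rho_\cat^0$ from $\rho_\Sy^1(t)\ot\rho_\cat^0(t)$, yielding \eqref{eq:t-CNO equiv CNO 2}.

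The proof is essentially a bookkeeping exercise and I do not anticipate a substantive obstacle. The one point deserving explicit verification is that the Gibbs state appearing in the constructed TO is literally the same $\tau_\G$ used in the given TO; this is immediate from its invariance under $\me^{\pm\mi t\hat H_\G}$, which is what allows the bath-side free evolution to be absorbed harmlessly into the partial trace. No assumption on the dimension of $\cat$ or $\G$, nor any structural property beyond those in Definition~\ref{def:t-CTO}, is required.
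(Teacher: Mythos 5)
Your proof is correct and realises the same underlying idea as the paper's: the right-hand sides of \eqref{eq:t-CNO equiv CNO 1} and \eqref{eq:t-CNO equiv CNO 2} differ only by the free joint evolution $\me^{-\mi t(\hat H_\Sy+\hat H_\cat)}$, which is itself an energy-preserving unitary and hence a TO, and this can be appended or undone. Where the paper abstractly invokes closure of TOs under composition with energy-preserving unitaries, you make the bookkeeping explicit by multiplying the CTO unitary by the global free evolution $\me^{\mp\mi t\hat H}$ and using invariance of $\tau_\G$ under $\me^{\pm\mi t\hat H_\G}$ to absorb the bath factor in the partial trace; the mathematical content is the same.
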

\begin{proof} 
	It is simple. For $t\in[0, t_1]$, Eq. \eqref{eq:t-CNO equiv CNO 1} always holds since the l.h.s. and r.h.s. only differ by an energy preserving unitary on the catalyst, which is a valid TO. Therefore the only non-trivial instance of  Eq. \eqref{eq:t-CNO equiv CNO 1} is for $t\in [t_2,t_3]$. Let us now compare Eqs. \eqref{eq:t-CNO equiv CNO 1} and \eqref{eq:t-CNO equiv CNO 2} for $t\in[t_2,t_3]$\,: the only difference is an energy preserving unitary transformation on the catalyst state on the r.h.s. However, all energy preserving unitary translations are TOs. Therefore one can always go from the r.h.s. of Eq. \eqref{eq:t-CNO equiv CNO 2} to the r.h.s. of Eq. \eqref{eq:t-CNO equiv CNO 1} via a TO. This proves the ``if'' part of the proposition. Conversely, since the inverse of an energy preserving unitary is another energy preserving unitary, one can always go from the r.h.s. of Eq. \eqref{eq:t-CNO equiv CNO 1} to the r.h.s. of Eq. \eqref{eq:t-CNO equiv CNO 2} via a TO.
\end{proof}

While the generalisation to t-CTOs is admittedly quite trivial in nature, it is nevertheless important when considering the autonomous implementation of CTOs. So far, the t-CTOs have only allowed us to include the external control mechanism explicitly into the CTOs paradigm in such a way that they constitute a free resource. In the next section, we will see how this free resource unfortunately corresponds to unphysical time evolution governed by an idealised clock. It will however set the benchmark for what we should be aiming to achieve, if only approximately, with a more realistic control device.

\subsection{Idealised Control, Clocks and Embezzling Catalysts}\label{Idealised control}

When a dynamical catalyst in a t-CTO is responsible for autonomously implementing the transition, it must have its own internal notion of time in order to implement the unitary between times $t_1$ and $t_2$. While in practice, the clock part may only form a small part of the full dynamical catalyst system, for convenience of expression, we refer to such dynamical catalysts as a clock and denote the state of the clock with the subscript $\cl$. Specifically, we would require the clock to induce dynamics on a system $\A$ which corresponds to a t-CTO on $\A$. In other words, evolution of the form 
$\rho^F_{\A\cl\G}(t)= \me^{-\mi t \hat H_{\A\cl\G}} \left(\rho_\A^0\otimes\rho_\cl^0\otimes\tau_\G\right) \me^{\mi t \hat H_{\A\cl\G}}$ where $\rho^F_{\A\cl\G}(t)$ satisfies~\footnote{For any bipartite state $\rho_{\A\B}$, we use the notation of reduced states $\rho_{\A}:=\tr_\textup{B}(\rho_{\A\B})$, $\rho_{\B}:=\tr_\textup{A}(\rho_{\B\A})$.} 
\begin{align}\label{eq:idealised control}
\rho^F_{\A\cl}(t) =\rho_\A^F(t)\otimes\rho_\cl^0(t),\quad \rho_\A^F(t)
= \begin{cases}
\rho_\A^0(t) \mbox{ if } t \in[0, t_1]\\
\rho_\A^1(t) \mbox{ if } t \in [t_2,t_3]
\end{cases}
\end{align}
Here $\rho_\cl^0(t)$ denotes the free evolution of the clock,
\begin{align}
\rho_\cl^0(t)= \me^{-\mi t\hat H_\cl} \rho_\cl^0\, \me^{\mi t\hat H_\cl}.\label{eq:def free clock ev}
\end{align}
In the case in which the clock aims to implement autonomously a TO, we would have that the r.h.s. of Eq. \eqref{eq:idealised control} satisfies $\rho_\A^0(t)= \rho_\Sy^0(t)$ and $\rho_\A^1(t)= \rho_\Sy^1(t)$, while in the case of a CTO, $\rho_\A^0(t)= \rho_\Sy^0(t)\otimes \rho_\cat^0(t)$ and $\rho_\A^1(t)= \rho_\Sy^1(t)\otimes \rho_\cat^0(t)$. 
In this latter case, we see that we have two catalysts. The 1st one, $\rho_\cat^0$ simply allows for a transition on $\Sy$ which would otherwise be forbidden under TOs, while the second one, $\rho_\cl^0$ is the clock which implements autonomously the transition. Furthermore, note that while the r.h.s. of Eq. \eqref{eq:idealised control} is evolving according to the free Hamiltonian $\hat H_\A+\hat H_\cl$, the Hamiltonian $\hat H_{\A\cl\G}$ can, in principle, be of any form such that Eq. \eqref{eq:idealised control} holds.\Mspace

The following rules out the possibility of dynamics of the form Eq. \eqref{eq:idealised control} for a wide class of clock Hamiltonians even when Eq. \eqref{eq:idealised control} is relaxed to include correlations between system $\A$ and the clock.

\begin{prop}[Idealised Control No-Go]\label{prop:idealised control}
	Consider a time independent Hamiltonian $\hat H_{\A\cl\G}$ on $\mathcal{H}_{\A\G}\otimes \mathcal{H}_\cl$ where $\mathcal{H}_{\A\G}$ is finite dimensional, and $\mathcal{H}_\cl$ arbitrary; which w.l.o.g. we expand in the form $\hat H_{\A\cl\G}=\hat H_{\A\G}\otimes\id_\cl+\sum_{l,m=1}^{d_\A d_\G} \ketbra{E_l}{E_m}_{\A\G}\otimes \hat H_\cl^{(l,m)}$, where $\{\,\ket{E_l}_{\A\G}\}_{l=1}^{d_\A d_\G}$ are the energy eigenstates of $\hat H_{\A\G}=\hat H_\A+\hat H_\G$; the free Hamiltonian on $\mathcal{H}_\A$ and the bath. Both of the following two assertions cannot simultaneously hold:
\begin{itemize}
\item [\bf{1)}]
	For all $k,l=1,2,\ldots,d_\A d_\G$; $k\neq l$, the power series expansion in $t$
\begin{align}
&\tr\left[\me^{-\mi t \hat H_\cl^{(k,k)}} \!\rho_\cl^0\, \me^{\mi t \hat H_\cl^{(l,l)}} \right]\label{line 1 of prop 3}\\
&=\sum_{n,m=0}^\infty \tr\left[\frac{\big(-\mi  \hat H_\cl^{(k,k)}\big)^n}{n!} \rho_\cl^0\, \frac {\big(\mi\hat H_\cl^{(l,l)}\big)^m}{m!} \right] t^{n+m} \!\!\!\!\!\!
\label{eq:convergent series} 
\end{align}
has a radius of convergence $r>t_2$. 
\item [\bf{2)}] For some $0<t_1<t_2<t_3$ there exists a TO from $\rho_\A^0(t)$ to 
\begin{align}
\;\;\;\;\;\rho_\A^F(t)=\begin{cases}
\rho_\A^0(t) &\mbox { for }  t\in[0,t_1]\\
\tr_\G[U_{\A\G}(\rho_{\A}^0(t)\otimes\tau_\G )U_{\A\G}^\dag] &\mbox { for }  t\in[t_2,t_3],\!\!\!\!\!\!\!\label{eq:cases Prop idealised control}
\end{cases}
\end{align}
which is implementable via unitary dynamics of the form
\begin{align}
\;\,\rho^F_{\A}(t)= \tr_{\G\cl}\left[\me^{-\mi t \hat H_{\A\cl\G}} \left(\rho_\A^0\otimes\rho_\cl^0\otimes\tau_\G\right) \me^{\mi t \hat H_{\A\cl\G}}\right]\!,\!\!\!\!\label{eq:rho F A t}
\end{align}
where $U_{\A\G}$ in Eq. \ref{eq:cases Prop idealised control} has non-degenerate spectrum, and is an energy preserving unitary, namely $[U_{\A\G}\otimes\id_\cl, \hat H_{\A\G\cl}]=0$.
\end{itemize}	

\end{prop}

See Sec. \ref{sec:prof of lemma Idealisec control} for a proof by contradiction. The requirement of non degenerate spectrum in 2) for $U_{\A\G}$ allows for exclusion of the trivial cases $U_{\A\G}\propto \id_{\A\G}$ for which 1) and 2) can simultaneously hold\footnote{It is likely that the no-go theorem holds for all non-trivial $U_{\A\G}$, i.e. all cases for which there exists $t\in[0,t_1]\cup[t_2,t_3]$ such that $\rho_\A^F(t)\neq \rho_\A^0(t)$. However, the point of the no-go theorem is simply to show that the problem is non-trivial for most cases of interest.}. Furthermore, the no-go proposition also covers the more relaxed setting in which the clock (or any catalyst included in $\A$) is allowed to become correlated with the system. The correlated scenario is also important and studied within the context of idealised control in \cite{Muller2017,PhysRevA.97.062114,PhysRevLett.122.210402}.\Mspace

Physical intuition suggests that if the Hamiltonian $\hat H_{\A\cl\G}$ is infinite dimensional, the dynamics it induces can be arbitrarily well approximated by replacing it by a projection onto an arbitrarily large finite dimensional subspace. However, such a projection would imply that the terms $\hat H_\cl^{(l,m)}$ found in the Hamiltonian $\hat H_{\A\cl\G}$ are replaced with finite dimensional matrices and the series in line \ref{line 1 of prop 3} would converge. Therefore, according to the above proposition if 1) holds, the Hamiltonian $\hat H_{\A\cl\G}$ cannot be approximated as one would expect.

On the other hand, 2) includes the desirable scenario of idealised control discussed at the beginning of Sec. \ref{Idealised control}. Therefore, the no-go proposition tells us that if idealised control is possible, it requires infinite dimensional Hamiltonians which cannot be approximated in the way one might expect.\Mspace

It can also be seen that the contradicting statements 1) and 2) in Prop. \ref{prop:idealised control} are not due to a necessity to implement 2) with ``abruptly changing'' dynamics, since the unitary $U_{\A\G}$ facilitating the TO from $\rho_\A^0$ to $\rho_\A^1$ can be implemented via a smooth function of $t$; namely $U_{\A\G}(t)=\exp\big[-\mi \hat H_u\int_{t_1}^t\bar\delta(x)dx \big]$, with $\bar \delta(t)$ a normalised bump function with support on some interval $\subseteq[t_1,t_2]$ and $\hat H_u$ an appropriately chosen time independent Hamiltonian.\Mspace

The no-go proposition thus rules out physical implementation of idealised control for a number of cases. We now give some examples in which 1) or 2) hold. Prop. \ref{prop:idealised control}, case 1) holds when $\rho_\cl^0$ is an analytic vector \cite{ReedSimon}. The simplest examples of this is when $\rho_\cl^0$ has bounded support on the spectral measures of the Hamiltonians $\{\hat H_\cl^{(k,k)}\}_{k=1}^{d_\A d_\cl}$, such as in the finite dimensional clock case. One can however find examples for Prop. \ref{prop:idealised control} in which 2) is fulfilled while 1) is not. This corresponds to the case of the idealised momentum clock used for control in \cite{Malabarba}. In this case the Hamiltonian $\hat H_{\A\G\cl}$ from Prop. \ref{prop:idealised control} can be written in the form $\hat H_{\A\G\cl}= \hat H_{\A\G}\otimes\id_\cl+ \sum_{n=1}^{d_\A d_\G}\Omega_n \proj{E_n}_{\A\G}\otimes g(\hat x_\cl)+\id_{\A\G}\otimes \hat p_\cl$, with $\hat x_\cl$, $\hat p_\cl$ canonical position and momentum operators of a particle on a line. When $g$ and the initial clock state have bounded support in position, 2) in Prop. \ref{prop:idealised control} is satisfied, but 1) is not. Unfortunately such a clock state is so spread out in momentum, the power series expansion $\Exp[-\mi t\hat p_\cl]=\sum_{n=0}^\infty (-\mi t \hat p_\cl)^n/{n!}$ diverges in norm when evaluated on it. This is closely related to another unphysical property of such clock states, namely that the Hamiltonian has no ground state, as 1st pointed out by Pauli \cite{pauli1958allgemeinen}. We will also see how this idealised control allows one to violate the 3rd law or thermodynamics in Sec. \ref{sec: main text CTO case} | something which should not be possible with control coming from a physical system. We will thus refer to dynamics for which $\rho_{\A\cl}^F(t)$ satisfies Eq. \eqref{eq:idealised control} as \emph{idealised dynamics}.\Mspace

\noindent\setlength\fboxsep{0.3cm}\fbox{%
	\parbox{0.92\linewidth}{%
		\emph{Take home message from Proposition \ref{prop:idealised control}}: Control devices which do not suffer any back-reaction when implementing a thermodynamic transition, arguably necessitate unphysical Hamiltonians.
	}%
}
\vspace{0.2cm}

At first sight, these observations may appear to be of little practical relevance, since indeed, one does not care about implementing the transition from $\rho_\Sy^0$ to $\rho_\Sy^1$ exactly, but only to a good approximation. Furthermore, for a sufficiently large clock, one might reasonably envisage being able to implement all transformations whose final states $\rho_\Sy^F(t)$ are in an epsilon ball of those reachable under t-CNO (and not a larger set) to arbitrary small epsilon as long as the final clock state becomes arbitrarily close in trace distance to the idealised case, namely if $\| \rho^F_{\cl}(t) - \rho_\cl^0(t)\|_1$ tends to zero as the dimension of the clock becomes large and approaches an idealised clock of infinite energy. Unfortunately, this intuitive reasoning may be false due to a phenomenon known as embezzlement. Indeed, when Eq. \eqref{eq:idealised control} is not satisfied the clock is disturbed by the act of implementing the unitary. As such, it is no longer a catalyst, but only an inexact one. Inexact catalysis has been studied in the literature with some counter intuitive findings. In \cite{Ng_2015} an inexact catalysis pair $\rho_\cat^0$, $\rho_\cat^1$ of dimension $d_\cat$ were found such that for any $d_\Sy$ dimensional system, their trace distance vanished in the large $d_\cat$ limit:
\begin{align}\label{eq:cat distance Nelly example}
\| \rho_\cat^0 - \rho_\cat^1 \|_1 = \frac{d_\Sy}{1+(d_\Sy-1) \log_{d_\Sy} d_\cat}.
\end{align} 
Yet the noisy operation $\rho_\Sy^0 \otimes \rho_\cat^0 \NO \rho_\Sy^1 \otimes \rho_\cat^1$ becomes valid for \emph{all} states $\rho_\Sy^0$, $\rho_\Sy^1$ in the large $d_\cat$ limit. In other words, they showed that the actual transition laws for the achievable state $\rho_\Sy^1$ given an initial state $\rho_\Sy^0$ cannot be approximated by those of CNOs | they are completely trivial, since all transformations are allowed. This paradoxical phenomena is known as work embezzlement\footnote{This is because in order for all states in the system Hilbert space to be reachable by an initial state under CNOs, the initial state needs to be supplemented with a work bit which is depleted in the process.} and stems from the concept of entanglement embezzlement \cite{Patrick_Embezzelment}.\Mspace

\begin{figure}[h!]
	\includegraphics[width=0.9\linewidth]{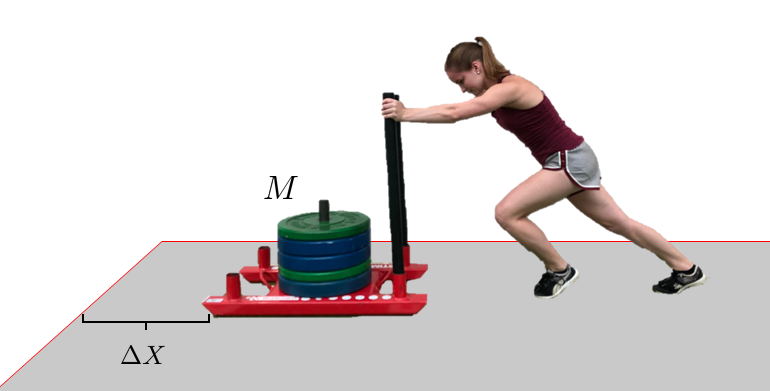}
	\caption{\textbf{The counter intuitive phenomenon of embezzlement.} Consider a thought experiment in which an athlete who has to push a mass $M$ a distance $\Delta X$ against a resistive force $F=M  g$ due to gravity pushing down on the weight. Suppose the distance the athlete has to push the weight is given by $\Delta X= f(M)$, where $f(M)\rightarrow 0$ as $M \rightarrow \infty$. The work done by the athlete pushing the weight is $W= \mu_0 F \Delta X=\mu_0 g M f(M)$, for some coefficient of resistance $\mu_0$.  One might be inclined to reason that the amount of work the athlete has to do in the limit of infinite mass $M$ is zero, since the distance $\Delta X$ the weight has to be pushed is zero in this limit. 
		However, a closer analysis would reveal that this is only correct if $f(M)$ decays sufficiently quickly | quicker than an inverse power. An analogous phenomenon is at play in our control setting. There, in the case of the idealised clock, Eq. \eqref{eq:idealised control} holds, yet this is unachievable since it requires infinite energy. However, all finite clocks, suffer a minimal back-reaction and even though this back-reaction can vanish in the large dimension/energy limit (c.f. Eq. \ref{eq:cat distance Nelly example}), this is not sufficient to conclude that the set of implementable transformations are close to those implementable via the idealised clock. Moreover, the rate at which the error needs to vanish, and whether this is physically achievable; were (prior to this work) completely unknown.}\label{Fig:ClssicalEmbezzExaple}
\end{figure}

By virtue of Prop \ref{prop:equiv of t-CNO and CNO}, the above example shows that simply finding a clock satisfying  $\| \rho^F_{\cl}(t) - \rho_\cl^0(t)\|_1\rightarrow 0$ as $d_\cl \rightarrow \infty$ is \emph{not} sufficient to conclude that the set of allowed transformations generated by t-CNOs (and thus CTOs) corresponds to the set of transformations which can actually be implemented with physical control systems. A thought experiment illustrating such phenomena can be found at the classical level in Fig. \ref{Fig:ClssicalEmbezzExaple}.

\section{Results}\label{sec:Results}

We will start with the easier case of CNOs in Sec. \ref{sec:maint text: Auto CNOs} before moving on to the more demanding setting of CTOs in Sec. \ref{sec: main text CTO case}.

\subsection{Autonomous control for Catalytic Noisy Operations}\label{sec:maint text: Auto CNOs}

In this section we will provide two theorems which together show that there exist clocks which are sufficiently accurate to allow the full realisation of t-CNOs to arbitrarily high precision. Our first result will give a sufficient condition on the clock so as to be guaranteed that the achieved dynamics of the system are close to a transition permitted under t-CNOs. It can be viewed as a converse theorem to the result in \cite{Ng_2015} discussed at the end of Sec. \ref{Idealised control}.

In the following theorem, let $V_{\Sy\cat\cl\G}(t)=\me^{-\mi t \hat H_{\Sy\cat\cl\G}}$ be an arbitrary unitary implemented via a time \emph{independent} Hamiltonian $\hat H_{\Sy\cat\cl\G}$, over $\rho_\Sy^0\otimes \rho_\cat^0\otimes\rho_\cl^0\otimes\tilde\tau_\G$ and suppose that the final state	at time $t\geq 0$,
\begin{equation}
\rho^F_{\Sy\cat\cl\G}(t)=V_{\Sy\cat\cl\G}(t) \left(\rho_\Sy^0\otimes \rho_\cat^0\otimes\rho_\cl^0\otimes\tilde\tau_\G\right)\!V_{\Sy\cat\cl\G}^\dag(t)
\end{equation}
deviates from the idealised dynamics by an amount
\begin{equation}
\| \rho^F_{\Sy\cat\cl}(t)- \rho^F_{\Sy}(t)\otimes \rho_\cat^0(t)\otimes\rho_\cl^0(t) \|_1 \leq \epemb(t; d_\Sy,  d_\cat d_\cl),\label{eq:ep emb def}
\end{equation} 
where recall $\rho_\cl^0(t)$ is the free evolution of the clock according to its free, time independent, Hamiltonian $\hat H_\cl$ (Eq. \ref{eq:def free clock ev}) and likewise for $\rho_\cat^0(t)$ with arbitrary Hamiltonian $\hat H_\cat$.

\begin{theorem}[Sufficient conditions for t-CNOs]\label{thm:noemb physical} For all states $\rho_\Sy^0$ not of full rank, and for all catalysts $\rho_\cat^0$, clocks $\rho_\cl^0$ and maximally mixed states $\tilde\tau_\G$, 
there exists a state $\sigma_\Sy(t)$ which is $\epres$  close to $\rho_\Sy^F(t)$,
	\begin{equation}
	\|\sigma_\Sy(t)-\rho_\Sy^F(t)\|_1 \leq \epres\left(d_\Sy,  d_\cat d_\cl,\epemb(t; d_\Sy, d_\cat d_\cl)\right),
	\end{equation}
	such that for all times $t\geq 0$, a transition from 
	\begin{equation}\rho_\Sy^0\otimes\rho_\cat^0\otimes\rho_\cl^0 \;\;\;\text{ to }\;\;\; \sigma_\Sy(t)\otimes\rho_\cat^0(t)\otimes\rho_\cl^0(t)\label{eq:sigma is valid NO}
	\end{equation}
	is possible via a NO (i.e. $\rho_\Sy^0$ to $\sigma_\Sy(t)$ via t-CNO). 
Specifically, for fixed $d_\Sy$ and in the limit that $d_\cat d_\cl$ and  $1/\epemb$ tend to infinity\textup{:}
	\ba\label{eq:epres func def}
\begin{split}
	\epres&(d_\Sy, d_\cat d_\cl,\epemb)=\\
	&15\sqrt{ \frac{d_\Sy  \ln(d_\cat d_\cl)}{\ln\left(1/\epemb\right) } \left(1 + d_\cat d_\cl \epemb^{1/7} \right)}.
\end{split}
	\ea
	Explicitly, one possible choice for $\sigma_\Sy(t)$ is 
	\ba
	&\sigma_\Sy(t) =\nonumber \\
	&\,\,\left\{
	\begin{array}{ll}
		\id_\Sy/d_\Sy &\text{\rm if } \|\rho_\Sy^{F}(t) -\id_\Sy/d_\Sy\|_1 < \epres \\
		(1\!-\epres) \rho_\Sy^{F}\! (t)+ \epres \id_\Sy/d_\Sy &\text{\rm if } \|\rho_\Sy^{F}(t) -\id_\Sy/d_\Sy\|_1 \geq \epres \\
	\end{array}
	\right.\nonumber
	\ea
\end{theorem}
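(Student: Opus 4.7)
My plan is to reduce the statement to a purely spectral, majorization-theoretic question on the composite space $\Sy\cat\cl$, and then exploit the mixing with $\id_\Sy/d_\Sy$ in the definition of $\sigma_\Sy(t)$ as a ``spectral buffer'' that absorbs the $\epemb$ discrepancy left by the imperfect clock.

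The first step is to invoke the standard majorization characterisation of noisy operations: because $\tauT_\G$ is maximally mixed, a NO from $\rho_\Sy^0\otimes\rho_\cat^0\otimes\rho_\cl^0$ to $\sigma_\Sy(t)\otimes\rho_\cat^0(t)\otimes\rho_\cl^0(t)$ is possible iff the former majorizes the latter. The given unitary $V_{\Sy\cat\cl\G}(t)$ together with the bath $\tauT_\G$ already realises a NO from $\rho_\Sy^0\otimes\rho_\cat^0\otimes\rho_\cl^0$ to $\rho_{\Sy\cat\cl}^F(t)$, so by transitivity of majorization the problem reduces to proving
\be
\rho_{\Sy\cat\cl}^F(t)\;\succ\;\sigma_\Sy(t)\otimes\rho_\cat^0(t)\otimes\rho_\cl^0(t).\label{eq:plan-goal}
\ee

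I would then split on the piecewise definition of $\sigma_\Sy(t)$. In the branch $\sigma_\Sy(t)=\id_\Sy/d_\Sy$ the right-hand side of \eqref{eq:plan-goal} is the maximally mixed state on $\Sy$ tensored with a state isospectral to $\rho_\cat^0\otimes\rho_\cl^0$, and the desired majorization follows essentially automatically from Wielandt--Hoffman/Mirsky stability of sorted eigenvalues under the $\epemb$-trace-norm perturbation between $\rho_{\Sy\cat\cl}^F(t)$ and the product $\rho_\Sy^F(t)\otimes\rho_\cat^0(t)\otimes\rho_\cl^0(t)$. The substantive branch is $\sigma_\Sy(t)=(1-\epres)\rho_\Sy^F(t)+\epres\,\id_\Sy/d_\Sy$; mixing at weight $\epres$ uniformly lifts the eigenvalues of $\sigma_\Sy(t)\otimes\rho_\cat^0(t)\otimes\rho_\cl^0(t)$ and thereby relaxes the partial-sum inequalities of majorization. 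My plan is then to compare the sorted cumulative sums $\sum_{i=1}^{k}\lambda_i^\downarrow(\cdot)$ of both sides of \eqref{eq:plan-goal} and split $k$ into three regimes: a ``top-of-spectrum'' regime treated by Weyl/Mirsky eigenvalue-perturbation inequalities; a ``bulk'' regime treated by Fannes--Audenaert-style smoothness estimates for Shannon/R\'enyi entropies, expected to generate the first summand of Eq.~\eqref{eq:epres func def} with its characteristic $1/(-\ln\epemb)$ scaling; and a ``tail'' regime exploiting the not-full-rank hypothesis on $\rho_\Sy^0$, which forces a guaranteed number of zero eigenvalues in $\rho_\Sy^0\otimes\rho_\cat^0\otimes\rho_\cl^0$ and lets one close the argument via an effective-rank bound. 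Each regime is expected to contribute one of the three summands of Eq.~\eqref{eq:epres func def}.

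The hard part will be reconciling the three regimes under a single choice of $\epres$ while defeating the embezzlement obstruction exemplified by Eq.~\eqref{eq:cat distance Nelly example}: taking $\epemb\to 0$ alone does \emph{not} prevent small trace-norm perturbations from spraying weight onto many tiny eigenvalues which can still violate the majorization partial sums near the bottom of the spectrum. The $\epres$-weighted mixture with $\id_\Sy/d_\Sy$ lifts every eigenvalue upward by at least $\epres/d_\Sy$ and so drowns out those perturbations, but only if $\epres$ dominates the worst-case spectral deviation in all three regimes simultaneously. Tuning this trade-off so as to yield the closed-form bound of Eq.~\eqref{eq:epres func def}, while keeping $\epres\to 0$ as $\epemb\to 0$ at fixed dimensions, is where the bulk of the technical analysis will live.
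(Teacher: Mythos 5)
Your plan aims at the wrong target. You want to prove the direct majorization
\begin{equation}
\rho_{\Sy\cat\cl}^F(t)\;\succ\;\sigma_\Sy(t)\otimes\rho_\cat^0(t)\otimes\rho_\cl^0(t),
\end{equation}
but this is strictly stronger than what the theorem claims, and in general it is false. The paper does not establish that $\rho_\cat^0\otimes\rho_\cl^0$ serves as the catalyst; it establishes \emph{trumping}: there exists a \emph{new} finite-dimensional catalyst $\tilde\rho_\cat$ (not the given $\rho_\cat^0\otimes\rho_\cl^0$) with $\rho_\Sy^0\otimes\tilde\rho_\cat\succ\sigma_\Sy\otimes\tilde\rho_\cat$. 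The whole point of phrasing the conclusion as a t-CNO is that the catalyst is existentially quantified, and the proof needs that freedom.

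To see why your direct majorization fails, look at the partial-sum constraint at $k=d_\Sy d_\cat d_\cl-1$, which is equivalent to $\lambda_{\min}\bigl(\rho^F_{\Sy\cat\cl}(t)\bigr)\leq\lambda_{\min}\bigl(\sigma_\Sy(t)\otimes\rho_\cat^0(t)\otimes\rho_\cl^0(t)\bigr)$. Write $\nu_{\min}$ for the smallest eigenvalue of $\rho_\cat^0\otimes\rho_\cl^0$. Mixing $\id_\Sy/d_\Sy$ into $\rho_\Sy^F(t)$ at weight $\epres$ lifts the bottom eigenvalue of $\sigma_\Sy\otimes\rho_\cat^0\otimes\rho_\cl^0$ above that of $\rho_\Sy^F\otimes\rho_\cat^0\otimes\rho_\cl^0$ only by $\epres\,\nu_{\min}\,(1/d_\Sy-\lambda_{\min}(\rho_\Sy^F))$. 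But Mirsky's inequality lets the trace-norm perturbation of size $\epemb$ raise $\lambda_{\min}(\rho^F_{\Sy\cat\cl})$ above $\lambda_{\min}(\rho_\Sy^F\otimes\rho_\cat^0\otimes\rho_\cl^0)$ by as much as $\epemb$. For $\nu_{\min}$ small enough (and nothing in the hypotheses controls it), $\epemb$ overwhelms the lift and the bottom partial-sum constraint breaks. This is precisely the embezzlement obstruction you name; it cannot be fixed by mixing on $\Sy$ alone, because the lift is multiplicatively attenuated by $\nu_{\min}$. The ``spectral buffer'' intuition is therefore not merely technically delicate to tune; it cannot work at the level of direct majorization.

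The paper's route avoids this entirely by moving to \emph{catalytic} majorization and invoking Klimesh's theorem: $p$ can be trumped into $q$ iff the R\'enyi-type functions satisfy $f_\alpha(p)>f_\alpha(q)$ for all $\alpha$ (Theorem~\ref{thm:Klimesh}). The proof then (i) uses the not-full-rank hypothesis to discard the conditions for $\alpha\le 0$ (there they are $+\infty$ on the left, automatically satisfied) — not for any effective-rank tail bound as you propose; (ii) uses the ``poor subadditivity'' Lemma~\ref{lem:poor sub} and multiplicativity of $f_\alpha$ to push the constraints down to $\Sy$; (iii) controls the gap between $\rho^F_{\Sy\cat}$ and $\rho^F_\Sy\otimes\rho_\cat^0$ via Tsallis/R\'enyi continuity estimates (Lemma~\ref{lem:cont}) rather than eigenvalue-perturbation inequalities; and (iv) converts the resulting approximate inequalities into the strict ones Klimesh requires by mixing in $\id_\Sy/d_\Sy$ (Lemma~\ref{lem:approx-strict} and Prop.~\ref{prop:strict}) — this is the true role of the $\epres$-mixture, not a buffer against bottom-of-spectrum leakage. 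The three summands in Eq.~\eqref{eq:epres func def} arise from splitting the parameter $\alpha$ into $(0,1/2]$, $(1/2,2]$, $[2,\infty)$ with separate Tsallis continuity bounds in each, not from a top/bulk/tail split of cumulative eigenvalue sums. In short: replace the Weyl/Mirsky majorization program with the Klimesh/trumping one, and the $\epres$-mixture should be viewed as enforcing strictness of entropy inequalities, not as spectral padding.
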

See \App~\ref{Main proof section} for a proof and an expression for $\epres$ which holds when $d_\cat d_\cl$ and $\epemb$ are finite. Note that this theorem also holds more generally if one replaces ${\hat H_{\Sy\cat\cl\G}}$ with any time dependent Hamiltonian. However, the time independent Hamiltonian case is better physically motivated. \Mspace

Before we move on, let us understand the physical meaning of the terms $\epemb,\epres$. By comparing the definition of $\epemb$ in Eq. \eqref{eq:ep emb def} with that of Eq. \eqref{eq:idealised control}, we see that it is the difference in trace distance between the dynamics achieved with the idealised clock, and the actual dynamics achieved by the clock. Thus the quantity $\epemb$ upper bounds how much one can embezzle from the resulting unavoidable inexact catalysis of the \ACDlong. Then $\epres$ (which is a function of $\epemb$) characterises the resolution, i.e. how far from a t-CNO transition one can achieve due to embezzlement from the inexact catalysis.
For example, consider a hypothetical clock for which $\epemb$ decays as an inverse power with $d_\cl$. Then, $\epres$ would diverge with increasing $d_\cl$ and Theorem \ref{thm:noemb physical} would not tell us anything useful. On the other hand, if we had a more precise clock with, for example, $\epemb$ exponentially small in $d_\cl$ then Theorem \ref{thm:noemb physical} would tell us that $\epres$ converges to zero as $d_\cl$ increases.\Mspace

\noindent\fbox{%
	\parbox{0.92\linewidth}{%
		\emph{Take home message from Theorem \ref{thm:noemb physical}}: There is a threshold on the amount of back-reaction the control system can incur, above which the laws of thermodynamics have to be modified to include the thermodynamics of the control system. Theorem \ref{thm:noemb physical} provides a bound on this threshold when the bath transfers entropy but not heat.
	}%
}
\vspace{0.2cm}
\begin{figure}[]
	\includegraphics[width=1.\linewidth]{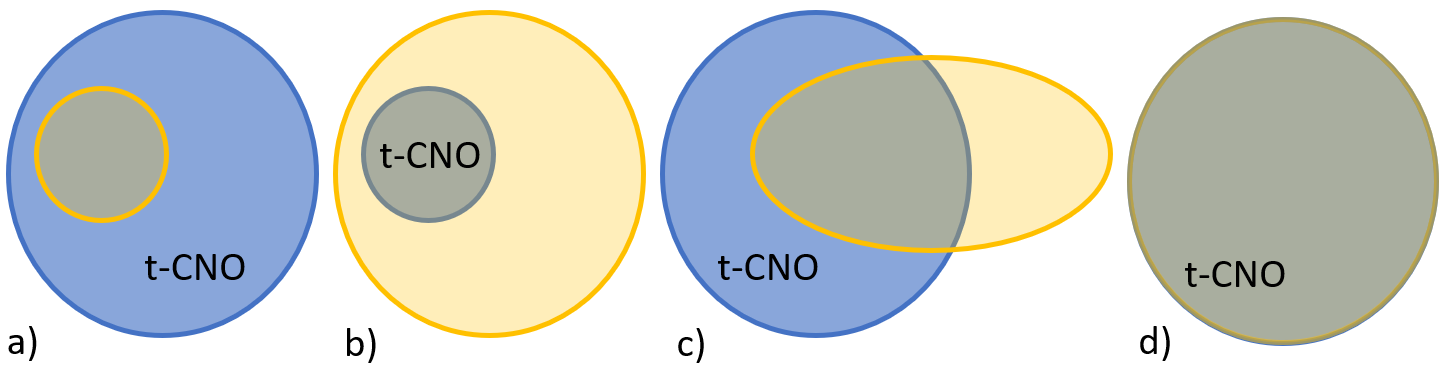}
	\caption{\textbf{Possible scenarios resulting from the physical implementation of t-CNOs.} Given a state $\rho_\Sy^0$, the above blue Venn diagrams represent the set of states $\rho_\Sy^1$ which can be reached under t-CNOs. The orange Venn diagrams represent possible scenarios of reachable states when attempting to implement a t-CNO, while grey represents the intersection of the two sets. Due to the apparent impossibility of perfect control and that embezzlement can occur (see Eq. \eqref{eq:cat distance Nelly example}), all options a) to d) are in principle open. Theorem \ref{thm:noemb physical} gives sufficient conditions on the control (clock) so that either a) or d) occur. Theorem \ref{Thm:Implementation with Quasi-Idela clock} shows that transitions implemented via the Quasi-Ideal clock can achieve d) under reasonable circumstances.}\label{Fig:VennDiag}
\end{figure}


Whether $\epemb$ and $\epres$ can both be simultaneously small, depends on both the quality of the clock used and the transition one wishes to implement. Two examples at opposite extremes are as follows. Both $\epemb$ and $\epres$ are trivially arbitrarily small (zero in fact), and conditions in Theorem \ref{thm:noemb physical} are satisfied, when the t-CTO transition is the identity transition (i.e. $\rho_\Sy^0$ to $\rho_\Sy^0$). At the opposite extreme, both $\epemb$ and $\epres$ cannot be small or vanishing when one attempts a non-trivial t-CNO transition which occurs instantaneously, i.e. one for which $\rho_\Sy^F(t)=\rho_\Sy^0$ for $t\in [0, t_1]$ and $\rho_\Sy^F(t)=\rho_\Sy^1$ for $t\in ( t_1, t_3]$.\Mspace

Our next theorem shows how one can implement to arbitrary approximation all t-CTO transitions, over any fixed time interval $(t_1,t_2)$, yet without allowing for a larger class | as the examples in Eq. \eqref{eq:cat distance Nelly example} and Fig. \ref{Fig:VennDiag} b)  do. To achieve this, one must choose the time independent Hamiltonian $ \hat H_{\Sy\cat\cl\G}$ and initial clock state $\rho_\cl^0$ appropriately. The theorem will use the \textit{Quasi-Ideal clock} \cite{WSO} discussed in detail in Sec. \ref{Overview of the quasi-ideal clock} for the clock system on $\mathcal{H}_\cl$. The Quasi-Ideal clock has been proven to be optimal for some tasks related to reference frames \cite{woods_continuous_2019,Sephir_2019,PhysRevResearch4023107} and clocks \cite{PRXQuantum.3.010319,YuxiangUltimateBound}, and is also believed to be optimal for others \cite{Sami_Joe}. In the following, $T_0$ denotes the period of the Quasi-Ideal clock (when evolving under its free evolution), i.e. $\rho_\cl^0(T_0)=\rho_\cl^0(0)$.


\begin{theorem}[Achieving t-CNOs]\label{Thm:Implementation with Quasi-Idela clock}
Consider the Quasi-Ideal clock \cite{WSO} detailed in Sec.~\ref{Overview of the quasi-ideal clock} with a time independent Hamiltonian of the form $ \hat H_{\Sy\cat\cl\G}=   \hat H_{\Sy}+  \hat H_{\cat}+  \hat H_{\G}+  \hat I_{\Sy\cat\cl\G}+\hat H_{\cl}$, giving rise to unitary dynamics
\begin{align}
&\rho^F_{\Sy\cat\cl\G}(t)=\nonumber\\
&\qquad V_{\Sy\cat\cl\G}(t) \left(\rho_\Sy^0\otimes \rho_\cat^0\otimes\rho_\cl^0\otimes\tilde\tau_\G\right)V_{\Sy\cat\cl\G}^\dag(t).\nonumber
\end{align}
 For every pair $\rho_\Sy^0$, $\rho_\Sy^1$ for which there exists a t-CNO from $\rho_\Sy^0$ to $\rho_\Sy^1$ using a catalyst $\rho_\cat^0$, there exists an interaction term $\hat I_{\Sy\cat\cl\G}$ such that the following hold.\\
\begin{itemize}
	\item [\textup{1)}] $\sigma_\Sy(t)$ satisfies Eq. \eqref{eq:sigma is valid NO} and is of the form:
	\begin{align}
	&\quad\sigma_\Sy(t)=\nonumber\\
	&\quad\begin{cases}
	\rho_\Sy^0(t) \mbox{ for times $t\in[0,t_1]$ (i.e.\! ``before'' the transition)}\\
	\rho_\Sy^1(t) \mbox{ for times $t\in[t_2,T_0]$ (i.e.\! ``after'' the transition)}
	\end{cases}\nonumber
	\end{align}
	\item [\textup{2)}]  $\epemb$ \textup{(}satisfying Eq. \eqref{eq:ep emb def}\textup{)} is given by 
	\begin{equation}
	\label{eq:thm2-ep-emb}
	\epemb =  \left(2+3\sqrt{d_\Sy d_\cat}\right) \sqrt{\varepsilon_\cl(d_\cl)},
	\end{equation}
	for all $t\in[0,t_1]\cup[t_2,T_0]$, where $\varepsilon_\cl(\cdot)$ is independent of $d_\Sy,d_\cat,d_\G$ and is of order
	\begin{equation}
	\varepsilon_\cl(d_\cl)=\bo\bigg( 
 	poly(d_\cl)\, \exp\left[-c d_\cl^{1/4}\right]
\bigg),\!\!\!\!\!\!\!\!\!\label{eq:scaling of quasi ideal clock}
	\end{equation}
	as $d_\cl  \rightarrow \infty$, with $c=c(t_1,t_2,T_0)>0$ for all $0<t_1<t_2<T_0$\textup{;} and is independent of $d_\cl$.

\end{itemize}

\end{theorem}
See Sec. \ref{Sec: proof of thm 2: Quasi-Ideal clock bound} 
 for a proof.\Mspace

As a direct consequence of Theorem \ref{thm:noemb physical}, in the scenario described in Theorem \ref{Thm:Implementation with Quasi-Idela clock}, $\epres$ is of power law decay in $d_\cl$ as $d_\cl\rightarrow\infty$ and thus both $\epemb$ and $\epres$ are simultaneously small. Therefore the Quasi-Ideal clock allows \emph{all} t-CNOs to be implemented without additional costs not captured by the resource theory.\Mspace

\noindent\fbox{%
	\parbox{0.92\linewidth}{%
		\emph{Take home message from Theorem \ref{Thm:Implementation with Quasi-Idela clock}}:
There exist control systems whose incurred back-reaction is small enough that one is below the threshold mentioned in the previous box. Hence, in conjunction with Theorem \ref{thm:noemb physical}, it implies that the laws of thermodynamics (for baths that transfer only entropy and not heat) do not need to be modified by taking into account the control device.	
	}%
}
\vspace{0.2cm}

The property that $\tilde\tau_\G$ is a maximally mixed state for CNOs is at the heart of two important aspects involved in proving Theorems \ref{thm:noemb physical} and \ref{Thm:Implementation with Quasi-Idela clock}. On the one hand, all CNOs (and hence all t-CNOs by virtue of Prop. \ref{prop:equiv of t-CNO and CNO}), which are implemented via an arbitrary finite dimensional catalyst $\rho_\cat$ can be done so with maximally mixed states $\tilde\tau_\G$ of \emph{finite} dimension\footnote{More precisely, the dimension $d_\G$ is uniformly upper bounded for all t-CNOs on a fixed $d_\Sy$-dimensional system Hilbert space $\mathcal{H}_\Sy$.} \cite{Qhornslemma18}. The other relevant aspect is that they are the only states which are not ``disturbed'' by the action of a unitary, namely $U_\G \tilde\tau_\G U_\G^\dag= \tilde\tau_\G$ for all unitaries $U_\G$. 
Together these mean that the clock only needed to control a system of finite size, and thus the back-reaction it experiences is limited and independent of the dimension $d_\G$.\footnote{Indeed, observe how the dimension $d_\G$ does not enter in either of the bounds in Theorems \ref{thm:noemb physical} or \ref{Thm:Implementation with Quasi-Idela clock}.}\Mspace

One would like to prove analogous theorems to Theorems \ref{thm:noemb physical} and \ref{Thm:Implementation with Quasi-Idela clock} for t-CTOs. Unfortunately, their Gibbs states satisfy neither of these two aforementioned properties. Indeed, there exists CTOs on finite dimensional systems $\mathcal{H}_\Sy$ which require infinite dimensional Gibbs states of infinite mean energy to implement them \cite{Qhornslemma18,3rdLawHenrik,3rdLawLluis}. This observation, combined with the fact that Gibbs states are also generally disturbed by the CTO in the sense that $U_\G \tau_\G U_\G^\dag \neq \tau_\G$ for some $U_\G$, suggests that a theorem like Theorem \ref{Thm:Implementation with Quasi-Idela clock} for which $\epres$ from Theorem \ref{thm:noemb physical} vanishes, is not possible; since the back-reaction on any finite energy or dimensional clock would be infinite in some cases. Furthermore, there is a technical problem which prevents such theorems. The proof of Theorem \ref{thm:noemb physical} uses the known, necessary and sufficient transformation laws for noisy operations (the non increase of the so-called R\'enyi $\alpha$-entropies). However, only necessary (but not sufficient) 2nd laws are known for CTOs (the most well-know of which are the non increase of the so-called R\'enyi $\alpha$-divergences \cite{secondlaw}).

\subsection{Autonomous control for Catalytic Thermal Operations}\label{sec: main text CTO case}
In order to circumvent the dilemma explained at the end of the previous section, we now examine how well the energy preserving unitary of t-CTOs can be implemented when one restricts to attempting to implement t-CTOs which can be implemented with finite baths. We will also allow for some uncertainty in our knowledge | or ability to prepare | the time independent Hamiltonian which implements the transition. 
Specifically we consider
\begin{align}\label{eq:Hm robust emb}
\hat H_{\Sy\cat\cl\G} =\hat H_{\Sy}+\hat H_{\cat}+ \hat H_{\G} +\hat H_{\Sy\cat\G}^\textup{int}\otimes\hat H_{\cl}^\textup{int}+\hat H_\cl,
\end{align}
where 
\begin{align}\label{eq:commutation of int term in generic Hm}
\left[\hat H_{\Sy}+\hat H_{\cat}+ \hat H_{\G} , \,\hat H_{\Sy\cat\G}^\textup{int}\right]=0
\end{align}
and normalisation chosen such that the interaction term has eigenvalues bounded by pi: $\|\hat H^\textup{int}_{\Sy\cat\G}\|_\infty\leq\pi$.
With the interaction term $\hat H^{int}_{\Sy \cat \G}$ in the Hamiltonian Eq. \eqref{eq:Hm robust emb}, and the aid of the thermal bath and clock, we are targeting to implement the joint system-catalyst state
\begin{equation}\label{eq:2nd condition of robust embezzelment Ham}
\sigma_{\Sy\cat}^1:=\tr_\G\left[ \me^{- \mi\hat H_{\Sy\cat\G}^\textup{int} } (\rho_\Sy^0\otimes\rho_\cat^0\otimes\tauGibb)\, \me^{\mi\hat H_{\Sy\cat\G}^\textup{int} } \right].
\end{equation}
From Eq. \eqref{eq:2nd condition of robust embezzelment Ham}, we can observe that the interaction term $\hat H_{\Sy\cat\G}^\textup{int}$ already allows for potential Hamiltonian engineering imperfections, since ideally, the interaction term should leave the final state $\sigma_{\Sy\cat}^1$ in Eq. \eqref{eq:2nd condition of robust embezzelment Ham} in a product state of the form $\rho_{\Sy}^1\otimes\rho_{\cat}^0$. To capture these imperfections in $\hat H_{\Sy\cat\G}^\textup{int}$, we introduce $\hat I_{\Sy\cat\G}^\textup{int}$ 
which, for the initial state 
$\rho_\Sy^0\ot\rho_\cat^0
\ot\tauGibb$, implements an uncorrelated system-catalyst state:
\begin{equation}\label{eq:2nd condition of robust embezzelment Ham 2}
\rho_{\Sy}^1\otimes\rho_{\cat}^0=\tr_\G\left[ \me^{- \mi\hat I_{\Sy\cat\G}^\textup{int} } (\rho_\Sy^0\otimes\rho_\cat^0\otimes\tauGibb)\, \me^{\mi\hat I_{\Sy\cat\G}^\textup{int} } \right].
\end{equation}
Here $\rho_\Sy^1$ is an arbitrary state that can be produced by 
such a transformation, i.e. it is 
an arbitrary state that can be obtained from $\rho^0_\Sy$ via a CTO. 
Note that the evolution
according the total Hamiltonian in Eq. \eqref{eq:Hm robust emb} would not produce such a transformation through time evolution even if we had the term
$\hat I^\textup{int}_{\Sy \cat \G}$ instead of $\hat H^\textup{int}_{\Sy \cat \G}$, since 
the clock is not ideal.\Mspace

If we denote the difference between the states in Eqs. \eqref{eq:2nd condition of robust embezzelment Ham} and \eqref{eq:2nd condition of robust embezzelment Ham 2}, by
\begin{equation}
\label{eq:epsig def}
\epsig:=\|\sigma_{\Sy\cat}^1- \rho_\Sy^1\otimes\rho_\cat^0\|_1,
\end{equation}

then Proposition \ref{prop: H Int to epsig} states that $\epsig$ is upper bounded by
\begin{align}
\epsig \leq 2 \| \delta\hat I_{\Sy\cat\G}^\textup{int} \|_\infty+  \| \delta\hat I_{\Sy\cat\G}^\textup{int} \|_\infty^2,
\end{align}
where $\|\delta\hat I_{\Sy\cat\G}^\textup{int}\|_\infty$ denotes the largest eigenvalue in magnitude of the imperfection in the Hamiltonian preparation: $\delta\hat I_{\Sy\cat\G}^\textup{int}:=\hat H_{\Sy\cat\G}^\textup{int}-\hat I^\textup{int}_{\Sy\cat\G}$. Note that there is also some freedom in the definition of $\hat I^\textup{int}_{\Sy\cat\G}$ in Eq. \eqref{eq:2nd condition of robust embezzelment Ham 2} since the final state of the bath is traced-out and hence irrelevant. One can minimise $\| \delta\hat I_{\Sy\cat\G}^\textup{int} \|_\infty$ over this degree of freedom, reducing the control requirements over the bath degrees of freedom and improving the bounds on $\epsig$.\Mspace

We now introduce a state $\rho_{\Sy\cat\G}^\textup{target}(t)$, which we call the \emph{target state}. It is the state which we would be able to implement with the Hamiltonian in Eq. \eqref{eq:Hm robust emb} if we had access to an idealized clock. Hence any deviations from this will be due to using physical clocks in the control. It is given by
\begin{align}\label{eq:target state def}
\rho_{\Sy\cat\G}^\textup{target}(t):= U_{\Sy\cat\G}^\textup{target}(t) \left(\rho^0_{\Sy}(t)\otimes\rho^0_\cat(t)\otimes\tauGibb\right) U_{\Sy\cat\G}^\textup{target\,\dag}(t)
\end{align}
where $U^\textup{target}_{\Sy\cat\G} (t)= \me^{- \mi \hatf(t) \hat H_{\Sy\cat\G}^\textup{int} }$  with
\begin{align}\label{eq:def target S Cat G}
\hatf(t)=
\begin{cases}
0 &\mbox{ for }  t \in[0, t_1]\\
1 &\mbox{ for }  t\in[t_2,t_3].
\end{cases}
\end{align}
(Recall that the physical meaning of $t_1,t_2$ and $t_3$ can be found  in Def. \ref{def:t-CTO}).
Therefore, tracing out the bath we have for $t \in[0, t_1]$
\begin{align}
\label{eq:ep-H-rho-target1}
 \rho_{\Sy\cat}^\textup{target}(t)= \rho^0_{\Sy}(t)\otimes\rho^0_\cat(t)
 \end{align} 
 while for $t \in[ t_2,t_3]$
 \begin{align}
 \label{eq:ep-H-rho-target2}
\rho_{\Sy\cat}^\textup{target}(t)=\me^{-\mi t (\hat H_\Sy+\hat H_\cat)} \sigma^1_{\Sy\cat}\me^{\mi t (\hat H_\Sy+\hat H_\cat)}.
 \end{align}
We now define a quantity $\Delta(t;x,y)$ which \textit{only} depends on properties of the clock system:
\begin{align}
\Delta(t;x,y)&:= \braket{\rho_\cl^0|\hat\Gamma_\cl^\dag(x,t)\hat \Gamma_\cl(y,t)|\rho_{\cl}^0},\\
\hat\Gamma_\cl(x,t)&:= \me^{-\mi t\hat H_\cl+ \mi x \left(\hatf(t) \id_\cl- t\hat H_\cl^\textup{int}\right)}, \quad x,t\in\rr.\label{eq:clock plus Ham def}
\end{align} 
The following theorem states that if $\Delta(t;x,y)$ is small for all $x,y\in[-\pi,\pi]$, and the dimension of the bath $d_\G$ is not too large, then the clock can implement a unitary over the system, catalyst and clock which is close to a t-CTO using the time independent Hamiltonian in Eq. \eqref{eq:Hm robust emb}. Furthermore, the clock itself is not disturbed much during the process.

\begin{theorem}[Sufficient conditions for t-CTOs]\label{thm:noemb physical t CTO} For all states $\rho_\Sy^0$ and $\rho_\cat^0$, consider unitary dynamics $V_{\Sy\cat\cl\G}(t)=\me^{-\mi t \hat H_{\Sy\cat\cl\G}}$ implemented via any Hamiltonian of the form Eq. \eqref{eq:Hm robust emb}, with an initial pure clock state $\rho_\cl^0=\ketbra{\rho_\cl^0}{\rho_\cl^0}$. Namely,  $\rho^F_{\Sy\cat\cl\G}(t)= V_{\Sy\cat\cl\G}(t)(\rho_\Sy^0\otimes \rho_\cat^0\otimes\rho_\cl^0\otimes\tauGibb)V_{\Sy\cat\cl\G}^\dag(t)$. Then the following hold:

\begin{itemize}
	\item [\textup{1)}] The deviation from the idealised dynamics is bounded by
	\begin{align}\label{eq:F - F thm 3 main text}
		&\| \rho^F_{\Sy\cat\cl}(t) - \rho^F_{\Sy}(t)\otimes \rho_\cat^0(t)\otimes  \rho_\cl^0(t)  \|_1 \leq\,\, 2\epsig\,\hatf(t)\,+\nonumber\\
		&  6\sqrt{d_\Sy\, d_\cat\, d_\G \tr\!\left[\tau_\G^2\right]}\, \sqrt{ \max_{x, y\in[-\pi,\pi]} \left|1-   \Delta^2(t; x, y) \right|}.
	\end{align}
	
	\item [\textup{2)}] The final state $\rho^F_\Sy(t)$ is 
		\begin{align}
			&\| \rho^F_{\Sy}(t) - \rho^\textup{target}_{\Sy}(t)  \|_1 \leq \epsig\,\hatf(t)\,+ \nonumber\\
			&\sqrt{d_\Sy \, 
				d_\cat \, d_\G \tr\left[\tau_\G^2\right]} \max_{x,y\in[-\pi,\pi]} \left|1 - \Delta^2(t;x,y)\right|\label{eq:final state and target state}
		\end{align}
	close to one which can be reached via t-CTO:  For all $t\in[0,t_1]\cup[t_2,t_3]$ the transition
	\begin{equation}\rho_\Sy^0\otimes\rho_\cat^0\otimes\rho_\cl^0 \;\;\;\text{ to }\;\;\; \rho_\Sy^\textup{target}(t)\otimes\rho_\cat^0(t)\otimes\rho_\cl^0(t)\label{eq:sigma is valid TO}
	\end{equation}
is possible via a TO i.e. $\rho_\Sy^0$ to $\rho_\Sy^\textup{target}$ via a t-CTO. 
\end{itemize}
\end{theorem}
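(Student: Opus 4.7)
The plan is to exploit the commutativity $[\hat H_{\Sy\cat\G}^\textup{int},\hat H_\Sy+\hat H_\cat+\hat H_\G]=0$ to block-diagonalise the total evolution, compare it to an ``idealised-clock'' reference unitary whose deviation is encoded entirely in $\Delta(t;x,y)$, and then close the bound via triangle inequality involving $\epsig$. First I would pick a joint eigenbasis $\{|k\rangle\}_{\Sy\cat\G}$ of $\hat H_\Sy+\hat H_\cat+\hat H_\G$ and $\hat H_{\Sy\cat\G}^\textup{int}$ with respective eigenvalues $E_k$ and $\lambda_k\in[-\pi,\pi]$. In this basis the actual unitary factors block-diagonally as $V_{\Sy\cat\cl\G}(t)=\sum_k \me^{-\mi t E_k}\ketbra{k}{k}_{\Sy\cat\G}\otimes \me^{-\mi t(\hat H_\cl+\lambda_k\hat H_\cl^\textup{int})}$, whereas the reference ``idealised-clock'' unitary $V^\textup{id}(t):=U_{\Sy\cat\G}^\textup{target}(t)\,\me^{-\mi t\hat H_{\Sy\cat\G}}\otimes \me^{-\mi t\hat H_\cl}$ replaces each clock factor by $\me^{-\mi\hatf(t)\lambda_k}\me^{-\mi t\hat H_\cl}$. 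Because the clock in $V^\textup{id}$ is only free-evolved, the partial trace satisfies $\tr_\G[V^\textup{id}\rho V^{\textup{id}\,\dag}]=\rho^\textup{target}_{\Sy\cat}(t)\otimes\rho_\cl^0(t)$, which is already of the desired product form.

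Next I would bound $\|\rho^F_{\Sy\cat\cl}(t)-\rho^\textup{id}_{\Sy\cat\cl}(t)\|_1$ via a Hilbert--Schmidt-type estimate. The key algebraic identity is $\hat\Gamma_\cl(x,t)=\me^{\mi x\hatf(t)}\me^{-\mi t(\hat H_\cl+x\hat H_\cl^\textup{int})}$ --- valid because $\mi x\hatf(t)\id_\cl$ commutes with everything --- which identifies the clock matrix elements arising in the block-decomposed expansion of $\rho^F-\rho^\textup{id}$ with $\Delta(t;\lambda_k,\lambda_{k'})$ up to explicit phases. Expanding $\|(V-V^\textup{id})\sqrt{\rho_\textup{full}}\|_2^2$ for $\rho_\textup{full}=\rho_\Sy^0\otimes\rho_\cat^0\otimes\ketbra{\rho_\cl^0}{\rho_\cl^0}\otimes\tauGibb$, and pairing the $(k,l)(k',l')$ cross-terms, the clock overlaps assemble into expressions bounded by $\max_{x,y\in[-\pi,\pi]}|1-\Delta^2(t;x,y)|$. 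Converting Hilbert--Schmidt to trace norm on $\Sy\cat\G$ via $\|X\|_1\leq\sqrt{d_{\Sy\cat\G}}\,\|X\|_2$, while keeping the initial-state purities $\tr[{\rho_\Sy^0}^2]$ and $\tr[\tau_\G^2]$ (and bounding $\tr[{\rho_\cat^0}^2]\leq 1$ in the worst case), yields the prefactor $d_\Sy\tr[{\rho_\Sy^0}^2]\,d_\cat d_\G\tr[\tau_\G^2]$. The fourth root in part 1 emerges from composing this Schatten conversion with Fuchs--van de Graaf at the joint level, whereas for part 2 the system marginal can be bounded more directly, preserving only the single square root and the linear $|1-\Delta^2|$ dependence.

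A triangle inequality then closes both parts. For $t\in[0,t_1]$ one has $\hatf(t)=0$, so $\rho^\textup{id}_{\Sy\cat\cl}(t)=\rho_\Sy^0(t)\otimes\rho_\cat^0(t)\otimes\rho_\cl^0(t)$, differing from the desired target $\rho^F_\Sy(t)\otimes\rho_\cat^0(t)\otimes\rho_\cl^0(t)$ only in the replacement $\rho_\Sy^0(t)\to\rho^F_\Sy(t)$, a substitution controlled by data-processing on the $\cat\cl$-trace of the previous step. For $t\in[t_2,t_3]$ ($\hatf(t)=1$), the commutativity $[\hat H_{\Sy\cat\G}^\textup{int},\hat H_{\Sy\cat\G}]=0$ implies $\rho^\textup{target}_{\Sy\cat}(t)=\me^{-\mi t(\hat H_\Sy+\hat H_\cat)}\sigma^1_{\Sy\cat}\me^{\mi t(\hat H_\Sy+\hat H_\cat)}$, which by \eqref{eq:epsig def} and unitary invariance of trace distance is $\epsig$-close to $\rho_\Sy^1(t)\otimes\rho_\cat^0(t)$, contributing the $2\epsig\hatf(t)$ term. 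Part 2 then follows by tracing $\cat\cl$ out of the joint bound of part 1.

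The main obstacle will be the Hilbert--Schmidt bookkeeping in the middle step: verifying that the four-point clock overlaps between blocks $(k,l)$ and $(k',l')$ reorganise precisely into the symmetric quantity $|1-\Delta^2(t;x,y)|$ after maximising over $x,y\in[-\pi,\pi]$ (rather than the weaker asymmetric form such as $|1-\Delta(t;\lambda_k,0)|$ that a naive Fuchs--van de Graaf bound would give), and identifying exactly which purity and dimension factors survive in the Schatten conversion; notably, that $d_\cl$ is entirely absorbed into the scalar $\Delta$ and does not appear in the final prefactor.
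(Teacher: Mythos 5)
Your overall plan (block-diagonalise via the commuting eigenbasis, reduce the clock back-reaction to $\Delta$, then triangle-inequality away the $\epsig$ preparation error) is sound, and for part 1) your central estimate actually works and would give a \emph{tighter} bound than the paper's: since $\rho^F=V\rho_0V^\dag$, $\rho^\textup{id}=V^\textup{id}\rho_0V^{\textup{id}\dag}$ with $\rho_\cl^0$ pure, H\"older gives $\|\rho^F-\rho^\textup{id}\|_1\leq 2\|(V-V^\textup{id})\sqrt{\rho_0}\|_2 =2\sqrt{2}\sqrt{1-\text{Re}\,\tr[\rho_0 V^\dag V^\textup{id}]}$, and the block structure evaluates $\tr[\rho_0V^\dag V^\textup{id}]=\sum_k p_k\Delta(t;\lambda_k,0)$, yielding a dimension-\emph{free} bound $2\sqrt{2}\max_x\sqrt{|1-\Delta(t;x,0)|}$, which is then $\leq 2\sqrt{2}\max_{x,y}|1-\Delta^2(t;x,y)|^{1/2}$ via the elementary $|1-c|\leq|1-c^2|$. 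Your stated worry about the asymmetric form $\Delta(t;\lambda_k,0)$ being ``weaker'' is therefore misplaced. However, your description of how the prefactor and the $1/4$ exponent arise (``Schatten conversion $\ldots$ composing $\ldots$ with Fuchs--van de Graaf at the joint level'') does not correspond to a correct carrying-out of your own estimate, nor to what the paper actually does.

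The paper takes a genuinely different route through its Prop.~\ref{prop:red}: it never forms a joint Hilbert--Schmidt estimate on $\Sy\cat\G\cl$. Instead it bounds \emph{two separate reductions}: (i) $\|\rho^F_{\Sy\cat\G}(t)-\rho^{\textup{target}}_{\Sy\cat\G}(t)\|_1$ directly via the Frobenius norm of the matrix-element differences $\rho^0_{j,j'}(1-\Delta(t;\Omega_{j'},\Omega_j))$ (giving the $\sqrt{d_{\Sy\cat\G}\tr[\rho_0^2]}\max|1-\Delta|$ prefactor, \emph{linear} in $1-\Delta$); and (ii) $\|\rho^F_\cl(t)-\rho_\cl^0(t)\|_1$ via the fidelity to the pure state $\ket{\rho_\cl^0(t)}$ (giving $2\max\sqrt{1-|\Delta|^2}$). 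Prop.~\ref{prop:red} combines these, and it is the $\sqrt{\epcl}$ term in that combination that produces the fourth root in part 1), while the algebraic identities $1-|c|^2\leq|1-c^2|$ and $|1-c|\leq|1-c^2|$ merge the two different $\Delta$-forms into the symmetric $|1-\Delta^2|$. Crucially for part 2): the paper's bound is \emph{linear} in $\max|1-\Delta^2|$ because it comes from the direct Frobenius estimate (i) on the $\Sy\cat\G$ marginal (plus data processing). Your closing remark that ``Part 2 then follows by tracing $\cat\cl$ out of the joint bound of part 1'' would only yield the exponent $1/2$, not the theorem's exponent $1$, so that step does not recover the stated result; you need the direct reduced-state Frobenius bound as an independent ingredient, as the paper does.
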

A proof can be found in \App~\ref{Sec:proof of thm 3, t-CTOs generic bound}.

Since the definition of the target state in Eq. \eqref{eq:target state def} allows one to reach all t-CTOs which are implementable with a $d_\G$ dimensional bath,\footnote{This is by construction, c.f. Eq. \eqref{eq:target state def} and definitions of CTO and t-CTO in Sec. \ref{sec:CTO def}.} Theorem \ref{thm:noemb physical t CTO} provides sufficient conditions for the implementation of \emph{all} t-CTOs which are implementable via such baths. As long as the set of CTOs with finite bath size is a dense subset of the set of all CTOs, Theorem \ref{thm:noemb physical t CTO} provides sufficient conditions for implementing a dense subset of CTOs. While the TO in Eq. \eqref{eq:sigma is valid TO} for $t\in[0,t_1]$ is ``trivial'' in the sense that it does not involve interactions between the subsystems nor requires the thermal bath, it is nevertheless important since it captures the notion of ``turning on'' the unitary | an essential step in the implementation of any unitary operation.\Mspace

Intuitively, in order for $\Delta(t;x,y)\approx 1$ for all $x,y\in[-\pi,\pi]$, we see from Eq. \eqref{eq:clock plus Ham def} that we want the initial clock state $\ket{\rho_\cl^0}$ to be orthogonal to the interaction term $\hat H_\cl^\textup{int}$ initially, and subsequently the dynamics of the clock according to its free Hamiltonian $\hat H_\cl$ to ``rotate'' the initial clock state $\ket{\rho_\cl^0}$ to a state which is no longer orthogonal to $\hat H_\cl^\textup{int}$ after a time $t_1$ when the interaction starts to happen. Similarly, the evolution induced by $\hat H_\cl$ should make the state $\ket{\rho_\cl^0}$ orthogonal to $\hat H_\cl^\textup{int}$ after time $t_2$. Meanwhile, the interaction term $\hat H^\textup{int}_\cl$ should have imprinted a phase of approximately $\me^{-\mi x }$ onto the state $\ket{\rho_\cl^0}$ during the time interval $(t_1,t_2)$ to cancel out the phase factor $\me^{\mi x \hatf(t)}$ in Eq. \eqref{eq:clock plus Ham def}. So we can think of the quantity $\Delta(t,x,y)$ as a formal mathematical expression which quantifies the intuitive physical picture of ``turning on and off an interaction''.\Mspace

The Quasi-Ideal clock, which recall is of dimension $d_\cl$ and period $T_0$ (when evolving under its free evolution), can realise the above intuition to a very good approximation. Indeed, the following theorem bounds the quantities on the r.h.s. of Eqs. \ref{eq:F - F thm 3 main text}, \ref{eq:final state and target state}, up to engineering errors $\ep_H$, by setting $t_3=T_0$ in Theorem \ref{thm:noemb physical t CTO}:
\begin{theorem}[Achieving t-CTOs]
	\label{thm:4}
	For the Quasi-Ideal clock, we have:
	\begin{align}
	\max_{x,y\in[-\pi,\pi]} &\left|1 - \Delta^2(t;x,y)\right| \leq 
	\\ \nonumber
 	& \bo\bigg( poly(d_\cl)\, 
	\exp\left[-c d_\cl^{1/4}\right] \bigg)
	\end{align}
	as $d_\cl\to\infty$ for all $t\in[0,t_1]\cup[t_2,T_0]$, where $\Delta^2(t;x,y)$  is 
	defined
	in \eqref{eq:clock plus Ham def} and where $c=c(t_1,t_2,T_0)>0$ for all $0<t_1<t_2<T_0$\textup{;} and is independent of $d_\cl$.
\end{theorem}

See \App~\ref{subsec:deltaxy} for proof.
On the other hand, it turns out that the idealised momentum clock discussed in Sec. \ref{Idealised control}\,, satisfies $\Delta(t;x,y)= 1$ for all $x,y\in[-\pi,\pi]$ for an appropriate parameter choice in which case 1) in Prop. \ref{prop:idealised control} fails (see Sec. \ref{sec:proof Delat=1 for idelaised clock} in \app). Thus the r.h.s. of Eqs. \eqref{eq:F - F thm 3 main text}, \eqref{eq:final state and target state} are \emph{exactly} zero for all $t_1<t_2$ in this case. This observation highlights another point of failure for this clock: it allows for the violation of the 3rd law of thermodynamics. The 3rd law states that any system cannot be cooled to absolute zero (its ground state) in finite time. In \cite{Qhornslemma18,3rdLawHenrik}, it was shown that under CTOs, both the mean energy and dimension $d_\G$ of the bath need to diverge in order to cool a $d_\Sy$ dimensional system to the ground state. The inability to do this in finite time by any realistic control system on $\mathcal{H}_\cl$ manifests itself in that $\max_{x,y\in[-\pi,\pi]} \left|1 - \Delta^2(t;x,y)\right|$ cannot be exactly zero in this case, so that the r.h.s. of Eq. \eqref{eq:final state and target state} becomes large due to the factor $d_\G\tr[\tauGibb^2]$ diverging.\footnote{Note that the only case in which $d_\G\tr[\tauGibb^2]$ does not diverge in the large $d_\G$ limit, is when the purity of the Gibbs state $\tauGibb$ converges (in purity) to the maximally mixed state, since in that case $\tr[\tauGibb^2]=1/d_\G$. This is not the case for the baths needed to cool to absolute zero in which $\tr[\tauGibb^2]$ converges to a positive constant \cite{Qhornslemma18}.}  However, for the idealised momentum clock, the r.h.s. of Eqs. \eqref{eq:F - F thm 3 main text}, \eqref{eq:final state and target state} are exactly zero even in the limit $d_\G\tr[\tauGibb^2] \rightarrow \infty$, thus allowing one to cool the system on $\mathcal{H}_\Sy$ to absolute zero in any finite time interval $[t_1,t_2]$. Finally, it is also worth noting that the change in Von Neumann entropy of the clock between before and after the unitary is implemented is vanishingly small for the Quasi-Ideal clock as its dimension increases. This follows from applying the Fannes inequality to the results of Theorems~\ref{thm:noemb physical t CTO} and \ref{thm:4}. This is because the Fannes inequality implies that the change in Von Neumann entropy between two states approaches zero when the trace distance between said states decreases faster than $1/\log(d)$, where $d$ is the dimension of the system in question.

\vspace{0.2cm}
\noindent\fbox{%
	\parbox{0.92\linewidth}{%
		\emph{Take home message from Theorem \ref{thm:noemb physical t CTO}}:
		The result provides bounds which characterise the back-reaction incurred on any control device implementing an arbitrary thermodynamic transition i.e. with baths which transfer both entropy and heat. It includes and quantifies engineering imperfections,
		 and has important physical consequences
		 for nonequilibrium physics and the 3rd law.  
	}%
}
\vspace{0.2cm}

\noindent\fbox{%
	\parbox{0.92\linewidth}{%
		\emph{Take home message from Theorem \ref{thm:4}}:
		There exists a control device, such that the bounds in Theorem \ref{thm:noemb physical t CTO} for the incurred back-reaction are, up-to engineering inaccuracies, exponentially small in the device's dimension. 
		Thus Theorems \ref{thm:noemb physical t CTO} and \ref{thm:4} together imply the existence of control devices such that the laws of thermodynamics are not modified for baths that can transmit both entropy and heat. 
		
	}%
}
\vspace{0.2cm}


\section{Discussion}\label{sec:Discussion}
Other than the fact that Theorem \ref{thm:noemb physical} provides necessary conditions for implementation of t-CNOs while Theorem \ref{thm:noemb physical t CTO} for implementation of t-CTOs, there are two main differences between them. The first is that Theorem \ref{thm:noemb physical} applies to any time independent Hamiltonian while Theorem \ref{thm:noemb physical t CTO} to Hamiltonians of a particular form. The other main difference, is that Theorem \ref{thm:noemb physical} provides bounds in terms of how close the catalyst and clock are in \emph{trace distance} to their desired states, while Theorem \ref{thm:noemb physical t CTO} provides bounds in terms of how close $\Delta(t;x,y)$ is to unity. While the latter condition implies small trace distance between the clock and its free evolution, the converse is not necessarily true. Fortunately, while $\Delta(t;x,y)\approx 1$ is a stronger constraint, we have shown that it can be satisfied by the Quasi-Ideal clock (This is Theorem \ref{thm:4}). However, from a practical point of view, its fulfilment is likely harder to verify experimentally, since quantum measurements can be used to evaluate trace distances, while the ability to experimentally determine $\max_{x,y\in[-\pi,\pi]} \Delta(t;x,y)$ is less clear.

Observe how the bounds in Theorems \ref{thm:noemb physical} and \ref{thm:noemb physical t CTO} increase with $d_\cat$, the dimension of the catalyst. This aspect of the bound is also relevant in some important cases. Most exemplary is the setting of the important results of \cite{Muller2017} which show that if one allows the catalysts to become correlated, then | up to an arbitrarily small error $\epsilon$ | there exists a catalyst and energy preserving unitary which achieves any TO between states block diagonal in the energy basis if and only if the second law (non increase of von Neumann free energy) is satisfied. Here, the dimension of the catalyst diverges as $\epsilon$ converges to zero. The setting considered was that of idealised control, and thus the divergence of the catalyst did not affect the implementation of transitions. However, if one were to consider realistic control such as in our paradigm, the rate at which the catalyst diverges would be an important factor in determining how much back-reaction the clock would receive and consequently how large it would have to be to counteract this effect, and achieve small errors in the implementation of the control.\Mspace

There are various results regarding the costs of implementing unitary operations \cite{PhysRevLett89057902,PhysRevA74060301,PhysRevLettJohan,Skrzypczyk2013,Tajima2019,Takagi2019,Chiribella2019,Clivaz2017,Clivaz2019}. These all have in common the assumption of implicit external control, while only restricting the set of allowed unitaries which is implemented by the external control. The allowed set of unitaries is motivated physically by demanding that they obey conservation laws (such as energy conservation), or by comparing unitaries which allow for coherent vs. incoherent operations. So while these works consider interesting paradigms, the questions they can address are of a very different nature to those posed and answered in this~\doc. In particular, the assumption of perfect control on the allowed set of unitaries means that effects such as back-reaction or degradation of the control device are neglected.\Mspace

While other bounds do impose limitations arising from dynamics, these bounds are not of the right form to address the problem at hand in the~\doc. Perhaps one of the most well-known results in this direction is the so-called quantum speed limit which characterises the minimum time required for a quantum state to become orthogonal to itself or more generally, to within a certain trace-distance of itself. Indeed, such results have been applied to thermodynamics, metrology and the study of the rate at which information can be transmitted from a quantum system to an observer \cite{QSL_review,Bekenstein90}. In our context, the promise is of a different form, namely rather than the final state being a certain distance away from the initial state, we need it to be a state which is close to one permissible via the transformation laws of the resource theory (t-CNOs or t-CTOs). Similar difficulties arise when aiming to apply other results from the literature. Perhaps most markedly is \cite{121110403}. Here necessary conditions in terms of bounds on the fidelity to which a unitary can be performed on a system, via a control device, is derived. Unfortunately, this result is unsuitable for our purposes for two reasons. Firstly, their bounds become trivial in the case that the unitary over the system to be implemented commutes with the Hamiltonian of the system (as is the case in this~\doc). Secondly, since catalysis is involved in our setting, bounds in trace distance for the  clock precision of how well the unitary was implemented, are not meaningful, due to the embezzling problem discussed in Sec. \ref{Idealised control}. The latter problem is also why one cannot arrive at the conclusions of this \doc{} from \cite{WSO} alone.\Mspace

This work opens up interesting new questions for future research. In macroscopic thermodynamics, the 2nd law applies to transitions between states which are in thermodynamic equilibrium. Such a notion is not present in the CTOs, since the 2nd laws governing transitions apply always, regardless of the nature of the state. One intriguing possibility which comes in to view with the results in this \doc{} is that the CTOs actually only hold in equilibrium, and the apparent absence of this property had been hidden in the unrealistic assumption of idealised control. To see why, observe that we have only proven that the transition laws for t-CTOs hold for times $t\in[0,t_1]\cup[t_2,t_3]$ where the unitary implementing the transition occurs within the time interval $(t_1,t_2)$. It would appear that CTOs are not satisfied for the state during the transition period $(t_1,t_2)$. If this can be confirmed and proven to hold in general, then this would suggest that the CTOs actually only hold in equilibrium. A potential physical mechanism explaining this could be that at times around $t_1$ the clock sucks up entropy from the system it is controlling | allowing it to become more pure | after finally releasing entropy back around the $t_2$ time | so that the system can then become mixed enough to satisfy the 2nd laws.\Mspace

Another aspect which the introduction of the paradigm of physical control into the paradigm of CTOs has given rise naturally to, is the variant of the 3rd law of thermodynamics stating that one cannot cool to absolute zero in finite time. It is noticeably absent from the CTO formalism. Future work could now investigate this property in more depth. Previous characterisations of the 3rd law \cite{3rdLawLluis} had to assume that the spatial area which the unitaries in the idealised control could act upon, satisfied a light-cone bound. While this is indeed a realistic assumption, it did not arise from the mathematics. Here it arises naturally even without the need for a light-cone bound assumption.\Mspace

Introducing similar non idealised control  for other resource theories \cite{Liu2019,Sparaciari2018} could allow us to understand the requirements of these paradigms.

\section{Conclusions}\label{sec:Conclusions}

The resource theory approach to quantum thermodynamics has been immensely popular over the last few years. However, to date the conditions under which its underlying assumptions of idealised external control can be fulfilled, have not been justified. While it is generally appreciated that they cannot be achieved perfectly, to what extent and under what circumstances they can be approximately achieved, remained elusive. Our \doc~addresses this issue, providing sufficient conditions which we prove are satisfiable. In doing so, our work has united two very popular yet starkly different paradigms: fully autonomous thermal machines and resource theoretic non-autonomous ones. Our approach and methods set the groundwork for future unifications of generic quantum processing machines | of which resource theoretic thermal machines can be seen as a particular example | with generic autonomous quantum processes.\Mspace

Not only could these results be instrumental for future experimental realisations of the 2nd laws of quantum thermodynamics, but they can also open up new avenues of research into the 3rd law of thermodynamics and the role of non-equilibrium physics.\Mspace

In particular, we have introduced a paradigm in which the cost of control in the resource theory approach of quantum thermodynamics using CNOs and CTOs can be characterised. This was achieved via the introduction of t-CNOs and t-CTOs in which control devices fit naturally into this thermodynamic setting as dynamic catalysts.\Mspace

We have then derived sufficient conditions on how much the global dynamics including the control device can deviate from the idealised case, in order for the achieved state transition to be close to one permissible via CNOs. This is followed by examples of a control device which achieves this level of precision.\Mspace


Finally, we introduced Hamiltonians which led us to a criteria for all CTOs with a finite dimensional bath. The bound captures the requirement of better quality control, as the bath size needed to implement the CTO gets larger.\Mspace

\acknowledgments
We thank Jonathan Oppenheim for stimulating discussions and pointing out the control problem in the resource theory approach to quantum thermodynamics. We thank Gian Michele Graf for discussions regarding functional analysis for Prop. \ref{prop:idealised control} and Elisa B\"aumer for physically demonstrating the embezzling phenomenon in Fig. \ref{Fig:ClssicalEmbezzExaple}. M.W. acknowledges support from the Swiss National Science Foundation (SNSF) via an AMBIZIONE Fellowship (PZ00P2\_179914) in addition to the National Centre of Competence in Research QSIT. M.H. acknowledges support from the National Science Centre, Poland,
through grant OPUS 9. 2015/17/B/ST2/01945 and the Foundation for Polish Science through IRAP project co-financed by EU within the Smart Growth Operational Programme (contract no. 2018/MAB/5).

\onecolumngrid
\vspace{1cm}\begin{center}
\fbox{\parbox{0.96\textwidth}{Unless stated otherwise, the below commonly used notation has the indicated meaning:
		\begin{itemize}
			\item \underline{Abbreviations for transformations}: NO = \emph{Noisy Operation}, CNO = \emph{Catalytic Noisy Operation}, TO = \emph{Thermal operation}, CTO = \emph{Catalytic Thermal Operation}. The prefix ``t-'' can be added to any of these abbreviations, and stands for \emph{time}. See Sec. \ref{sec:CTO def} for their definitions.
			\item \underline{Subscripts}: the following subscripts are added to states to indicate the subsystem they belong to. Subscript ${}_\Sy$ is the \emph{system}, ${}_\cat$ is the \emph{catalyst}, ${}_\cl$ is the \emph{clock}, ${}_\G$ is the \emph{bath}. A result with a subscript ${}_\textup{A}$ means the result holds for both cases $\textup{A}=\Sy$ and $\textup{A}=\Sy\cat$.
			\item \underline{Partial trace}: we use the quantum information notation for partial trace: for a generic bipartite quantum state $\rho_{\textup{X}_1\textup{X}_2}$, we denote the state on subsystem $\textup{x}_1$ after tracing out subsystem $\textup{x}_2$, by $\rho_{\textup{X}_1}$.
			\item \underline{Time dependency}: 
			$\rho_\textup{X}^0$ or $\sigma_\textup{X}^0$ is the initial state on subsystem $\textup{X}$. $\rho_\textup{X}^1$ or $\sigma_\textup{X}^1$ is the state on subsystem $\textup{X}$, after the application of a fixed transformation to the initial state $\rho_\textup{X}^1$ or $\sigma_\textup{X}^1$ respectively. $\rho_\textup{X}^n(t)$ or $\sigma_\textup{X}^n(t)$ for $n=0,1$ is the dynamically evolved state $\rho_\textup{X}^n$ or $\sigma_\textup{X}^n$ according to its local Hamiltonian: $\rho_\textup{X}^n(t)=\me^{\mi t\hat H_\textup{X}} \rho_\textup{X}^n \me^{-\mi t\hat H_\textup{X}}$ or  $\sigma_\textup{X}^n(t)=\me^{\mi t\hat H_\textup{X}} \sigma_\textup{X}^n \me^{-\mi t\hat H_\textup{X}}$. These definitions are introduced in Sec. \ref{sec:CTO def}. The notation $\rho^F_{\textup{X}_1\ldots \textup{X}_l}(t)$ refers to a state on subsystems ${\textup{X}_1\ldots \textup{X}_l}$ at time $t$ whose time evolution is not given (in general) by the sum of the local Hamiltonians $\hat H_{\textup{X}_1}+\ldots+ H_{\textup{X}_l}$. Its exact definition is context dependent and given locally in the text. 
			\item \underline{Dimensions}: $d_\textup{X}$ is the Hilbert space dimension of subsystem $\textup{X}$.
			\item \underline{Thermal states}: $\tau_\textup{X}$ is the Gibbs state of subsystem $\textup{X}$, i.e. $\tau_\textup{X}= \me^{-\hat H_\textup{X}/T} /Z$, where $Z$ is the partition function, and $T$ is temperature in appropriate units. The \emph{maximally mixed state} denoted $\tilde\tau_\textup{X}$, is a special Gibbs state corresponding to when $\hat H_\textup{X}$ is proportional to the identity $\id_\textup{X}$. It takes on the form $\tilde\tau_\textup{X}= \id_\textup{X}/d_\textup{X}$.
\end{itemize}}}
\end{center}
\vspace{1cm}\newpage
\twocolumngrid
 \appendix
\section{Proof Overviews}\label{sec:appendix}

In this \app{} we will provide the proofs of the results in the maim text. Owing to the complexity of some of these proofs, with the exception of Proposition \ref{prop:idealised control}, the others will have a high level overview of the proof here, with details relegated to the \Supp.

\subsection{Proof of Proposition \ref{prop:idealised control}}\label{sec:prof of lemma Idealisec control}

We will here prove Proposition \ref{prop:idealised control}. We will assume the assertions under both bullet points in the proposition, and culminate in a contradiction hence showing that the assertions cannot simultaneously hold. 
To start with, we denote the unitary transformation implementing the TO from $\rho_{\A\G}^0(t)$ to $\rho_{\A\G}^1(t)$ by $U_{\A\G}(t)=\me^{-\mi \delta(t) \hat H_u}$ where
\begin{align}
\delta(t)=\begin{cases}
0 &\mbox{ if } t\in[0,t_1]\\
1 &\mbox{ if } t\in[t_2,t_3].
\end{cases}
\end{align}
By definition of TOs, $U_{\A\G}(t)$ is an energy preserving unitary which must commute with $\hat H_{\A\G}=\hat H_{\A}\otimes\id_\G+\id_\A\otimes\hat H_{\G}=\sum_{n=1}^{d_\A d_\G}  E_n\proj{E_n}_{\A\G}$ and can therefore be chosen to be of the form $\hat H_u=\sum_{n=0}^{d_\A d_\G}\Omega_n \proj{E_n}_{\A\G}$ with $\Omega_n\in[-\pi,\pi)$. In order to avoid trivial unitaries, we have also assumed that the phases are non-degenerate, $\Omega_n\neq \Omega_p$ for $n\neq p$. It then follows from $[U_{\A\G}\otimes\id_\cl, \hat H_{\A\G\cl}]=0$ that 
\begin{align}
\hat H_\cl^{(k,l)}=0,
\end{align}
for $k\neq l$. Using the expansion of $\hat H_{\A\G\cl}$ from the preposition, it then follows that
\begin{align}
\hat H_{\A\G\cl}=  \hat H_{\A\G}\otimes \id_\cl + \sum_{n=1}^{d_\A d_\G} \proj{E_n}_{\A\G}\otimes\hat H_\cl^{(n,n)}.
\end{align}
Expanding the state $\rho_{\A\G}$ in the energy basis, $\rho_{\A\G}= \sum_{l,m=1}^{d_\A d_\G} A_{l,m} \ketbra{E_l}{E_m}_{\A\G}$, we find from the definition of $\rho_{\A\G\cl}^F(t)$
\begin{align}
\braket{E_l|\rho_{\A\G}^F(t)|E_m}= A_{l,m}(t) \,\tr\left[\me^{-\mi t\hat H_\cl^{(l,l)}}\! \rho_\cl\, \me^{\mi t\hat H_\cl^{(m,m)}}\right],\label{eq:final state with clock appendix}
\end{align}
where the time dependency of the coefficients $A_{l,m}(t)$ is defined via $\rho_{\A\G}(t)=\me^{-\mi t\hat H_{\A\G}}\,\rho_{\A\G}\,\me^{\mi t\hat H_{\A\G}}= \sum_{l,m=1}^{d_\A d_\G} A_{l,m}(t) \ketbra{E_l}{E_m}_{\A\G}$. On the other hand, 
\begin{align}
\braket{E_l|U_{\A\G}(t)\rho_{\A\G}(t)U_{\A\G}^\dag(t)|E_m}= A_{l,m}(t)\, \me^{-\mi t\left(\Omega_m-\Omega_l\right)\delta(t)}.\label{eq:idelaised control appendix}
\end{align}
We will now proceed to show the contradicting statement. Let us assume we can equate Eqs. \eqref{eq:cases Prop idealised control}, \eqref{eq:rho F A t} and furthermore assume that the power series in Eq. \eqref{eq:convergent series} is convergent in the neighbourhood of either $t_1$ or $t_2$.  
Since Eq. \eqref{eq:idelaised control appendix} holds in the case of Eq. \eqref{eq:cases Prop idealised control}, and Eq. \eqref{eq:final state with clock appendix} holds in the case of Eq. \eqref{eq:rho F A t}, we find by equating these equations for all $m\neq l,$\,\, $m,l=1,2,\ldots,d_\A d_\G$ :
\begin{align}
\me^{-\mi t\left(\Omega_m-\Omega_l\right)\delta(t)}=\tr\left[\me^{-\mi t\hat H_\cl^{(l,l)}}\! \rho_\cl^0\, \me^{\mi t\hat H_\cl^{(m,m)}}\right].  \label{eq:idelaised control appendix equating}
\end{align}
Hence if the power series expansion Eq. \ref{eq:convergent series} holds,  
\begin{align}
&\me^{-\mi t\left(\Omega_m-\Omega_l\right)\delta(t)}=\\
&\sum_{n,p=0}^\infty \tr\left[\frac{\big(-\mi  \hat H_\cl^{(l,l)}\big)^n}{n!} \rho_\cl^0\, \frac {\big(\mi\hat H_\cl^{(m,m)}\big)^p}{p!} \right] t^{n+p}.  \label{eq:idelaised control appendix with power series}
\end{align}
However, for $t\in[0,t_1]$ we have that $\delta(t)=0$, thus since $0<t_1<r$, with $r$ the radius of convergence of the power series, for any $\tilde t\in(0,t_1)$, we find \footnote{Note that in order to make this conclusion, we interchange derivatives with the infinite sum, which is well known to hold for power series.}
\begin{align}
\frac{d^q}{dt^q}\, \me^{-\mi t\left(\Omega_m-\Omega_l\right)\delta(t)}\Bigg{|}_{t=\tilde t}=0 \quad\text{ for }\quad q\in \nn^+.
\end{align}
If we take derivatives of the r.h.s. of Eq. \eqref{eq:idelaised control appendix with power series}, evaluate at $t=\tilde t$ and set to zero, we find
\begin{align}
\tr\left[\frac{\big(-\mi  \hat H_\cl^{(l,l)}\big)^n}{n!} \rho_\cl^0\, \frac {\big(\mi\hat H_\cl^{(m,m)}\big)^p}{p!} \right]= \delta_{0,n} \delta_{0,p},
\end{align}
where $\delta_{n,p}$ denotes the Kronecker-Delta function. Yet if we plug this solution into the r.h.s. of \eqref{eq:idelaised control appendix with power series}, we find a contradiction for $t\in[t_2,r)\neq \emptyset$. 

\subsection{Proof of Theorem \ref{thm:noemb physical}}\label{Main proof section}

In this section we prove Theorem \ref{thm:noemb physical}. We will also need the results from sections \ref{sec:entropies divergences def and properties} to \ref{Renyi Entropy Continuity Theorem} to aid the proof.

The below theorem, is a slightly more general version of Theorem \ref{thm:noemb physical} in three ways:\\
1) In the below theorem no time dependency is assumed, since while it is physically reasonable to do so, it is not necessary from a mathematical perspective to prove our theorem.\\
2) We denote by $\rho_\cat^0$ a generic catalyst of dimension $D_\cat$. To achieve the version Theorem \ref{thm:noemb physical} in the main text, one makes the identification $\rho_\cat^0$ in the below theorem with $\rho_\cat^0\otimes\rho_\cl^0$ in Theorem \ref{thm:noemb physical}, and letting $D_\cat=d_\cat d_\cl$.  The motivation for this relabelling, is that for the purposes of this proof, there is no point in distinguishing between the clock catalyst (which controls the interaction) and the other catalyst, which allows for thermodynamic transitions, which would otherwise not be permitted under TOs. In other words, it is only in later theorems that we care about actual dynamics where the distinction between the two types of catalysts is important.\\
3) The bound on $\epres(\epemb,d_\Sy, D_\cat)$ in Eq. \ref{eq:long ep emb form} is a more general version than that stated in Theorem \ref{thm:noemb physical} in the main text. A proof that Eq. \ref{eq:long ep emb form} implies the version stated in the main text can be found in Corollary \ref{rem:tigher bound on ep emb}.

\begin{theorem}[Sufficient conditions for implementing CNOs]
	\label{thm:noemb}
	Consider arbitrary initial state $\rho_\Sy^0$ of not full rank and arbitrary catalyst $\rho^0_\cat$. Consider arbitrary unitary $V_{\Sy\cat\G}$ over $\rho_\Sy^0\otimes \rho_\cat^0\otimes\tilde\tau_\G$,
	and suppose that the final state, $\rho_{\Sy\cat\G}^F=V_{\Sy\cat\G} (\rho_\Sy^0\otimes\rho_\cat^0\otimes\tilde\tau_\G) V_{\Sy\cat\G}^\dag$ satisfies
	\be
	\label{eq:final close}
	\| \rho^F_{\Sy \cat} - \rho^F_\Sy\ot \rho^0_\cat\|_1 \leq \epemb.
	\ee
	Then there exists a state $\sigma_\Sy$ which is close to $\rho^F_\Sy$
	\be
	\|\sigma_\Sy - \rho^F_\Sy\|_1  \leq \epres
	\ee
	such that 
	\be
	\rho_\Sy^0 \ot \tilde \rho_\cat \succ  \sigma_\Sy \ot \tilde \rho_\cat,
	\ee
	for some finite dimesioanl catalyst $\tilde \rho_\cat$.
	Here $\epres=\epres(\epemb,d_\Sy, D_\cat)$ where $d_\Sy$, $D_\cat$ are the dimensions of system $\rho_\Sy^0$ and catalyst $\rho_\cat^0$ respectively. Specifically,
	\begin{widetext}
	\begin{align}
	\epres(\epemb,d_\Sy, D_\cat)= 5\sqrt{\frac{d_\Sy^{5/3} + 4(\ln d_\Sy D_\cat) \ln d_\Sy}{\ln (1/\epemb)}+ d_\Sy D_\cat \epemb^{1/6} + 5\left( (d_\Sy D_\cat)^2 \sqrt{\frac{\epemb}{d_\Sy D_\cat}} \ln\sqrt{\frac{d_\Sy D_\cat}{\epemb}}  \right)^\frac23}.\label{eq:long ep emb form}
	\end{align}
	\end{widetext}
	Explicitly one possible choice for $\sigma_\Sy$ is
	\be
	\label{eq:sigma-S}
	\sigma_\Sy\!= \!
	\left\{
	\begin{array}{ll}
	\!	\id_\Sy/d_\Sy &\!\text{\rm if } \|\rho_\Sy^{\mpwno{F}} -\id_\Sy/d_\Sy\|_1 < \epres \\
	\!	(1-\epres) \rho_\Sy^{\mpwno{F}} + \epres \id_\Sy/d_\Sy &\!\text{\rm if } \|\rho_\Sy^{\mpwno{F}} -\id_\Sy/d_\Sy\|_1 \geq \epres \\
	\end{array}
	\right.
	\ee
\end{theorem}
\noindent
{\it Overview of the proof}.
We shall show  that catalytic majorization holds, by using Klimesh conditions given in Theorem \ref{thm:Klimesh}. Since we assume that the initial state is not of full rank, and the final state $\sigma_S$ by definition is of full rank, 
 it is enough to show that for   $\alpha>0$, $g_\alpha(\rho_\Sy^0)$ 
 is strictly larger than 
 $g_\alpha(\sigma_\Sy)$.
 In terms of simpler functions $f_\alpha$ given by \eqref{eq:f1}, 
 we need to show that 
 \begin{align}
     &f_\alpha(\rho_\Sy^0) >
     f_\alpha(\sigma_\Sy)  \quad \text{for} \quad \alpha >1 \nonumber \\
     &f_\alpha(\rho_\Sy^0) <
     f_\alpha(\sigma_\Sy)  \quad \text{for} \quad \alpha \in(0,1].
 \end{align}
 In particular $f_1$ is the Shannon entropy, so the condition for $\alpha=1 $ can also be written as 
 \begin{align}
 S_1(\rho_\Sy^0) <  S_1(\sigma_\Sy).
 \end{align}
 There are other equivalent ways of writing the conditions using the Tsallis-Aczel-Daroczy entropy (in short Tsallis entropy)  
 defined in Eq. \eqref{eq:Tsalis def}, or Renyi entropy of Def. \ref{def:renyi entrpies}
 \begin{align}
     &T_\alpha(\rho_\Sy^0) < T_\alpha(\sigma_\Sy) \quad \text{for} \quad \alpha>0, 
     \label{eq:T-klimesh}\\
     & S_\alpha(\rho_\Sy^0) <
     S_\alpha(\sigma_\Sy) \quad \text{for} \quad \alpha>0
     \label{eq:S-klimesh}.
 \end{align}
 It is in one way more convenient than the condition in terms of $f_\alpha$. 
 Namely, the case $\alpha=1$ is not given by a separate formula. Indeed, 
 $T_1=\lim_{\alpha\to1} T_\alpha$
 (same for $S_\alpha$).
 Note here that for each single $\alpha$, the inequality with Renyi entropy $S_\alpha$ is equivalent to
 inequality for $T_\alpha$. 
 Thus for some $\alpha$'s we may show the inequality for $T_\alpha$ 
 while for others for $S_\alpha$.
Now, let us sketch 
how we will approach this problem. 

\noindent{\it 1) Showing the inequalities \eqref{eq:T-klimesh} for states $\rho^0_\Sy$ and $\rho^F_\Sy$  up to term $\eta_\alpha$.}\vspace{0.2cm}\\
By assumption the initial state $\rho^0_\Sy\ot\rho^0_\cat\ot\rho^0_\G$ is unitarily transformed into $\rho^F_{\Sy\cat\G}$. 
This transformation does not change functions like $f_\alpha$, $T_\alpha$, $S_\alpha$. Then,
going back and forth between $f_\alpha$'s and $T_\alpha$'s, and using the continuity of $T_\alpha$ 
from Theorem \ref{thm:Tsalis continuity}
we obtain 
\be
	\label{eq:nonstrict T-overview}
	T_\alpha(\rho_\Sy^0)\leq T_\alpha(\rho^F_\Sy)+ \eta_\alpha D_\cat^{\alpha} \quad \text{\rm for } \alpha>0
	\ee
with $\eta_\alpha$ satisfying 
 \ben
	\label{eq:eta1-overview}
	&&\eta_\alpha \geq 6 D 	\left( \frac{\mpwno{\epemb}}{D}\right)^\alpha \quad  \text{for } \alpha \in (0,1/2] \\
	\label{eq:eta2}
	&&\eta_\alpha \geq  -32  D \sqrt{\frac{\mpwno{\epemb}}{D}} \ln \sqrt{\frac{\mpwno{\epemb}}{D}} \,\, \text{for}\, \mpwno{\epemb}\leq \frac{1}{\mpwno{32}D^2},\, \alpha \in \!\left(\frac12, 2\!
	\right)\nonumber \\
	\label{eq:eta3}
	&&\eta_\alpha \geq  6 \sqrt{D \mpwno{\epemb}} \quad 	\text{for } \alpha \in [2,\infty),
	\een
	where $D=d_\Sy D_\cat$.
 Anticipating that there may be problems with $\alpha$ around $\infty$ (this will become clear later) we also obtain a similar inequality for the Renyi $S_\infty$ entropy:
 \be
	\label{eq:nonstrict infty-overview}
	S_\infty(\rho_\Sy^0)\leq  S_\infty(\rho_\Sy^F) \mpwno{+}\eta_\infty,
	\ee
	with 
	\begin{equation}\label{eq:et infinity-overview}
	\eta_\infty= D_\cat \epemb.
	\end{equation}
 {\it 2) Removing term $\eta_\alpha$ by replacing 
 $\rho^F_\Sy$ with its approximated version $\sigma_\Sy$.} \vspace{0.2cm}\\
 The inequalities \eqref{eq:nonstrict T-overview} are not yet satisfactory, since we need strict inequalities, while the above ones are not only not strict, but also 
 there are terms $\eta_\alpha$.
 Fortunately, we want to show 
 the strict inequality 
 not for the state
  $\rho^F_\Sy$ itself, but for its 
 approximated version    $\sigma_\Sy$. 
 The state $\sigma_\Sy$ is just $\rho^F_\Sy$ with admixture of the maximally mixed state when it is far from it, and it is just the maximally mixed state, when it is close to it. 
 
 The idea now is to show that due to this admixture, $\sigma_\Sy$ will have larger values of entropies 
 than $\rho_\Sy^F$ by such an amount,
 that it will allow one to bypass the $\eta$'s, and obtain the needed strict inequalities. A crucial step is done in Prop. \ref{prop:strict}, where for $\epemb\leq \frac{1}{32 D^2}$  we obtain the following 
 inequalities
 \begin{align}
     T_\alpha(\rho_\Sy^0) < T_\alpha(\sigma^F_\Sy(\tilde\ep_T(\alpha))) \quad \text{\rm for } \alpha>0
 \end{align}
 where 
  \begin{widetext}
 \begin{align}
 \label{eq:epres-for-T}
     \tilde\ep_T(\alpha)\leq
	\left\{\bea{ll}
	\left( 96 D\, \frac{\epemb^\alpha}{\alpha} \right)^{\frac 13} =: \bar \ep_{Tmin}(\alpha) & \text{\rm for } \alpha\in(0,1/2]\vspace{0.1cm}  \\ 
	\Big(-1024\, D^2 \sqrt{\frac{\epemb}{D}} \ln\sqrt{\frac{\epemb}{D}}  \Big)^\frac13 =: \bar \ep_{Tmid} & \text{\rm for all } \alpha\in(1/2,2] \vspace{0.1cm}\\
	\big(96   \sqrt{D\epemb}  D^\alpha\big)^\frac13 =: \bar \ep_{Tmax}(\alpha) & \text{\rm for } \alpha \in(2,\infty). \vspace{0.1cm}
	\eea
	\right.
 \end{align}
 \end{widetext}
 It may appear like the end of the story is near.  We would need just to choose some $\epres$  
 which is larger than all of three values above | the Tsallis entropies on the right hand side will then just grow (as the entropies grow when we increase the admixture with identity, or if we replace with identity; see Lemma \ref{lem:up bound on S and T epsilons}), hence the inequalities will be still satisfied. Thus for so chosen $\epres$ we will obtain 
 \eqref{eq:T-klimesh} (where recall 
 that $\sigma_F$ depends on $\epres$
 as in \eqref{eq:sigma-S}.
 
However, there is a problem with $\alpha$ around $0$ and around $\infty$.  For those $\alpha$'s
the above bounds for $\tilde\ep_T $ become large, while we want them to tend to zero for $\epemb$ going to zero.
In other words, we do not have a 
uniform bound for $\tilde\ep_T $ for all $\alpha$'s at the moment. 

For $\alpha$ lying in those regions, we shall turn to Renyi entropies, and will prove 
inequality  \eqref{eq:S-klimesh}
rather than \eqref{eq:T-klimesh}. To deal with large values of $\alpha$ 
we shall use Eq. \eqref{eq:nonstrict infty-overview}   
in conjunction with Prop. \ref{prop:strict} to show that for $\alpha>1$
\begin{align} 
S_\alpha(\rho_\Sy^0) < S_\alpha(\sigma^F_\Sy(\ep_\infty(\alpha))) 
\end{align}
with
\begin{align}
    \ep_\infty(\alpha) \leq 4 \sqrt{\frac{\ln d_\Sy}{\alpha}+D \epemb} =: \bar \ep_{\infty}(\alpha)   \label{eq:ep infinity up bound-overview}.
\end{align}
To deal with values of $\alpha$ around 
zero, we prove in Prop. \ref{prop:strict} 
that 
for $ \alpha\in(0,1)$
\begin{align}
    S_\alpha(\rho_\Sy^0) < S_\alpha(\sigma^F_\Sy(\epzero(\alpha)))    \quad 
\end{align}
for
\begin{align}
 \epzero(\alpha)  \leq \left( \frac{d_\Sy-1}{d_\Sy}\right)^{\frac{1}{2\alpha}} =: \bar \ep_{0}(\alpha).  \label{eq:ep zero up bound-overview}   
\end{align}
In this latter case we had to use our assumption that the rank of the initial state $\rho_\Sy^0$ was not full.  
Note that the above $\ep$'s 
behave reasonably for large (or small) values of $\alpha$. 
The 
 $\bar\epzero (\alpha)$ goes to zero as $\alpha$ goes to zero,
 and  $\bar\ep_\infty$ 
 tends to $4\sqrt{D \epemb}$. 
 We shall then choose some $\alpha_{\min}$ and $\alpha_{\max}$,
 and below $\alpha_{\min}$ as well as above $\alpha_{\max}$ we shall 
 use the inequalities  for Renyi entropies \eqref{eq:S-klimesh} while 
 between $\alphazero$ and $\alphamax$, we shall use inequalities for Tallis entropy \eqref{eq:T-klimesh}. 
 The rest of the proof is to choose
 $\alpha_{\min}$ and $\alpha_{\max}$
 in such a way that the resulting common bound $\epres$ for all five types of $\ep$'s (i.e. three coming from \eqref{eq:epres-for-T},
 the other two from Eqs. \eqref{eq:ep infinity up bound-overview} and \eqref{eq:ep zero up bound-overview})
 is the smallest possible. 
 Finally, one may ask why we have not used the Renyi entropy everywhere. We did not use it, because it was easier to deal with Tsallis entropies for this region of $\alpha$ in Prop. \ref{prop:strict}. 
 
 Now we are ready to present the full proof of the theorem, with most of the technical lemmas relegated to the \Supp.

\begin{proof}
	Since $\rho_\Sy^0$ is not of full  rank, and the final state $\sigma_\Sy$ is by definition of full rank, 
	we need only to consider Klimesh conditions from Theorem \ref{thm:Klimesh} for $\alpha>0$.
	Consider first $\alpha>0$, $\alpha\not=1$.  If for some unitary $U$ we have 
	\be
	U \rho_S^0\ot\rho_{\cat}^0\ot\tauT_\G U^\dagger = \rho_{S \cat \G}^F
	\ee
	then
	\be
	f_\alpha(\rho^{\mpwno{0}}_{\Sy}\ot\rho_\cat^0\ot \tauT_\G)=f_\alpha(\rho_{\Sy\cat\G}^F ),
	\ee
	where $f_\alpha$ is defined in Sec. \ref{sec:entropies divergences def and properties}.
	Due to convexity/concavity of $f_\alpha$  and their  multiplicativity,  by  lemma \ref{lem:poor sub}, 
	putting  $A=\Sy \cat$ and $B=\G$ we obtain 
	\ben
	\label{eq:fa F Cat1}
	&&f_\alpha(\rho_{\Sy}^{\mpwno{0}}\ot \rho^0_\cat)\geq f_\alpha(\rho_{\Sy\cat}^F ), \quad \text{\rm for } \alpha>1 \\
	&&f_\alpha(\rho_{\Sy}^{\mpwno{0}}\ot \rho^0_\cat)\leq f_\alpha(\rho_{\Sy\cat}^F ), \quad \text{\rm for } \alpha\in(0,1).  \nonumber
	\een
	This implies, by definition of Tsallis entropy $T_\alpha$ [Eq. \eqref{eq:Tsalis def}], that for all $\alpha>0$, $\alpha\not =1$ we have
	\be
	\label{}
	T_\alpha(\rho_{\Sy}^{\mpwno{0}}\ot \rho^0_\cat)\leq T_\alpha(\rho_{\Sy\cat}^F).
	\ee
	We now use $\|\rho^F_{\Sy \cat} - \rho^F_\Sy\ot \rho^0_\cat\|_1 \leq \epemb$ 
	and the continuity lemma \ref{lem:cont} to find for $\alpha>0$
	\be
	T_\alpha(\rho_{\Sy\cat}^F ) \leq T_\alpha(\rho_{\Sy}^F\ot\rho_{\cat}^0) \mpwno{+}\eta_\alpha,
	\ee
	for all $\eta_\alpha$ satisfying
	\begin{align}
	&&\eta_\alpha \geq 6 D 	\left( \frac{\mpwno{\epemb}}{D}\right)^\alpha \quad  \text{for } \alpha \in (0,1/2] 
	\label{eq:eta1-2}
	\\
	&&\eta_\alpha \geq  -32  D \sqrt{\frac{\mpwno{\epemb}}{D}} \ln \sqrt{\frac{\mpwno{\epemb}}{D}} \, 
	\nonumber \\ &&\text{for } \mpwno{\epemb}\leq \frac{1}{\mpwno{32}D^2},\, \alpha \in \!\left(\!\frac12,2
	\!\right) 
	\label{eq:eta2-2}
	\\
	&&\eta_\alpha \geq  6 \sqrt{D \mpwno{\epemb}} \quad 	\text{for } \alpha \in [2,\infty),
	\label{eq:eta3-2}
	\end{align}
	where $D=d_\Sy D_\cat$.
	We rewrite the above equation back in terms of functions $f_\alpha$, which 
	gives
	\ben
	\label{eq:fa F cat2}
	&&f_\alpha(\rho_{\Sy\cat}^F ) \geq f_\alpha(\rho_{\Sy}^F\ot\rho_{\cat}^0) -(\alpha-1)\eta_\alpha \quad \text{\rm for } \alpha>1\\
	&&f_\alpha(\rho_{\Sy\cat}^F ) \leq f_\alpha(\rho_{\Sy}^F\ot\rho_{\cat}^0) -(\alpha-1)\eta_\alpha \quad \text{\rm for }  \alpha\in(0,1).\nonumber
	\een
	Then by using Eq. \eqref{eq:fa F Cat1} followed by the multiplicativity of the $f_\alpha$'s, we obtain from the above equations
	\ben
	\label{eq:fa F cat3}
	&&f_\alpha(\rho^0_{\Sy} ) \geq f_\alpha(\rho_{\Sy}^F) -\frac{(\alpha-1)}{f_\alpha(\rho_{\cat}^0)}\eta_\alpha \quad \text{\rm for } \alpha>1\\
	&&f_\alpha(\rho_{\Sy}^0 ) \leq f_\alpha(\rho_{\Sy}^F) -\frac{(\alpha-1)}{f_\alpha(\rho_{\cat}^0)}\eta_\alpha \quad \text{\rm for }  \alpha\in(0,1). 
	\een 
	Finally using $f_\alpha(p)\geq d^{1-\alpha}$ for $\alpha>1$ and 
	$f_\alpha(p)\geq 1$ for $\alpha\in(0,1)$ (These inequalities follow from setting $r=1, p=\alpha$ and $r=\alpha, p=1$ respectively in Eq. \eqref{eq:p r inequalities}, in Lemma \eqref{norm lem trangle inequality}) rewriting back in terms of $T_\alpha$'s we obtain
	\ben
	\label{eq:nonstrict T_old}
	&&T_\alpha(\rho_\Sy^0)\leq T_\alpha(\rho^F_\Sy)+ \eta_\alpha D_\cat^{\alpha-1} \quad \text{\rm for } \alpha\geq 1\\
	&&T_\alpha(\rho_\Sy^0)\leq T_\alpha(\rho^F_\Sy)+ \eta_\alpha \quad \text{\rm for } \alpha\in(0,1).	
	\een
	Here we have included the case $\alpha=1$, which is obtained by taking the limit $\alpha\to1$.\footnote{This extension of the domain of $\alpha$ for which the inequality holds, follows trivially using proof by contradiction and noting that the functions in Eq. \eqref{eq:fa F cat3} are continuous for $\alpha\in[1,\infty)$.} We can somewhat crudely unify this equation into 
	\be
	\label{eq:nonstrict T}
	T_\alpha(\rho_\Sy^0)\leq T_\alpha(\rho^F_\Sy)+ \eta_\alpha D_\cat^{\alpha} \quad \text{\rm for } \alpha>0.
	\ee
	Furthermore, \eqref{eq:fa F Cat1} implies that  for $\alpha>1$ 
	\be
	S_\alpha(\rho_\Sy^0\ot \rho_\cat^0) \leq  S_\alpha(\rho_{\Sy\cat}^F) 
	\ee
	and by taking limit $\alpha\to \infty$ we get 
	\be
	S_\infty(\rho_\Sy^0\ot \rho_\cat^0) \leq  S_\infty(\rho_{\Sy\cat}^F)
	\ee
	which by  Lemma \ref{lem:cont} and additivity of $S_\infty$ 
	gives
	\be
	\label{eq:nonstrict infty}
	S_\infty(\rho_\Sy^0)\leq  S_\infty(\rho_\Sy^F) \mpwno{+}\eta_\infty,
	\ee
	where 
	\begin{equation}\label{eq:et infinity}
	\eta_\infty= D_\cat \epemb.
	\end{equation}
	Let us now define  as in Proposition \ref{prop:strict}
	\be
	\sigma_\Sy^F(\ep)=
	\left\{
	\begin{array}{ll}
	\mpwno{ \id_\Sy/d_\Sy} &\text{when } \|\rho_\Sy^F -\id_\Sy/d\mpwno{_\Sy}\|_1  \mpwno{<} \ep \\
	(1-\ep) \rho_\Sy^F + \ep \id_\Sy/d\mpwno{_\Sy} & \text{when } \|\rho_\Sy^F -\id_\Sy/d\mpwno{_\Sy}\|_1 \geq \ep. \\
	\end{array}
	\right.
	\ee
	Eqs. \eqref{eq:nonstrict T} and \eqref{eq:nonstrict infty} by using Proposition  \ref{prop:strict} lead to the following conclusion:
	for 
	\ben
	&&
	\tilde\ep_T(\alpha)=
	\left\{\bea{ll}
	(16\eta_\alpha   D_\cat^\alpha d_\Sy^{\alpha-1})^\frac13 & \text{\rm for } \alpha \mpwno{ \geq } 1  \label{eq:tilde epsilon T}\\
	(16\eta_\alpha  D_\cat^\alpha  d_\Sy^{\alpha-1}\alpha^{-1})^\frac13 & \text{\rm for } \alpha\in(0,1) \label{eq:tilde epsilon T def 0}\\
	\eea
	\right. \\
	&&\ep_\infty(\alpha) = 4 \sqrt{\frac{\ln d_\Sy}{\alpha}+\eta_\infty} \quad \text{\rm for } \alpha > 1 \\
	&&\epzero(\alpha) =  \left( 1 - \frac1{d_\Sy}\right)^{\frac{1}{\mpwno{2}\alpha}}\quad \text{\rm for } \alpha\in(0,1),\label{eq:tilde epsilon 0}
	\een
	we have 
	\begin{align}
	 T_\alpha(\rho_\Sy^0)&\mpwno{\leq} T_\alpha(\sigma^F_\Sy(\tilde\ep_T(\alpha)))\\
	& \mpwno{- \min\left\{D_\cat^\alpha  \eta_\alpha,\,\,  T_\alpha(\id/d_\Sy) - T_\alpha(\rho^0_\Sy) \right\} } \quad \text{\rm for } \alpha>0 \nonumber\\
	S_\alpha(\rho_\Sy^0)& \mpwno{\leq} S_\alpha(\sigma^F_\Sy(\ep_\infty(\alpha)))\\
	&   \mpwno{- \min\left\{D_\cat^\alpha  \eta_\alpha,\,\,  \ln d_\Sy - S_1(\rho^0_\Sy) \right\} } \quad \text{\rm for }  \alpha>1 \nonumber\\
	 S_\alpha(\rho_\Sy^0)& \mpwno{\leq} S_\alpha(\sigma^F_\Sy(\epzero(\alpha))) \\
	 & \mpwno{- \frac{1}{2} \ln\left(\frac{d_\Sy}{d_\Sy-1}\right) }   \quad \text{\rm for }  \alpha\in(0,1),
	\end{align}
	from which we achieve 
	\ben
	&& T_\alpha(\rho_\Sy^0) < T_\alpha(\sigma^F_\Sy(\tilde\ep_T(\alpha))) \quad \text{\rm for } \alpha>0\\
	&& S_\alpha(\rho_\Sy^0) < S_\alpha(\sigma^F_\Sy(\ep_\infty(\alpha)))    \quad \text{\rm for }  \alpha>1 \\
	&& S_\alpha(\rho_\Sy^0) < S_\alpha(\sigma^F_\Sy(\epzero(\alpha)))    \quad \text{\rm for }  \alpha\in(0,1).
	\een

	Let us now insert explicitly the $\eta$'s from Eqs. \eqref{eq:eta1-2}-\eqref{eq:eta3-2}  and Eq. \eqref{eq:et infinity} into Eqs. \eqref{eq:tilde epsilon T}-\eqref{eq:tilde epsilon 0}. For
	\begin{equation}
	\epemb\leq \frac{1}{32 D^2},
	\end{equation}
	we achieve the upper bounds
\begin{widetext}
	\begin{align} \label{eq:ep tilde up bound}
\tilde\ep_T(\alpha)\leq
	\left( 96 D\, \frac{\epemb^\alpha}{\alpha} \right)^{\frac 13} &=: \bar \ep_{Tmin}(\alpha)\quad  \text{\rm  for } \alpha\in(0,1/2]\vspace{0.1cm}  \\ 
	\Big(-1024\, D^2 \sqrt{\frac{\epemb}{D}} \ln\sqrt{\frac{\epemb}{D}}  \Big)^\frac13 &=: \bar \ep_{Tmid} \quad  \text{\rm  for } \alpha\in(1/2,2] \vspace{0.1cm}\\
	\big(96   \sqrt{D\epemb}  D^\alpha\big)^\frac13 &=: \bar \ep_{Tmax}(\alpha) \quad  \text{\rm  for } \alpha \in(2,\infty) \\
	\ep_\infty(\alpha) \leq 4 \sqrt{\frac{\ln d_\Sy}{\alpha}+D \epemb} &=: \bar \ep_{\infty}(\alpha)  \quad \text{\rm  for } \alpha\in [1, \infty) \label{eq:ep infinity up bound}\\
	\epzero(\alpha)  \leq \left( \frac{d_\Sy-1}{d_\Sy}\right)^{\frac{1}{2\alpha}} &=: \bar \ep_{0}(\alpha) \quad \text{\rm  for } \alpha\in(0, 1), \label{eq:ep zero up bound}
	\end{align}
\end{widetext}
	where $D=d_\Sy D_\cat$. 
	We now divide the set $(0,\infty)$ into five subintervals (some of which may be empty). For $\alphazero\in(0,1)$, $\alphamax\in[2,\infty)$, we have
	$(0,\infty)=(0,\alphazero]\cup (\alphazero,1/2] \cup (1/2, 2] \cup [2, \alphamax] \cup (\alphamax,\infty).$
	For each of these intervals, we compute upper bounds on our epsilons. Specifically, from Eqs. \eqref{eq:ep tilde up bound}, \eqref{eq:ep infinity up bound}, \eqref{eq:ep zero up bound}, we observe that:
	\begin{align}
	&\epzero(\alpha) \leq \bar\ep_0(\alphazero)  \quad \forall \, \alpha\in(0,\alphazero), \forall\, \alphazero\in(0,1)\\
	&\tilde\ep_T(\alpha)\leq \nonumber\\
	&\left\{\bea{ll}
	\bar\ep_{Tmin}(\alphazero) & \forall \, \alpha\in(\alphazero, 1/2], \forall\, \alphazero\in(0,1/2]\\
	\bar\ep_{Tmid} & \forall\, \alpha \in (1/2, 2]  \\
	\bar\ep_{Tmax}(\alphamax) & \forall\, \alpha\in[2,\alphamax], \forall \,\alphamax\in[2,\infty) \\
	\eea
	\right.\\
	&\ep_\infty(\alpha)\leq \bar\ep_\infty(\alphamax) \quad\forall\, \alpha\in(\alphamax,\infty), \forall\, \alphamax\in[1,\infty).
	\end{align}
	Now we define $\epres$ as any value satisfying
	\begin{equation}\label{eq:ep res}
	\epres(\alphazero, \alphamax) \geq \max \left\{ \varepsilon_\textup{min}(\alphazero), \varepsilon_\textup{max}(\alphamax), \bar\ep_{Tmid}  \right\}
	\end{equation}
	where
	\ba 
	\varepsilon_\textup{min}(\alphazero)& = \max \left\{\bar\ep_{Tmin}(\alphazero), \bar\ep_0(\alphazero)\right\}\\
	\varepsilon_\textup{max}(\alphamax)&= \max \left\{  \bar\ep_{Tmax}(\alphamax), \bar\ep_\infty(\alphamax)  \right\}.
	\ea 
	Thus using Lemma \ref{lem:up bound on S and T epsilons}, we have  
	\begin{widetext}
	\begin{alignat}{3}
	S_\alpha(\rho_\Sy^0) &< S_\alpha\big(\sigma^F_\Sy(\bar\ep_0(\alphazero))\big) &&< S_\alpha\big(\sigma^F_\Sy(\epres(\alphazero,\alphamax))\big)\quad &&\forall\, \alpha\in(0,\alphazero) \\
	T_\alpha(\rho_\Sy^0) &< T_\alpha\big(\sigma^F_\Sy(\bar\ep_{Tmin}(\alphazero))\big) &&< T_\alpha\big(\sigma^F_\Sy(\epres(\alphazero,\alphamax))\big) &&\forall\, \alpha\in(\alphazero,1/2) \label{line:first T in T<T eq}\\
	T_\alpha(\rho_\Sy^0) &< T_\alpha\big(\sigma^F_\Sy(\bar\ep_{Tmid})\big) &&< T_\alpha\big(\sigma^F_\Sy(\epres(\alphazero,\alphamax))\big) && \forall\, \alpha\in[1/2,2] \\
	T_\alpha(\rho_\Sy^0) &< T_\alpha\big(\sigma^F_\Sy(\bar\ep_{Tmax}(\alphamax))\big) &&< T_\alpha\big(\sigma^F_\Sy(\epres(\alphazero,\alphamax))\big) &&\forall\, \alpha\in[2,\alphamax) \\
	S_\alpha(\rho_\Sy^0) &< S_\alpha\big(\sigma^F_\Sy(\bar\ep_{\infty}(\alphamax))\big) &&< S_\alpha\big(\sigma^F_\Sy(\epres(\alphazero,\alphamax))\big) &&\forall\, \alpha\in[\alphamax,\infty)
	\end{alignat}
	\end{widetext}
	holds for all $\alphazero\in(0,1)$, $\alphamax\in(2,\infty)$.\footnote{Note that the range of $\alpha$ for which Eq. \eqref{line:first T in T<T eq} holds is the empty set if $\alphazero$ is large enough. Under such circumstances, this equation contains no information.}
	
	Thus for any particular choice of $\alphazero\in(0,1)$ and  $\alphamax\in(2,\infty)$, $\epres(\alphazero, \alphamax) $ is such that the Klimesh conditions are satisfied, so that for any $\rho_\Sy^0$ there exists catalyst $\tilde \rho_\cat$ 
	such that 
	\be
	\rho_\Sy^0 \ot \tilde \rho_\cat \succ  \sigma_\Sy^F(\epres)  \ot  \tilde \rho_\cat
	\ee
	
	Our next aim is to find an explicit expression for $\epres(\alphazero,\alphamax)$ with the aim of choosing the parameters $\alphazero,\alphamax$, so that $\epres(\alphazero,\alphamax)$ is not too large.  In lemma \ref{lem:alpha-min-max}
	we show that the $\epres$ given in the statement of the theorem upper bounds 
	$\epres(\alphazero,\alphamax)$ 
	for some $\alphazero$ and $\alphamax$. 
	This finalises the proof. 
	\end{proof}

To see how to write Theorem \ref{thm:noemb} in the form of Theorem \ref{thm:noemb physical}, see Corollary \ref{rem:tigher bound on ep emb} in the \Supp.

\subsection{Introduction to the Quasi-Ideal Clock and Proof of Theorem \ref{Thm:Implementation with Quasi-Idela clock}}

In the following subsection (\ref{Overview of the quasi-ideal clock}), we start with a brief overview of the properties of the Quasi-Ideal Clock. These are necessary for the proof of Theorem \ref{Thm:Implementation with Quasi-Idela clock}, which is in subsection \ref{Sec: proof of thm 2: Quasi-Ideal clock bound}.

\subsubsection{Brief overview of the Quasi-Ideal Clock}\label{Overview of the quasi-ideal clock}
\label{subsec:clock}

In this section we will recall the clock 
construction from \cite{WSO} which will 
be subsequently used to prove Proposition \ref{prop:clock-error}, which in turn will lead to the proof of Theorem \ref{Thm:Implementation with Quasi-Idela clock}.
\cblack

The time independent total Hamiltonian over system $\rho_\A\otimes\rho_\cl$ is
\be 
\hat H_{\A\cl}=\hat H_\A\otimes\id_\cl+ \hat H^\textup{int}_\A\otimes\hat V_d +\id_\A\otimes\hat H_\cl, 
\ee 
where $\hat H_\A$ is the system Hamiltonian which commutes with the target unitary $U^\text{target}_\A$. The term $\hat H^\textup{int}_\A$ encodes the target unitary via the relation $U^\textup{target}_\A=\me^{-\mi \hat H^\textup{int}_\A}$, 
 with 
\begin{align}
\label{eq:target-unitary}
    H^\textup{int}_\A= \sum_{n=1}^{d_\A} \Omega_n |n\>\< n|.
\end{align}

\cblack

The clock's free Hamiltonian, $\hat H_\cl$ is a truncated harmonic oscillator Hamiltonian. Namely, $\hat H_\cl=\sum_{n=0}^{d-1}\omega n \ketbra{n}{n}$. The free evolution of any initial clock state under this Hamiltonian has a period of $T_0=2\pi/\omega$, specifically, $\me^{-\mi T_0 \hat H_\cl}\rho_\cl\me^{\mi T_0\hat H_\cl} = \rho_\cl$ for all $\rho_\cl$. The clock interaction term $\hat V_d$, takes the form,
\be
\label{eq:potential-op}
\hat{V}_d = \frac{d}{T_0} \sum_{k=0}^{d-1} {V}_d(k) \ketbra{\theta_k}{\theta_k},
\ee 
where the basis $\{\ket{\theta_k}\}_{k=0}^{d-1}$ is the Fourier transform of the energy eigenbasis $\{\ket{n}\}_{n=0}^{d-1}$. The function ${V}_d:\rr\mapsto\rr$ 
will be called potential and \cblack is defined by 
\be 
\label{eq:potential-Vd}
V_d(y)= \frac{2\pi}{d} V_0\left( \frac{2\pi}{d}(y -y_0) \right),
\ee 
where $V_0$ is an infinitely differentiable periodic function of period $2\pi$ centered around $0$ (so that $V_d$ has period $d$ and is centered around $y_0$).
 A lot of results hold for this general form of potential. To obtain all the results 
we shall need a specialized form of potential given by  
\begin{align}
\label{eq:potential}
    V_0(x) = A_c \cos^{2n}\left(\frac{x}{2}\right),
    \quad \text{with} \quad A_c=\frac{2^{2n}}{2\pi{2n \choose n}},
\end{align}
and where $n$ will be later be taken to be a suitable function of clock dimension (specifically, later we shall take $n\sim d^{1/4}$). Here $A_c$ is normalization constant so that $\int_{-\pi}^\pi V_0(x) \dt x=1$.
It is important that $V_0$ has exponentially decaying tails 
\begin{align}
\label{eq:potential-tail-1}
    \tilde\epsilon_V:=\int_{-2\pi(1-\potdelta)}^{-2 \pi \potdelta}V_0(x)\dt x \leq  \frac{1}{\delta_V} e^{-\delta_V^2 n}, \quad \text{for } \potdelta\in(0,\pi).
\end{align}
The bound in \cite{WSO} is tighter and does not diverge as $\delta_V\to 0^+$, 
but the present one is just enough | as we anyway care just about scaling for the proof (see lemma \ref{lem:tail-potential}).

Recall that for the Quasi-Ideal clock, the initial state is pure $\rho_\cl=\ketbra{\Psi_\textup{nor}(k_0)}{\Psi_\textup{nor}(k_0)},$ where 
\begin{align}\label{gaussianclock}
\ket{\Psi_\textup{nor}(k_0)} &= \sum_{\mathclap{\substack{k\in \mathcal{S}_d(k_0)}}}\psi(k_0;k) \ket{\theta_k},\\
\psi(k_0;x) &= A e^{-\frac{\pi}{\sigma^2}(x-k_0)^2} e^{i2\pi n_0(x-k_0)/d}, \quad x\in\rr.
\end{align}
with $\sigma \in (0,d)$, $n_0 \in (0,d-1)$, $k_0\in\rr$, $A \in \rr^+$, and $\mathcal{S}_d(k_0)$ is the set of $d$ integers closest to $k_0$, defined as
\begin{align}\label{eq: mathcal S def}
\mathcal{S}_d(k_0) = \left\{ k \; : \; k\in \mathbb{Z} \text{   and  }  -\frac{d}{2} \leq  k_0-k < \frac{d}{2} \right\}.
\end{align}
Note that for $k$ larger than $d-1$ or smaller than $0$,
we define $\theta_k$ as 
$\theta_{k \mod d}$.
The quantity $A$ is defined so that the state is normalised, namely
\be\label{eq:A normalised}
A=A(\sigma;k_0)=\frac{1}{\sqrt{\sum_{k\in \mathcal{S}_d(k_0)} \me^{-\frac{2\pi}{\sigma^2}(k-k_0)^2}}}=\bo\left(\!\! \left(\frac{2}{\sigma^2}\right)^{\!\!\frac14} \right)\!.
\ee
The parameter $n_0$ is approximately the mean energy of the clock,
and for a good clock performance, it should be not too close to $0$ nor to $d$. 
We will later set it to $(d-1)/2$ as suggested in \cite{WSO}; see Def. 1.

\subsection{Small error on the clock}
Here we prove a proposition that is crucial to prove Theorem \ref{Thm:Implementation with Quasi-Idela clock}. The proposition states that for the clock described above, the state of the clock acquires a small error. 
\begin{prop}
\label{prop:clock-error}
    Consider the Quasi-Ideal clock described above. Consider times $t_1,t_2$
    satisfying $0<t_1<t_2<T_0$. Then for the potential $V_d$ determined by Eqs. \eqref{eq:potential-Vd} and \eqref{eq:potential}
    with $y_0=(t_1-t_2) d/T_0$ and $n=\lceil d^{1/4}\rceil$,
    we have:
    \begin{align}
    \label{eq:prop-clock-error}
        \| \rho^F_{\cl}(t)-\rho^0_\cl(t)\|_1 \leq 
    \frac{1}{t_2-t_1} poly(d) e^{-c_2 d^{1/4}}
    \end{align}
    where $c_2=\min\{\frac{1}{64 \pi}, (t_2-t_1)^2/(32T_0^2) \}$. 
\end{prop}
\begin{remark}
The unbounded from above factor $1/(t_2-t_1)$ is not necessary, and in the original paper \cite{WSO} it did not appear. Here it is a price for a simpler proof of 
potential concentration properties. 
\end{remark}
\begin{proof}
Let us set arbitrary times $t_1$ and $t_2$. 
We want to show that we can choose the potential in the clock described in Sec. \ref{subsec:clock} so that, apart from times near the bondaries (i.e. those \emph{not} satisfying $0<t_1<t_2<T_0$), it will be 
close to the free evolution of the clock state.
In other words, the evolution may be different in the ``interaction zones''. 
The potential has been already determined, with two free parameters | the peak position $y_0$ and $n$, determining the concentration of the potential around the peak. 
As will be later argued,
the clock state (we shall call it pointer) will approximately travel around the circle with linear speed $d/T_0$,
so that to times $t_1$ and $t_2$ 
there correspond positions $y_1=t_1 d/T_0$ and $y_2=t_2 d/T_0$ (see Fig. \ref{fig:clock}).
\begin{figure}
\includegraphics[scale=0.4]{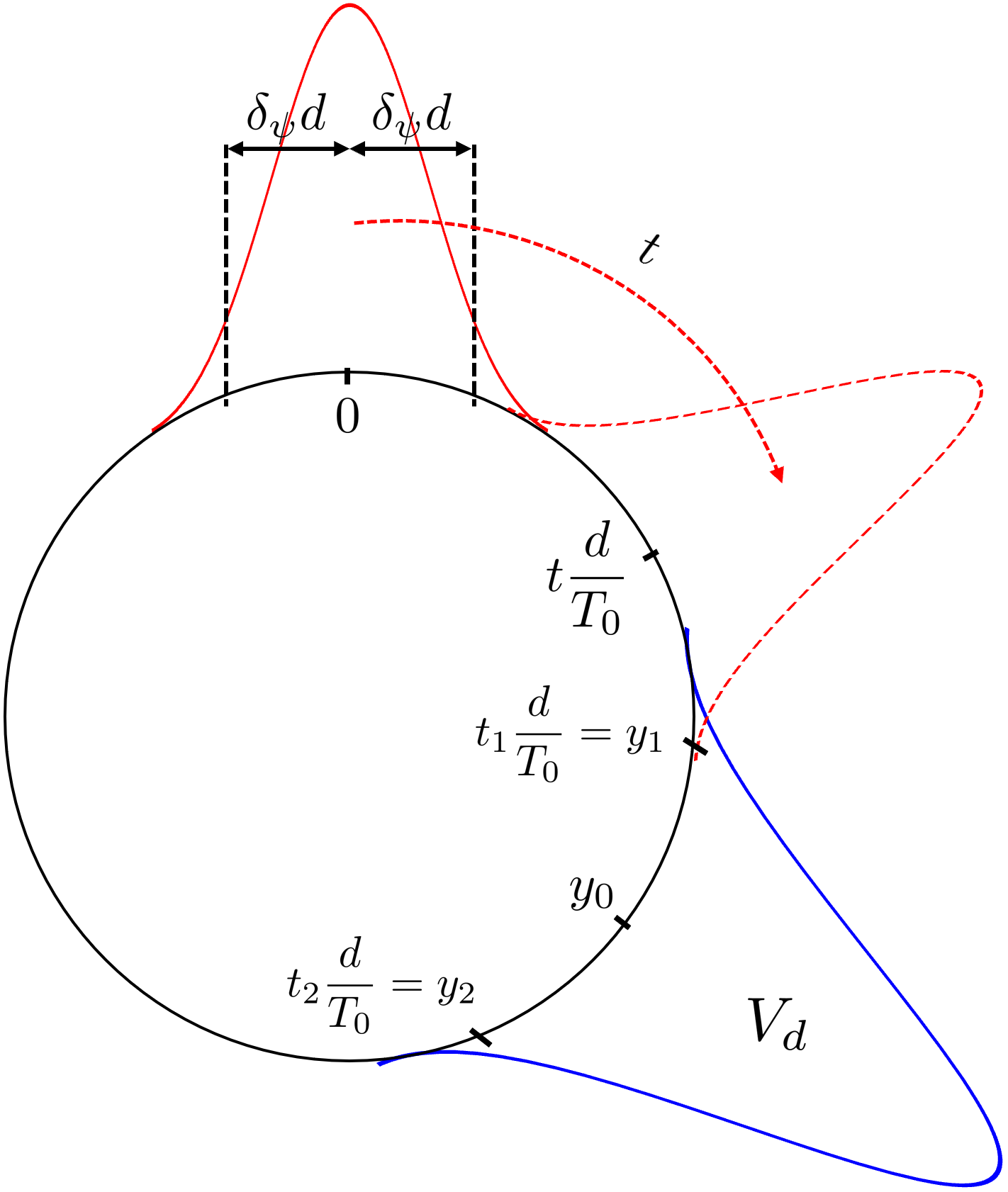}
		\centering
		
		                \caption{ \textbf{Dynamics of the clock.} The Circumference of the circle is $d$. The red profile represents the amplitudes of the clock state (called pointer) with the weight concentrated within $\pm\psidelta d$ from the center. It moves around  approximately with speed $d/T_0$. The potential $V_d$ has peak at $y_0$.	The positions $y_1$ and $y_2$ are determined by the times $t_1$ and $t_2$ and denote places which the peak of the clock state will reach at times $t_1$ and $t_2$. \label{fig:clock} }
\end{figure}
Since the  interaction can take place on the interval
 where potential is non-negligible, we shall aim to have 
potential concentrated in the area between $y_1$ and $y_2$. To this end we  choose the peak of potential to be in the middle between $y_1$ and $y_2$:
\begin{align}
    y_0=\frac{y_1+y_2}{2}=
    \frac{t_2+t_2}{2} \frac{d}{T_0}
\end{align}
The concentration parameter $n$ will be chosen to be chosen later. With such potential we want to estimate the following quantity:
\begin{equation}
    \| \rho^F_{\cl}(t)-\rho^0_\cl(t)\|_1.
\end{equation}
To this end we shall evaluate fidelity and will use 
$\|\rho-\sigma\|\leq 2 \sqrt{1- F(\rho,\sigma)^2}$. 
In \cite{WSO}, (see the proof of Lemma 10.0.3; page 192)
 after a bit of algebra, the following was obtained 
\be 
	\rho_{\cl}(t)= \sum_{n=1}^{d_\A} \rho_{n,n}(0)\ketbra{\bar\Phi_n(t)}{\bar\Phi_n(t)}_\cl, 
	\ee 
	where $\{\rho_{n,n}(0)\}_n$ are the eigenvalues of the initial system state $\rho_\A$, and thus also constitute a set of normalised probabilities. $\ket{\bar\Phi_n(t)}_\cl$ is defined by,
	\begin{align} 
	\ket{\bar\Phi_n(t)}_\cl &= \hat \Gamma_n(t) \ket{\Psi_\textup{nor}(k_0)}_\cl, \\
	 \hat \Gamma(t,\Omega_n)&=\me^{-\mi t( \Omega_n \hat V_d+\hat H_\cl)},
	\end{align}
	where $\{\Omega_n\in[-\pi,\pi]\}_{n=1}^{d_\A}$ are a set of phases which determine the target unitary one wishes to apply (see Eq. \eqref{eq:target-unitary}). Using $F(\rho,\psi)=\<\psi|\rho|\psi\>$ we get 
\begin{align}
\label{eq:fidelity-clock-new}
	&F\left( \rho_\cl^0(t), \rho_\cl(t) \right)  \\
	&=\sum_{n=1}^{d_\cl} \rho_{n,n} (0) \left| \braket{\Psi_\textup{nor}(k_0)| \me^{\mi t \hat H_\cl} \hat \Gamma(t,\Omega_n) |\Psi_\textup{nor}(k_0)} \right|^2 \nonumber \\
& \geq \min_{\Omega\in[-\pi,\pi]}  \left| \braket{\Psi_\textup{nor}(k_0)| \me^{\mi t \hat H_\cl} \hat \Gamma(t,\Omega) |\Psi_\textup{nor}(k_0)} \right|^2.
\end{align}
We thus aim to show that the following states 
\begin{equation}
    \me^{-\mi t \hat H_\cl}  \ket{\Psi_\textup{nor}(k_0)}
    \quad \text{and} \quad  \hat\Gamma(t,\Omega) \ket{\Psi_\textup{nor}(k_0)}
\end{equation}
have overlap close to 1, irrespectively of phase $\Omega$.
To this end we shall use core theorems in \cite{WSO} 
(Theorems: VIII.1, page 19 and IX.1, page 35).
They say that 
\begin{itemize}
    \item[(i)]  under evolution $\me^{\mi t \hat H_\cl}$
the state $|\Psi_\textup{nor}(k_0)\>$ up to a small correction 
evolves in a trivial way - namely its peak undergoes translation. 
\item[(ii)]  under evolution $\Gamma(\Omega,t)$
the above translation occurs too, but in addition the $k$-th amplitude  
of the state acquires phase equal to the potential integrated over the interval that $k$ travelled.
\end{itemize}
More specifically for $n_0=(d-1)/2$ (c.f. Def. 2 in \cite{WSO}) which means that $n_0$ which has interpretation of average energy, is not too close to 0 or to the maximal energy, we have  
\begin{align}
& \me^{-\mi t \hat H_\cl}  \ket{\Psi_\textup{nor}(k_0)} \\
&= \ket{\Psi_\textup{nor}(k_0+td/T_0)}+ \ket{\varepsilon_c}\\
&= \sum_{k\in\mathcal{S}_d (k_0+td/T_0)} \psi(k_0+td/T_0;k)\ket{\theta_k} + \ket{\varepsilon_c},\label{eq:free evollution of Quasi ideal}\\
&\Gamma(t,\Omega) \ket{\Psi_\textup{nor}(k_0)}\\
 &= \ket{\bar\Psi_\textup{nor}(k_0+td/T_0, t d/T_0)}+ \ket{\varepsilon_\nu} \\
&= \sum_{k\in\mathcal{S}_d (k_0+td/T_0)} \me^{-\mi  \phi_k(t)} \psi(k_0+td/T_0;\,k)\ket{\theta_k} + \ket{\varepsilon_\nu}  , \label{eq:non free evollution of Quasi ideal}
\end{align} 
where  the phase acquired by $k$'th amplitude is given by
\begin{align}
\label{eq:phase}
\phi_k(t) &=\Omega \int_{k-t d/T_0}^k dy  V_d(y).
\end{align}
Now, for $\sigma=\sqrt d$ 
and for $n=\frac12\lceil d^{1/4}\rceil$ in the potential form 
of Eq. \eqref{eq:potential} we have 
(see Lemma \ref{lem: ep-nu-estimate})
\begin{align}\label{eq:epsilon-nu-epsilon-c}
	\||\varepsilon_\nu\>||_2=:
	\varepsilon_\nu&\lesssim
    t\, poly(d) e^{- \frac{1}{16\pi} d^{1/4}}
	\nonumber \\
	\||\varepsilon_c\>||_2=: \varepsilon_c&=\bo \left( poly(d_\cl)\,  \me^{-\frac{\pi}{4} d_\cl} \right)
\end{align}
Actually, only estimate on $\varepsilon_\nu$
depends on the potential form. 
The bound for $\varepsilon_c$ 
holds for arbitrary periodic $V_0$. 

Now, since for normalized $|\psi\>,|\phi\>$ and $|x\>,|y\>$
such that $\||x\>\|, \||y\>\|\leq 1$ we have 
$|(\<\psi|+\<x|) | (|\phi\>+|y\>)|^2\geq |\<\psi|\phi\>|^2 - 3 \||x\>\|-3 \||y\>\|$, 
we obtain 
from \eqref{eq:fidelity-clock-new}
\begin{align}
    \label{eq:fidelity-clock-new2}
	&F\left( \rho_\cl^0(t), \rho_\cl(t) \right) \geq \\
	&\min_{\Omega\in[-\pi,\pi]} 
	\left|\<\Psi_\textup{nor}(k_0+td/T_0)|\bar\Psi_\textup{nor}(k_0+td/T_0, t d/T_0) \rangle \right|^2\nonumber\\
	& - 3 \varepsilon_{\nu}
	- 3 \varepsilon_c
\end{align}
We thus have the situation, that
$\Psi_\textup{nor}(k_0+td/T_0)$ 
and 
$\bar \Psi_\textup{nor}(k_0+td/T_0)$ 
have the peak moving  around with speed $d/T_0$, while  $\bar\Psi_\textup{nor}$
in addition acquires phase. 
	We now write explicitly the above inner product 
	\begin{align}
	\label{eq:Delta-total}
	    &\Delta(\Omega):=\<\Psi_\textup{nor}(k_0+td/T_0)|\bar\Psi_\textup{nor}(k_0+td/T_0, t d/T_0) \rangle = \\
	   & 
	    \sum_{k\in\mathcal{S}_d(k_0+td/T_0)} \me^{-\mi\Omega\int_{k-td/T_0}^k dy V_d(y)} \left|\psi_\textup{nor}(k_0+td/T_0;\,k)\right|^2.
	\end{align}
and the goal is to show that it is close to $1$ independently of $\Omega$
for all times before $t_1$ 
and after $t_2$. 

The idea to prove this (along the lines of \cite{WSO}) is the following way (see also Fig. \ref{fig:clock}).
Let us denote the position of the peak of the pointer by $k_0(t)=k_0+td/T_0$.
We shall set the initial pointer's peak to be at 12 O'clock, i.e. $k_0=0$.
First,  since Gaussians 
have rapidly decaying tails we have (see Lemma \ref{lem:gaussian-tail})
\begin{align}
\label{eq:epsilon-LR}
  \varepsilon_{LR}:=\sum_{k:|k-k_0|\geq \psidelta \, d} \left|\psi_\textup{nor}(k_0;\,k)\right|^2
    \leq poly(d) e^{-\psidelta^2 d},
\end{align}
for $\psidelta>0$. 
We can restrict the sum in \eqref{eq:Delta-total}
and leave only 
$k's$ lying within the interval 
$k_0(t)\pm \psidelta d$. Since the pointer's peak travels with speed $d/T_0$, i.e.  
$k_0(t)= t d/T_0$, 
 for times ``before the interaction'', i.e.  $t\leq t_1$, 
those $k$'s will be to the left of 
$y_1+\psidelta d$, and for times ``after interaction'', i.e.  $t \geq t_2$ 
they will be to the right of $y_2-\psidelta d$.

The potential is strongly peaked around $y_0$ which sits between those two positions.
Thus for times $t\leq t_1$ those $k's$ were travelling within the tail of the potential, 
while for times $t\geq t_2$, all those $k$'s 
have passed the ``body'' of the potential. 
Now, according to \eqref{eq:phase} 
acquired by $|\theta_k\>$ 
are given by $\Omega$ times the integral of the potential  over the interval the $k$ travelled.  Thus for times $t\leq t_1$, the acquired phase
for all those $k$'s will be close to zero (less than $\Omega \tilde \ep_V$, where $\ep_V$ is the area of the tail) 
while for times $t\geq t_2$, the phase will be close to $\Omega$ (larger than  $\Omega(1-\tilde\ep_V$), since the total integral of the potential is $1$). 

We shall now write it rigorously.  
First of all,  we shall cut the tails of the pointer. We split $\Delta(\Omega)$ as 
\begin{align}
\label{eq:split-delta}
    \Delta(\Omega)=\Delta_C(\Omega) + \Delta_{LR}(\Omega)
\end{align}
with 
\begin{align}
    \Delta_C(\Omega) := \qquad & \\
	   \sum_{|k-td/T_0|\leq \psidelta\, d }& \me^{-\mi\Omega\int_{k-td/T_0}^k dy V_d(y)} \left|\psi_\textup{nor}(td/T_0;\,k)\right|^2, \\
	\Delta_{LR}(\Omega):= \qquad &  \\ 
	    \sum_{\psidelta d <|k-td/T_0|\leq d/2 }& \me^{-\mi\Omega\int_{k-td/T_0}^k dy V_d(y)} \left|\psi_\textup{nor}(td/T_0;\,k)\right|^2.
\end{align}
We then have 
\begin{align}
\label{eq:tail-for-Delta}
    |\Delta_{LR}(\Omega)|\leq &\\
    \sum_{k:|k-k_0|\geq \psidelta \, d} & \left|\psi_\textup{nor}(k_0;\,k)\right|^2
    =\varepsilon_{LR}
    \leq poly(d) e^{-\psidelta^2 d},
\end{align}
where the tail bound is, for completeness, 
given in lemma \ref{lem:gaussian-tail}. 
By the tail estimate \eqref{eq:tail-for-Delta},
for  $\varepsilon_{LR} \leq 1$ we then get 
\begin{align}
\label{eq:Delta-Delta-c}
    |\Delta(\Omega)|^2 \geq |\Delta_C(\Omega)|^2 - 2 \varepsilon_{LR},
\end{align}
so that it is enough to show that $\Delta_C(\Omega)$ is close to $1$ irrespective of $\Omega$. 

We shall now bound the phase $\phi_k(t)$
for our restricted set of $k$'s.

{\it Times ``before interaction''.} 
Consider time $t\leq t_1$, and as said we are 
restricting to $k$ such that $|k-t d/T_0|\leq \psidelta d$. This implies 
\begin{align}
    &k\leq t d/T_0 + \psidelta d \leq  t_1 d/T_0 + \psidelta d =y_1 +\psidelta d, \nonumber \\
    &k - t d/T_0 \geq - \psidelta d,     
\end{align}
which gives (see Fig. \ref{fig:phase})

\begin{figure*}
	\includegraphics[scale=0.55]{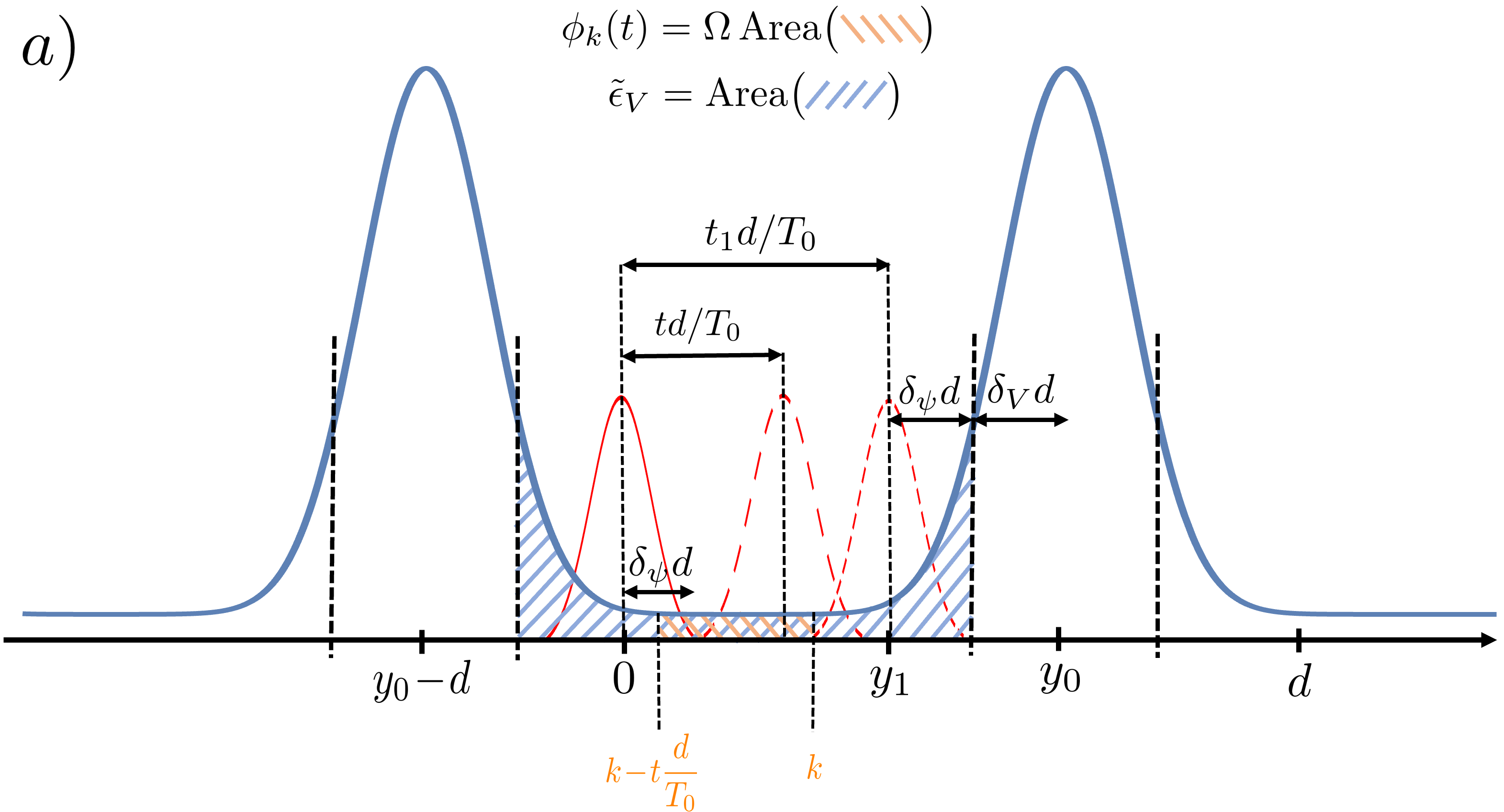}\vspace{1cm}\\
		\includegraphics[scale=0.55]{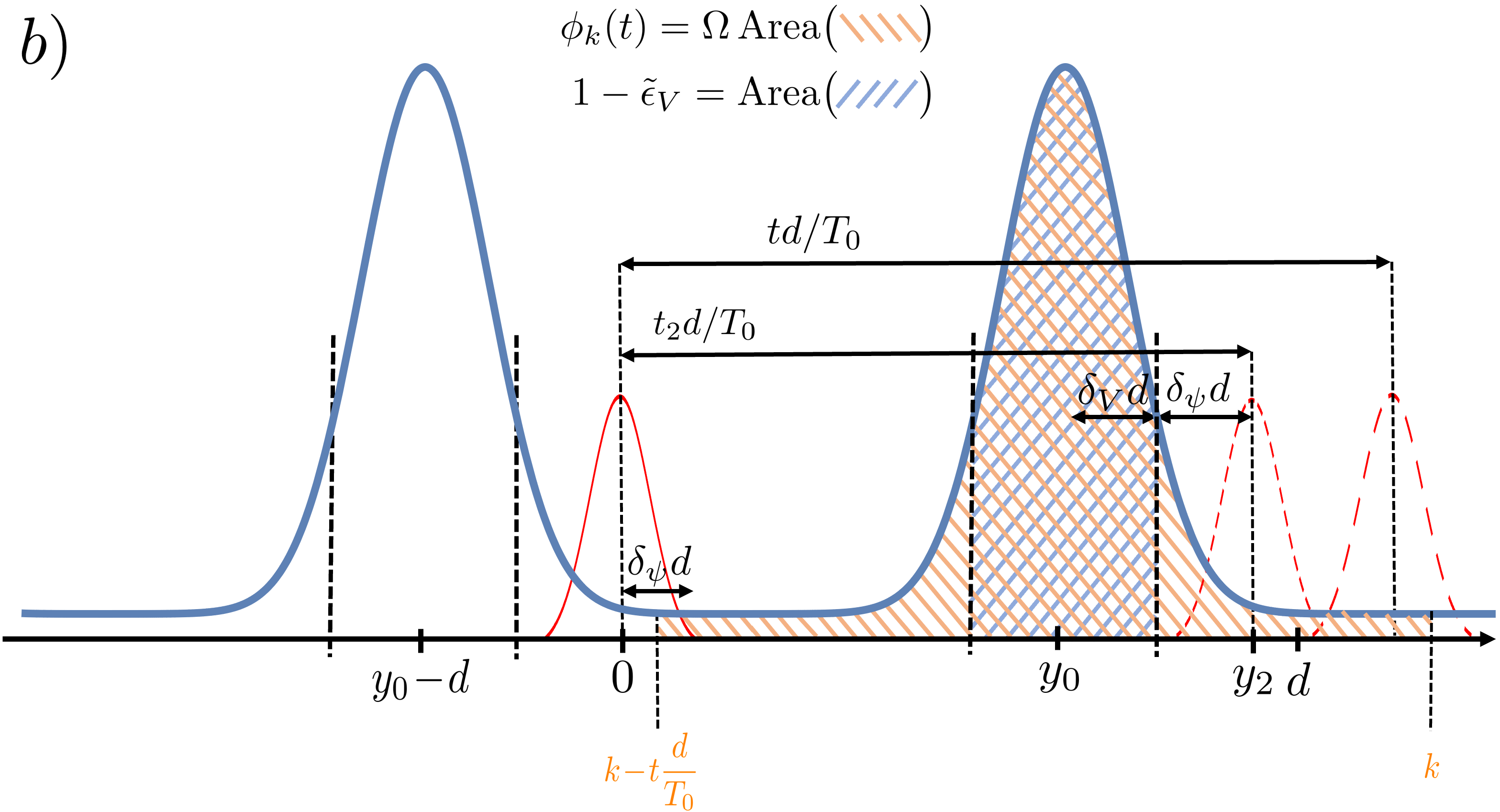}
		\centering
                \caption{ \textbf{Acquiring phase by the clock states $|\theta_k\>$.} The quantities \textup{Area}(\textbf{ \textcolor{orange}{\!\!\textbf{\textbackslash\textbackslash\textbackslash\textbackslash}}}) and  \textup{Area}(\textbf{ \textcolor{blue}{\!\!\textbf{////}}}) are the areas of the orange and blue regions respectively. 	a) Times $t\leq t_1$. For such times, the peak of the pointer	will travel with speed $d/T_0$ up to $y_1$, so that, the ``body'' of the pointer (i.e. the part within $\pm\psidelta d$ from the peak) will be always within the blue area. We consider arbitrary $k$ within the body of the pointer at time $t$ (the left dashed pointer), and its past position $k-td/T_0$. The phase $\phi_k(t)$  acquired by $|\theta_k\>$ is proportional to yellow area, which is contained in the blue one, which in turn is the tail of the potential and therefore small. b) Times $t\geq t_2$. In this case, the pointer is initially before the blue area (the ``body'' of the potential) and ends up after it. Thus any $k$ from the body of the pointer at time $t$ had to travel through the blue area from its past position $k-t d/t_0$. The acquired phase $\phi_k(t)$ is proportional to the yellow area. The latter, for any $k$ is larger than the blue one (body of the potential) and therefore the phase is close to $\Omega$. \label{fig:phase} }
\end{figure*}

\begin{align}
    &\phi_k(t)  =\Omega \int_{k-t d/T_0}^kV_d(y)\dt y
    \leq  \Omega
    \int_{-\psidelta d}^{y_1+\psidelta d} V_d(y)\dt y \\
    &     \leq \Omega 
\int_{y_0+\potdelta d-d}^{y_0-\potdelta d} V_d(y)\dt y = \Omega \int^{-2 \pi \potdelta }_{-2 \pi(1-\potdelta)} V_0(x)  \dt x = \tilde\epsilon_V
\end{align}
 where $\potdelta$ is determined by
 $\potdelta d=y_0-y_1-\psidelta d$.
Denoting $t_2-t_1=\Delta t$, we have $\potdelta + \psidelta=\Delta t/T_0$, and we may choose 
$\potdelta=\psidelta=\Delta t/(4T_0)$. 
We know from \eqref{eq:potential-tail-1}
that $\tilde\epsilon_V$ decays exponentially in the concentration parameter $n$ of the potential.  

{\it Times ``after interaction''.}
Now we consider $t\geq t_2$. Similarly as before, the condition $|k-t d/T_0|\leq \psidelta$ implies
\begin{align}
    &k\geq t d/T_0-\psidelta d \geq  t_2 d/T_0-\psidelta d  = y_2 - \psidelta d, \nonumber \\
    & k-t d/T_0\leq \psidelta d,
\end{align}
hence 
\begin{align}
    \phi_k(t) &= 
    \Omega \int_{k-t d/T_0}^k V_d(y)\dt y
    \geq  \Omega
    \int_{\psidelta d}^{y_2-\psidelta d} V_d(y)\dt y\\
    &
    \geq \Omega 
\int_{y_0-\potdelta d}^{y_0+\potdelta d} V_d(y)\dt y = \Omega \int_{-2\pi \potdelta }^{2\pi \potdelta} V_0(x)  \dt x \\
&= \Omega (1-\tilde\epsilon_V).
\end{align}
Altogether, for times $t\leq t_1$ 
and $t\geq t_2$, respectively, we have obtained
\begin{equation}
\label{eq:phik-equal}
    |\phi_k- 0|\leq  \Omega \tilde\epsilon_V,\quad |\phi_k - \Omega| \leq \Omega \tilde\epsilon_V.
\end{equation}
We can now come back to the estimation of $\Delta_C$. Denoting 
\begin{align}
    a_k= \left|\psi_\textup{nor}(td/T_0;\,k)\right|^2,
\end{align}
 we have 
 \begin{align}
     \Delta_C=\sum_{|k-td/T_0|\leq \psidelta\, d } e^{-i \phi_k} a_k.
 \end{align}
 Due to normalization and the tail estimate of \eqref{eq:tail-for-Delta}, 
 we have 
 \begin{align}
 1-\varepsilon_{LR}\leq \sum_{|k-td/T_0|\leq \psidelta\, d } a_k\leq 1.
 \end{align}
 Thus we have an expression where the $a_k$'s are almost normalized, and the
 $\phi_k$'s almost equal to each other, hence the expression must be close to one. Indeed Lemma \ref{lem:ak-phik} implies 
 that 
 \begin{align}
 \label{eq:delta-c-bound}
     |\Delta_C(\Omega)|\geq 1 - \varepsilon_{LR} - \pi\tilde\epsilon_V
 \end{align}
irrespectively of whether we are before or after the interaction, (i.e. 
whether $t\leq t_1$ or $t\geq t_2$),
because we have only used the fact that the phase is approximately equal, which happens in both cases as in Eq. \eqref{eq:phik-equal}. 

We can now come back to the fidelity. 
From estimates \eqref{eq:fidelity-clock-new2},
\eqref{eq:Delta-Delta-c} and \eqref{eq:delta-c-bound} we have 
\begin{align}
F\left( \rho_\cl^0(t), \rho_\cl(t) \right) &\geq 
	\min_{\Omega\in[-\pi,\pi]} |\Delta(\Omega)|^2 -     6 \varepsilon_\nu \\
	&\geq 
	\min_{\Omega\in[-\pi,\pi]} 
	|\Delta_C(\Omega)|^2 -2 \varepsilon_{LR}- 6 
	\varepsilon_\nu \\
	& \geq 
	1 - \varepsilon_{LR} - \pi\tilde\epsilon_V
	-2 \varepsilon_{LR}- 3 
	\varepsilon_\nu- 3 
	\varepsilon_\nu. 
\end{align}
We now use the exponential bounds on all those $\epsilon$'s. 
Recall that $\varepsilon_{LR}$  and 
$\tilde\epsilon_V$ were tails of
the pointer shape and the potential, and that
$\varepsilon_\nu$, $\varepsilon_c$ 
describe deviations of the pointer evolution from the simple picture of movement and phase acquisition. 
We write here bounds for those quantities 
given by  Eqs. \eqref{eq:epsilon-nu-epsilon-c},
\eqref{eq:tail-for-Delta}
and \eqref{eq:potential-tail-1}: 
\begin{align}
\label{eq:epsilons-bounds}
    \varepsilon_\nu&\lesssim
    t\, poly(d) e^{- \frac{1}{32\pi} d^{1/4}},
    \nonumber \\
	 \varepsilon_c&\lesssim
	 poly(d_\cl)\,  \me^{-\frac{\pi}{4} d_\cl},
	 \nonumber \\
	 \varepsilon_{LR}& \leq poly(d) e^{-\psidelta^2 d},
	 \nonumber \\
	 \tilde \epsilon_V&\leq \frac{1}{\potdelta}e^{-\potdelta^2 n },
\end{align}
where we have chosen $\psidelta=\potdelta= (t_2-t_1)/(4T_0)$, and to get the estimate for $\varepsilon_\nu$, we have chosen the 
potential concentration parameter $n$ to be $n= \lceil d^{1/4}\rceil$ 
hence we also have
\begin{align}
	\tilde \epsilon_V&\leq \frac{1}{\potdelta}e^{- \potdelta^2 d^{1/4}}.
\end{align}
We thus have 
\begin{align}
    F\left( \rho_\cl^0(t), \rho_\cl(t) \right) \lesssim 1- \frac{1}{t_2-t_1} poly(d) e^{-c_1 d^{1/4}}
\end{align}
where $c_1=\min\{\frac{1}{32 \pi}, (t_2-t_1)^2/(16 T_0^2) \}$.
We now use 
\begin{align}
    \|\rho-\sigma\|_1\leq 2 \sqrt{1-F(\rho,\sigma)^2}
\end{align}
so that  $F\geq 1-\epsilon$ implies $\|\rho-\sigma\|_1\leq 2\sqrt{2} \sqrt{\epsilon} $.
Using this we obtain  that for times $t$ satisfying  $t\leq t_1$ or $ t\geq t_2$
\begin{align}
    \|\rho_\cl^0(t), \rho_\cl(t)\| \lesssim
    \frac{1}{t_2-t_1} poly(d) e^{-c_2 d^{1/4}}
\end{align}
where $c_2=\min\{\frac{1}{64 \pi}, (t_2-t_1)^2/(32T_0^2) \}$. (We can put $t_2-t_1$ 
instead of $\sqrt{t_2-t_1}$  in front of \emph{poly}, as the differences is bounded from above by $T_0$, so for scaling, the small values of the different are relevant,
so this rough estimate is legitimate).
\end{proof}

\subsubsection{Proof of  Theorem \ref{Thm:Implementation with Quasi-Idela clock}}\label{Sec: proof of thm 2: Quasi-Ideal clock bound}

We will start with a definition and proposition whose usefulness will soon become apparent in the proof of Theorem \ref{Thm:Implementation with Quasi-Idela clock} below.

\begin{definition}[Autonomous Control device error
	]\label{def:ACD errors}
	Let $\rho^\text{target}_\A(t)$ denote the idealised/targeted control of system $\A$, namely 
	\be 
	\rho_\A^\text{target}(t)	 =
	\begin{cases}
		\rho_\A^0 & \quad \text{for } t\in[0,t_1]\\
		U{_\A^\text{target}}	\rho_\A^0 U{_\A^\text{target}}^\dag & \quad \text{for } t\in[t_2,T_0],
	\end{cases}
	\ee 
	where we associate the time interval $[t_1,t_2]$ with the time in which the CPTP map is being implemented in the ideal case. Furthmore, let $\rho^F_{\A\cl}(t)$ denote the autonomous evolution of $\A$ and the control system (the clock $\cl$),
	\be 
	\rho^F_{\A\cl}(t)=	\me^{-\mi t \hat H_{\A\cl}}\left(\rho_\A\otimes\rho_{\cl}\right)\me^{\mi t \hat H_{\A\cl}}.
	\ee 
	Let $\varepsilon_\A(t,d_\cl,d_\A)$, and $\varepsilon_\cl(t,d_\cl)$ be defined by the relations
	\ba \label{eq:varepsilon A def}
	\| \rho_{\A}^F(t)-\rho^\text{target}_\A(t)\|_1 &\leq\; \varepsilon_\A(t,d_\cl,d_\A),\\
	\| \rho^F_{\cl}(t)-\rho_\cl^0(t)\|_1 &\leq\;\label{eq:varepsilon cl def} \varepsilon_\cl(t,d_\cl),
	\ea 
	where $\rho_\cl^0(t)$ is the free evolution of the clock, 
	\be \label{eq:free clock evolution heneral eq}
	\rho_\cl^0(t):= \me^{-\mi t\hat H_\cl}\rho_\cl \,\me^{\mi t\hat H_\cl}.
	\ee 
\end{definition}


\begin{prop}\label{thm:clock int pic}
	There exists a clock state $\rho_\cl$ and time independent Hamiltonian, called the Quasi-Ideal Clock \cite{WSO} such that for all $t\in[0,t_1]\cup[t_2,T_0]$ and for all fixed $0<t_1<t_2<T_0$, the error terms $\varepsilon_\A$, $\varepsilon_\cl$ are given by
	\begin{align}
	\varepsilon_\A(t,d_\cl,d_\A)&=\sqrt{d_\A \tr[\rho_\A^2(0)]}\, \varepsilon(d_\cl),  \label{eq:epsilon A explicit main text}\\
	\varepsilon_\cl(t,d_\cl)& = \varepsilon(d_\cl),  \label{eq:epsilon cl explicit main text}
	\end{align}
	where $\varepsilon(d_\cl)$ is independent of the system $\A$ parameters, and is of order 
	\begin{align} \label{eq:ep dc for quasi ideal}
	\varepsilon(d_\cl)=\bo\bigg( poly(d_\cl)\, \exp\left[-c d_\cl^{1/4}\right] \bigg),\quad \text{as } d_\cl  \rightarrow \infty 
	\end{align}
	where the constant $c>0$ depends on $t_1,t_2$ and $poly(d_\cl)$ is a polynomial in $d_\cl$.
\end{prop}
\begin{proof}
	This proposition is a direct consequence of  Proposition \ref{prop:clock-error} and 
	and the results in \cite{WSO}. 
	Proposition  \ref{prop:clock-error} proves estimate \eqref{eq:epsilon cl explicit main text} with constant in exponent given by  
	$c_2$, 
	i.e., the constant used in estimate 
	\eqref{eq:prop-clock-error} in Prop. \ref{prop:clock-error}.
	In  \cite{WSO}
	(see Section \emph{ Examples: 2) System error faster than power-law decay}, page 47) 
	the following estimate is proven 
	\begin{align}\label{eqfract p - p example}
	\frac{\|\rho_\A(t)-\sigma_\A(t)\|_1}{\sqrt{d_\A \tr[\rho_\A^2(0)]}} &= \bo  \left( t\, poly(d_\cl)\,  \me^{-2c_0 d^{1/4}_\cl\sqrt{\ln d_\cl}}  \right),
	\end{align}
	for all $t\in[0,t_1]\cup[t_2,T_0]$ and for all fixed $0<t_1<t_2<T_0$. Here we have defined $2c_0:= \frac{\pi}{4}\alpha_0^2\chi_2^2$, where $\alpha_0$, $\chi$ are constants defined in \cite{WSO}. So this proves
	\eqref{eq:epsilon A explicit main  text} with constant $c_0$. Taking $c=\min\{2c_0,c_2\}$  finalises the proof.
\end{proof}
 This proposition is a generalisation of the results from \cite{WSO}. Specifically, these results were proven for the special case in which $t=T_0$ in Eq. \eqref{eq:epsilon cl explicit main text}. 

We are now ready to provide the proof of Theorem \ref{Thm:Implementation with Quasi-Idela clock} located in the main text.  We will precede the proof with a short overview.

{\it Overview of the proof of Theorem 
\ref{Thm:Implementation with Quasi-Idela clock}.}
The aim of the theorem is to show that 
in our autonomous setup, the final state of the system, catalyst and clock is close to product. 
Indeed, now that the catalyst and clock play the role of the total catalyst state, 
in order to prevent embezzling, we have to make sure that the total catalyst will not be polluted too much.  
Of course the final state on the system 
is to be close to the required state. 
Proposition \ref{prop:clock-error}
implies  that on the system and catalyst we get a state close to the required state,
and we have a small error on the clock.

To pass from this outcome to what we want, 
we note that the initial clock state is pure. 
Thus having a small error on the clock means also that the total state stays approximately product between the system-catalyst state and the clock. 
In the proof we express this in terms of fidelity. 

We now present the full proof. 

\begin{proof}\label{proof of Thm:Implementation with Quasi-Idela clock}
	We start by demonstrating part 1) of the theorem.  Define 
	\begin{align}
	U_{\Sy\cat\G}(t)= \begin{cases}
	\id_{\Sy\cat\G} &\mbox{ if } t\in[0,t_1]\\
	U_{\Sy\cat\G}' &\mbox{ if } t\in[t_2,T_0]
	\end{cases}
	\end{align}
	where $U_{\Sy\cat\G}'$ satisfies
	\begin{align}
		\tr_\G\left[U_{\Sy\cat\G}'\left(\rho_\Sy^0\otimes\rho_\cat^0\otimes\tauT_\G\right)U_{\Sy\cat\G}^{\prime\, \dag}  \right]=\rho_\Sy^1\otimes\rho_\cat^0.
	\end{align} 
	Define
	\begin{align}\label{eq:sigma def proof of 2nd thm}
	\sigma_{\Sy\cat\G}(t):= U_{\Sy\cat\G}'(t)\left(\rho_\Sy^0\otimes\rho_\cat^0\otimes\tauT_\G\right)U_{\Sy\cat\G}^{\prime\, \dag}(t).
	\end{align}
	It follows by the definition of t-CNO (Def. \ref{def:t-CTO} and Prop. \ref{prop:equiv of t-CNO and CNO}) that for every pair $\rho_\Sy^0$, $\rho_\Sy^1$ for which there exists a t-CNO from $\rho_\Sy^0$ to $\rho_\Sy^1$, there exists a unitary $U_{\Sy\cat\G}$ satisfying the above criteria. Since the catalyst $\rho_\cat^0$ is arbitrary, this is true iff Eq. \eqref{eq:sigma is valid NO} holds. Therefore $\sigma_{\Sy}(t)$ in Eq. \eqref{eq:sigma def proof of 2nd thm} fulfils part 1) of the theorem.
	
	We now proceed with proving part 2) of the theorem. 
	Recalling Def. \ref{def:ACD errors} and Prop. \ref{thm:clock int pic}, and using the identifications $A=\Sy\cat\G$, $U_\A^\text{target}= U_{\Sy\cat\G}'$, for every unitary $U_{\Sy\cat\G}$ above, there exists an interaction term $\hat I_{\Sy\cat\cl\G}$ such that using the Quasi-Ideal Clock we have	
	\begin{align}
	\label{eq:varepsilon A proof}
	\| \rho_{\Sy\cat\G}^F(t)-\sigma_{\Sy\cat\G}&(t)\|_1 \leq\; \varepsilon_\A \\\!\!\!\!\!\!\!\! =\sqrt{d_\Sy d_\cat d_\G }&\sqrt{\tr[(\rho^0_\Sy\otimes\rho^0_\cat\otimes\tauT_\G)^2]}\,\, \varepsilon_\cl(d_\cl)\nonumber\\
\qquad\qquad	\leq \sqrt{d_\Sy d_\cat}\,\, \varepsilon_\cl&(d_\cl),\\ \label{eq:varepsilon cl proof}
	\| \rho^F_{\cl}(t)-\rho^0_\cl(t)\|_1 \leq\;& \varepsilon_\cl(d_\cl),
	\end{align}
	where in the last line of \eqref{eq:varepsilon A proof} we have taken into account that $\tauT_\G=\id/d_\G$. Recall that an expression for $\varepsilon(d_\cl)$ is given by Eq. \eqref{eq:ep dc for quasi ideal}.
	We now apply Prop. \ref{prop:red} with the identifications
	\begin{align}
	\rho^F_{\Sy\cat\G}(t)&=:\rho_\A,\quad \rho_\cl^F(t)=:\rho_\B, \quad  \rho^F_{\Sy\cat\G\cl}(t)=:\rho_{\A\B}\\
	\sigma_{\Sy\cat\G}(t)&=:\sigma_\A,\quad \rho_\cl^0(t)=:\sigma_\B,\quad\\
	  \sigma_{\Sy\cat\G}(t)&\otimes\rho_\cl^0(t)=:\sigma_{\A\B},
	\end{align}
	hence 
	\begin{equation}
	\ep_1 =\varepsilon_\A,\quad \ep_2=\vep_\cl, \quad \ep_3=0,
	\end{equation}
	($\vep_3$ vanishes because  
	$\rho_\cl^0(t)$ is a pure state)
	to achieve 
	\begin{align} 
	\| \rho^F_{\Sy\cat\G\cl}(t)-\sigma_{\Sy\cat\G}(t)\otimes \rho_\cl^0(t) \|_1 \leq  
	2\sqrt{\varepsilon_\cl}+ \varepsilon_\A, 
	\end{align}
	for all $t\in[0,t_1]\cup[t_2,T_0]$, and where $\varepsilon_\A, \varepsilon_\cl$ are given in Eqs. \eqref{eq:varepsilon A proof}, \eqref{eq:varepsilon cl proof}.
	Applying the data processing inequality, we find
	\begin{equation} 
	\| \rho^F_{\Sy\cat\cl}(t)-\sigma_{\Sy\cat}(t)\otimes \rho_\cl^0(t) \|_1 \leq
	2\sqrt{\varepsilon_\cl}+ \varepsilon_\A, 
	\end{equation}
	for all $t\in[0,t_1]\cup[t_2,T_0]$. Using the triangle inequality, we have
	\begin{align} 
	\| \rho^F_{\Sy\cat\cl}&(t)-\rho^F_{\Sy}(t)\otimes\rho_\cat^0\otimes \rho_\cl^0(t) \|_1 \\ \leq& \| \rho^F_{\Sy\cat\cl}(t)-\sigma_{\Sy\cat}(t)\otimes \rho_\cl^0(t) \|_1\\
	& +\| \sigma_{\Sy\cat}(t)\otimes \rho_\cl^0(t)- \rho^F_{\Sy}(t)\otimes\rho_\cat^0\otimes \rho_\cl^0(t) \|_1\\
	\leq  &
	2\sqrt{\varepsilon_\cl(t)}+ \varepsilon_\A(t) +\| \sigma_{\Sy\cat}(t)- \rho^F_{\Sy}(t)\otimes\rho_\cat^0 \|_1.\label{q:F - F intermediate proof of thm 2}
	\end{align}
	Now we note that by definition, it follows that $\sigma_{\Sy\cat}(t)=\sigma_{\Sy}(t)\otimes\rho_\cat^0$ for all $t\in[0,t_1]\cup[t_2,T_0]$. Plugging into Eq. \eqref{q:F - F intermediate proof of thm 2} we achieve 
	\begin{align}
	\| \rho^F_{\Sy\cat\cl}(t)-&\rho^F_{\Sy}(t)\otimes\rho_\cat^0\otimes \rho_\cl^0(t) \|_1 \\ 
	&\leq 
	2\sqrt{\varepsilon_\cl}+ \varepsilon_\A +\|\sigma_{\Sy}(t)- \rho^F_{\Sy}(t) \|_1\\
	&\leq 
	2\sqrt{\varepsilon_\cl}+ 2 \varepsilon_\A ,
	\end{align}
	for all $t\in[0,t_1]\cup[t_2,T_0]$ and where in the last line, we have used Eq. \eqref{eq:varepsilon A proof} after applying the data processing inequality to it. 
	W.l.o.g. assume that $\varepsilon_\cl\leq  2$ (If this does not hold, then the following bound holds anyway since the trace distance between any two states is upper bounded by 2), so that $\vep_\cl \leq \sqrt{2 \vep_\cl}$ and 
	using \eqref{eq:varepsilon A proof} we achieve
	\begin{align}
	\| \rho^F_{\Sy\cat\cl}&(t)-\rho^F_{\Sy}(t)\otimes\rho_\cat^0\otimes \rho_\cl^0(t) \|_1\\
	 &\leq  2\sqrt{\varepsilon_\cl} +2 \sqrt{d_\Sy d_\cat} \vep_\cl \\
	&\leq \left(2 + 2 \sqrt{2} \sqrt{d_\Sy d_\cat}\right) \sqrt{\vep_\cl} \\
	&\leq  \left(2+3\sqrt{d_\Sy d_\cat}\right) \sqrt{\varepsilon_\cl}\\ 
	&=\epemb,
	\end{align}
	for all $t\in[0,t_1]\cup[t_2,T_0]$. Now, recalling that $\varepsilon_\cl$ is independent of $d_\Sy,d_\cat,d_\G$, and only a function of $d_\cl, t_1,t_2,T_0$, 
	we obtain estimate \eqref{eq:thm2-ep-emb} of part 2) of the theorem. Next, the formula  \eqref{eq:ep dc for quasi ideal} from Proposition \ref{thm:clock int pic} gives the estimate 
	\ref{eq:scaling of quasi ideal clock} concluding the proof of 
	theorem \ref{Thm:Implementation with Quasi-Idela clock}.
\end{proof}

\cblack

\subsection{Proof of Theorem \ref{thm:noemb physical t CTO}}\label{Sec:proof of thm 3, t-CTOs generic bound}

In this section we prove Theorem \ref{thm:noemb physical t CTO} in the main text.

\begin{proof}[Proof of Theorem \ref{thm:noemb physical t CTO}]\label{proof to thm:noemb physical t CTO}
	
	The proof will be divided into two parts labelled A and B. 
	
	Part A consists in proving that the theorem statements 1) and 2) hold under a different set of conditions than those of the theorem. Namely, that 1) and 2) hold when the following two conditions both simultaneously hold:\\
	a) the final joint system-catalyst-bath state is very close to that of the target joint system-catalyst-bath state.\\
	b) the final clock state is very close to its free state. The proof of part A uses basic relationships between quantum states (such as trace distance and fidelity), but does not take into account any dynamical properties.
	
	Part B consists in proving that the conditions in the Theorem from which 1) and 2) should follow, do indeed imply conditions a) and b) from part A. The proof uses the dynamical properties of the states.
	
	\subsubsection{Part A of proof of Theorem \ref{thm:noemb physical t CTO}}
	We start with a comment on notation and a few immediate consequences. We will denote $U^\textup{target}_{\Sy\cat\G} (t)= \me^{- \mi \hatf(t) \hat H_{\Sy\cat\G}^\textup{int} }$ from the main text by $U^{\textup{target } (\epsig)}_{\Sy\cat\G} (t)= \me^{- \mi \hatf(t) \hat H_{\Sy\cat\G}^{\textup{int}} }$ here to remind ourselves that $\hat H_{\Sy\cat\G}^\textup{int}$ induces a small error $\epsig$ onto the final catalyst and system state (see Eqs. \eqref{eq:2nd condition of robust embezzelment Ham}, \eqref{eq:epsig def}). 
	We will also denote $U^{\textup{target } (0)}_{\Sy\cat\G} (t):= \me^{- \mi \hatf(t) \hat I_{\Sy\cat\G}^{\textup{int}} }$, since $\hat I_{\Sy\cat\G}^{\textup{int}}$ corresponds to the case of no error, i.e. $\epsig=0$, (see Eq. \eqref{eq:2nd condition of robust embezzelment Ham 2}). Accordingly we shall denote 
	\begin{align}
		&\rho_{\Sy\cat\G}^{\textup{target}(0)}(t):=\\
		& U_{\Sy\cat\G}^{\textup{target}(0)}(t) \left(\rho^0_{\Sy}(t)\otimes\rho^0_\cat(t)\otimes\tauGibb\right) U_{\Sy\cat\G}^{\textup{target (0)\,\dag}}(t).\label{eq:target state def 1}\\
	&	\rho_{\Sy\cat\G}^{\textup{target}(\ep_H)}(t):=\\
	& U_{\Sy\cat\G}^{\textup{target}\,(\ep_H)}(t) \left(\rho^0_{\Sy}(t)\otimes\rho^0_\cat(t)\otimes\tauGibb\right) U_{\Sy\cat\G}^{\textup{target}\, (\ep_H) \,\dag}(t).\label{eq:target state epsilon def}
	\end{align}
	Recall that
	\begin{align}
		\hatf(t)=\begin{cases}
			0 &\mbox{ if } t\in[0,t_1]\\
			1  &\mbox{ if } t\in[t_2,t_3],
		\end{cases}
	\end{align}
	Therefore, similarly to as in Eqs. \eqref{eq:ep-H-rho-target1} and \eqref{eq:ep-H-rho-target2}
	we have for $t \in[0, t_1]$
	\begin{align}
		\label{eq:ep-H-rho-target1-zero}
		\rho_{\Sy\cat}^{\textup{target}(0)}(t)= \rho^0_{\Sy}(t)\otimes\rho^0_\cat(t),
	\end{align} 
	while for $t \in[ t_2,t_3]$
	\begin{align}
		\label{eq:ep-H-rho-target2-zero}
		\rho_{\Sy\cat}^{\textup{target}(0)}(t)&=\me^{-\mi t (\hat H_\Sy+\hat H_\cat)} \rho_{\Sy}^1\otimes\rho_{\cat}^0\me^{\mi t (\hat H_\Sy+\hat H_\cat)}\\
		&= \rho_{\Sy}^1(t)\otimes\rho_{\cat}^0(t).
	\end{align}
	Hence, Eqs. \eqref{eq:ep-H-rho-target1-zero}, \eqref{eq:ep-H-rho-target2-zero} together imply
	\begin{align}
		\label{eq:ep-H-rho-target3-zero}
		\rho_{\Sy\cat}^{\textup{target}(0)}(t)= \rho^{\textup{target}(0)}_{\Sy}(t)\otimes\rho^0_\cat(t),
	\end{align}
	for $t \in[0, t_1]\cup[t_2, t_3]$.\Mspace
	
	Part A will consist in proving that the following holds.
	Let $\epSycatG(\epsig;t)>0$ 
	and $\epcl(t)>0$ satisfy
	\begin{align}
		\label{eq:1st condition thm:noemb physical t CTO proof}
		\| \rho^F_{\Sy\cat\G}(t) - \rho^{\textup{target } (\epsig)}_{\Sy\cat\G}(t)  \|_1 &\leq \epSycatG(\epsig;t),
		\\
		\label{eq:2nd condition thm:noemb physical t CTO proof}
		\| \rho^F_{\cl}(t) - \rho_{\cl}^0(t)  \|_1 &\leq \epcl(t).
	\end{align}
	It follows that:
	\begin{itemize}
		\item [1)] The deviation from the idealised dynamics is bounded by 
		\begin{align}
			\| \rho^F_{\Sy\cat\cl}&(t) - \rho^F_{\Sy}(t)\otimes \rho_\cat^0(t)\otimes \rho_\cl^0(t)  \|_1 \\
			&
			\leq   2 \epSycatG(\epsig;t)
			 +2 \sqrt{\epcl(t)}+2\epsig\,\hatf(t).\label{inter proof thm 3 1}
		\end{align}
		
		\item [2)] The final state $\rho^F_\Sy(t)$ is 
		\begin{align}\label{part to for proof closeness}
			\| \rho^F_{\Sy}(t) - \rho^{\textup{target}\, (0)}_{\Sy}(t)  \|_1 \leq \epSycatG(\epsig;t) + \epsig\,\hatf(t)
		\end{align}
		close to one which can be reached via t-CTO:  For all $t\in[0,t_1]\cup[t_2,t_3]$ the transition
		\be \rho_\Sy^0\otimes\rho_\cat^0\otimes\rho_\cl^0 \;\;\;\text{ to }\;\;\; \rho_\Sy^{\textup{target}\, (0)}(t)\otimes\rho_\cat^0(t)\otimes\rho_\cl^0(t)\label{eq:sigma is valid TO proof}
		\ee 
		is possible via a TO i.e. $\rho_\Sy^0$ to $\rho_\Sy^{\textup{target}\, (0)}$ via a t-CTO. 
	\end{itemize}
	
	We begin with proving item 2).
	To prove the Eqs. \eqref{part to for proof closeness},\eqref{eq:sigma is valid TO proof}, we start by extending the definitions of $\rho_{\Sy\cat\G}^{\textup{target}\, (\epsig)}(t)$  and $\rho_{\Sy\cat\G}^{\textup{target}\, (0)}(t)$ in \eqref{eq:ep-H-rho-target1-zero},\eqref{eq:ep-H-rho-target2-zero} to include the clock system:
	\begin{align}
		\rho_{\Sy\cat\G\cl}^{\textup{target}\, (\epsig)}(t) &=\rho_{\Sy\cat\G}^{\textup{target}\, (\epsig)}(t)\otimes \rho_\cl^0(t),\\
		\rho_{\Sy\cat\G\cl}^{\textup{target}\, (0)}(t) &=\rho_{\Sy\cat\G}^{\textup{target}\, (0)}(t)\otimes \rho_\cl^0(t),
	\end{align}
	where $\rho_\cl^0(t)$ is the free evolution of the clock defined in Eq. \ref{eq:def free clock ev}. Therefore, from Eq. \eqref{eq:ep-H-rho-target3-zero}, it follows that the reduced state after tracing out the Gibbs state on $\G$ is
	\begin{align}\label{eq:target Sy cat cl}
		\rho_{\Sy\cat\cl}^{{\textup{target}\, (0)}}(t)= \rho_{\Sy}^{\textup{target}\, (0)}(t)\otimes\rho_\cat^0(t)\otimes \rho_\cl^0(t),
	\end{align}
	for $t\in[0,t_1]\cup[t_2,t_3]$. Thus taking into account property Eq. \eqref{eq:commutation of int term in generic Hm} it follows by definition of CTOs and t-CTOs that a transition from $\rho_\Sy^0$ to $\rho_{\Sy}^{\textup{target}\, (0)}(t)$ is possible via a t-CTO. Finally, applying the data processing inequality to Eq. \eqref{eq:1st condition thm:noemb physical t CTO proof}, we achieve 
	\begin{align}
		\| \rho^F_{\Sy}(t) - \rho^{\textup{target}\, (\epsig)}_{\Sy}(t)  \|_1 \leq \epSycatG(\epsig;t),
	\end{align} 
	while applying the date processing inequality to Eqs. \eqref{eq:target state def 1},\eqref{eq:target state epsilon def}  we find $\| \rho^{\textup{target}\, (0)}_{\Sy}(t) - \rho^{\textup{target}\, (\epsig)}_{\Sy}(t)  \|_1 = 0$ for $t\in [0,t_1]$ while from Eqs. \eqref{eq:2nd condition of robust embezzelment Ham}, \eqref{eq:2nd condition of robust embezzelment Ham 2}, \eqref{eq:epsig def}, we see $\| \rho^{\textup{target}\, (0)}_{\Sy}(t) - \rho^{\textup{target}\, (\epsig)}_{\Sy}(t)  \|_1 \leq \epsig$ for $t\in [t_2,t_3]$. Hence, combining both equations, we have $\| \rho^{\textup{target}\, (0)}_{\Sy}(t) - \rho^{\textup{target}\, (\epsig)}_{\Sy}(t)  \|_1 \leq \epsig\,\hatf(t)$ for $t\in [0,t_1]\cup [t_2,t_3]$. Then Eq. \eqref{part to for proof closeness} in 2) above follows from the triangle inequality:
	\begin{align}
	&	\| \rho^F_{\Sy}(t) - \rho^{\textup{target}\, (0)}_{\Sy}(t)  \|_1 \leq \| \rho^F_{\Sy}(t) - \rho^{\textup{target}\, (\epsig)}_{\Sy}(t)  \|_1 \\
	&+ \| \rho^{\textup{target}\, (\epsig)}_{\Sy}(t) - \rho^{\textup{target}\, (0)}_{\Sy}(t)  \|_1    \leq \epSycatG(\epsig;t)\\
	& + \epsig\,\hatf(t).
	\end{align}
	We now prove the above item 1) (i.e. the estimate \eqref{inter proof thm 3 1}).  We begin by using the triangle inequality followed  by the identity $\rho_{\Sy\cat}^\textup{target (0)}(t)= \rho_{\Sy}^\textup{target (0)}(t)\otimes\rho_\cat^0(t)$ which follows from Eq. \eqref{eq:target Sy cat cl}.
	\begin{align}
		\| &\rho^F_{\Sy\cat\cl}(t) - \rho^F_{\Sy}(t)\otimes \rho_\cat^0(t)\otimes \rho_\cl^0(t)  \|_1  \\
		=&\| \rho^F_{\Sy\cat\cl}(t) - \rho^{\textup{target}\,(\epsig)}_{\Sy\cat}(t)\otimes \rho_\cl^0(t)\\
		&+  \rho^{\textup{target}\,(\epsig)}_{\Sy\cat}(t)\otimes \rho_\cl^0(t)- \rho^F_{\Sy}(t)\otimes \rho_\cat^0(t)\otimes \rho_\cl^0(t)  \|_1 \\
		\leq &	\| \rho^F_{\Sy\cat\cl}(t) - \rho^{\textup{target}\,(\epsig)}_{\Sy\cat}(t)\otimes \rho_\cl^0(t)\|_1\\
		& + \|  \rho^{\textup{target}\,(\epsig)}_{\Sy\cat}(t)\otimes \rho_\cl^0(t)-  \rho^{\textup{target}\,(0)}_{\Sy\cat}(t)\otimes \rho_\cl^0(t)\|_1 \\
		&+ \|  \rho^{\textup{target}\,(0)}_{\Sy\cat}(t)\otimes \rho_\cl^0(t)- \rho^F_{\Sy}(t)\otimes \rho_\cat^0(t)\otimes \rho_\cl^0(t)  \|_1\\
		=& 	\| \rho^F_{\Sy\cat\cl}(t) - \rho^{\textup{target}\,(\epsig)}_{\Sy\cat}(t)\otimes \rho_\cl^0(t)\|_1 \\
		&+ \|  \rho^{\textup{target}\,(\epsig)}_{\Sy\cat}(t)-  \rho^{\textup{target}\,(0)}_{\Sy\cat}(t)\|_1 \\
		&+ \|  \rho^{\textup{target}\,(0)}_{\Sy}\otimes\rho^0_{\cat}(t)- \rho^F_{\Sy}(t)\otimes \rho_\cat^0(t)  \|_1\nonumber\\
		\leq & 	\| \rho^F_{\Sy\cat\cl}(t) - \rho^{\textup{target}\,(\epsig)}_{\Sy\cat}(t)\otimes \rho_\cl^0(t)\|_1\\
& + \epsig\,\hatf(t)+\|  \rho^{\textup{target}\,(0)}_{\Sy}- \rho^F_{\Sy}(t)  \|_1\\
		\leq& 	\| \rho^F_{\Sy\cat\cl}(t) - \rho^{\textup{target}\,(\epsig)}_{\Sy\cat}(t)\otimes \rho_\cl^0(t)\|_1\\
		& +2 \epsig\,\hatf(t)+ \epSycatG(\epsig;t).\label{eq:inter thm 3 proof bound}
	\end{align}
	where we have applied the data processing inequality to Eq. \eqref{eq:1st condition thm:noemb physical t CTO proof} and used the resultant equation in the last line. Now we make the following identifications, noting that $\rho_\cl^0(t)$ all $t\in\rr$ is pure by assumption of the theorem.
	\begin{align}
		\rho^F_{\Sy\cat\G}(t)=:\rho_\A,\quad \rho_\cl^F(t)&=:\rho_\B, \quad  \rho^F_{\Sy\cat\G\cl}(t)=:\rho_{A\B}\\
		\rho^{\text{target}\,(\epsig)}_{\Sy\cat\G}(t)=:\sigma_\A,\quad \rho_\cl^0(t)&=:\sigma_\B,\quad \\
		\rho_{\Sy\cat\G}^{\text{target}\,(\epsig)}(t)\otimes\rho_\cl^0(t)=:\sigma_{\A\B}\!\!&
	\end{align}
	and apply Prop. \ref{prop:red}  with use of 
	\eqref{eq:1st condition thm:noemb physical t CTO proof},\eqref{eq:2nd condition thm:noemb physical t CTO proof}   arriving at
	\begin{align}
		\ep_1=\epSycatG(\epsig;t),\quad \ep_2
		= \epcl(t), \quad \ep_3=0,
	\end{align}
	and thus
	\begin{align} 
		\| &\rho^F_{\Sy\cat\G\cl}(t)-\rho^{\text{target}\,(\epsig)}_{\Sy\cat\G}(t)\otimes \rho_\cl^0(t) \|_1 \\
		&\leq 2\sqrt{\epcl(t)} + \epSycatG(\epsig;t).
	\end{align}
	Applying the data processing inequality to the above equation, followed by substituting into Eq. \eqref{eq:inter thm 3 proof bound}, gives
	\begin{align}
		\|& \rho^F_{\Sy\cat\cl}(t) - \rho^F_{\Sy}(t)\otimes \rho_\cat^0(t)\otimes \rho_\cl^0(t)  \|_1\\
		 & \leq  2\sqrt{\epcl(t)} + \epSycatG(\epsig;t)  +\epSycatG(\epsig;t) +2 \epsig\,\hatf(t) \nonumber\\ 
		&  =  2 \epSycatG(\epsig;t)+2 \sqrt{\epcl(t)}+2\epsig\,\hatf(t).
	\end{align}
	
	\subsubsection{Part B of proof of Theorem \ref{thm:noemb physical t CTO}}
	
	We now set out to prove the second part, which consists in 
	deriving expressions for $\epSycatG(\epsig;t)$ and $\epcl(t)$ such that the contents of sections 1) and 2) above are consistent with the claims in 1) and 2) of the theorem. 
	
	To start with, since $\hat H_\Sy+\hat H_\cat+\hat H_\G$ and $\hat H_{\Sy\cat\G}^\textup{int}$ commute, they share a common eigenbasis which we denote $\{\ket{E_j}\}_j$. We can write the interaction term in terms of this basis as follows, $\hat H^\textup{int}_{\Sy\cat\G}=\sum_{j=1}^{d_\Sy d_\cat d_\G} \Omega_j\ketbra{E_j}{E_j}$ with eigenvalues $\Omega_j$ in the range $\Omega_j\in[-\pi,\pi]$ since $\|\hat H^\textup{int}_{\Sy\cat\G}\|_\infty\leq \pi$. We can also expanding the state $\rho_\Sy^0\otimes\rho_\cat^0\otimes\tau_\G$ in the energy eigenbasis $\{\ket{E_j}\}_j$. Doing so allows one to simplify the expression for $\rho_{\Sy\cat\G\cl}^F(t)$. We find
	\begin{widetext}
	\begin{align}
		\rho_{\Sy\cat\G\cl}^F(t)&=\me^{-\mi t \left(\hat H_\Sy+\hat H_\cat+\hat H_\G +\hat H_{\Sy\cat\G}^\textup{int}\otimes \hat H_{\cl}^\textup{int} +\hat H_\cl\right)}\rho_\Sy^0\otimes\rho_\cat^0\otimes\tau_\G\otimes\ketbra{\rho_\cl^0}{\rho_\cl^0} \, \me^{\mi t \left(\hat H_\Sy+\hat H_\cat+\hat H_\G +\hat H_{\Sy\cat\G}^\textup{int}\otimes \hat H_{\cl}^\textup{int} +\hat H_\cl\right)}\\
		&=\sum_{j,j'=1}^{d_\Sy d_\cat d_\G} \me^{-\mi t \left(\hat H_\Sy+\hat H_\cat+\hat H_\G +\Omega_j \hat H_{\cl}^\textup{int} +\hat H_\cl\right)} \rho^0_{\Sy\cat\G, j,j'} \ketbra{E_j}{E_{j'}} \otimes \ketbra{\rho_\cl^0}{\rho_\cl^0} \me^{\mi t \left(\hat H_\Sy+\hat H_\cat+\hat H_\G +\Omega_{j'} \hat H_{\cl}^\textup{int} +\hat H_\cl\right)}\\
		&= \sum_{j,j'=1}^{d_\Sy d_\cat d_\G} \rho^0_{\Sy\cat\G, j,j'}(t) \ketbra{E_j}{E_{j'}} \otimes \ketbra{\rho_{\cl,j}^0(t)}{\rho_{\cl,j'}^0(t)},\label{eq:rho final proof thm 3}
	\end{align}
\end{widetext}
	where 
	\begin{align}
		\sum_{j,j'=1}^{d_\Sy d_\cat d_\G} \rho^0_{\Sy\cat\G, j,j'}(t) \ketbra{E_j}{E_{j'}}&= \rho_\Sy(t)\otimes \rho_\cat^0(t)\otimes\tau_\G,\\
		\ket{\rho_{\cl,j}^0 (t)}&= \me^{-\mi t \left( \Omega_{j} \hat H_{\cl}^\textup{int} +\hat H_\cl\right)} \ket{\rho_{\cl}^0}.
	\end{align}
	We thus have by taking partial traces
	\begin{align}
		&\rho_{\Sy\cat\G}^F(t)\\
		&= \sum_{j,j'=1}^{d_\Sy d_\cat d_\G} \rho^0_{\Sy\cat\G, j,j'}(t) \ketbra{E_j}{E_{j'}} \braket{\rho_{\cl,j'}^0(t)|\rho_{\cl,j}^0(t)},\\
		&\rho_{\cl}^F(t) = \sum_{j=1}^{d_\Sy d_\cat d_\G} \rho^0_{\Sy\cat\G, j,j}  \ketbra{\rho_{\cl,j}^0(t)}{\rho_{\cl,j}^0(t)}. \label{eq:rho final proof thm 3 trace not clock}
	\end{align}
	Similarly
	\begin{align}
		&\rho_{\Sy\cat\G}^{\textup{target } (\epsig)}(t)\\
		&= U_{\Sy\cat\G}^{\textup{target } (\epsig)}(t) \left(\rho^0_{\Sy}(t)\otimes\rho^0_\cat(t)\otimes\tauGibb\right) U_{\Sy\cat\G}^{\textup{target } (\epsig)\,\dag}(t)\\
		&= \me^{-i t \hatf(t)\hat H_{\Sy\cat\G}^\textup{int}}\\
		&\qquad \left(\sum_{j,j'=1}^{d_\Sy d_\cat d_\G} \rho^0_{\Sy\cat\G, j,j'}(t) \ketbra{E_j}{E_{j'}} \right)\me^{i t \hatf(t) \hat H_{\Sy\cat\G}^\textup{int}} \\
		&= \sum_{j,j'=1}^{d_\Sy d_\cat d_\G} \rho^0_{\Sy\cat\G, j,j'}(t) \me^{-\mi t (\Omega_j-\Omega_{j'})\hatf(t)} \ketbra{E_j}{E_{j'}}.
	\end{align}
	Noting that the Frobenious norm $\|\cdot\|_F$ upper bounds the trace distance by the inequality $\|\cdot\|_F\geq \| \cdot \|_1 /\sqrt{d}$ for a $d$ dimensional space, we find
\begin{widetext}
	\begin{align}
		\| \rho^F_{\Sy\cat\G}(t) - \rho^{\textup{target} (\epsig)}_{\Sy\cat\G}(t)  \|_1 &\leq \sqrt{d_\Sy d_\cat d_\G} \| \rho^F_{\Sy\cat\G}(t) - \rho^{\textup{target} (\epsig)}_{\Sy\cat\G}(t)  \|_F \\
		&= \sqrt{d_\Sy d_\cat d_\G} \sqrt{\sum_{j,j'=1}^{d_\Sy d_\cat d_\G} \left| \rho^0_{\Sy\cat\G, j,j'}(t)\right|^2 \left|\me^{-\mi t (\Omega_j-\Omega_{j'})\hatf(t)} - \braket{\rho_{\cl,j'}^0(t)|\rho_{\cl,j}^0(t)} \right|^2 \,}\\
		&\leq \sqrt{d_\Sy d_\cat d_\G} \sqrt{\sum_{j,j'=1}^{d_\Sy d_\cat d_\G} \left| \rho^0_{\Sy\cat\G, j,j'}(t)\right|^2 \left( \max_{m,n}\left|\me^{-\mi t (\Omega_m-\Omega_{n})\hatf(t)} - \braket{\rho_{\cl,n}^0(t)|\rho_{\cl,m}^0(t)}\right|^2 \right) \,}\\
		&\leq \sqrt{d_\Sy d_\cat d_\G} \sqrt{ \tr\left[\rho^0_\Sy(t)^2\otimes\rho_\cat^0(t)^2\otimes\tau_\G^2\right] \left( \max_{m,n} \left|\me^{-\mi t (\Omega_m-\Omega_{n})\hatf(t)} - \braket{\rho_{\cl,n}^0(t)|\rho_{\cl,m}^0(t)}\right|^2\right) \,}\\
		&= \sqrt{d_\Sy\tr\left[{\rho^0_\Sy}^2\right] d_\cat\tr\left[{\rho^0_\cat}^{\!\!\!\!\!\!2}\,\,\right] d_\G \tr\left[{\tau_\G}^2\right]} \,\, \max_{m,n}  \left|\me^{-\mi t (\Omega_m-\Omega_{n})\hatf(t)} - \braket{\rho_{\cl,n}^0(t)|\rho_{\cl,m}^0(t)}\right| \\
		& \leq \sqrt{d_\Sy\tr\left[{\rho^0_\Sy}^2\right] d_\cat\tr\left[{\rho^0_\cat}^{\!\!\!\!\!\!2}\,\,\right] d_\G \tr\left[{\tau_\G}^2\right]} \,\, \max_{x,y\in[-\pi,\pi]} \left|1 - \braket{\rho_{\cl}^0|\hat \Gamma^\dag(x,t) \hat \Gamma(y,t) |\rho_{\cl}^0}\right| \\
		&= 
		A \,\, \max_{x,y\in[-\pi,\pi]} \left|1 - \Delta(t;x,y)\right|,
	\end{align}
	\end{widetext}
	where we have denoted 
	\begin{align}
		\label{eq:A-Delta-def}
		\Delta(t;x,y)&:= \braket{\rho_\cl^0|\Gamma^\dag(x,t) \Gamma(y,t)|\rho_{\cl}^0},\\
		A&:= \sqrt{d_\Sy\tr\left[{\rho^0_\Sy}^2\right] d_\cat\tr\left[{\rho^0_\cat}^{\!\!\!\!\!\!2}\,\,\right] d_\G \tr\left[{\tau_\G}^2\right]}.
	\end{align}

	(Note here, that since $d\, \tr [\rho^2]\geq 1$ for any $d$-dimensional state, we have  $A\geq 1.$) 
	Thus 
	$\epSycatG(\epsig;t)$, 
	from \eqref{eq:1st condition thm:noemb physical t CTO proof}, we  
	set as 
	\begin{align}
		\label{ep:eqcl}
		\epSycatG(\epsig;t)=A \,\, \max_{x,y\in[-\pi,\pi]} \left|1 - \Delta(t;x,y)\right|.
	\end{align}
	Noting that the fidelity $F$ between a pure state $\ket{\rho_\cl^0(t)}=\me^{-\mi t \hat H_\cl} \ket{\rho_\cl^0}$ and a state $\rho^F_\cl(t)$ is given by $F=\bra{\rho_\cl^0(t)} \rho^F_\cl(t) \ket{\rho_\cl^0(t)}$, using Eq. \eqref{eq:rho final proof thm 3 trace not clock} and the usual bound for the trace distance in terms of the fidelity, we find 
	\begin{align}
		&\| \rho^F_{\cl}(t)-\rho_\cl^0(t)\|_1\\
		&\leq 2 \sqrt{1-F\left(\rho^F_\cl(t),\ket{\rho_\cl^0(t)}\right)}\\
		& = 2 \sqrt{1- \bra{\rho_\cl^0(t)} \rho^F_\cl(t) \ket{\rho_\cl^0(t)} }= \\
		&  2 \sqrt{1-\sum_{j=1}^{d_\Sy d_\cat d_\G} \rho^0_{\Sy\cat\G, j,j} \braket{\rho_\cl^0(t)|\rho_{\cl,j}^0(t)}\braket{\rho_{\cl,j}^0(t)| \rho_\cl^0(t)} } \nonumber\\
		&\leq  2 \sqrt{1- \min_{j} \left| \braket{\rho_\cl^0(t) |\rho_{\cl,j}^0(t)} \right|^2} \\
		&\leq 2 \sqrt{1- \min_{x\in[-\pi,\pi]} \left| \braket{\rho_\cl^0(t) |\me^{-\mi x \hatf(t)} \Gamma(x,t)|\rho_{\cl}^0} \right|^2}\\
		&= 2 \sqrt{1- \min_{x\in[-\pi,\pi]} \left| \braket{\rho_\cl^0|\Gamma^\dag(0,t) \Gamma(x,t)|\rho_{\cl}^0} \right|^2} \\
		& \leq 2 \max_{x,y\in[-\pi,\pi]} \sqrt{1-  \left| \braket{\rho_\cl^0|\Gamma^\dag(y,t) \Gamma(x,t)|\rho_{\cl}^0} \right|^2} \\
		&	=		2 \max_{x,y\in[-\pi,\pi]} \sqrt{1-  \left|\Delta(t;x,y) \right|^2}, 
	\end{align}
	so that we can set 	$\epcl(t) $,
	from \eqref{eq:2nd condition thm:noemb physical t CTO proof}, to
	\begin{align}
		\label{eq:epSy}
		\epcl(t) =
		2 \max_{x,y\in[-\pi,\pi]} \sqrt{1-  \left|\Delta(t;x,y) \right|^2}.
	\end{align} 
	Inserting \eqref{eq:epSy} and \eqref{ep:eqcl} into 
	Eqs. \eqref{inter proof thm 3 1}, \eqref{part to for proof closeness}, we conclude 
	\begin{align}
		&\| \rho^F_{\Sy\cat\cl}(t) - \rho^F_{\Sy}(t)\otimes \rho_\cat^0(t)\otimes \rho_\cl^0(t)  \|_1 
		\\
		&\leq \,  2 \sqrt{A} \max_{x,y\in[-\pi,\pi]} \sqrt{|1 - \Delta(t;x,y)|}\\
		&  + 4 \max_{x,y\in[-\pi,\pi]} \sqrt{1- \left| \Delta(t; x,y) \right|^2} + 2\epsig\,\hatf(t), \label{eq: new scc - scc} 
	\end{align}
	and
	\begin{align}
		\| &\rho^F_{\Sy}(t) - \rho^{\textup{target} (0)}_{\Sy}(t)  \|_1 \\
		&\leq \epsig\,\hatf(t)+  
		2 \max_{x,y\in[-\pi,\pi]} \sqrt{1 - \left|\Delta(t;x,y)\right|^2}  .\label{eq: new s s}
	\end{align}
	Finally, to finish the proof we need find some simplifying upper bounds to the r.h.s. of Eqs. \eqref{eq: new scc - scc} and \eqref{eq: new s s} to conclude the bounds stated in Theorem \ref{thm:noemb physical t CTO}. 
	
	To this end we apply lemma \ref{lem:c}
	which implies, by identifying $c= \Delta(t;x,y)$
	\begin{align}
	    1-|\Delta(t;x,y)|^2 &\leq |1-
	    \Delta(t;x,y)^2|,\\
	    |1-
	    \Delta(t;x,y)| &\leq  
	    |1-
	    \Delta(t;x,y)^2|
	\end{align}
	Observe that we can make the identification $c= \Delta(t;x,y)$ since $|\Delta(t;x,y)|\leq 1$ follows from unitarity 
	and $|1-c|=|1-\Delta(t;x,y)|\leq 1$ can be assumed w.l.o.g.. Indeed, if $|1-\Delta(t;x,y)|>1$,
	then the bounds would be greater than 2 (see Eqs. \eqref{eq: new scc - scc}, \eqref{eq: new s s} and recall $A\geq 1$), hence not relevant, since trace norm is always no greater than 2.

	
	We then obtain
	\begin{align}
		\|& \rho^F_{\Sy\cat\cl}(t) - \rho^F_{\Sy}(t)\otimes \rho_\cat^0(t)\otimes \rho_\cl^0(t)  \|_1 \\
		&\leq 2\epsig\,\hatf(t)
		+6 \sqrt{ A\,  \max_{x, y\in[-\pi,\pi]} \left|1-   \Delta^2(t; x, y) \right| }
		\label{eq: new s s 1}
	\end{align}
	where we used $A\geq1$, 
	and 
	\begin{align}
		\| \rho^F_{\Sy}&(t) - \rho^{\textup{target} (0)}_{\Sy}(t)  \|_1 \leq \epsig\,\hatf(t)\\
		&+ 
		\, \sqrt{A}
		\max_{x,y\in[-\pi,\pi]} \left|1 - \Delta^2(t;x,y)\right| .\label{eq: new s s 2}
	\end{align}
\end{proof}
where $A$ and $\Delta(t;x,y)$ are given by \eqref{eq:A-Delta-def}.
Finally, since $\tr [\rho^2]\leq 1 $ for any normalised density matrix $\rho$, 
we have 
\begin{equation}
	A\leq d_\Sy d_\cat d_\G \tr\left[{\tau_\G}^2\right].
\end{equation}
Inserting this into  Eqs. \eqref{eq: new s s 1}, \eqref{eq: new s s 2} we get the thesis of Theorem \ref{thm:noemb physical t CTO}.
\cblack

\subsection{Calculating $\Delta(t;x,y)$ for the Idealised Momentum Clock}\label{sec:proof Delat=1 for idelaised clock}
In the case of the idealised momentum clock, we have $\hat H_\cl=\hat p$, $\hat H_\cl^\textup{int}=g(\hat x)$, where $\hat x$ and $\hat p$ are the position and momentum operators of a particle in one dimension satisfying the Weyl form of the canonical commutation relations, $[\hat x,\hat p]=\mi$, while $g$ is an integrable function from the reals to the reals, normalised such that $\int_\rr g(x) dx=1$.\footnote{One can also come to the same conclusions for the idealised momentum clock on a circle, rather than a line. In this case, $[\hat x,\hat p]$ still satisfy the Heisenberg form of the canonical commutation relations, but not the Weyl form. See \cite{garrison1970canonically} for details.}

Therefore, we find for the idealised momentum clock, for $z,y\in\rr$
\begin{align}
	&\Delta(t;z,y)\\
	 =&\bra{\rho_\cl^0} \hat \Gamma_\cl^\dag (z,t) \hat \Gamma_\cl(y,t) \ket{\rho_\cl^0}\\
	=& \me^{-\mi (z-y)\hatf(t)}  \bra{\rho_\cl^0} \me^{\mi t \hat p + \mi z t g(\hat x)} \me^{-\mi t \hat p - \mi y t g(\hat x)} \ket{\rho_\cl^0}\\
	=& \,\me^{-\mi (z-y)\hatf(t)} \!\! \int_\rr\!\!\!\! dx \bra{\rho_\cl^0} \me^{\mi t \hat p + \mi z t g(\hat x)} \ketbra{x}{x} \me^{-\mi t \hat p - \mi y t g(\hat x)} \ket{\rho_\cl^0}.
\end{align}

We can now use the relation $\hat p=-\mi \frac{\partial}{\partial x}$ and solve the 1st order 2 variable differential equation resulting from the Schr\"odinger eq. for the Hamiltonian $ \hat p + y g(\hat x)$ and initial wave-function $\braket{x|\rho_\cl^0}$. Plugging the solution into the above, we arrive at
\begin{align}
	\Delta(t;z,y) = 
	\me^{-\mi (z-y)\hatf(t)} \!\! \int_\rr\!\!\! dx \left|\braket{x|\rho_\cl^0}\right|^2 \me^{-\mi (y-z) \int_x^{x+t} g(x') dx'}\!. \label{eq: idelaise dmomentum inter 1}
\end{align}

We now choose the support of the initial wave-function $\braket{x|\rho_\cl^0}$ to be $x\in[x_{\psi l}, x_{\psi r}]$ and the support of $g(x)$ to be $x\in[x_{g l}, x_{g r}]$. Noting that
\begin{align}
	\int_x^{x+t} g(x') dx'= \begin{cases}
		0 &\mbox{ if } x+t\leq x_{gl}\\
		1 &\mbox{ if } x\leq x_{gl} \text{ and } x+t\geq x_{gr},
	\end{cases}
\end{align}
and taking into account the support interval of $\braket{x|\rho_\cl^0}$,  we conclude
\begin{align}
	&\int_\rr dx \left|\braket{x|\rho_\cl^0}\right|^2 \me^{-\mi (y-z) \int_x^{x+t} g(x') dx'}\\
	&=\begin{cases}
		1 &\mbox{ if } t \leq x_{gl}-x_{\psi r}\\
		\me^{\mi (z-y)} &\mbox{ if } t \geq x_{gr}-x_{\psi l}
	\end{cases}. \label{eq: idelaise dmomentum inter 2}
\end{align}
Therefore, choosing $t_1=x_{gr}-x_{\psi l}$ and $t_2=x_{gr}-x_{\psi l}$, from Eq. \eqref{eq: idelaise dmomentum inter 1} we arrive at
\begin{align}
	\Delta(t;x,y)= 1\quad \forall x, y \in[-\pi,\pi],
\end{align}
as claimed in Sec. \ref{sec: main text CTO case} of the main text. Furthermore, note that the derivation holds for all $t_1<t_2$ by appropriately choosing the parameters $x_{gr},x_{\psi l},x_{gr},x_{\psi l}$.


\twocolumngrid


\begin{thebibliography}{97}%
\makeatletter
\providecommand \@ifxundefined [1]{%
 \@ifx{#1\undefined}
}%
\providecommand \@ifnum [1]{%
 \ifnum #1\expandafter \@firstoftwo
 \else \expandafter \@secondoftwo
 \fi
}%
\providecommand \@ifx [1]{%
 \ifx #1\expandafter \@firstoftwo
 \else \expandafter \@secondoftwo
 \fi
}%
\providecommand \natexlab [1]{#1}%
\providecommand \enquote  [1]{``#1''}%
\providecommand \bibnamefont  [1]{#1}%
\providecommand \bibfnamefont [1]{#1}%
\providecommand \citenamefont [1]{#1}%
\providecommand \href@noop [0]{\@secondoftwo}%
\providecommand \href [0]{\begingroup \@sanitize@url \@href}%
\providecommand \@href[1]{\@@startlink{#1}\@@href}%
\providecommand \@@href[1]{\endgroup#1\@@endlink}%
\providecommand \@sanitize@url [0]{\catcode `\\12\catcode `\$12\catcode
  `\&12\catcode `\#12\catcode `\^12\catcode `\_12\catcode `\%12\relax}%
\providecommand \@@startlink[1]{}%
\providecommand \@@endlink[0]{}%
\providecommand \url  [0]{\begingroup\@sanitize@url \@url }%
\providecommand \@url [1]{\endgroup\@href {#1}{\urlprefix }}%
\providecommand \urlprefix  [0]{URL }%
\providecommand \Eprint [0]{\href }%
\providecommand \doibase [0]{https://doi.org/}%
\providecommand \selectlanguage [0]{\@gobble}%
\providecommand \bibinfo  [0]{\@secondoftwo}%
\providecommand \bibfield  [0]{\@secondoftwo}%
\providecommand \translation [1]{[#1]}%
\providecommand \BibitemOpen [0]{}%
\providecommand \bibitemStop [0]{}%
\providecommand \bibitemNoStop [0]{.\EOS\space}%
\providecommand \EOS [0]{\spacefactor3000\relax}%
\providecommand \BibitemShut  [1]{\csname bibitem#1\endcsname}%
\let\auto@bib@innerbib\@empty
\bibitem [{\citenamefont {Howard}(1997)}]{howard1997molecular}%
  \BibitemOpen
  \bibfield  {author} {\bibinfo {author} {\bibfnamefont {J.}~\bibnamefont
  {Howard}},\ }\bibfield  {title} {\bibinfo {title} {Molecular motors:
  structural adaptations to cellular functions},\ }\href
  {https://doi.org/10.1038/39247} {\bibfield  {journal} {\bibinfo  {journal}
  {Nature}\ }\textbf {\bibinfo {volume} {389}},\ \bibinfo {pages} {561}
  (\bibinfo {year} {1997})}\BibitemShut {NoStop}%
\bibitem [{\citenamefont {Frank}(2011)}]{molecularBio}%
  \BibitemOpen
  \bibinfo {editor} {\bibfnamefont {J.}~\bibnamefont {Frank}},\ ed.,\ \href
  {http://dx.doi.org/10.1017/CBO9781139003704} {\emph {\bibinfo {title}
  {Molecular Machines in Biology}}}\ (\bibinfo  {publisher} {Cambridge
  University Press},\ \bibinfo {year} {2011})\ \bibinfo {note} {cambridge Books
  Online}\BibitemShut {NoStop}%
\bibitem [{\citenamefont {Douglas}\ \emph {et~al.}(2012)\citenamefont
  {Douglas}, \citenamefont {Bachelet},\ and\ \citenamefont
  {Church}}]{nanobots}%
  \BibitemOpen
  \bibfield  {author} {\bibinfo {author} {\bibfnamefont {S.~M.}\ \bibnamefont
  {Douglas}}, \bibinfo {author} {\bibfnamefont {I.}~\bibnamefont {Bachelet}},\
  and\ \bibinfo {author} {\bibfnamefont {G.~M.}\ \bibnamefont {Church}},\
  }\bibfield  {title} {\bibinfo {title} {A logic-gated nanorobot for targeted
  transport of molecular payloads},\ }\href
  {https://doi.org/10.1126/science.1214081} {\bibfield  {journal} {\bibinfo
  {journal} {Science}\ }\textbf {\bibinfo {volume} {335}},\ \bibinfo {pages}
  {831} (\bibinfo {year} {2012})}\BibitemShut {NoStop}%
\bibitem [{\citenamefont {Mitchison}(2019)}]{MitchisonReview}%
  \BibitemOpen
  \bibfield  {author} {\bibinfo {author} {\bibfnamefont {M.~T.}\ \bibnamefont
  {Mitchison}},\ }\bibfield  {title} {\bibinfo {title} {Quantum thermal
  absorption machines: refrigerators, engines and clocks},\ }\href
  {https://doi.org/10.1080/00107514.2019.1631555} {\bibfield  {journal}
  {\bibinfo  {journal} {Contemp. Phys.}\ }\textbf {\bibinfo {volume} {0}},\
  \bibinfo {pages} {1} (\bibinfo {year} {2019})}\BibitemShut {NoStop}%
\bibitem [{\citenamefont {Benenti}\ \emph {et~al.}(2017)\citenamefont
  {Benenti}, \citenamefont {Casati}, \citenamefont {Saito},\ and\ \citenamefont
  {Whitney}}]{BENENTI20171}%
  \BibitemOpen
  \bibfield  {author} {\bibinfo {author} {\bibfnamefont {G.}~\bibnamefont
  {Benenti}}, \bibinfo {author} {\bibfnamefont {G.}~\bibnamefont {Casati}},
  \bibinfo {author} {\bibfnamefont {K.}~\bibnamefont {Saito}},\ and\ \bibinfo
  {author} {\bibfnamefont {R.~S.}\ \bibnamefont {Whitney}},\ }\bibfield
  {title} {\bibinfo {title} {Fundamental aspects of steady-state conversion of
  heat to work at the nanoscale},\ }\href
  {https://doi.org/https://doi.org/10.1016/j.physrep.2017.05.008} {\bibfield
  {journal} {\bibinfo  {journal} {Physics Reports}\ }\textbf {\bibinfo {volume}
  {694}},\ \bibinfo {pages} {1} (\bibinfo {year} {2017})}\BibitemShut {NoStop}%
\bibitem [{\citenamefont {Brand\~{a}o}\ \emph {et~al.}(2015)\citenamefont
  {Brand\~{a}o}, \citenamefont {Horodecki}, \citenamefont {Ng}, \citenamefont
  {Oppenheim},\ and\ \citenamefont {Wehner}}]{secondlaw}%
  \BibitemOpen
  \bibfield  {author} {\bibinfo {author} {\bibfnamefont {F.}~\bibnamefont
  {Brand\~{a}o}}, \bibinfo {author} {\bibfnamefont {M.}~\bibnamefont
  {Horodecki}}, \bibinfo {author} {\bibfnamefont {N.}~\bibnamefont {Ng}},
  \bibinfo {author} {\bibfnamefont {J.}~\bibnamefont {Oppenheim}},\ and\
  \bibinfo {author} {\bibfnamefont {S.}~\bibnamefont {Wehner}},\ }\bibfield
  {title} {\bibinfo {title} {The second laws of quantum thermodynamics},\
  }\href {https://doi.org/10.1073/pnas.1411728112} {\bibfield  {journal}
  {\bibinfo  {journal} {Proc. Natl. Acad. Sci. U.S.A.}\ }\textbf {\bibinfo
  {volume} {112}},\ \bibinfo {pages} {3275} (\bibinfo {year}
  {2015})}\BibitemShut {NoStop}%
\bibitem [{\citenamefont {Feynman}\ \emph {et~al.}(1963)\citenamefont
  {Feynman}, \citenamefont {Leighton},\ and\ \citenamefont
  {Sands}}]{brownianratchetfeynmann}%
  \BibitemOpen
  \bibfield  {author} {\bibinfo {author} {\bibfnamefont {R.~P.}\ \bibnamefont
  {Feynman}}, \bibinfo {author} {\bibfnamefont {R.~B.}\ \bibnamefont
  {Leighton}},\ and\ \bibinfo {author} {\bibfnamefont {M.}~\bibnamefont
  {Sands}},\ }\href@noop {} {\emph {\bibinfo {title} {The Feynman Lectures on
  Physics}}},\ Vol.~\bibinfo {volume} {1}\ (\bibinfo  {publisher}
  {Addison-Wesley},\ \bibinfo {year} {1963})\ Chap.\ \bibinfo {chapter}
  {Ratchet and pawl}, p.~\bibinfo {pages} {46}\BibitemShut {NoStop}%
\bibitem [{\citenamefont {Linden}\ \emph {et~al.}(2010)\citenamefont {Linden},
  \citenamefont {Popescu},\ and\ \citenamefont {Skrzypczyk}}]{linden2010small}%
  \BibitemOpen
  \bibfield  {author} {\bibinfo {author} {\bibfnamefont {N.}~\bibnamefont
  {Linden}}, \bibinfo {author} {\bibfnamefont {S.}~\bibnamefont {Popescu}},\
  and\ \bibinfo {author} {\bibfnamefont {P.}~\bibnamefont {Skrzypczyk}},\
  }\bibfield  {title} {\bibinfo {title} {{How small can thermal machines be?
  The smallest possible refrigerator}},\ }\href
  {https://doi.org/10.1103/PhysRevLett.105.130401} {\bibfield  {journal}
  {\bibinfo  {journal} {Phys. Rev. Lett.}\ }\textbf {\bibinfo {volume} {105}},\
  \bibinfo {pages} {130401} (\bibinfo {year} {2010})}\BibitemShut {NoStop}%
\bibitem [{\citenamefont {Gelbwaser-Klimovsky}\ \emph
  {et~al.}(2013)\citenamefont {Gelbwaser-Klimovsky}, \citenamefont {Alicki},\
  and\ \citenamefont {Kurizki}}]{GAK-2013}%
  \BibitemOpen
  \bibfield  {author} {\bibinfo {author} {\bibfnamefont {D.}~\bibnamefont
  {Gelbwaser-Klimovsky}}, \bibinfo {author} {\bibfnamefont {R.}~\bibnamefont
  {Alicki}},\ and\ \bibinfo {author} {\bibfnamefont {G.}~\bibnamefont
  {Kurizki}},\ }\bibfield  {title} {\bibinfo {title} {Work and energy gain of
  heat-pumped quantized amplifiers},\ }\href
  {https://doi.org/10.1209/0295-5075/103/60005} {\bibfield  {journal} {\bibinfo
   {journal} {EPL (Europhysics Letters)}\ }\textbf {\bibinfo {volume} {103}},\
  \bibinfo {pages} {60005} (\bibinfo {year} {2013})}\BibitemShut {NoStop}%
\bibitem [{\citenamefont {Brask}\ \emph {et~al.}(2015)\citenamefont {Brask},
  \citenamefont {Haack}, \citenamefont {Brunner},\ and\ \citenamefont
  {Huber}}]{brask2015autonomous}%
  \BibitemOpen
  \bibfield  {author} {\bibinfo {author} {\bibfnamefont {J.~B.}\ \bibnamefont
  {Brask}}, \bibinfo {author} {\bibfnamefont {G.}~\bibnamefont {Haack}},
  \bibinfo {author} {\bibfnamefont {N.}~\bibnamefont {Brunner}},\ and\ \bibinfo
  {author} {\bibfnamefont {M.}~\bibnamefont {Huber}},\ }\bibfield  {title}
  {\bibinfo {title} {Autonomous quantum thermal machine for generating
  steady-state entanglement},\ }\href
  {https://doi.org/10.1088/1367-2630/17/11/113029} {\bibfield  {journal}
  {\bibinfo  {journal} {New J. Phys}\ }\textbf {\bibinfo {volume} {17}},\
  \bibinfo {pages} {113029} (\bibinfo {year} {2015})}\BibitemShut {NoStop}%
\bibitem [{\citenamefont {Mitchison}\ \emph {et~al.}(2015)\citenamefont
  {Mitchison}, \citenamefont {Woods}, \citenamefont {Prior},\ and\
  \citenamefont {Huber}}]{Mitchison_2015}%
  \BibitemOpen
  \bibfield  {author} {\bibinfo {author} {\bibfnamefont {M.~T.}\ \bibnamefont
  {Mitchison}}, \bibinfo {author} {\bibfnamefont {M.~P.}\ \bibnamefont
  {Woods}}, \bibinfo {author} {\bibfnamefont {J.}~\bibnamefont {Prior}},\ and\
  \bibinfo {author} {\bibfnamefont {M.}~\bibnamefont {Huber}},\ }\bibfield
  {title} {\bibinfo {title} {Coherence-assisted single-shot cooling by quantum
  absorption refrigerators},\ }\href
  {https://doi.org/10.1088/1367-2630/17/11/115013} {\bibfield  {journal}
  {\bibinfo  {journal} {New J. Phys}\ }\textbf {\bibinfo {volume} {17}},\
  \bibinfo {pages} {115013} (\bibinfo {year} {2015})}\BibitemShut {NoStop}%
\bibitem [{\citenamefont {Mitchison}\ \emph {et~al.}(2016)\citenamefont
  {Mitchison}, \citenamefont {Huber}, \citenamefont {Prior}, \citenamefont
  {Woods},\ and\ \citenamefont {Plenio}}]{Mitchison_2016}%
  \BibitemOpen
  \bibfield  {author} {\bibinfo {author} {\bibfnamefont {M.~T.}\ \bibnamefont
  {Mitchison}}, \bibinfo {author} {\bibfnamefont {M.}~\bibnamefont {Huber}},
  \bibinfo {author} {\bibfnamefont {J.}~\bibnamefont {Prior}}, \bibinfo
  {author} {\bibfnamefont {M.~P.}\ \bibnamefont {Woods}},\ and\ \bibinfo
  {author} {\bibfnamefont {M.~B.}\ \bibnamefont {Plenio}},\ }\bibfield  {title}
  {\bibinfo {title} {Realising a quantum absorption refrigerator with an
  atom-cavity system},\ }\href {https://doi.org/10.1088/2058-9565/1/1/015001}
  {\bibfield  {journal} {\bibinfo  {journal} {Quantum Science and Technology}\
  }\textbf {\bibinfo {volume} {1}},\ \bibinfo {pages} {015001} (\bibinfo {year}
  {2016})}\BibitemShut {NoStop}%
\bibitem [{\citenamefont {Maslennikov}\ \emph {et~al.}(2019)\citenamefont
  {Maslennikov}, \citenamefont {Ding}, \citenamefont {Hablützel},
  \citenamefont {Gan}, \citenamefont {Roulet}, \citenamefont {Nimmrichter},
  \citenamefont {Dai}, \citenamefont {Scarani},\ and\ \citenamefont
  {Matsukevich}}]{ExperimentAbsorptionFridge}%
  \BibitemOpen
  \bibfield  {author} {\bibinfo {author} {\bibfnamefont {G.}~\bibnamefont
  {Maslennikov}}, \bibinfo {author} {\bibfnamefont {S.}~\bibnamefont {Ding}},
  \bibinfo {author} {\bibfnamefont {R.}~\bibnamefont {Hablützel}}, \bibinfo
  {author} {\bibfnamefont {J.}~\bibnamefont {Gan}}, \bibinfo {author}
  {\bibfnamefont {A.}~\bibnamefont {Roulet}}, \bibinfo {author} {\bibfnamefont
  {S.}~\bibnamefont {Nimmrichter}}, \bibinfo {author} {\bibfnamefont
  {J.}~\bibnamefont {Dai}}, \bibinfo {author} {\bibfnamefont {V.}~\bibnamefont
  {Scarani}},\ and\ \bibinfo {author} {\bibfnamefont {D.}~\bibnamefont
  {Matsukevich}},\ }\href {https://doi.org/10.1038/s41467-018-08090-0}
  {\bibfield  {journal} {\bibinfo  {journal} {Nat. Commun.}\ }\textbf {\bibinfo
  {volume} {10}},\ \bibinfo {pages} {2041} (\bibinfo {year}
  {2019})}\BibitemShut {NoStop}%
\bibitem [{\citenamefont {Monsel}\ \emph {et~al.}(2018)\citenamefont {Monsel},
  \citenamefont {Elouard},\ and\ \citenamefont {Auff{\'
  e}ves}}]{AutoMachineTimeArrow2018}%
  \BibitemOpen
  \bibfield  {author} {\bibinfo {author} {\bibfnamefont {J.}~\bibnamefont
  {Monsel}}, \bibinfo {author} {\bibfnamefont {C.}~\bibnamefont {Elouard}},\
  and\ \bibinfo {author} {\bibfnamefont {A.}~\bibnamefont {Auff{\' e}ves}},\
  }\bibfield  {title} {\bibinfo {title} {An autonomous quantum machine to
  measure the thermodynamic arrow of time},\ }\bibfield  {journal} {\bibinfo
  {journal} {njp Quantum Inf.}\ }\textbf {\bibinfo {volume} {4}},\ \href
  {https://doi.org/10.1038/s41534-018-0109-8} {10.1038/s41534-018-0109-8}
  (\bibinfo {year} {2018})\BibitemShut {NoStop}%
\bibitem [{\citenamefont {Erker}\ \emph {et~al.}(2017)\citenamefont {Erker},
  \citenamefont {Mitchison}, \citenamefont {Silva}, \citenamefont {Woods},
  \citenamefont {Brunner},\ and\ \citenamefont {Huber}}]{MarcusPauli}%
  \BibitemOpen
  \bibfield  {author} {\bibinfo {author} {\bibfnamefont {P.}~\bibnamefont
  {Erker}}, \bibinfo {author} {\bibfnamefont {M.~T.}\ \bibnamefont
  {Mitchison}}, \bibinfo {author} {\bibfnamefont {R.}~\bibnamefont {Silva}},
  \bibinfo {author} {\bibfnamefont {M.~P.}\ \bibnamefont {Woods}}, \bibinfo
  {author} {\bibfnamefont {N.}~\bibnamefont {Brunner}},\ and\ \bibinfo {author}
  {\bibfnamefont {M.}~\bibnamefont {Huber}},\ }\bibfield  {title} {\bibinfo
  {title} {Autonomous quantum clocks: Does thermodynamics limit our ability to
  measure time?},\ }\href {https://doi.org/10.1103/PhysRevX.7.031022}
  {\bibfield  {journal} {\bibinfo  {journal} {Phys. Rev. X}\ }\textbf {\bibinfo
  {volume} {7}},\ \bibinfo {pages} {031022} (\bibinfo {year}
  {2017})}\BibitemShut {NoStop}%
\bibitem [{\citenamefont {Uzdin}\ \emph {et~al.}(2015)\citenamefont {Uzdin},
  \citenamefont {Levy},\ and\ \citenamefont {Kosloff}}]{PhysRevX_Raam}%
  \BibitemOpen
  \bibfield  {author} {\bibinfo {author} {\bibfnamefont {R.}~\bibnamefont
  {Uzdin}}, \bibinfo {author} {\bibfnamefont {A.}~\bibnamefont {Levy}},\ and\
  \bibinfo {author} {\bibfnamefont {R.}~\bibnamefont {Kosloff}},\ }\bibfield
  {title} {\bibinfo {title} {Equivalence of quantum heat machines, and
  quantum-thermodynamic signatures},\ }\href
  {https://doi.org/10.1103/PhysRevX.5.031044} {\bibfield  {journal} {\bibinfo
  {journal} {Phys. Rev. X}\ }\textbf {\bibinfo {volume} {5}},\ \bibinfo {pages}
  {031044} (\bibinfo {year} {2015})}\BibitemShut {NoStop}%
\bibitem [{\citenamefont {Hewgill}\ \emph {et~al.}(2020)\citenamefont
  {Hewgill}, \citenamefont {Gonz\'alez}, \citenamefont {Palao}, \citenamefont
  {Alonso}, \citenamefont {Ferraro},\ and\ \citenamefont
  {De~Chiara}}]{Hewgill2019}%
  \BibitemOpen
  \bibfield  {author} {\bibinfo {author} {\bibfnamefont {A.}~\bibnamefont
  {Hewgill}}, \bibinfo {author} {\bibfnamefont {J.~O.}\ \bibnamefont
  {Gonz\'alez}}, \bibinfo {author} {\bibfnamefont {J.~P.}\ \bibnamefont
  {Palao}}, \bibinfo {author} {\bibfnamefont {D.}~\bibnamefont {Alonso}},
  \bibinfo {author} {\bibfnamefont {A.}~\bibnamefont {Ferraro}},\ and\ \bibinfo
  {author} {\bibfnamefont {G.}~\bibnamefont {De~Chiara}},\ }\bibfield  {title}
  {\bibinfo {title} {Three-qubit refrigerator with two-body interactions},\
  }\href {https://doi.org/10.1103/PhysRevE.101.012109} {\bibfield  {journal}
  {\bibinfo  {journal} {Phys. Rev. E}\ }\textbf {\bibinfo {volume} {101}},\
  \bibinfo {pages} {012109} (\bibinfo {year} {2020})}\BibitemShut {NoStop}%
\bibitem [{\citenamefont {Hofer}\ \emph {et~al.}(2016)\citenamefont {Hofer},
  \citenamefont {Perarnau-Llobet}, \citenamefont {Brask}, \citenamefont
  {Silva}, \citenamefont {Huber},\ and\ \citenamefont
  {Brunner}}]{AutonomousFridgeOnOff}%
  \BibitemOpen
  \bibfield  {author} {\bibinfo {author} {\bibfnamefont {P.~P.}\ \bibnamefont
  {Hofer}}, \bibinfo {author} {\bibfnamefont {M.}~\bibnamefont
  {Perarnau-Llobet}}, \bibinfo {author} {\bibfnamefont {J.~B.}\ \bibnamefont
  {Brask}}, \bibinfo {author} {\bibfnamefont {R.}~\bibnamefont {Silva}},
  \bibinfo {author} {\bibfnamefont {M.}~\bibnamefont {Huber}},\ and\ \bibinfo
  {author} {\bibfnamefont {N.}~\bibnamefont {Brunner}},\ }\bibfield  {title}
  {\bibinfo {title} {{Autonomous quantum refrigerator in a circuit QED
  architecture based on a Josephson junction}},\ }\href
  {https://doi.org/10.1103/PhysRevB.94.235420} {\bibfield  {journal} {\bibinfo
  {journal} {Phys. Rev. B}\ }\textbf {\bibinfo {volume} {94}},\ \bibinfo
  {pages} {235420} (\bibinfo {year} {2016})}\BibitemShut {NoStop}%
\bibitem [{\citenamefont {Alicki}(1979)}]{Alicki-1979}%
  \BibitemOpen
  \bibfield  {author} {\bibinfo {author} {\bibfnamefont {R.}~\bibnamefont
  {Alicki}},\ }\bibfield  {title} {\bibinfo {title} {The quantum open system as
  a model of the heat engine},\ }\href
  {https://doi.org/10.1088/0305-4470/12/5/007} {\bibfield  {journal} {\bibinfo
  {journal} {J. Phys. A}\ }\textbf {\bibinfo {volume} {12}},\ \bibinfo {pages}
  {L103} (\bibinfo {year} {1979})}\BibitemShut {NoStop}%
\bibitem [{\citenamefont {Gelbwaser-Klimovsky}\ and\ \citenamefont
  {Kurizki}(2014)}]{gelbwaser}%
  \BibitemOpen
  \bibfield  {author} {\bibinfo {author} {\bibfnamefont {D.}~\bibnamefont
  {Gelbwaser-Klimovsky}}\ and\ \bibinfo {author} {\bibfnamefont
  {G.}~\bibnamefont {Kurizki}},\ }\bibfield  {title} {\bibinfo {title}
  {Heat-machine control by quantum-state preparation: From quantum engines to
  refrigerators},\ }\href {https://doi.org/10.1103/PhysRevE.90.022102}
  {\bibfield  {journal} {\bibinfo  {journal} {Phys. Rev. E}\ }\textbf {\bibinfo
  {volume} {90}},\ \bibinfo {pages} {022102} (\bibinfo {year}
  {2014})}\BibitemShut {NoStop}%
\bibitem [{\citenamefont {Geusic}\ \emph {et~al.}(1967)\citenamefont {Geusic},
  \citenamefont {Schulz-DuBios},\ and\ \citenamefont
  {Scovil}}]{Geusic1967quatum}%
  \BibitemOpen
  \bibfield  {author} {\bibinfo {author} {\bibfnamefont {J.~E.}\ \bibnamefont
  {Geusic}}, \bibinfo {author} {\bibfnamefont {E.~O.}\ \bibnamefont
  {Schulz-DuBios}},\ and\ \bibinfo {author} {\bibfnamefont {H.~E.~D.}\
  \bibnamefont {Scovil}},\ }\bibfield  {title} {\bibinfo {title} {Quantum
  equivalent of the carnot cycle},\ }\href
  {https://doi.org/10.1103/PhysRev.156.343} {\bibfield  {journal} {\bibinfo
  {journal} {Phys. Rev.}\ }\textbf {\bibinfo {volume} {156}},\ \bibinfo {pages}
  {343} (\bibinfo {year} {1967})}\BibitemShut {NoStop}%
\bibitem [{\citenamefont {Woods}\ \emph {et~al.}(2019)\citenamefont {Woods},
  \citenamefont {Ng},\ and\ \citenamefont {Wehner}}]{MischaQHE}%
  \BibitemOpen
  \bibfield  {author} {\bibinfo {author} {\bibfnamefont {M.~P.}\ \bibnamefont
  {Woods}}, \bibinfo {author} {\bibfnamefont {N.}~\bibnamefont {Ng}},\ and\
  \bibinfo {author} {\bibfnamefont {S.}~\bibnamefont {Wehner}},\ }\bibfield
  {title} {\bibinfo {title} {The maximum efficiency of nano heat engines
  depends on more than temperature},\ }\bibfield  {journal} {\bibinfo
  {journal} {Quantum}\ }\textbf {\bibinfo {volume} {3}},\ \href
  {https://doi.org/10.22331/q-2019-08-19-177} {10.22331/q-2019-08-19-177}
  (\bibinfo {year} {2019})\BibitemShut {NoStop}%
\bibitem [{\citenamefont {Ng}\ \emph {et~al.}(2017)\citenamefont {Ng},
  \citenamefont {Woods},\ and\ \citenamefont {Wehner}}]{Ying_Ng_2017}%
  \BibitemOpen
  \bibfield  {author} {\bibinfo {author} {\bibfnamefont {N.}~\bibnamefont
  {Ng}}, \bibinfo {author} {\bibfnamefont {M.~P.}\ \bibnamefont {Woods}},\ and\
  \bibinfo {author} {\bibfnamefont {S.}~\bibnamefont {Wehner}},\ }\bibfield
  {title} {\bibinfo {title} {{Surpassing the Carnot efficiency by extracting
  imperfect work}},\ }\href {https://doi.org/10.1088/1367-2630/aa8ced}
  {\bibfield  {journal} {\bibinfo  {journal} {New Journal of Physics}\ }\textbf
  {\bibinfo {volume} {19}},\ \bibinfo {pages} {113005} (\bibinfo {year}
  {2017})}\BibitemShut {NoStop}%
\bibitem [{\citenamefont {Yunger~Halpern}\ and\ \citenamefont
  {Limmer}(2020)}]{PhysRevA.101.042116}%
  \BibitemOpen
  \bibfield  {author} {\bibinfo {author} {\bibfnamefont {N.}~\bibnamefont
  {Yunger~Halpern}}\ and\ \bibinfo {author} {\bibfnamefont {D.~T.}\
  \bibnamefont {Limmer}},\ }\bibfield  {title} {\bibinfo {title} {Fundamental
  limitations on photoisomerization from thermodynamic resource theories},\
  }\href {https://doi.org/10.1103/PhysRevA.101.042116} {\bibfield  {journal}
  {\bibinfo  {journal} {Phys. Rev. A}\ }\textbf {\bibinfo {volume} {101}},\
  \bibinfo {pages} {042116} (\bibinfo {year} {2020})}\BibitemShut {NoStop}%
\bibitem [{\citenamefont {Ro{\ss}nagel}\ \emph {et~al.}(2016)\citenamefont
  {Ro{\ss}nagel}, \citenamefont {Dawkins}, \citenamefont {Tolazzi},
  \citenamefont {Abah}, \citenamefont {Lutz}, \citenamefont {Schmidt-Kaler},\
  and\ \citenamefont {Singer}}]{SingleIon}%
  \BibitemOpen
  \bibfield  {author} {\bibinfo {author} {\bibfnamefont {J.}~\bibnamefont
  {Ro{\ss}nagel}}, \bibinfo {author} {\bibfnamefont {S.~T.}\ \bibnamefont
  {Dawkins}}, \bibinfo {author} {\bibfnamefont {K.~N.}\ \bibnamefont
  {Tolazzi}}, \bibinfo {author} {\bibfnamefont {O.}~\bibnamefont {Abah}},
  \bibinfo {author} {\bibfnamefont {E.}~\bibnamefont {Lutz}}, \bibinfo {author}
  {\bibfnamefont {F.}~\bibnamefont {Schmidt-Kaler}},\ and\ \bibinfo {author}
  {\bibfnamefont {K.}~\bibnamefont {Singer}},\ }\bibfield  {title} {\bibinfo
  {title} {A single-atom heat engine},\ }\href
  {https://doi.org/10.1126/science.aad6320} {\bibfield  {journal} {\bibinfo
  {journal} {Science}\ }\textbf {\bibinfo {volume} {352}},\ \bibinfo {pages}
  {325} (\bibinfo {year} {2016})}\BibitemShut {NoStop}%
\bibitem [{\citenamefont {Janzing}\ \emph {et~al.}(2000)\citenamefont
  {Janzing}, \citenamefont {Wocjan}, \citenamefont {Zeier}, \citenamefont
  {Geiss},\ and\ \citenamefont {Beth}}]{janzing2000thermodynamic}%
  \BibitemOpen
  \bibfield  {author} {\bibinfo {author} {\bibfnamefont {D.}~\bibnamefont
  {Janzing}}, \bibinfo {author} {\bibfnamefont {P.}~\bibnamefont {Wocjan}},
  \bibinfo {author} {\bibfnamefont {R.}~\bibnamefont {Zeier}}, \bibinfo
  {author} {\bibfnamefont {R.}~\bibnamefont {Geiss}},\ and\ \bibinfo {author}
  {\bibfnamefont {T.}~\bibnamefont {Beth}},\ }\bibfield  {title} {\bibinfo
  {title} {{Thermodynamic Cost of Reliability and Low Temperatures: Tightening
  {L}andauer's Principle and the Second Law}},\ }\href
  {https://doi.org/10.1023/A:1026422630734} {\bibfield  {journal} {\bibinfo
  {journal} {Int. J. Theor. Phys.}\ }\textbf {\bibinfo {volume} {39}},\
  \bibinfo {pages} {2717} (\bibinfo {year} {2000})}\BibitemShut {NoStop}%
\bibitem [{\citenamefont {van Dam}\ and\ \citenamefont
  {Hayden}(2003)}]{Patrick_Embezzelment}%
  \BibitemOpen
  \bibfield  {author} {\bibinfo {author} {\bibfnamefont {W.}~\bibnamefont {van
  Dam}}\ and\ \bibinfo {author} {\bibfnamefont {P.}~\bibnamefont {Hayden}},\
  }\bibfield  {title} {\bibinfo {title} {Universal entanglement transformations
  without communication},\ }\href {https://doi.org/10.1103/PhysRevA.67.060302}
  {\bibfield  {journal} {\bibinfo  {journal} {Phys. Rev. A}\ }\textbf {\bibinfo
  {volume} {67}},\ \bibinfo {pages} {060302} (\bibinfo {year}
  {2003})}\BibitemShut {NoStop}%
\bibitem [{\citenamefont {Ng}\ \emph {et~al.}(2015)\citenamefont {Ng},
  \citenamefont {Man{\v{c}}inska}, \citenamefont {Cirstoiu}, \citenamefont
  {Eisert},\ and\ \citenamefont {Wehner}}]{Ng_2015}%
  \BibitemOpen
  \bibfield  {author} {\bibinfo {author} {\bibfnamefont {N.}~\bibnamefont
  {Ng}}, \bibinfo {author} {\bibfnamefont {L.}~\bibnamefont {Man{\v{c}}inska}},
  \bibinfo {author} {\bibfnamefont {C.}~\bibnamefont {Cirstoiu}}, \bibinfo
  {author} {\bibfnamefont {J.}~\bibnamefont {Eisert}},\ and\ \bibinfo {author}
  {\bibfnamefont {S.}~\bibnamefont {Wehner}},\ }\bibfield  {title} {\bibinfo
  {title} {Limits to catalysis in quantum thermodynamics},\ }\href
  {https://doi.org/10.1088/1367-2630/17/8/085004} {\bibfield  {journal}
  {\bibinfo  {journal} {New J. Phys}\ }\textbf {\bibinfo {volume} {17}},\
  \bibinfo {pages} {085004} (\bibinfo {year} {2015})}\BibitemShut {NoStop}%
\bibitem [{\citenamefont {Malabarba}\ \emph {et~al.}(2015)\citenamefont
  {Malabarba}, \citenamefont {Short},\ and\ \citenamefont
  {Kammerlander}}]{Malabarba}%
  \BibitemOpen
  \bibfield  {author} {\bibinfo {author} {\bibfnamefont {A.~S.~L.}\
  \bibnamefont {Malabarba}}, \bibinfo {author} {\bibfnamefont {A.~J.}\
  \bibnamefont {Short}},\ and\ \bibinfo {author} {\bibfnamefont
  {P.}~\bibnamefont {Kammerlander}},\ }\bibfield  {title} {\bibinfo {title}
  {Clock-driven quantum thermal engines},\ }\href
  {http://stacks.iop.org/1367-2630/17/i=4/a=045027} {\bibfield  {journal}
  {\bibinfo  {journal} {New J. Phys}\ }\textbf {\bibinfo {volume} {17}},\
  \bibinfo {pages} {045027} (\bibinfo {year} {2015})}\BibitemShut {NoStop}%
\bibitem [{\citenamefont {Jenkins}(2013)}]{Jenkins-2013-self}%
  \BibitemOpen
  \bibfield  {author} {\bibinfo {author} {\bibfnamefont {A.}~\bibnamefont
  {Jenkins}},\ }\bibfield  {title} {\bibinfo {title} {Self-oscillation},\
  }\href {https://doi.org/10.1016/j.physrep.2012.10.007} {\bibfield  {journal}
  {\bibinfo  {journal} {Physics Reports}\ }\textbf {\bibinfo {volume} {525}},\
  \bibinfo {pages} {167} (\bibinfo {year} {2013})}\BibitemShut {NoStop}%
\bibitem [{\citenamefont {Pauli}(1958)}]{pauli1958allgemeinen}%
  \BibitemOpen
  \bibfield  {author} {\bibinfo {author} {\bibfnamefont {W.}~\bibnamefont
  {Pauli}},\ }\bibfield  {title} {\bibinfo {title} {Die allgemeinen prinzipien
  der wellenmechanik},\ }\href@noop {} {\bibfield  {journal} {\bibinfo
  {journal} {Handbuch der Physik}\ }\textbf {\bibinfo {volume} {5}},\ \bibinfo
  {pages} {1} (\bibinfo {year} {1958})}\BibitemShut {NoStop}%
\bibitem [{\citenamefont {Masanes}\ and\ \citenamefont
  {Oppenheim}(2017)}]{3rdLawLluis}%
  \BibitemOpen
  \bibfield  {author} {\bibinfo {author} {\bibfnamefont {L.}~\bibnamefont
  {Masanes}}\ and\ \bibinfo {author} {\bibfnamefont {J.}~\bibnamefont
  {Oppenheim}},\ }\bibfield  {title} {\bibinfo {title} {A general derivation
  and quantification of the third law of thermodynamics},\ }\bibfield
  {journal} {\bibinfo  {journal} {Nat. Commun.}\ }\textbf {\bibinfo {volume}
  {8}},\ \href {https://doi.org/10.1038/ncomms14538} {10.1038/ncomms14538}
  (\bibinfo {year} {2017})\BibitemShut {NoStop}%
\bibitem [{Note1()}]{Note1}%
  \BibitemOpen
  \bibinfo {note} {We often omit tensor products with the identity when adding
  operators on different spaces, e.g. $ \protect \hat H_\protect \textup {A}+
  \protect \hat H_\protect \textup {G}\equiv \protect \hat H_\protect \textup
  {A}\otimes {\protect \mathbbm {1}}_\protect \textup {G}+ {\protect \mathbbm
  {1}}_\protect \textup {A}\otimes \protect \hat H_\protect \textup
  {G}$.}\BibitemShut {Stop}%
\bibitem [{\citenamefont {Horodecki}\ \emph {et~al.}(2003)\citenamefont
  {Horodecki}, \citenamefont {Horodecki},\ and\ \citenamefont
  {Oppenheim}}]{PhysRevA.67.062104}%
  \BibitemOpen
  \bibfield  {author} {\bibinfo {author} {\bibfnamefont {M.}~\bibnamefont
  {Horodecki}}, \bibinfo {author} {\bibfnamefont {P.}~\bibnamefont
  {Horodecki}},\ and\ \bibinfo {author} {\bibfnamefont {J.}~\bibnamefont
  {Oppenheim}},\ }\bibfield  {title} {\bibinfo {title} {Reversible
  transformations from pure to mixed states and the unique measure of
  information},\ }\href {https://doi.org/10.1103/PhysRevA.67.062104} {\bibfield
   {journal} {\bibinfo  {journal} {Phys. Rev. A}\ }\textbf {\bibinfo {volume}
  {67}},\ \bibinfo {pages} {062104} (\bibinfo {year} {2003})}\BibitemShut
  {NoStop}%
\bibitem [{\citenamefont {\.{Z}yczkowski}\ and\ \citenamefont
  {Bengtsson}(2004)}]{QmapsDuality}%
  \BibitemOpen
  \bibfield  {author} {\bibinfo {author} {\bibfnamefont {K.}~\bibnamefont
  {\.{Z}yczkowski}}\ and\ \bibinfo {author} {\bibfnamefont {I.}~\bibnamefont
  {Bengtsson}},\ }\bibfield  {title} {\bibinfo {title} {On duality between
  quantum maps and quantum states},\ }\href
  {https://doi.org/10.1023/B:OPSY.0000024753.05661.c2} {\bibfield  {journal}
  {\bibinfo  {journal} {Open Systems \& Information Dynamics}\ }\textbf
  {\bibinfo {volume} {11}},\ \bibinfo {pages} {3} (\bibinfo {year}
  {2004})}\BibitemShut {NoStop}%
\bibitem [{\citenamefont {Jonathan}\ and\ \citenamefont
  {Plenio}(1999)}]{Jonathan_1999}%
  \BibitemOpen
  \bibfield  {author} {\bibinfo {author} {\bibfnamefont {D.}~\bibnamefont
  {Jonathan}}\ and\ \bibinfo {author} {\bibfnamefont {M.~B.}\ \bibnamefont
  {Plenio}},\ }\bibfield  {title} {\bibinfo {title} {Entanglement-assisted
  local manipulation of pure quantum states},\ }\href
  {https://doi.org/10.1103/physrevlett.83.3566} {\bibfield  {journal} {\bibinfo
   {journal} {Phys. Rev. Lett.}\ }\textbf {\bibinfo {volume} {83}},\ \bibinfo
  {pages} {3566–3569} (\bibinfo {year} {1999})}\BibitemShut {NoStop}%
\bibitem [{\citenamefont {Gour}\ \emph {et~al.}(2015)\citenamefont {Gour},
  \citenamefont {M{\"u}ller}, \citenamefont {Narasimhachar}, \citenamefont
  {Spekkens},\ and\ \citenamefont {Halpern}}]{gour2015resource}%
  \BibitemOpen
  \bibfield  {author} {\bibinfo {author} {\bibfnamefont {G.}~\bibnamefont
  {Gour}}, \bibinfo {author} {\bibfnamefont {M.~P.}\ \bibnamefont
  {M{\"u}ller}}, \bibinfo {author} {\bibfnamefont {V.}~\bibnamefont
  {Narasimhachar}}, \bibinfo {author} {\bibfnamefont {R.~W.}\ \bibnamefont
  {Spekkens}},\ and\ \bibinfo {author} {\bibfnamefont {N.~Y.}\ \bibnamefont
  {Halpern}},\ }\bibfield  {title} {\bibinfo {title} {The resource theory of
  informational nonequilibrium in thermodynamics},\ }\href
  {https://doi.org/10.1016/j.physrep.2015.04.003} {\bibfield  {journal}
  {\bibinfo  {journal} {Phys. Rep.}\ }\textbf {\bibinfo {volume} {583}},\
  \bibinfo {pages} {1} (\bibinfo {year} {2015})}\BibitemShut {NoStop}%
\bibitem [{Note2()}]{Note2}%
  \BibitemOpen
  \bibinfo {note} {Note that this generalisation also generalises NOs, CNOs and
  CTOs by allowing for the inclusion of a catalyst in the initial and final
  state of the transition and or specialising to the case of a maximally mixed
  Gibbs state.}\BibitemShut {Stop}%
\bibitem [{Note3()}]{Note3}%
  \BibitemOpen
  \bibinfo {note} {For any bipartite state $\rho _{\protect \textup {A}\protect
  \textup {B}}$, we use the notation of reduced states $\rho _{\protect \textup
  {A}}:=\protect \textup {tr}_\protect \textup {B}(\rho _{\protect \textup
  {A}\protect \textup {B}})$, $\rho _{\protect \textup {B}}:=\protect \textup
  {tr}_\protect \textup {A}(\rho _{\protect \textup {B}\protect \textup
  {A}})$.}\BibitemShut {Stop}%
\bibitem [{Note4()}]{Note4}%
  \BibitemOpen
  \bibinfo {note} {It is likely that the no-go theorem holds for all
  non-trivial $U_{\protect \textup {A}\protect \textup {G}}$, i.e. all cases
  for which there exists $t\in [0,t_1]\cup [t_2,t_3]$ such that $\rho _\protect
  \textup {A}^F(t)\protect \neq \rho _\protect \textup {A}^0(t)$. However, the
  point of the no-go theorem is simply to show that the problem is non-trivial
  for most cases of interest.}\BibitemShut {Stop}%
\bibitem [{\citenamefont {M\"uller}(2018)}]{Muller2017}%
  \BibitemOpen
  \bibfield  {author} {\bibinfo {author} {\bibfnamefont {M.~P.}\ \bibnamefont
  {M\"uller}},\ }\bibfield  {title} {\bibinfo {title} {{Correlating Thermal
  Machines and the Second Law at the Nanoscale}},\ }\href
  {https://doi.org/10.1103/PhysRevX.8.041051} {\bibfield  {journal} {\bibinfo
  {journal} {Phys. Rev. X}\ }\textbf {\bibinfo {volume} {8}},\ \bibinfo {pages}
  {041051} (\bibinfo {year} {2018})}\BibitemShut {NoStop}%
\bibitem [{\citenamefont {Alhambra}\ \emph {et~al.}(2018)\citenamefont
  {Alhambra}, \citenamefont {Wehner}, \citenamefont {Wilde},\ and\
  \citenamefont {Woods}}]{PhysRevA.97.062114}%
  \BibitemOpen
  \bibfield  {author} {\bibinfo {author} {\bibfnamefont {A.~M.}\ \bibnamefont
  {Alhambra}}, \bibinfo {author} {\bibfnamefont {S.}~\bibnamefont {Wehner}},
  \bibinfo {author} {\bibfnamefont {M.~M.}\ \bibnamefont {Wilde}},\ and\
  \bibinfo {author} {\bibfnamefont {M.~P.}\ \bibnamefont {Woods}},\ }\bibfield
  {title} {\bibinfo {title} {Work and reversibility in quantum
  thermodynamics},\ }\href {https://doi.org/10.1103/PhysRevA.97.062114}
  {\bibfield  {journal} {\bibinfo  {journal} {Phys. Rev. A}\ }\textbf {\bibinfo
  {volume} {97}},\ \bibinfo {pages} {062114} (\bibinfo {year}
  {2018})}\BibitemShut {NoStop}%
\bibitem [{\citenamefont {Boes}\ \emph {et~al.}(2019)\citenamefont {Boes},
  \citenamefont {Eisert}, \citenamefont {Gallego}, \citenamefont {M\"uller},\
  and\ \citenamefont {Wilming}}]{PhysRevLett.122.210402}%
  \BibitemOpen
  \bibfield  {author} {\bibinfo {author} {\bibfnamefont {P.}~\bibnamefont
  {Boes}}, \bibinfo {author} {\bibfnamefont {J.}~\bibnamefont {Eisert}},
  \bibinfo {author} {\bibfnamefont {R.}~\bibnamefont {Gallego}}, \bibinfo
  {author} {\bibfnamefont {M.~P.}\ \bibnamefont {M\"uller}},\ and\ \bibinfo
  {author} {\bibfnamefont {H.}~\bibnamefont {Wilming}},\ }\bibfield  {title}
  {\bibinfo {title} {{Von Neumann Entropy from Unitarity}},\ }\href
  {https://doi.org/10.1103/PhysRevLett.122.210402} {\bibfield  {journal}
  {\bibinfo  {journal} {Phys. Rev. Lett.}\ }\textbf {\bibinfo {volume} {122}},\
  \bibinfo {pages} {210402} (\bibinfo {year} {2019})}\BibitemShut {NoStop}%
\bibitem [{\citenamefont {Reed}\ and\ \citenamefont {Simon}(1975)}]{ReedSimon}%
  \BibitemOpen
  \bibfield  {author} {\bibinfo {author} {\bibfnamefont {M.}~\bibnamefont
  {Reed}}\ and\ \bibinfo {author} {\bibfnamefont {B.}~\bibnamefont {Simon}},\
  }\href
  {https://www.elsevier.com/books/ii-fourier-analysis-self-adjointness/reed/978-0-08-092537-0}
  {\emph {\bibinfo {title} {II: Fourier Analysis, Self-Adjointness}}},\ Methods
  of Modern Mathematical Physics\ (\bibinfo  {publisher} {Elsevier Science},\
  \bibinfo {year} {1975})\BibitemShut {NoStop}%
\bibitem [{Note5()}]{Note5}%
  \BibitemOpen
  \bibinfo {note} {This is because in order for all states in the system
  Hilbert space to be reachable by an initial state under CNOs, the initial
  state needs to be supplemented with a work bit which is depleted in the
  process.}\BibitemShut {Stop}%
\bibitem [{\citenamefont {Woods}\ \emph {et~al.}(2018)\citenamefont {Woods},
  \citenamefont {Silva},\ and\ \citenamefont {Oppenheim}}]{WSO}%
  \BibitemOpen
  \bibfield  {author} {\bibinfo {author} {\bibfnamefont {M.~P.}\ \bibnamefont
  {Woods}}, \bibinfo {author} {\bibfnamefont {R.}~\bibnamefont {Silva}},\ and\
  \bibinfo {author} {\bibfnamefont {J.}~\bibnamefont {Oppenheim}},\ }\bibfield
  {title} {\bibinfo {title} {{Autonomous Quantum Machines and Finite-Sized
  Clocks}},\ }\bibfield  {journal} {\bibinfo  {journal} {Annales Henri
  Poincar{\'e}}\ }\href {https://doi.org/10.1007/s00023-018-0736-9}
  {10.1007/s00023-018-0736-9} (\bibinfo {year} {2018})\BibitemShut {NoStop}%
\bibitem [{\citenamefont {Woods}\ and\ \citenamefont
  {Alhambra}(2020)}]{woods_continuous_2019}%
  \BibitemOpen
  \bibfield  {author} {\bibinfo {author} {\bibfnamefont {M.~P.}\ \bibnamefont
  {Woods}}\ and\ \bibinfo {author} {\bibfnamefont {{\'{A}}.~M.}\ \bibnamefont
  {Alhambra}},\ }\bibfield  {title} {\bibinfo {title} {Continuous groups of
  transversal gates for quantum error correcting codes from finite clock
  reference frames},\ }\href {https://doi.org/10.22331/q-2020-03-23-245}
  {\bibfield  {journal} {\bibinfo  {journal} {{Quantum}}\ }\textbf {\bibinfo
  {volume} {4}},\ \bibinfo {pages} {245} (\bibinfo {year} {2020})}\BibitemShut
  {NoStop}%
\bibitem [{\citenamefont {Faist}\ \emph {et~al.}(2020)\citenamefont {Faist},
  \citenamefont {Nezami}, \citenamefont {Albert}, \citenamefont {Salton},
  \citenamefont {Pastawski}, \citenamefont {Hayden},\ and\ \citenamefont
  {Preskill}}]{Sephir_2019}%
  \BibitemOpen
  \bibfield  {author} {\bibinfo {author} {\bibfnamefont {P.}~\bibnamefont
  {Faist}}, \bibinfo {author} {\bibfnamefont {S.}~\bibnamefont {Nezami}},
  \bibinfo {author} {\bibfnamefont {V.~V.}\ \bibnamefont {Albert}}, \bibinfo
  {author} {\bibfnamefont {G.}~\bibnamefont {Salton}}, \bibinfo {author}
  {\bibfnamefont {F.}~\bibnamefont {Pastawski}}, \bibinfo {author}
  {\bibfnamefont {P.}~\bibnamefont {Hayden}},\ and\ \bibinfo {author}
  {\bibfnamefont {J.}~\bibnamefont {Preskill}},\ }\bibfield  {title} {\bibinfo
  {title} {Continuous symmetries and approximate quantum error correction},\
  }\href {https://doi.org/10.1103/PhysRevX.10.041018} {\bibfield  {journal}
  {\bibinfo  {journal} {Phys. Rev. X}\ }\textbf {\bibinfo {volume} {10}},\
  \bibinfo {pages} {041018} (\bibinfo {year} {2020})}\BibitemShut {NoStop}%
\bibitem [{\citenamefont {Yang}\ \emph {et~al.}(2022)\citenamefont {Yang},
  \citenamefont {Mo}, \citenamefont {Renes}, \citenamefont {Chiribella},\ and\
  \citenamefont {Woods}}]{PhysRevResearch4023107}%
  \BibitemOpen
  \bibfield  {author} {\bibinfo {author} {\bibfnamefont {Y.}~\bibnamefont
  {Yang}}, \bibinfo {author} {\bibfnamefont {Y.}~\bibnamefont {Mo}}, \bibinfo
  {author} {\bibfnamefont {J.~M.}\ \bibnamefont {Renes}}, \bibinfo {author}
  {\bibfnamefont {G.}~\bibnamefont {Chiribella}},\ and\ \bibinfo {author}
  {\bibfnamefont {M.~P.}\ \bibnamefont {Woods}},\ }\bibfield  {title} {\bibinfo
  {title} {Optimal universal quantum error correction via bounded reference
  frames},\ }\href {https://doi.org/10.1103/PhysRevResearch.4.023107}
  {\bibfield  {journal} {\bibinfo  {journal} {Phys. Rev. Research}\ }\textbf
  {\bibinfo {volume} {4}},\ \bibinfo {pages} {023107} (\bibinfo {year}
  {2022})}\BibitemShut {NoStop}%
\bibitem [{\citenamefont {Woods}\ \emph {et~al.}(2022)\citenamefont {Woods},
  \citenamefont {Silva}, \citenamefont {P\"utz}, \citenamefont {Stupar},\ and\
  \citenamefont {Renner}}]{PRXQuantum.3.010319}%
  \BibitemOpen
  \bibfield  {author} {\bibinfo {author} {\bibfnamefont {M.~P.}\ \bibnamefont
  {Woods}}, \bibinfo {author} {\bibfnamefont {R.}~\bibnamefont {Silva}},
  \bibinfo {author} {\bibfnamefont {G.}~\bibnamefont {P\"utz}}, \bibinfo
  {author} {\bibfnamefont {S.}~\bibnamefont {Stupar}},\ and\ \bibinfo {author}
  {\bibfnamefont {R.}~\bibnamefont {Renner}},\ }\bibfield  {title} {\bibinfo
  {title} {Quantum clocks are more accurate than classical ones},\ }\href
  {https://doi.org/10.1103/PRXQuantum.3.010319} {\bibfield  {journal} {\bibinfo
   {journal} {PRX Quantum}\ }\textbf {\bibinfo {volume} {3}},\ \bibinfo {pages}
  {010319} (\bibinfo {year} {2022})}\BibitemShut {NoStop}%
\bibitem [{\citenamefont {Yang}\ and\ \citenamefont
  {Renner}(2020)}]{YuxiangUltimateBound}%
  \BibitemOpen
  \bibfield  {author} {\bibinfo {author} {\bibfnamefont {Y.}~\bibnamefont
  {Yang}}\ and\ \bibinfo {author} {\bibfnamefont {R.}~\bibnamefont {Renner}},\
  }\href@noop {} {\bibinfo {title} {Ultimate limit on time signal generation}}
  (\bibinfo {year} {2020}),\ \Eprint {https://arxiv.org/abs/arXiv:2004.07857}
  {arXiv:2004.07857} \BibitemShut {NoStop}%
\bibitem [{\citenamefont {Boulebnane}\ \emph {et~al.}(2021)\citenamefont
  {Boulebnane}, \citenamefont {Woods},\ and\ \citenamefont {Renes}}]{Sami_Joe}%
  \BibitemOpen
  \bibfield  {author} {\bibinfo {author} {\bibfnamefont {S.}~\bibnamefont
  {Boulebnane}}, \bibinfo {author} {\bibfnamefont {M.~P.}\ \bibnamefont
  {Woods}},\ and\ \bibinfo {author} {\bibfnamefont {J.~M.}\ \bibnamefont
  {Renes}},\ }\bibfield  {title} {\bibinfo {title} {Waveform estimation from
  approximate quantum nondemolition measurements},\ }\href
  {https://doi.org/10.1103/PhysRevLett.127.010502} {\bibfield  {journal}
  {\bibinfo  {journal} {Phys. Rev. Lett.}\ }\textbf {\bibinfo {volume} {127}},\
  \bibinfo {pages} {010502} (\bibinfo {year} {2021})}\BibitemShut {NoStop}%
\bibitem [{Note6()}]{Note6}%
  \BibitemOpen
  \bibinfo {note} {More precisely, the dimension $d_\protect \textup {G}$ is
  uniformly upper bounded for all t-CNOs on a fixed $d_\protect \textup
  {S}$-dimensional system Hilbert space $\protect \mathcal {H}_\protect \textup
  {S}$.}\BibitemShut {Stop}%
\bibitem [{\citenamefont {Scharlau}\ and\ \citenamefont
  {Mueller}(2018)}]{Qhornslemma18}%
  \BibitemOpen
  \bibfield  {author} {\bibinfo {author} {\bibfnamefont {J.}~\bibnamefont
  {Scharlau}}\ and\ \bibinfo {author} {\bibfnamefont {M.~P.}\ \bibnamefont
  {Mueller}},\ }\bibfield  {title} {\bibinfo {title} {Quantum {H}orn's lemma,
  finite heat baths, and the third law of thermodynamics},\ }\href
  {https://doi.org/10.22331/q-2018-02-22-54} {\bibfield  {journal} {\bibinfo
  {journal} {{Quantum}}\ }\textbf {\bibinfo {volume} {2}},\ \bibinfo {pages}
  {54} (\bibinfo {year} {2018})}\BibitemShut {NoStop}%
\bibitem [{Note7()}]{Note7}%
  \BibitemOpen
  \bibinfo {note} {Indeed, observe how the dimension $d_\protect \textup {G}$
  does not enter in either of the bounds in Theorems \ref {thm:noemb physical}
  or \ref {Thm:Implementation with Quasi-Idela clock}.}\BibitemShut {Stop}%
\bibitem [{\citenamefont {Wilming}\ and\ \citenamefont
  {Gallego}(2017)}]{3rdLawHenrik}%
  \BibitemOpen
  \bibfield  {author} {\bibinfo {author} {\bibfnamefont {H.}~\bibnamefont
  {Wilming}}\ and\ \bibinfo {author} {\bibfnamefont {R.}~\bibnamefont
  {Gallego}},\ }\bibfield  {title} {\bibinfo {title} {Third law of
  thermodynamics as a single inequality},\ }\href
  {https://doi.org/10.1103/PhysRevX.7.041033} {\bibfield  {journal} {\bibinfo
  {journal} {Phys. Rev. X}\ }\textbf {\bibinfo {volume} {7}},\ \bibinfo {pages}
  {041033} (\bibinfo {year} {2017})}\BibitemShut {NoStop}%
\bibitem [{Note8()}]{Note8}%
  \BibitemOpen
  \bibinfo {note} {This is by construction, c.f. Eq. \protect \textup {\hbox
  {\mathsurround \z@ \protect \normalfont (\ignorespaces \ref {eq:target state
  def}\unskip \@@italiccorr )}} and definitions of CTO and t-CTO in Sec. \ref
  {sec:CTO def}.}\BibitemShut {Stop}%
\bibitem [{Note9()}]{Note9}%
  \BibitemOpen
  \bibinfo {note} {Note that the only case in which $d_\protect \textup
  {G}\protect \textup {tr}[{\tau }_\protect \textup {G}^2]$ does not diverge in
  the large $d_\protect \textup {G}$ limit, is when the purity of the Gibbs
  state ${\tau }_\protect \textup {G}$ converges (in purity) to the maximally
  mixed state, since in that case $\protect \textup {tr}[{\tau }_\protect
  \textup {G}^2]=1/d_\protect \textup {G}$. This is not the case for the baths
  needed to cool to absolute zero in which $\protect \textup {tr}[{\tau
  }_\protect \textup {G}^2]$ converges to a positive constant \cite
  {Qhornslemma18}.}\BibitemShut {Stop}%
\bibitem [{\citenamefont {Ozawa}(2002)}]{PhysRevLett89057902}%
  \BibitemOpen
  \bibfield  {author} {\bibinfo {author} {\bibfnamefont {M.}~\bibnamefont
  {Ozawa}},\ }\bibfield  {title} {\bibinfo {title} {Conservative quantum
  computing},\ }\href {https://doi.org/10.1103/PhysRevLett.89.057902}
  {\bibfield  {journal} {\bibinfo  {journal} {Phys. Rev. Lett.}\ }\textbf
  {\bibinfo {volume} {89}},\ \bibinfo {pages} {057902} (\bibinfo {year}
  {2002})}\BibitemShut {NoStop}%
\bibitem [{\citenamefont {Gea-Banacloche}\ and\ \citenamefont
  {Ozawa}(2006)}]{PhysRevA74060301}%
  \BibitemOpen
  \bibfield  {author} {\bibinfo {author} {\bibfnamefont {J.}~\bibnamefont
  {Gea-Banacloche}}\ and\ \bibinfo {author} {\bibfnamefont {M.}~\bibnamefont
  {Ozawa}},\ }\bibfield  {title} {\bibinfo {title} {Minimum-energy pulses for
  quantum logic cannot be shared},\ }\href
  {https://doi.org/10.1103/PhysRevA.74.060301} {\bibfield  {journal} {\bibinfo
  {journal} {Phys. Rev. A}\ }\textbf {\bibinfo {volume} {74}},\ \bibinfo
  {pages} {060301} (\bibinfo {year} {2006})}\BibitemShut {NoStop}%
\bibitem [{\citenamefont {\AA{}berg}(2014)}]{PhysRevLettJohan}%
  \BibitemOpen
  \bibfield  {author} {\bibinfo {author} {\bibfnamefont {J.}~\bibnamefont
  {\AA{}berg}},\ }\bibfield  {title} {\bibinfo {title} {Catalytic coherence},\
  }\href {https://doi.org/10.1103/PhysRevLett.113.150402} {\bibfield  {journal}
  {\bibinfo  {journal} {Phys. Rev. Lett.}\ }\textbf {\bibinfo {volume} {113}},\
  \bibinfo {pages} {150402} (\bibinfo {year} {2014})}\BibitemShut {NoStop}%
\bibitem [{\citenamefont {Skrzypczyk}\ \emph {et~al.}(2013)\citenamefont
  {Skrzypczyk}, \citenamefont {Short},\ and\ \citenamefont
  {Popescu}}]{Skrzypczyk2013}%
  \BibitemOpen
  \bibfield  {author} {\bibinfo {author} {\bibfnamefont {P.}~\bibnamefont
  {Skrzypczyk}}, \bibinfo {author} {\bibfnamefont {A.~J.}\ \bibnamefont
  {Short}},\ and\ \bibinfo {author} {\bibfnamefont {S.}~\bibnamefont
  {Popescu}},\ }\bibfield  {title} {\bibinfo {title} {{Extracting work from
  quantum systems}},\ }\href@noop {} {\  (\bibinfo {year} {2013})},\ \Eprint
  {https://arxiv.org/abs/1302.2811} {arXiv:1302.2811 [preprint]} \BibitemShut
  {NoStop}%
\bibitem [{\citenamefont {Tajima}\ \emph {et~al.}(2020)\citenamefont {Tajima},
  \citenamefont {Shiraishi},\ and\ \citenamefont {Saito}}]{Tajima2019}%
  \BibitemOpen
  \bibfield  {author} {\bibinfo {author} {\bibfnamefont {H.}~\bibnamefont
  {Tajima}}, \bibinfo {author} {\bibfnamefont {N.}~\bibnamefont {Shiraishi}},\
  and\ \bibinfo {author} {\bibfnamefont {K.}~\bibnamefont {Saito}},\ }\bibfield
   {title} {\bibinfo {title} {Coherence cost for violating conservation laws},\
  }\href {https://doi.org/10.1103/PhysRevResearch.2.043374} {\bibfield
  {journal} {\bibinfo  {journal} {Phys. Rev. Research}\ }\textbf {\bibinfo
  {volume} {2}},\ \bibinfo {pages} {043374} (\bibinfo {year}
  {2020})}\BibitemShut {NoStop}%
\bibitem [{\citenamefont {Takagi}\ and\ \citenamefont
  {Tajima}(2020)}]{Takagi2019}%
  \BibitemOpen
  \bibfield  {author} {\bibinfo {author} {\bibfnamefont {R.}~\bibnamefont
  {Takagi}}\ and\ \bibinfo {author} {\bibfnamefont {H.}~\bibnamefont
  {Tajima}},\ }\bibfield  {title} {\bibinfo {title} {Universal limitations on
  implementing resourceful unitary evolutions},\ }\href
  {https://doi.org/10.1103/PhysRevA.101.022315} {\bibfield  {journal} {\bibinfo
   {journal} {Phys. Rev. A}\ }\textbf {\bibinfo {volume} {101}},\ \bibinfo
  {pages} {022315} (\bibinfo {year} {2020})}\BibitemShut {NoStop}%
\bibitem [{\citenamefont {Chiribella}\ \emph {et~al.}(2021)\citenamefont
  {Chiribella}, \citenamefont {Yang},\ and\ \citenamefont
  {Renner}}]{Chiribella2019}%
  \BibitemOpen
  \bibfield  {author} {\bibinfo {author} {\bibfnamefont {G.}~\bibnamefont
  {Chiribella}}, \bibinfo {author} {\bibfnamefont {Y.}~\bibnamefont {Yang}},\
  and\ \bibinfo {author} {\bibfnamefont {R.}~\bibnamefont {Renner}},\
  }\bibfield  {title} {\bibinfo {title} {Fundamental energy requirement of
  reversible quantum operations},\ }\href
  {https://doi.org/10.1103/PhysRevX.11.021014} {\bibfield  {journal} {\bibinfo
  {journal} {Phys. Rev. X}\ }\textbf {\bibinfo {volume} {11}},\ \bibinfo
  {pages} {021014} (\bibinfo {year} {2021})}\BibitemShut {NoStop}%
\bibitem [{\citenamefont {Clivaz}\ \emph
  {et~al.}(2019{\natexlab{a}})\citenamefont {Clivaz}, \citenamefont {Silva},
  \citenamefont {Haack}, \citenamefont {Brask}, \citenamefont {Brunner},\ and\
  \citenamefont {Huber}}]{Clivaz2017}%
  \BibitemOpen
  \bibfield  {author} {\bibinfo {author} {\bibfnamefont {F.}~\bibnamefont
  {Clivaz}}, \bibinfo {author} {\bibfnamefont {R.}~\bibnamefont {Silva}},
  \bibinfo {author} {\bibfnamefont {G.}~\bibnamefont {Haack}}, \bibinfo
  {author} {\bibfnamefont {J.~B.}\ \bibnamefont {Brask}}, \bibinfo {author}
  {\bibfnamefont {N.}~\bibnamefont {Brunner}},\ and\ \bibinfo {author}
  {\bibfnamefont {M.}~\bibnamefont {Huber}},\ }\bibfield  {title} {\bibinfo
  {title} {Unifying paradigms of quantum refrigeration: Fundamental limits of
  cooling and associated work costs},\ }\href
  {https://doi.org/10.1103/PhysRevE.100.042130} {\bibfield  {journal} {\bibinfo
   {journal} {Phys. Rev. E}\ }\textbf {\bibinfo {volume} {100}},\ \bibinfo
  {pages} {042130} (\bibinfo {year} {2019}{\natexlab{a}})}\BibitemShut
  {NoStop}%
\bibitem [{\citenamefont {Clivaz}\ \emph
  {et~al.}(2019{\natexlab{b}})\citenamefont {Clivaz}, \citenamefont {Silva},
  \citenamefont {Haack}, \citenamefont {Brask}, \citenamefont {Brunner},\ and\
  \citenamefont {Huber}}]{Clivaz2019}%
  \BibitemOpen
  \bibfield  {author} {\bibinfo {author} {\bibfnamefont {F.}~\bibnamefont
  {Clivaz}}, \bibinfo {author} {\bibfnamefont {R.}~\bibnamefont {Silva}},
  \bibinfo {author} {\bibfnamefont {G.}~\bibnamefont {Haack}}, \bibinfo
  {author} {\bibfnamefont {J.~B.}\ \bibnamefont {Brask}}, \bibinfo {author}
  {\bibfnamefont {N.}~\bibnamefont {Brunner}},\ and\ \bibinfo {author}
  {\bibfnamefont {M.}~\bibnamefont {Huber}},\ }\bibfield  {title} {\bibinfo
  {title} {Unifying paradigms of quantum refrigeration: A universal and
  attainable bound on cooling},\ }\href
  {https://doi.org/10.1103/PhysRevLett.123.170605} {\bibfield  {journal}
  {\bibinfo  {journal} {Phys. Rev. Lett.}\ }\textbf {\bibinfo {volume} {123}},\
  \bibinfo {pages} {170605} (\bibinfo {year} {2019}{\natexlab{b}})}\BibitemShut
  {NoStop}%
\bibitem [{\citenamefont {Deffner}\ and\ \citenamefont
  {Campbell}(2017)}]{QSL_review}%
  \BibitemOpen
  \bibfield  {author} {\bibinfo {author} {\bibfnamefont {S.}~\bibnamefont
  {Deffner}}\ and\ \bibinfo {author} {\bibfnamefont {S.}~\bibnamefont
  {Campbell}},\ }\bibfield  {title} {\bibinfo {title} {{Quantum speed limits:
  from Heisenberg’s uncertainty principle to optimal quantum control}},\
  }\href {https://doi.org/10.1088/1751-8121/aa86c6} {\bibfield  {journal}
  {\bibinfo  {journal} {J. Phys. A}\ }\textbf {\bibinfo {volume} {50}},\
  \bibinfo {pages} {453001} (\bibinfo {year} {2017})}\BibitemShut {NoStop}%
\bibitem [{\citenamefont {Bekenstein}\ and\ \citenamefont
  {Schiffer}(1990)}]{Bekenstein90}%
  \BibitemOpen
  \bibfield  {author} {\bibinfo {author} {\bibfnamefont {J.~D.}\ \bibnamefont
  {Bekenstein}}\ and\ \bibinfo {author} {\bibfnamefont {M.}~\bibnamefont
  {Schiffer}},\ }\bibfield  {title} {\bibinfo {title} {Quantum limitations on
  the storage and transmission of information},\ }\href
  {https://doi.org/10.1142/S0129183190000207} {\bibfield  {journal} {\bibinfo
  {journal} {Int. J. Mod. Phys. C}\ }\textbf {\bibinfo {volume} {01}},\
  \bibinfo {pages} {355} (\bibinfo {year} {1990})}\BibitemShut {NoStop}%
\bibitem [{\citenamefont {Tajima}\ \emph {et~al.}(2018)\citenamefont {Tajima},
  \citenamefont {Shiraishi},\ and\ \citenamefont {Saito}}]{121110403}%
  \BibitemOpen
  \bibfield  {author} {\bibinfo {author} {\bibfnamefont {H.}~\bibnamefont
  {Tajima}}, \bibinfo {author} {\bibfnamefont {N.}~\bibnamefont {Shiraishi}},\
  and\ \bibinfo {author} {\bibfnamefont {K.}~\bibnamefont {Saito}},\ }\bibfield
   {title} {\bibinfo {title} {Uncertainty relations in implementation of
  unitary operations},\ }\href {https://doi.org/10.1103/PhysRevLett.121.110403}
  {\bibfield  {journal} {\bibinfo  {journal} {Phys. Rev. Lett.}\ }\textbf
  {\bibinfo {volume} {121}},\ \bibinfo {pages} {110403} (\bibinfo {year}
  {2018})}\BibitemShut {NoStop}%
\bibitem [{\citenamefont {Liu}\ and\ \citenamefont {Winter}(2019)}]{Liu2019}%
  \BibitemOpen
  \bibfield  {author} {\bibinfo {author} {\bibfnamefont {Z.}~\bibnamefont
  {Liu}}\ and\ \bibinfo {author} {\bibfnamefont {A.}~\bibnamefont {Winter}},\
  }\bibfield  {title} {\bibinfo {title} {{Resource theories of quantum channels
  and the universal role of resource erasure}},\ }\href@noop {} {\  (\bibinfo
  {year} {2019})},\ \Eprint {https://arxiv.org/abs/1904.04201}
  {arXiv:1904.04201 [preprint]} \BibitemShut {NoStop}%
\bibitem [{\citenamefont {Sparaciari}\ \emph {et~al.}(2020)\citenamefont
  {Sparaciari}, \citenamefont {del Rio}, \citenamefont {Scandolo},
  \citenamefont {Faist},\ and\ \citenamefont {Oppenheim}}]{Sparaciari2018}%
  \BibitemOpen
  \bibfield  {author} {\bibinfo {author} {\bibfnamefont {C.}~\bibnamefont
  {Sparaciari}}, \bibinfo {author} {\bibfnamefont {L.}~\bibnamefont {del Rio}},
  \bibinfo {author} {\bibfnamefont {C.~M.}\ \bibnamefont {Scandolo}}, \bibinfo
  {author} {\bibfnamefont {P.}~\bibnamefont {Faist}},\ and\ \bibinfo {author}
  {\bibfnamefont {J.}~\bibnamefont {Oppenheim}},\ }\bibfield  {title} {\bibinfo
  {title} {The first law of general quantum resource theories},\ }\href
  {https://doi.org/10.22331/q-2020-04-30-259} {\bibfield  {journal} {\bibinfo
  {journal} {{Quantum}}\ }\textbf {\bibinfo {volume} {4}},\ \bibinfo {pages}
  {259} (\bibinfo {year} {2020})}\BibitemShut {NoStop}%
\bibitem [{Note10()}]{Note10}%
  \BibitemOpen
  \bibinfo {note} {Note that in order to make this conclusion, we interchange
  derivatives with the infinite sum, which is well known to hold for power
  series.}\BibitemShut {Stop}%
\bibitem [{Note11()}]{Note11}%
  \BibitemOpen
  \bibinfo {note} {This extension of the domain of $\alpha $ for which the
  inequality holds, follows trivially using proof by contradiction and noting
  that the functions in Eq. \protect \textup {\hbox {\mathsurround \z@ \protect
  \normalfont (\ignorespaces \ref {eq:fa F cat3}\unskip \@@italiccorr )}} are
  continuous for $\alpha \in [1,\infty )$.}\BibitemShut {Stop}%
\bibitem [{Note12()}]{Note12}%
  \BibitemOpen
  \bibinfo {note} {Note that the range of $\alpha $ for which Eq. \protect
  \textup {\hbox {\mathsurround \z@ \protect \normalfont (\ignorespaces \ref
  {line:first T in T<T eq}\unskip \@@italiccorr )}} holds is the empty set if
  ${\alpha _{\protect \qopname \relax m{min}}}$ is large enough. Under such
  circumstances, this equation contains no information.}\BibitemShut {Stop}%
\bibitem [{Note13()}]{Note13}%
  \BibitemOpen
  \bibinfo {note} {One can also come to the same conclusions for the idealised
  momentum clock on a circle, rather than a line. In this case, $[\protect \hat
  x,\protect \hat p]$ still satisfy the Heisenberg form of the canonical
  commutation relations, but not the Weyl form. See \cite
  {garrison1970canonically} for details.}\BibitemShut {Stop}%
\bibitem [{\citenamefont {Garrison}\ and\ \citenamefont
  {Wong}(1970)}]{garrison1970canonically}%
  \BibitemOpen
  \bibfield  {author} {\bibinfo {author} {\bibfnamefont {J.~C.}\ \bibnamefont
  {Garrison}}\ and\ \bibinfo {author} {\bibfnamefont {J.}~\bibnamefont
  {Wong}},\ }\bibfield  {title} {\bibinfo {title} {{Canonically conjugate
  pairs, uncertainty relations, and phase operators}},\ }\href
  {https://doi.org/10.1063/1.1665388} {\bibfield  {journal} {\bibinfo
  {journal} {J. Math. Phys.}\ }\textbf {\bibinfo {volume} {11}},\ \bibinfo
  {pages} {2242} (\bibinfo {year} {1970})}\BibitemShut {NoStop}%
\bibitem [{\citenamefont {R\'enyi}(1960)}]{renyi}%
  \BibitemOpen
  \bibfield  {author} {\bibinfo {author} {\bibfnamefont {A.}~\bibnamefont
  {R\'enyi}},\ }\href@noop {} {\emph {\bibinfo {title} {On Measures of Entropy
  and Information}}},\ Vol.~\bibinfo {volume} {1}\ (\bibinfo  {publisher}
  {Proceedings of the 4th Berkeley Symposium on Mathematics, Statistics and
  Probability, University of California Press, Berkeley},\ \bibinfo {year}
  {1960})\ pp.\ \bibinfo {pages} {547--561}\BibitemShut {NoStop}%
\bibitem [{\citenamefont {M.~Horodecki}(2014)}]{NOsDef}%
  \BibitemOpen
  \bibfield  {author} {\bibinfo {author} {\bibfnamefont {J.~O.}\ \bibnamefont
  {M.~Horodecki}, \bibfnamefont {P.~Horodecki}},\ }\bibfield  {title} {\bibinfo
  {title} {Reversible transformations from pure to mixed states, and the unique
  measure of information},\ }\href {https://doi.org/10.1103/PhysRevA.67.062104}
  {\bibfield  {journal} {\bibinfo  {journal} {Phys. Rev. A}\ }\textbf {\bibinfo
  {volume} {67}},\ \bibinfo {pages} {062104} (\bibinfo {year}
  {2014})}\BibitemShut {NoStop}%
\bibitem [{\citenamefont {Streater}(2009)}]{streater2009statistical}%
  \BibitemOpen
  \bibfield  {author} {\bibinfo {author} {\bibfnamefont {R.~F.}\ \bibnamefont
  {Streater}},\ }\href {https://books.google.pl/books?id=hbfx-PMF320C} {\emph
  {\bibinfo {title} {Statistical Dynamics: A Stochastic Approach to
  Nonequilibrium Thermodynamics}}}\ (\bibinfo  {publisher} {Imperial College
  Press},\ \bibinfo {year} {2009})\BibitemShut {NoStop}%
\bibitem [{\citenamefont {Klimesh}(2007)}]{klimesh2007inequalities}%
  \BibitemOpen
  \bibfield  {author} {\bibinfo {author} {\bibfnamefont {M.}~\bibnamefont
  {Klimesh}},\ }\href@noop {} {\bibinfo {title} {Inequalities that collectively
  completely characterize the catalytic majorization relation}} (\bibinfo
  {year} {2007}),\ \Eprint {https://arxiv.org/abs/0709.3680} {arXiv:0709.3680
  [quant-ph]} \BibitemShut {NoStop}%
\bibitem [{\citenamefont {Turgut}(2007)}]{Turgut-trumping}%
  \BibitemOpen
  \bibfield  {author} {\bibinfo {author} {\bibfnamefont {S.}~\bibnamefont
  {Turgut}},\ }\bibfield  {title} {\bibinfo {title} {Catalytic transformations
  for bipartite pure states},\ }\href
  {https://doi.org/doi:10.1088/1751-8113/40/40/012} {\bibfield  {journal}
  {\bibinfo  {journal} {J. Phys. A}\ }\textbf {\bibinfo {volume} {40}},\
  \bibinfo {pages} {12185} (\bibinfo {year} {2007})}\BibitemShut {NoStop}%
\bibitem [{Note14()}]{Note14}%
  \BibitemOpen
  \bibinfo {note} {This choice is so that we can use lemma \ref {lem:easy
  inequality}, but one could make other choices if one made a different version
  of the bound.}\BibitemShut {Stop}%
\bibitem [{\citenamefont {Hanson}\ and\ \citenamefont
  {Datta}(2017)}]{hanson2017tight}%
  \BibitemOpen
  \bibfield  {author} {\bibinfo {author} {\bibfnamefont {E.~P.}\ \bibnamefont
  {Hanson}}\ and\ \bibinfo {author} {\bibfnamefont {N.}~\bibnamefont {Datta}},\
  }\href@noop {} {\bibinfo {title} {Tight uniform continuity bound for a family
  of entropies}} (\bibinfo {year} {2017}),\ \Eprint
  {https://arxiv.org/abs/1707.04249} {arXiv:1707.04249 [quant-ph]} \BibitemShut
  {NoStop}%
\bibitem [{Note15()}]{Note15}%
  \BibitemOpen
  \bibinfo {note} {One can easily prove this by contradiction. Imagine that the
  bound Eq. \protect \textup {\hbox {\mathsurround \z@ \protect \normalfont
  (\ignorespaces \ref {eq:T-T bound alpha}\unskip \@@italiccorr )}} does not
  hold for $\alpha =1$. Then there must exist an $\epsilon >0$ such that it
  also does not hold for $\alpha =1-\epsilon $, but this would be a
  contradiction.}\BibitemShut {Stop}%
\bibitem [{Note16()}]{Note16}%
  \BibitemOpen
  \bibinfo {note} {Note that the upper bound Eq. \protect \textup {\hbox
  {\mathsurround \z@ \protect \normalfont (\ignorespaces \ref {eq:S-S
  neg}\unskip \@@italiccorr )}} is non-negative. Therefore, if assumption
  $\|p^{-1}\|_{|\alpha |}/ \|p^{\prime -1}\|_{|\alpha |}\geq 0$ does not hold,
  the bound will be trivially true since $S_\alpha (p)-S_\alpha (p')$ will be
  negative.}\BibitemShut {Stop}%
\bibitem [{Note17()}]{Note17}%
  \BibitemOpen
  \bibinfo {note} {Here it will be assumed that $g_\alpha (p,p')$ is
  differentiable w.r.t. $\alpha $ on the interval $(1,\alpha )$ for $\alpha
  \geq 1$ and $(\alpha ,1)$ for $\alpha \leq 1$. Later we will calculate
  explicitly its derivative, thus verifying this assumption.}\BibitemShut
  {Stop}%
\bibitem [{Note18()}]{Note18}%
  \BibitemOpen
  \bibinfo {note} {Recall that $S_1(x) = -\DOTSB \sum@ \slimits@ _{i=1}^d |x_i|
  \protect \qopname \relax o{ln}|x_i|$.}\BibitemShut {Stop}%
\bibitem [{\citenamefont {van Erven}\ and\ \citenamefont
  {Harremos}(2014)}]{Tvan}%
  \BibitemOpen
  \bibfield  {author} {\bibinfo {author} {\bibfnamefont {T.}~\bibnamefont {van
  Erven}}\ and\ \bibinfo {author} {\bibfnamefont {P.}~\bibnamefont
  {Harremos}},\ }\bibfield  {title} {\bibinfo {title} {{Renyi Divergence and
  Kullback-Leibler Divergence}},\ }\href
  {https://doi.org/10.1109/TIT.2014.2320500} {\bibfield  {journal} {\bibinfo
  {journal} {IEEE Trans. Inf. Theory.}\ }\textbf {\bibinfo {volume} {60}},\
  \bibinfo {pages} {3797} (\bibinfo {year} {2014})}\BibitemShut {NoStop}%
\bibitem [{\citenamefont {Bourbaki}\ \emph {et~al.}(2002)\citenamefont
  {Bourbaki}, \citenamefont {Eggleston},\ and\ \citenamefont
  {Madan}}]{Bourbaki}%
  \BibitemOpen
  \bibfield  {author} {\bibinfo {author} {\bibfnamefont {N.}~\bibnamefont
  {Bourbaki}}, \bibinfo {author} {\bibfnamefont {H.}~\bibnamefont
  {Eggleston}},\ and\ \bibinfo {author} {\bibfnamefont {S.}~\bibnamefont
  {Madan}},\ }\href {https://books.google.ch/books?id=sbgM1HD\_ticC} {\emph
  {\bibinfo {title} {Topological Vector Spaces: Chapters 1-5}}},\ 20. Jh.-\
  (\bibinfo  {publisher} {Springer},\ \bibinfo {year} {2002})\BibitemShut
  {NoStop}%
\bibitem [{\citenamefont {Fannes}(1973)}]{Fannes}%
  \BibitemOpen
  \bibfield  {author} {\bibinfo {author} {\bibfnamefont {M.}~\bibnamefont
  {Fannes}},\ }\bibfield  {title} {\bibinfo {title} {A continuity property of
  the entropy density for spin lattice systems},\ }\href
  {https://doi.org/10.1007/BF01646490} {\bibfield  {journal} {\bibinfo
  {journal} {Commun. Math. Phys.}\ }\textbf {\bibinfo {volume} {31}},\ \bibinfo
  {pages} {291} (\bibinfo {year} {1973})}\BibitemShut {NoStop}%
\bibitem [{\citenamefont {Audenaert}(2007)}]{Koenraad}%
  \BibitemOpen
  \bibfield  {author} {\bibinfo {author} {\bibfnamefont {K.}~\bibnamefont
  {Audenaert}},\ }\bibfield  {title} {\bibinfo {title} {A sharp continuity
  estimate for the von neumann entropy},\ }\href
  {https://doi.org/10.1088/1751-8113/40/28/s18} {\bibfield  {journal} {\bibinfo
   {journal} {J. Phys. A}\ }\textbf {\bibinfo {volume} {40}},\ \bibinfo {pages}
  {8127} (\bibinfo {year} {2007})}\BibitemShut {NoStop}%
\bibitem [{\citenamefont {Winter}(2016)}]{Winter2016}%
  \BibitemOpen
  \bibfield  {author} {\bibinfo {author} {\bibfnamefont {A.}~\bibnamefont
  {Winter}},\ }\bibfield  {title} {\bibinfo {title} {Tight uniform continuity
  bounds for quantum entropies: Conditional entropy, relative entropy distance
  and energy constraints},\ }\href {https://doi.org/10.1007/s00220-016-2609-8}
  {\bibfield  {journal} {\bibinfo  {journal} {Commun. Math. Phys.}\ }\textbf
  {\bibinfo {volume} {347}},\ \bibinfo {pages} {291} (\bibinfo {year}
  {2016})}\BibitemShut {NoStop}%
\bibitem [{\citenamefont {Vyborny}(1981)}]{VybornyTaylorReminder1981}%
  \BibitemOpen
  \bibfield  {author} {\bibinfo {author} {\bibfnamefont {R.}~\bibnamefont
  {Vyborny}},\ }\bibfield  {title} {\bibinfo {title} {Mean value theorems and a
  taylor theoremfor vector valued functions},\ }\href@noop {} {\bibfield
  {journal} {\bibinfo  {journal} {Bull. Austral. Math. Soc.}\ }\textbf
  {\bibinfo {volume} {24}},\ \bibinfo {pages} {69} (\bibinfo {year}
  {1981})}\BibitemShut {NoStop}%
\bibitem [{\citenamefont {Wazewski}(1951)}]{Wazewski}%
  \BibitemOpen
  \bibfield  {author} {\bibinfo {author} {\bibfnamefont {T.}~\bibnamefont
  {Wazewski}},\ }\bibfield  {title} {\bibinfo {title} {{Une generalisation des
  theoremes sur les accroissements finis au cas des espaces de Banach et
  application a la generalisation du theoreme de l'Hospital}},\ }\href@noop {}
  {\bibfield  {journal} {\bibinfo  {journal} {Ann. Soc. Polon. Math}\ }\textbf
  {\bibinfo {volume} {24}},\ \bibinfo {pages} {132} (\bibinfo {year}
  {1951})}\BibitemShut {NoStop}%
\bibitem [{\citenamefont {McLeod}(1965)}]{McLeod}%
  \BibitemOpen
  \bibfield  {author} {\bibinfo {author} {\bibfnamefont {R.~M.}\ \bibnamefont
  {McLeod}},\ }\bibfield  {title} {\bibinfo {title} {Mean value theorems for
  vector valued functions},\ }\href {https://doi.org/10.1017/S0013091500008786}
  {\bibfield  {journal} {\bibinfo  {journal} {Proc. Edinburgh Math. Soc.}\
  }\textbf {\bibinfo {volume} {14}},\ \bibinfo {pages} {197} (\bibinfo {year}
  {1965})}\BibitemShut {NoStop}%
\bibitem [{\citenamefont {Horn}\ and\ \citenamefont
  {Johnson}(1991)}]{HornJohnson}%
  \BibitemOpen
  \bibfield  {author} {\bibinfo {author} {\bibfnamefont {R.~A.}\ \bibnamefont
  {Horn}}\ and\ \bibinfo {author} {\bibfnamefont {C.~R.}\ \bibnamefont
  {Johnson}},\ }\href@noop {} {\emph {\bibinfo {title} {{Topics in Matrix
  Analysis}}}}\ (\bibinfo  {publisher} {Cambridge University Press},\ \bibinfo
  {year} {1991})\BibitemShut {NoStop}%
\end{thebibliography}

%


\onecolumngrid
\newpage

\tableofcontents
\newpage
\appendix

\begin{center}
	{\huge  Supplementary Material}
\end{center}

\renewcommand\appendixname{Supplementary}
\renewcommand\appendixpagename{Supplementary}

\section{Additional Information for Proof of Theorem \ref{thm:noemb physical}}\label{proof of thm:noemb}
In this section we provide the additional detailed steps of the proof of Theorem \ref{thm:noemb physical} which supplement section \ref{Main proof section}.

\subsection{Preliminaries for proof of Theorem \ref{thm:noemb physical}}

\subsubsection{Entropies, divergences:  definitions and properties}\label{sec:entropies divergences def and properties}
In this section (and throughout this~\app\ unless stated otherwise), $\cP_d$ will denote the set of normalised probability vectors in dimension $d$. Vectors $p,q\in\cP_d$ will have entries denoted $p_k,q_k$ respectively. We will also let $\idv_d\in\cP_d$ be the uniform probability vector, namely ${[\idv_d]}_k=1$, for $k=1,\ldots,d$. 

\begin{definition}[R\'enyi $\alpha$-entropies]\label{def:renyi entrpies}
	The R\'enyi $\alpha$-entropies for $\alpha\in\rr$ are defined to be 
	\begin{eqnarray} \label{eq:deef renyi entrpies}
	S_\alpha(p)&=\frac{\textup{sgn}(\alpha)}{1-\alpha} \ln \sum_{k=1}^d p_k^\alpha,\\
	\textup{sgn}(\alpha)&= 
	\begin{cases}
	\;\;\,1& \text{for } \alpha\geq 0 \\
	-1& \text{for } \alpha< 0 
	\end{cases}
	\end{eqnarray}
	where the singular point at $\alpha=1$ is defined by demanding that the R\'enyi $\alpha$-entropies are continuous in $\alpha\in\rr$, and we use the conventions $\frac{a}{0}=\infty,$ for $a>0$ and $0\ln 0=0$, $0^0=0$.
\end{definition}

We will use the R\'enyi entropies evaluated on quantum states $\rho$ in $d$ dimensions. In which case $S_\alpha(\rho):=S_\alpha(p_\rho)$, where $p_\rho\in\cP_d$ denotes the eigenvalues of $\rho$. This convention for extending the definition of functions evaluated on $\cP_d$ to functions evaluated on quantum states of dimension $d$, will be used throughout.

The $\alpha=1$ value is of particular interest, since it corresponds to the Shannon entropy, namely
\begin{equation}
S_1(p):=\lim_{\alpha\rightarrow 1} S_\alpha(p)=-\sum_{k=1}^d p_k\ln p_k.
\end{equation}
Note that the R\'enyi $\alpha$-entropies were originally defined in \cite{renyi} for $\alpha\geq 0$ only, but later extended to $\alpha\in\rr$ for convenience in 
\cite{secondlaw}. Note that $S_\alpha(p)$ can be infinite. The other functions defined in this section also have this property.

For $p\in\cP_d$ define
\be
\label{eq:f1}
f_\alpha(p)=\sum_{i=1}^d p_i^\alpha,
\ee
for $\alpha\not \in \{ 0,1\}$ and 
\begin{equation}
\label{eq:f2}
f_0(p)=-\sum_{i=1}^d \ln p_i,\quad f_1(p)=S_1(p)=-\sum_{i=1}^d p_i\ln p_i.
\end{equation}
If some of $p_k$ is equal to zero, 
the value of $f_\alpha$ for  $\alpha\leq0$ is set to infinity.
For $\alpha\in[0,1]$ these functions are concave, and for $\alpha>1$ convex.

\begin{definition}[Tsallis Entropy]
	Tsallis-Aczel-Daroczy entropy is as follows.
	\be\label{eq:Tsalis def}
	T_\alpha(p)= \textup{sgn}(\alpha)\frac{1-\sum_i p_i^\alpha}{\alpha-1},
	\ee
	for $\alpha\not=0,1$ and $\alpha>0$.
\end{definition}
The Tsallis Entropy is convex, subadditive (but not additive), and for $\alpha=1$, through a limit, it gives Shannon entropy.

\begin{definition}[Hellinger Relative Entropy]
	Hellinger divergence for $\alpha\in R$ is as follows
	\begin{equation}\label{eq:def Hellinger}
	\hcal_\alpha(p|q)=\frac{sgn(\alpha)}{\alpha-1} \left(\sum_i p_i^\alpha q_i^{1-\alpha} -1\right),
	\end{equation}
	where the singular points are defined by continuity and, in addition to the conventions in Def. \eqref{def:renyi entrpies}, we have $\frac{0}{0}=0$. 
\end{definition}

We have 
\be
\lim_{\alpha\to 1} \hcal_\alpha(p|q)=D(p|q)
\ee
where 
\be
D(p|q) = \sum_i p_i \ln\frac{p_i}{q_i}
\ee
is \emph{Kulback-Leibler entropy}. 
Moreover $\hcal_\alpha$ is monotonically increasing 	in $\alpha$ for $\alpha\in (0,\infty)$. In particular  
For  $\alpha\geq 1$ we have 
\be
\label{eq:HD}
\hcal_\alpha(p|q)\geq D(p|q).
\ee
We have also Pinsker inequality
\be
\label{eq:pinsker}
D(p|q)\geq \frac12\|p-q\|_1 ^2
\ee
We have 
\be
\label{eq:THel}
\hcal_\alpha(p|\idv_d/d)= d^{\alpha-1}\left(T_\alpha(\idv_d/d)-T_\alpha(p)\right),
\ee
for $\alpha\not=0,1$ and $\alpha>0$.

\begin{lemma}[Poor Subadditivity]\label{lem:poor sub}
	Let $f$  be Schur convex, i.e. $x\succ y$ implies $f(x)\geq f(y)$.
	Then 
	\be
	\label{eq:poor sub}
	f(\rho_{\A\B})\geq f\left(\rho_\A\otimes \frac{\id_\B}{d_\B}\right),
	\ee
	where $\id_\B$ is the identity operator on $\B$.
\end{lemma}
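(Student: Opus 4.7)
The plan is to exhibit $\rho_\A\otimes\id_\B/d_\B$ as a convex mixture of unitary conjugates of $\rho_{\A\B}$, invoke Uhlmann's majorization theorem to deduce that the spectrum of $\rho_{\A\B}$ majorizes that of $\rho_\A\otimes\id_\B/d_\B$, and then apply the assumed Schur convexity (where, following the convention introduced below Def.~\ref{def:renyi entrpies}, $f$ acts on a quantum state via its eigenvalue vector). The whole argument is standard; the only bookkeeping concern is to make the ``random unitary'' representation of the local depolarization on $\B$ explicit so that Uhlmann's theorem applies cleanly.

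First I would pick any unitary $1$-design on $\mathcal{H}_\B$, e.g.\ the collection $\{U_k\}_{k=1}^{d_\B^2}$ of generalized Heisenberg--Weyl (Pauli) operators, which satisfies $\frac{1}{d_\B^2}\sum_k U_k \sigma_\B U_k^\dag = \tr(\sigma_\B)\,\id_\B/d_\B$ for every operator $\sigma_\B$ on $\mathcal{H}_\B$. Tensoring with the identity on $\A$ and applying this identity to $\rho_{\A\B}$ (slot by slot in the $\A$-basis) gives
\begin{equation}
\rho_\A\otimes\frac{\id_\B}{d_\B} \;=\; \frac{1}{d_\B^2}\sum_{k=1}^{d_\B^2} (\id_\A\otimes U_k)\,\rho_{\A\B}\,(\id_\A\otimes U_k^\dag),
\end{equation}
so $\rho_\A\otimes\id_\B/d_\B$ is literally a finite convex combination of unitary conjugates of $\rho_{\A\B}$ on $\mathcal{H}_\A\otimes\mathcal{H}_\B$.

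Next I would invoke Uhlmann's theorem: whenever $\sigma=\sum_i p_i V_i \rho V_i^\dag$ is a convex combination of unitary conjugates of $\rho$, the eigenvalue vector of $\rho$ majorizes that of $\sigma$, i.e.\ $\mathrm{spec}(\rho)\succ\mathrm{spec}(\sigma)$. Applied to the identity of the previous paragraph, this yields $\mathrm{spec}(\rho_{\A\B})\succ\mathrm{spec}(\rho_\A\otimes\id_\B/d_\B)$. By the Schur convexity of $f$ and the convention that $f$ on a state is the value of $f$ on its eigenvalue vector, we conclude
\begin{equation}
f(\rho_{\A\B}) \;\geq\; f\!\left(\rho_\A\otimes\frac{\id_\B}{d_\B}\right),
\end{equation}
which is the claim in Eq.~\eqref{eq:poor sub}.

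There is no serious obstacle here; the only point that has to be handled carefully is ensuring that the depolarization on $\B$ is written as a \emph{convex combination of unitaries on the full space} $\mathcal{H}_\A\otimes\mathcal{H}_\B$ (not merely a CPTP map on $\B$), which is precisely what the $1$-design identity above provides and what makes Uhlmann's theorem directly applicable.
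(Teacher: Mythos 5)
Your proof is correct and follows the same route as the paper's: express $\rho_\A\otimes\id_\B/d_\B$ as a mixture of unitary conjugates of $\rho_{\A\B}$ (local twirl on $\B$), apply Uhlmann's majorization theorem, and conclude by Schur convexity. The only (favorable) difference is that you correctly note a unitary $1$-design on $\B$ suffices for the depolarization identity, whereas the paper invokes Haar/$2$-design, which is more than is needed.
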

\begin{proof}
	Note that the state 
	\be
	\label{eq:state twirl}
	\rho_\A\otimes \frac{\id_\B}{d_\B}
	\ee
	can be obtained from  $\rho_{\A\B}$ by a mixture of unitaries (applying Haar random or discrete 2-design family) 
	unitary on subsystem $\B$). Thus by Uhlmann, the spectrum of original state $\rho_{\A\B}$ majorizes the spectrum of 
	the state \eqref{eq:state twirl}. Thus by Schur convexity of $f$ we get \eqref{eq:poor sub}. 
\end{proof}

\subsubsection{Noisy operations, catalythic noisy operations, majorization and trumping}
So called noisy operations  \cite{NOsDef} are a subclass of thermal operations introduced earlier in \cite{janzing2000thermodynamic,streater2009statistical}.
As explained in Sec. \ref{sec:CTO def} of the main text, these are all operations that can be composed of: (i) adding the free resource with a maximally mixed state. (ii) applying an arbitrary unitary transformation. (iii) taking the partial trace.

It was shown that when the input and output state belong to a Hilbert space of the same dimension, the class of noisy operations is equivalent to mixture of unitiaries. 
Therefore the condition that $\rho$ can be transformed into 
$\sigma$ is equivalent to majorization: $\rho$ can be transformed into $\sigma$ iff 
the spectrum $p$ of $\rho$ majorizes the spectrum  $q$ of $\sigma$.
We say that $p\in\cP_d$  majorizes $q\in\cP_d$  if  for all $l=1,\ldots ,d$ 
\be
\sum_{i=1}^l p_i^\downarrow \geq \sum_{i=1}^l q_i^\downarrow, 
\ee 
where $p^\downarrow$ is a vector obtained by arranging the components of $p$ in decreasing order:
$p^\downarrow = (p^\downarrow_1, \ldots, p^\downarrow_k)$ where  $p^\downarrow_1\geq  \ldots \geq  p^\downarrow_k$.
We now explain how catalytic noisy operations can be understood in terms of so called ``trumping''. As mentioned in Sec. \ref{sec:CTO def} of the main text, these are the noisy operations for which  one is allowed to use an additional system as a catalyst | namely the additional system has to be returned to its initial state after the process. This idea,  was first introduced to quantum information theory in the context of entanglement transformations \cite{Jonathan_1999}.

Namely, we say that $p\in\cP_d$ can be trumped into $q\in\cP_d$ (or, that $p$ catalytically majorizes $q$) if there exists some $k\in\nn^+$ and $r\in\cP_k$ such that 
\begin{equation}
p\ot r \succ q\ot r.
\end{equation}

Klimesh \cite{klimesh2007inequalities} and Turgut 
\cite{Turgut-trumping} provided necessary and sufficient conditions for $p$ to be trumped into $q$. 
Here we present conditions in the form provided by Klimesh. 
\begin{theorem}[Klimesh \cite{klimesh2007inequalities}]
	\label{thm:Klimesh}
	Consider $p\in \cP_d $ and $q\in \cP_d$ which do not both contain components  equal to zero (i.e. at least one of them is full rank),
	and let $p\not =q$.  
	Then $p$ can be trumped into $q$ if and only if for all $\alpha\in (-\infty, \infty)$ we have 
	\be
	g_\alpha(p)> g_\alpha(q)
	\ee
	where the functions $g_\alpha$
	are given by 
	\begin{align}
	\label{eq:g-alpha}
	g_\alpha(p)=
	\left \{
	\begin{array}{ll}
	\ln \sum_{i=1}^d p_i^\alpha & \quad \text{\rm for}\quad \alpha>1\\
	\sum_{i=1}^d p_i \ln p_i
	& \quad \text{\rm for}\quad \alpha=1\\
	-\ln \sum_{i=1}^d p_i^\alpha & \quad \text{\rm for}\quad 0<\alpha<1\\
	-\sum_{i=1}^d  \ln p_i
	& \quad \text{\rm for}\quad \alpha=0\\
	\ln \sum_{i=1}^d p_i^\alpha & \quad \text{\rm for}\quad \alpha<0\\
	\end{array} \right.
	\end{align}
\end{theorem}

\subsection{Lemmas on norms and fidelity}
This lemma says that if a state is close to a product, then it is also close to a product of 
its reductions.
\begin{lemma}
	\label{lem:product}
	We have for arbitrary states  $\rho_{\A\B}$, $\eta_\A$, $\eta_\B$ and pure state $\psi_\B$,
	\be
	\|\rho_{\A\B}-\rho_\A \otimes \eta_\B\|_1 \leq 2  \|\rho_{\A\B}-\eta_\A \otimes \eta_\B\|_1 
	\ee
	and 
	\be
	\|\rho_{\A\B}-\rho_\A \otimes \rho_\B\|_1 \leq 3  \|\rho_{\A\B}-\eta_\A \otimes \eta_\B\|_1 
	\ee
\end{lemma}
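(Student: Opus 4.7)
The plan is to prove both inequalities by repeated application of the triangle inequality for the trace norm, together with the standard fact that partial tracing is a quantum operation and hence cannot increase the trace distance, i.e. $\|\operatorname{tr}_\B X\|_1 \leq \|X\|_1$ for any Hermitian $X$ on $\mathcal{H}_{\A\B}$. Note that the pure state $\psi_\B$ mentioned in the statement does not actually appear in either inequality, so I treat it as auxiliary and work directly with $\rho_{\A\B},\eta_\A,\eta_\B$.

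For the first inequality, I would insert the intermediate state $\eta_\A\otimes\eta_\B$ and use the triangle inequality to write
\begin{equation}
\|\rho_{\A\B}-\rho_\A\otimes\eta_\B\|_1 \leq \|\rho_{\A\B}-\eta_\A\otimes\eta_\B\|_1 + \|\eta_\A\otimes\eta_\B-\rho_\A\otimes\eta_\B\|_1.\nonumber
\end{equation}
The second term factors as $\|\eta_\A-\rho_\A\|_1 \|\eta_\B\|_1 = \|\eta_\A-\rho_\A\|_1$, and by contractivity of the trace norm under the partial trace $\operatorname{tr}_\B$ this is bounded by $\|\eta_\A\otimes\eta_\B-\rho_{\A\B}\|_1$. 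Collecting the two contributions gives the desired factor of $2$.

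For the second inequality, I would chain one more triangle inequality on top of the first. Writing
\begin{equation}
\|\rho_{\A\B}-\rho_\A\otimes\rho_\B\|_1 \leq \|\rho_{\A\B}-\rho_\A\otimes\eta_\B\|_1 + \|\rho_\A\otimes\eta_\B-\rho_\A\otimes\rho_\B\|_1,\nonumber
\end{equation}
the first term is bounded by $2\|\rho_{\A\B}-\eta_\A\otimes\eta_\B\|_1$ by the inequality just proved, while the second factors as $\|\eta_\B-\rho_\B\|_1$, which by contractivity under $\operatorname{tr}_\A$ is at most $\|\rho_{\A\B}-\eta_\A\otimes\eta_\B\|_1$. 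Adding gives the factor of $3$.

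There is no real obstacle here: the argument is a routine two-step triangle-inequality bookkeeping, and the only ingredients are the product structure of the trace norm on tensor products of states and monotonicity under partial trace. The only thing to be careful about is making sure the ``anchor'' state inserted in the triangle inequality is the product $\eta_\A\otimes\eta_\B$ appearing on the right-hand side, so that both pieces can be controlled by the single quantity $\|\rho_{\A\B}-\eta_\A\otimes\eta_\B\|_1$.
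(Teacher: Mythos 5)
Your proof is correct and matches the paper's argument: triangle inequality anchored at $\eta_\A\otimes\eta_\B$, factoring the tensor-product norm, and contractivity of the trace norm under partial trace to return each one-system difference to the single right-hand-side quantity. The paper explicitly writes out only the first inequality and says the second follows "in a similar way," and your two-step chaining is exactly that kind of routine variant, so nothing further is needed.
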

\begin{proof}
	We have 
	\ben
	&&\|\rho_{\A\B}-\rho_\A \otimes \eta_\B\|_1 \leq   
	\|\rho_{\A\B}-\eta_\A \otimes \eta_\B\|_1 +\|\eta_\A \otimes \eta_\B-\rho_\A \otimes \eta_\B\|_1 = \nonumber \\
	&& =  \|\rho_{\A\B}-\eta_\A \otimes \eta_\B\|_1 +\|\eta_\A -\rho_\A \|_1  \leq 2 \|\rho_{\A\B}-\eta_\A \otimes \eta_\B\|_1 
	\een
	The second inequality we prove in a similar way. 
\end{proof}

Next lemma says that, if a reduced state is close to a pure state then the total 
state is close to	a product (of its reduction tensored with the pure state)
\begin{lemma}
	\label{lem:close to pure}
	We have 
	\be
	\|\rho_{\A\B} -\rho_\A \otimes \psi_\B\|_1 \leq 2 \sqrt{\|\rho_\B-\psi_\B\|_1 }
	\ee
\end{lemma}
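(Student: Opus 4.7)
\textbf{Proof strategy for Lemma \ref{lem:close to pure}.} The plan is to reduce the statement to a fidelity computation that takes full advantage of the purity of $\psi_B$, and then apply the Fuchs--van de Graaf inequality. Let me set up $\sigma_A := \bra{\psi_B}\rho_{AB}\ket{\psi_B}_B = \tr_B[(\id_A \otimes \ketbra{\psi_B}{\psi_B})\rho_{AB}]$ and $\epsilon := 1 - \tr\sigma_A = 1 - \braket{\psi_B|\rho_B|\psi_B}$. A quick H\"older argument, using $\|\ketbra{\psi_B}{\psi_B}\|_\infty = 1$ applied to $\epsilon = \tr[\ketbra{\psi_B}{\psi_B}(\psi_B - \rho_B)]$, gives $\epsilon \leq \|\rho_B - \psi_B\|_1$, so it suffices to bound $\|\rho_{AB} - \rho_A \otimes \psi_B\|_1$ by $O(\sqrt{\epsilon})$.

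The central step is the identity $F(\rho_{AB}, \rho_A\otimes\psi_B) = F(\sigma_A, \rho_A)$, where $F(\rho,\sigma) = \tr\sqrt{\sqrt{\rho}\sigma\sqrt{\rho}}$. It drops out because $\sqrt{\psi_B} = \psi_B$, so $\sqrt{\rho_A\otimes\psi_B} = \sqrt{\rho_A}\otimes\psi_B$, and one computes
\begin{equation}
\sqrt{\rho_A\otimes\psi_B}\,\rho_{AB}\,\sqrt{\rho_A\otimes\psi_B} = \left(\sqrt{\rho_A}\,\sigma_A\,\sqrt{\rho_A}\right)\otimes\psi_B,\nonumber
\end{equation}
whose operator square root factorises as $\sqrt{\sqrt{\rho_A}\sigma_A\sqrt{\rho_A}}\otimes\psi_B$, with trace $F(\sigma_A,\rho_A)$.

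Next I would lower bound $F(\sigma_A, \rho_A)$. The key observation is $\sigma_A \leq \rho_A$, which follows from $\rho_A - \sigma_A = \tr_B[(\id_A\otimes Q_B)\rho_{AB}] \geq 0$ with $Q_B = \id_B - \ketbra{\psi_B}{\psi_B}$, since the integrand is the partial trace of a product of positive operators. Operator monotonicity of the square root gives $\sqrt{\rho_A} \geq \sqrt{\sigma_A}$, and combining with $\|X\|_1 \geq |\tr X|$ yields
\begin{equation}
F(\sigma_A, \rho_A) \;\geq\; \tr\!\left[\sqrt{\sigma_A}\sqrt{\rho_A}\right] \;=\; \tr\sigma_A + \tr\!\left[\sqrt{\sigma_A}\left(\sqrt{\rho_A}-\sqrt{\sigma_A}\right)\right] \;\geq\; \tr\sigma_A \;=\; 1-\epsilon.\nonumber
\end{equation}
Finally, Fuchs--van de Graaf, $\|\rho-\sigma\|_1 \leq 2\sqrt{1-F(\rho,\sigma)^2}$, delivers
\begin{equation}
\|\rho_{AB} - \rho_A\otimes\psi_B\|_1 \;\leq\; 2\sqrt{1-(1-\epsilon)^2} \;\leq\; 2\sqrt{2\epsilon}\;\leq\; 2\sqrt{2\|\rho_B-\psi_B\|_1},\nonumber
\end{equation}
which gives the claimed scaling (with a mildly different prefactor; the cited constant $2$ can be recovered by sharpening one of the inequalities above, or would follow from a slightly more restrictive regime of $\epsilon$).

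The main obstacle is the fidelity identity: it is where the purity of $\psi_B$ is decisively exploited and where one must handle the non-commutativity of $\rho_A$ and $\sigma_A$ through operator square roots. If one wants to avoid Fuchs--van de Graaf altogether (e.g.\ to chase the exact constant), an alternative route is a direct block decomposition with respect to $P = \id_A \otimes \ketbra{\psi_B}{\psi_B}$ and $Q = \id - P$, bounding the off-block cross terms via operator Cauchy--Schwarz $\|P\rho_{AB}Q\|_1 \leq \sqrt{\tr[P\rho_{AB}]\,\tr[Q\rho_{AB}]} \leq \sqrt{\epsilon(1-\epsilon)}$ and the on-block correction by $\|\sigma_A - \rho_A\|_1 = \epsilon$ (which uses $\sigma_A \leq \rho_A$ once more).
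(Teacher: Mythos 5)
Your proof is correct and takes a genuinely different route from the paper. The paper's proof is purification-based: it chooses a purification $\phi_{\A\B\C}$ of the state, applies Uhlmann's theorem to write $F(\rho_\B,\psi_\B)=F(\phi_{\A\B\C},\psi_{\A\C}\otimes\psi_\B)$ for some pure $\psi_{\A\C}$, uses data processing (tracing out $\C$) to pass to $F(\rho_{\A\B},\sigma_\A\otimes\psi_\B)$, and then applies Fuchs--van de Graaf on both ends. Your route instead exploits $\sqrt{\psi_\B}=\psi_\B$ to compute $F(\rho_{\A\B},\rho_\A\otimes\psi_\B)$ directly as the unnormalized fidelity $\tr\sqrt{\sqrt{\rho_\A}\sigma_\A\sqrt{\rho_\A}}$ with $\sigma_\A=\bra{\psi_\B}\rho_{\A\B}\ket{\psi_\B}$, and then lower-bounds that by $\tr\sigma_\A=1-\epsilon$ via $\sigma_\A\le\rho_\A$ and operator monotonicity of $\sqrt{\cdot}$. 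Your version has two advantages: it avoids Uhlmann's theorem altogether, and it delivers the bound with $\rho_\A$ in the product rather than the auxiliary marginal $\sigma_\A=\tr_\C[\psi_{\A\C}]$ that the paper's proof produces and never explicitly reconciles with the $\rho_\A$ appearing in the lemma statement. The one place your argument is slack is the step $\epsilon\le\|\rho_\B-\psi_\B\|_1$, which uses H\"older with $\|\psi_\B\|_\infty=1$ and loses a factor of $2$, yielding the final constant $2\sqrt{2}$. To recover the paper's constant $2$ exactly, replace that step by the standard observation that $\psi_\B$ being pure implies $\epsilon = 1-\braket{\psi_\B|\rho_\B|\psi_\B}=1-F^2(\rho_\B,\psi_\B)=\tr[\psi_\B(\psi_\B-\rho_\B)]\le\max_{0\le M\le\id}\tr[M(\psi_\B-\rho_\B)]=\tfrac12\|\rho_\B-\psi_\B\|_1$; then $2\sqrt{2\epsilon}\le 2\sqrt{\|\rho_\B-\psi_\B\|_1}$ as required.
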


\begin{proof}
	Consider $F(\rho_\B,\psi_\B)$. We have 
	\be
	F(\rho_\B,\psi_\B)= F(\phi_{\A\B\C},\psi_{\A\C}\otimes \psi_\B)
	\ee
	where $\phi_{\A\B\C}$ is a purification of $\rho_\B$ which we are free to choose the way we want, 
	and $\psi_{\A\C}$ is some pure state. By data processing we have 
	\be
	F(\phi_{\A\B\C},\psi_{\A\C}\otimes \psi_\B)\leq F(\rho_{\A\B},\sigma_\A\otimes \psi_\B)
	\ee
	where $\sigma_\A$ is reduction of $\psi_{\A\C}$.  
	Using it and  twice Fuchs-Graaf we thus get:
	\be
	\|\rho_{\A\B}-\sigma_\A\ot\psi_\B\|_1 \leq 2 \sqrt{1- F^2(\rho_{\A\B},\sigma_\A\otimes \psi_\B)} \leq  
	2 \sqrt{1- F^2(\rho_\B, \psi_\B)}\leq 2 \sqrt{\|\rho_\B-\psi_\B\|_1 }
	\ee
	
	Now, the proposition says that if two states have closed corresponding reductions, and one of the reductions 
	is close to a pure state, then the states are close to one another.
	
\end{proof}

\begin{prop}
	\textsl{}\label{prop:red}
	Suppose that $\|\rho_\A-\sigma_\A\|_1 \leq \epsilon_1$, $\|\rho_\B-\sigma_\B\|_1 \leq \epsilon_2$, $\|\sigma_\B - \psi_\B\|_1 \leq \epsilon_3$. Then 
	\be
		\|\rho_{\A\B}-\sigma_{\A\B}\|_1 \leq \ep_1 + 2 \sqrt{\ep_2 + \epsilon_3}  +  2 \sqrt{\epsilon_3} .
	\ee
\end{prop}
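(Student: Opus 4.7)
The strategy is to reduce everything to the pure-state case handled by Lemma~\ref{lem:close to pure}, using $\psi_\B$ as a common ``anchor'' on the $\B$ factor and the triangle inequality to link $\rho_{\A\B}$ to $\sigma_{\A\B}$ through the two product states $\rho_\A\otimes\psi_\B$ and $\sigma_\A\otimes\psi_\B$.

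First I would promote the hypothesis $\|\sigma_\B-\psi_\B\|_1\le\ep_3$ to a bound on $\|\rho_\B-\psi_\B\|_1$ by the triangle inequality with $\sigma_\B$ as intermediate, using $\|\rho_\B-\sigma_\B\|_1\le\|\rho_{\A\B}-\sigma_{\A\B}\|_1$, which by data processing of partial trace is at most $\ep_2$. This yields $\|\rho_\B-\psi_\B\|_1\le\ep_2+\ep_3$. Because $\psi_\B$ is pure, Lemma~\ref{lem:close to pure} then applies both to $\rho_{\A\B}$ and to $\sigma_{\A\B}$, giving
\begin{align}
\|\rho_{\A\B}-\rho_\A\otimes\psi_\B\|_1 &\leq 2\sqrt{\|\rho_\B-\psi_\B\|_1}\leq 2\sqrt{\ep_2+\ep_3}, \\
\|\sigma_{\A\B}-\sigma_\A\otimes\psi_\B\|_1 &\leq 2\sqrt{\|\sigma_\B-\psi_\B\|_1}\leq 2\sqrt{\ep_3}.
\end{align}

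Next I would link the two product states via the identity $\|\rho_\A\otimes\psi_\B-\sigma_\A\otimes\psi_\B\|_1=\|\rho_\A-\sigma_\A\|_1\le\ep_1$ (multiplicativity of the trace norm under tensor products with a unit-trace state). Assembling these three pieces with the triangle inequality applied along the chain $\rho_{\A\B}\to\rho_\A\otimes\psi_\B\to\sigma_\A\otimes\psi_\B\to\sigma_{\A\B}$ produces the desired bound on $\|\rho_{\A\B}-\sigma_{\A\B}\|_1$, matching the claimed expression (modulo the usual relabelling of the three error parameters).

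\textbf{Main obstacle.} There is no real technical obstacle: once Lemma~\ref{lem:close to pure} is available, the proposition follows by routine triangle inequalities and the monotonicity of trace distance under partial trace. The only conceptual point worth stressing is why the purity of $\psi_\B$ is essential here---it is what converts a closeness statement about the reduced state on $\B$ (which only controls $\B$-marginals) into a closeness statement about the joint state on $\A\B$, via the Fuchs--van de Graaf argument already used inside Lemma~\ref{lem:close to pure}. Without purity, one could only conclude $\|\rho_{\A\B}-\sigma_{\A\B}\|_1$ is bounded in terms of a purification, not in terms of $\|\sigma_\B-\psi_\B\|_1$ alone, and the square-root scaling would be unavoidable but with different coefficients.
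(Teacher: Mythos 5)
Your proof is correct and follows exactly the same route as the paper: promote $\|\sigma_\B-\psi_\B\|_1\le\ep_3$ to $\|\rho_\B-\psi_\B\|_1\le\ep_2+\ep_3$ by the triangle inequality, apply Lemma~\ref{lem:close to pure} to both $\rho_{\A\B}$ and $\sigma_{\A\B}$ with $\psi_\B$ as the pure anchor, connect the two product states via $\|\rho_\A\otimes\psi_\B-\sigma_\A\otimes\psi_\B\|_1=\|\rho_\A-\sigma_\A\|_1\le\ep_1$, and chain with the triangle inequality (the paper's ``sandwich''). One small remark: the appeal to data processing of partial trace is superfluous and slightly circular as phrased, since $\|\rho_\B-\sigma_\B\|_1\le\ep_2$ is a hypothesis of the proposition and can be used directly; you were also right to flag that the indices in the stated bound $2\sqrt{\ep_1}+2\sqrt{\ep_1+\ep_2}+\ep_3$ and in the derivation $\ep_1+2\sqrt{\ep_2+\ep_3}+2\sqrt{\ep_3}$ differ by a harmless relabelling $1\leftrightarrow 3$, a discrepancy already present in the paper's own proof.
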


\begin{proof}
	By triangle inequality we have 
	\be
	\|\rho_\B-\psi_\B\|_1 \leq  \ep_2+\ep_3. 
	\ee
	By lemma  \ref{lem:close to pure}
	we have 
	\ben
	&&\|\sigma_{\A\B}-\sigma_\A \ot \psi_\B\|_1 \leq 2 \sqrt{\|\sigma_\B-\psi_\B\|_1 } \nonumber \\
	&&\|\rho_{\A\B}-\rho_\A \ot \psi_\B\|_1 \leq 2 \sqrt{\|\rho_\B-\psi_\B\|_1 } \nonumber \\
	\een
	Sandwiching $\|\rho_{\A\B} - \sigma_{\A\B}\|_1 $ with the above, we finish the proof. 
\end{proof}

\subsection{From approximate to strict inequalities}

\subsubsection{Main lemmas}

\begin{lemma}[From approximate to strict inequalities through smoothing]
	\label{lem:approx-strict}
	Let  $f$ be  a   concave, non negative  function of $p\in\cP_d$ such that $f(p)<f(\idv_d/d)$ for any $p\not=\idv_d/d$.
	Suppose that for some  $\eta>0$  we have
	\be
	\label{eq:f eta}
	f(p) \leq f(q) +\eta
	\ee
	Then for $\ep$ satisfying $\ep \leq 1/2$ and 
	\be
	\label{eq:ep}
	\ep\geq\min_{\delta>0} \max \left\{\delta, \frac{2\eta}{ f(\idv_d/d)-\max_{\|\rho -\idv_d/d\|_1 \geq \delta/2}f(\rho)} \right\}
	\ee
	we have 
	\be
	f(p) \leq f(\tilde q(\ep)) - \min\{\eta, f(\idv_d/d) - f(p)\}
	\ee
	where $\tilde q(\ep)$ is given by 
	
	\be
	\tilde q (\ep)=
	\left\{
	\begin{array}{ll}
		\idv_d/d &\text{\rm when } \|q -\idv_d/d\|_1 <\ep \\
		(1-\ep) q + \ep \idv_d/d &\text{\rm when } \|q -\idv_d/d\|_1 \geq \ep, \\
	\end{array}
	\right.
	\ee
	%
\end{lemma}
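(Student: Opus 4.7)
The plan is to split the analysis into two cases according to the piecewise definition of $\tilde q(\ep)$. In the easy case $\|q - \idv_d/d\|_1 < \ep$ the smoothed vector is exactly $\idv_d/d$, so $f(\tilde q(\ep)) = f(\idv_d/d)$ is the global maximum of $f$; the required bound then follows by distinguishing the two possible values of the $\min$ on the right-hand side: if it equals $f(\idv_d/d) - f(p)$ the inequality reduces to $f(p) \leq f(p)$, while if it equals $\eta$ one has $\eta \leq f(\idv_d/d) - f(p)$ by definition of the $\min$, which is exactly what is needed.

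The substantive case is $\|q - \idv_d/d\|_1 \geq \ep$, where $\tilde q(\ep) = (1-\ep) q + \ep\, \idv_d/d$ is a non-trivial convex combination. Concavity of $f$ yields
\[
f(\tilde q(\ep)) \geq (1-\ep) f(q) + \ep\, f(\idv_d/d) = f(q) + \ep\bigl(f(\idv_d/d) - f(q)\bigr).
\]
Rearranging the hypothesis gives $f(q) \geq f(p) - \eta$, so the argument in this case comes down to establishing the gap estimate $\ep\bigl(f(\idv_d/d) - f(q)\bigr) \geq 2\eta$; chaining this with the previous two inequalities yields $f(\tilde q(\ep)) \geq f(p) + \eta$, which is in fact stronger than the stated conclusion since the $\min$ on the right cannot exceed $\eta$.

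To obtain the gap estimate I would extract from the hypothesis a $\delta^\ast > 0$ (or rather a minimising sequence approaching the infimum in $\delta$) such that both $\delta^\ast \leq \ep$ and $2\eta \leq \ep\bigl[f(\idv_d/d) - \max_{\|\rho - \idv_d/d\|_1 \geq \delta^\ast/2} f(\rho)\bigr]$ hold simultaneously. Because $\|q - \idv_d/d\|_1 \geq \ep \geq \delta^\ast > \delta^\ast/2$, the vector $q$ lies in the admissible set for the inner maximum, hence $f(q) \leq \max_{\|\rho - \idv_d/d\|_1 \geq \delta^\ast/2} f(\rho)$, from which $\ep\bigl(f(\idv_d/d) - f(q)\bigr) \geq 2\eta$ follows directly.

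The main technical issue I anticipate is making the selection of $\delta^\ast$ fully rigorous, since the outer $\min$ in the hypothesis need not be attained; this is cleanly handled by working with an infimising sequence and passing to the limit, using continuity of $f$ on the simplex (which follows from concavity plus strict uniqueness of the maximum at $\idv_d/d$) to guarantee the inner maximum exists. A secondary subtlety is that the $\ep \leq 1/2$ assumption plays no explicit role in the chain of implications above; its purpose is to make the hypothesis self-consistent, since in the regime $\eta > f(\idv_d/d) - f(p)$ one can verify that the right-hand side of the defining $\min$-$\max$ expression already exceeds $1/2$, so under the hypothesis Case~2 can only occur when the $\min$ in the conclusion equals $\eta$.
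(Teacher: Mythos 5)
Your proof is correct, and it follows the same high-level strategy as the paper's (split on the two branches of $\tilde q(\ep)$, handle the ``smoothed-to-uniform'' branch trivially, and in the other branch use concavity of $f$ along the segment from $q$ to $\idv_d/d$ to create a gap of at least $2\eta$). Case~1 is identical. In Case~2 there is a genuine, if modest, technical divergence. The paper rearranges the concavity inequality to bound $f(q) \leq \frac{f(\tilde q(\ep))}{1-\ep} - \frac{\ep}{1-\ep}f(\idv_d/d)$ and then reduces to showing $\ep\bigl(f(\idv_d/d) - f(\tilde q(\ep))\bigr) \geq 2\eta$; to bound $f(\tilde q(\ep))$ from above it computes $\|\tilde q(\ep)-\idv_d/d\|_1 = (1-\ep)\|q-\idv_d/d\|_1 \geq (1-\ep)\ep \geq \ep/2$, and this last step is precisely where $\ep \leq 1/2$ is used. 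You instead keep the concavity inequality in the form $f(\tilde q(\ep)) \geq f(q) + \ep(f(\idv_d/d) - f(q))$ and reduce to $\ep\bigl(f(\idv_d/d) - f(q)\bigr) \geq 2\eta$, which requires only $\|q - \idv_d/d\|_1 \geq \delta^\ast/2$; that is automatic from the Case~2 hypothesis $\|q - \idv_d/d\|_1 \geq \ep$ together with $\ep \geq \delta^\ast$. Your route is therefore a small but real simplification: it never needs the perturbed state's distance from uniform and dispenses with $\ep \leq 1/2$ in Case~2 entirely.

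Two small points. First, your speculation about \emph{why} the lemma retains the $\ep \leq 1/2$ hypothesis is off the mark: the right-hand side of the defining $\min$--$\max$ expression does not depend on $p$, so it cannot ``exceed $1/2$'' as a function of the regime $\eta \gtrless f(\idv_d/d)-f(p)$. The honest explanation is simply that the paper's own proof of Case~2 uses $\ep \leq 1/2$ via the inequality $(1-\ep)\ep \geq \ep/2$, whereas your argument does not. Second, your parenthetical claim that continuity of $f$ on the simplex ``follows from concavity plus strict uniqueness of the maximum'' is not a theorem: concavity gives continuity on the relative interior only. Neither remark affects the correctness of your proof; the lemma is applied in the paper to Tsallis/R\'enyi entropies, which are continuous, and the $\max$ in the hypothesis could in any case be replaced by $\sup$ without changing either argument.
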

\begin{proof}
	If state $q$ satisfies $\|q - \idv_d/d \|_1  < \ep$  then  $\tilde q (\ep)= \idv_d/d$. Then,
	\be
	f(p) = f(\idv_d/d) + f(p) - f(\idv_d/d) = f(\tilde q(\ep))  + (f(p) - f(\idv_d/d)).
	\ee
	Thus trivially
	\be
	f(p)\leq f(\tilde q(\ep))  - \left(f(\idv_d/d) - f(p)\right)\leq f(\tilde q(\ep)) - \min\{\eta, f(\idv_d/d) - f(p)\}.
	\ee
	Now suppose that 
	\be
	\|q-\idv_d/d\|_1 \geq \ep.
	\ee
	From concavity of $f$ we have 
	\be
	f(\tilde q_\ep)\geq (1-\ep)f(q)  + \ep f(\idv_d/d)
	\ee
	hence 
	\be
	f(q) \leq \frac{f(\tilde q_\ep)}{1-\ep} - \frac{\ep}{1-\ep} f(\idv_d/d).
	\ee
	Then from  \eqref{eq:f eta}
	\be
	f(p)\leq f(q)+\eta \leq \frac{f(\tilde q_\ep)}{1-\ep} - \frac{\ep}{1-\ep} f(\idv_d/d)  +\eta = 
	f(\tilde q(\ep))+\frac{\ep}{1-\ep}\bigl(f(\tilde q(\ep)) - f(\idv_d/d)\bigr)  + \eta \leq 
	f(\tilde q(\ep))+\ep\bigl(f(\tilde q(\ep)) - f(\idv_d/d)\bigr)  + \eta. 
	\ee
	Thus it remains to show that the $\ep$ satisfying \eqref{eq:ep} and $\ep \leq 1/2$ satisfies 
	\be
	\label{eq:ep1f}
	\ep\bigl(f(\idv_d/d) - f(\tilde q(\ep))\bigr) \geq 2\eta.
	\ee
	To this end, note that \eqref{eq:ep} implies 
	\be
	\label{eq:ep2}
	\ep\geq\frac{2\eta}{f(\idv_d/d) - \max_{\|\rho -\idv_d/d\|_1 \geq \ep/2} f(\rho)}.
	\ee
	Then, note that 
	\be
	\|\tilde q(\ep)-\idv_d/d\|_1 =\|(1-\ep)q +\ep \idv_d/d - \idv_d/d\|_1 =(1-\ep)\|q-\idv_d/d\|_1 \geq(1-\ep)\ep.
	\ee
	Thus, for $\ep\leq 1/2$ we have $\|\tilde q_\ep-\idv_d/d\|_1 \geq \ep/2$,  so that 
	\be
	f(\tilde q(\ep))\leq \max_{\|\rho - \idv_d/d\|_1 \geq \ep/2}  f(\rho),
	\ee
	hence \eqref{eq:ep2} implies 
	\be
	\ep \geq \frac{2\eta}{f(\idv_d/d)-f(\tilde q(\ep))}, 
	\ee
	which is equivalent to \eqref{eq:ep1f}.
\end{proof}

\begin{lemma}
	\label{lem:f g}
	Let $g$ be convex, non-negative function with the domain $D\in R$. Let also $g$ be multiplicative,
	i.e. $g(xy)=g(x) g(y)$ and $g(1)=1$. Let us denote $f_g(p)=\sum_{i=1}^{d} g(p_i)$. Then 
	for any probability distribution 
	$p\in\cP_d$ satisfying $\|p - \idv_d/d\|_1 \geq \delta$ we have 
	\be
	f_g(p) - f_g(\idv_d/d) \geq \frac{f_g(\idv_d/d)}{f_g(\idv_2/2)} \left(f_g(p^\delta_{(2)})- f_g(\idv_2)\right),
	\ee
	where
	\be
	p_{(2)}^\delta=\left\{\frac{1+\delta/2}{2}, \frac{1-\delta/2}{2}\right\}.
	\ee
	If $g$ is concave, and otherwise satisfies all the above assumptions we have 
	\be
	f_g(\idv_d/d) - f_g(p) \geq \frac{	f_g(\idv_d/d)}{f_g(\idv_2/2)} \left(f_g(\idv_2/2)-f_g(p^\delta_{(2)}) \right).
	\ee
\end{lemma}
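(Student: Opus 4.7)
The plan is to reduce the stated inequality to a majorization between two explicit distributions on a common $2d$-element alphabet, and then invoke the Schur-convexity (respectively Schur-concavity) of $f_g$. Three ingredients are used: the multiplicativity of $g$, which lets $f_g$ distribute across tensor products; a majorization on $2d$ elements; and the Schur-(con)vexity of $f_g$ inherited from the convexity hypothesis on $g$.

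First I would exploit $f_g(x \otimes y) = f_g(x)\,f_g(y)$ together with $g(1)=1$ to recast the convex-case target, after clearing the common positive factor $f_g(\idv_d/d)\,f_g(\idv_2/2)$, in the symmetric form
\begin{equation}
f_g\!\left(p \otimes \tfrac{\idv_2}{2}\right) \;\geq\; f_g\!\left(\tfrac{\idv_d}{d} \otimes p^\delta_{(2)}\right),
\end{equation}
with the reversed inequality in the concave case. Both sides now live on a common $2d$-element alphabet, so a direct majorization comparison becomes meaningful.

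Second, I would establish the key majorization
\begin{equation}
p \otimes \tfrac{\idv_2}{2} \;\succ\; \tfrac{\idv_d}{d} \otimes p^\delta_{(2)}
\end{equation}
under the hypothesis $\|p - \idv_d/d\|_1 \geq \delta$. Sorting $p$ in decreasing order, the left-hand spectrum is $(p_i/2)$ with each value repeated twice, while the right-hand spectrum consists of $a := (1+\delta/2)/(2d)$ with multiplicity $d$ followed by $b := (1-\delta/2)/(2d)$ with multiplicity $d$. Using the dual description $\|p - \idv_d/d\|_1 = 2\max_k\!\left(\sum_{i=1}^{k} p_i^\downarrow - k/d\right)$ of the trace distance from uniform, I would verify each partial-sum inequality required by majorization, splitting the index range into the top-$d$ half (where the comparison value is $a$) and the bottom-$d$ half (where it is $b$). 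Finally, Schur-convexity of $f_g$ for convex $g$ (and Schur-concavity for concave $g$) converts the majorization into the desired inequality, and reversing the initial algebraic manipulation recovers the statement of the lemma.

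The main obstacle is the majorization step. The hypothesis $\|p - \idv_d/d\|_1 \geq \delta$ directly controls only a single partial sum of $p^\downarrow$, whereas majorization demands control at every index from $1$ to $2d$. The resolution exploits the two-value structure of the comparison spectrum: the intermediate partial-sum inequalities can be handled by a linear interpolation between the endpoint bounds at $k=d$ and $k=2d$, and the $\delta/2$ of slack available at the index where $p^\downarrow$ crosses the uniform value $1/d$ suffices to propagate across the entire sorted array. Spot-checks with two-level and skewed distributions confirm that the majorization indeed holds with the expected slack, supporting this strategy.
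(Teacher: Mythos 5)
Your proposal is correct, and it takes a genuinely different route from the paper's own proof. The paper first twirls $p$ at the $d$-element level into a two-valued distribution $\tilde p$ (averaging the entries above and below $1/d$), invoking Schur-(con)vexity to compare $f_g(p)$ with $f_g(\tilde p)$, and then settles the remaining comparison between $\tilde p$ and $p_{(2)}^\delta$ by a geometric chord argument about concave curves (Fig.~\ref{fig:concave}), invoking multiplicativity of $g$ only in the final line. You instead deploy multiplicativity up front: from $f_g(x \otimes y) = f_g(x)\,f_g(y)$, clearing the positive factor $f_g(\idv_2/2)$ collapses the target to the single Schur comparison $f_g(p \otimes \idv_2/2) \geq f_g(\idv_d/d \otimes p_{(2)}^\delta)$ (with the inequality reversed in the concave case), which then reduces to one majorization on $2d$ letters plus Schur-(con)vexity of $f_g$. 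This packages all the analytic content into a single majorization, and that majorization is in fact true; your interpolation remark is the right reduction, since the top-$k$ sum of the left spectrum is concave in $k$, the right spectrum's top-$k$ sum is piecewise linear with a kink at $k=d$, and the two agree at $k=0$ and $k=2d$, so the entire family of partial-sum inequalities collapses to the single check at $k=d$. The one place your sketch is thin is exactly that check: with $P_j := \sum_{i\leq j} p_i^\downarrow$ concave, $P_0=0$, $P_d=1$, and $P_{j^*} - j^*/d \geq \delta/2$ at the crossing index $j^*$, a secant argument gives $P_{d/2} \geq \tfrac12 + \tfrac{\delta}{4}$ for even $d$, and $\tfrac12\bigl(P_{(d-1)/2}+P_{(d+1)/2}\bigr) \geq \tfrac12+\tfrac{\delta}{4}$ for odd $d$; spelling out that one inequality would complete a clean alternative proof of the lemma.
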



\begin{remark}
	\label{rem:tsallis}
	Lemma \ref{lem:f g} applies to $g(x)=x^\alpha$ for $\alpha>0$. The function $f_g$ for such $g$ we will denote by $\fa$.
	We then obtain:
	\be
	\label{eq:fa convex}
	\fa(p)-\fa(I_d/d) \geq  \frac{d^{1-\alpha}}{2^{1-\alpha}} \left(\fa(p_{(2)}^\delta) - \fa(I_2/2)\right)
	\ee
	for $\alpha>1$ (i.e. when $x^\alpha$ is convex), and 
	\be
	\label{eq:fa concave}
	\fa(I_d/d)-\fa(p) \geq  \frac{d^{1-\alpha}}{2^{1-\alpha}} \left(\fa(I_2/2) - \fa(p_{(2)}^\delta)\right)
	\ee
	for $\alpha\in(0,1)$ (i.e. when $x^\alpha$ is concave).
	From these, one gets:
	\be
	\label{eq:Tdto2}
	T_\alpha (\idv_d/d)- T_\alpha(p) \geq \frac{d^{(1-\alpha)}}{2^{(1-\alpha)}} \left(T_\alpha(\idv_2/2) - T_\alpha(p_{(2)}^\delta)\right)
	\ee
	for all  $\alpha >0$ 
	(for $\alpha=1$ it is obtained by continuity, and gives inequality for Shannon entropies). 
\end{remark}

\begin{proof}[Proof of Lemma \ref{lem:f g}]
	Let $g$ be concave. Then $f_g(p)$ is concave as a function of probability distribution $p\in\cP_d$. 
	For any distribution $p$ we consider its twirled version, that depends on just two parameters: 
	the number $k$ of $p_i$'s  greater than or equal to $1/d$ and $\delta=\|p - \idv_d/d\|_1 $. 
	\be
	\label{eq:twirled p}
	\tilde p =  \Bigg\{\underbrace{\frac1d + \frac{\delta/2}{k},\ldots,\frac1d + \frac{\delta/2}{k}}_{k},
	\underbrace{\frac1d - \frac{\delta/2}{d-k},\ldots,\frac1d + \frac{\delta/2}{d-k}}_{d-k}\Bigg\}.
	\ee
	The $\tilde p$ can be obtained from $p$ by mixture of permutations (we consider two subset of $p_i$': those larger than $1/d$ and those smaller than or equal, and randomly permute elements within each of the subsets. Hence by concavity we have 
	\be
	f_g(p)\leq f_g(\tilde p)= d (r_1 g(x_1) + r_2 g(x_2)) 
	\ee
	where we denoted $r_1=k/d, r_2=(d-k)/d$ and $x_1=1/d +\delta/2d, x_2=1/d-\delta/(2(d-k))$.  Note that 
	$r_1+r_2=1$, $r_1 x_1 + r_2 x_2=1/d$. One finds (see Fig. \ref{fig:concave})
	\begin{figure}[ht]
		\centering
		\includegraphics[width=12cm]{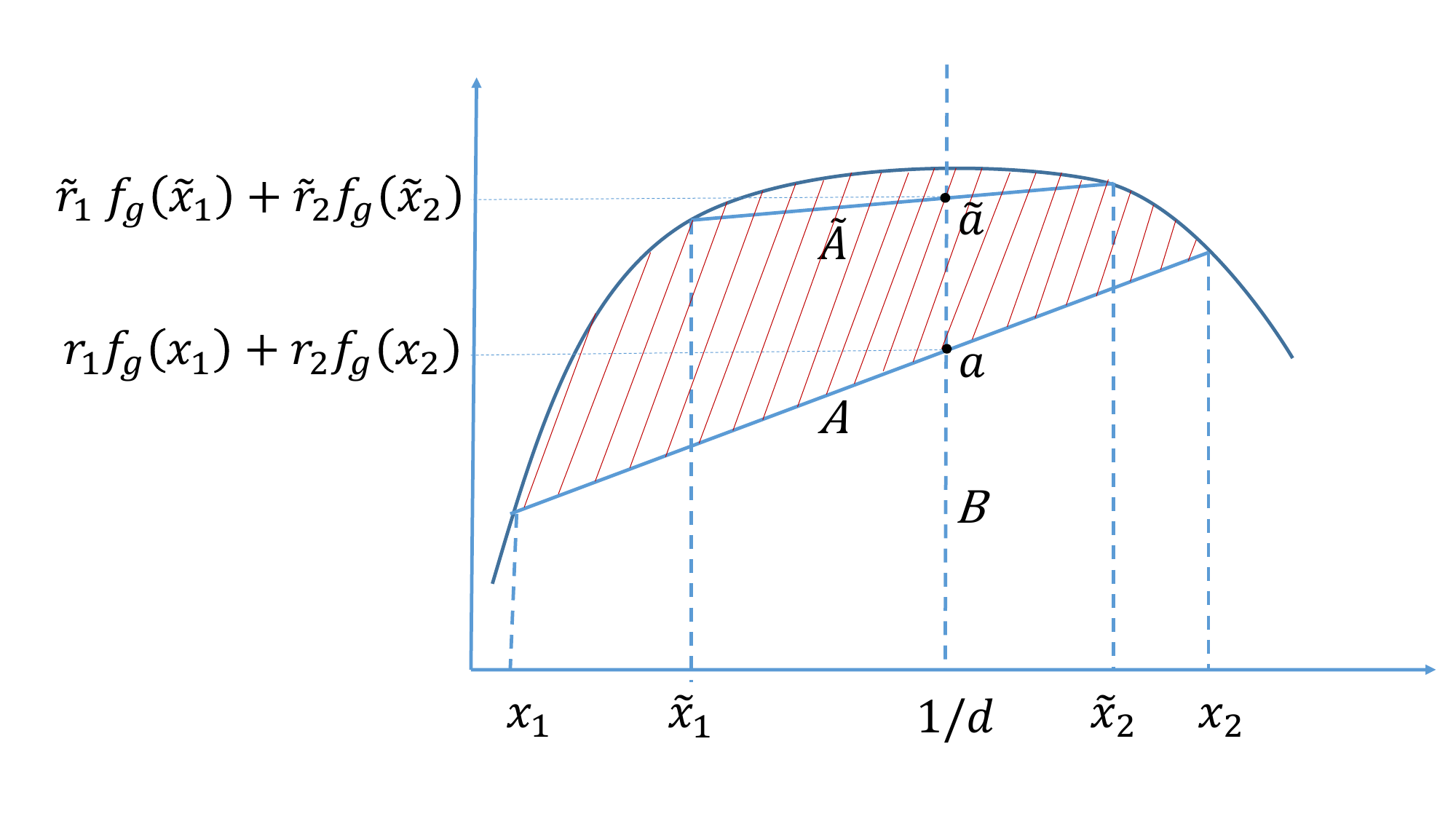}
		\caption{\label{fig:concave} Geometric proof of the inequality  \eqref{eq:fig-inequality}. Due to concavity of $f$ the interval $A$ between the points $(x_1,f(x_1))$ and $(x_2,f(x_2))$ together 
			with the part of the graph of the function laying between these two points enclose a convex body (indicated in red). Therefore  the interval $\tilde A$ between the points $(\tilde x_1,f(\tilde x_1))$ and $(\tilde x_2,f(\tilde x_2))$ must lie within the body. By assumption, the latter interval 
			has nonempty intersection $\tilde a$ with the line $x=1/d$. This intersection must be therefore above the intersection $a$ of the latter line with the interval $A$, which means that inequality \eqref{eq:fig-inequality} holds.}
	\end{figure}
	that if $x_2\leq \tilde x_2\leq 1/d\leq \tilde x_1 \leq x_1$ then, due to concavity of the function $f_g$, we have 
	\begin{equation}
	\label{eq:fig-inequality}
	r_1 f_g(x_1) + r_2 f_g(x_2)  \leq \tilde r_1 f_g(\tilde x_1) + \tilde r_2 f_g(\tilde x_2)
	\end{equation}
	provided $\tilde r_1 +\tilde r_2=1$, $\tilde r_1 \tilde x_1 + \tilde r_2 \tilde x_2=1/d$.
	In Fig. \ref{fig:concave} a graphical proof is given. The analytical argument is as follows.  One can first easily prove by concavity that the interval $\tilde A$ is above the interval $A$. Namely, it is enough to prove that the ends of the interval $\tilde A$ are above interval $A$, which in turn, follows directly from concavity. Now, the left hand side of  \eqref{eq:fig-inequality} 
	is the second argument of the 
	point from interval $A$ with
	first argument being $1/d$, 
	while the right hand side  is the second argument of the point from the interval $\tilde A$ (first argument being $1/d$), hence the inequality follows.
	
	Now let us exploit \eqref{eq:fig-inequality}. 
	Since $1\leq k\leq d$ we can choose
	\be
	\tilde x_1= \frac1d + \frac{\delta/2}{d},\quad 
	\tilde x_2= \frac1d - \frac{\delta/2}{d}
	\ee
	and $\tilde r_1=\tilde r_2 = \frac12$.  We thus obtain 
	\ben
	&&f_g(p) \leq d\left[ \frac12  g\left( \frac1d + \frac{\delta/2}{d}\right) + 
	\frac12 g\left(\frac1d - \frac{\delta/2}{d}\right) \right] =\nonumber \\
	&&= \frac{d}{2}\frac{g(2)}{g(d)} \left[g\left(\frac{1+\delta/2}{2}\right) + 
	g\left(\frac{1-\delta/2}{2}\right) \right]= \nonumber \\
	&&= \frac{f_g(\idv_d/d)}{f_g(\idv_2/2)} f_g(p_{(2)}^\delta)
	\een
	where multiplicativity of $g$ was used. From this we obtain 
	\eqref{eq:fa concave}. Eq. \eqref{eq:fa convex} is proven similarly. 
\end{proof}

\begin{lemma}
	\label{lem:pmax}
	Let $p_{\min}(p)=\min_i p_i$  and $p_{\max}(p)=\max_i p_i$. Then for arbitrary $p$ such that $\|p-\idv_d/d\|_1 =\delta$ we have 
	\ben
	p_{\min}(p)\leq \frac{p_{\min}(\idv_d/d)}{p_{\min}(\idv_2/2)} p_{\min}(p^\delta_{(2)})=\frac{2}{d} p_{\min}(p^\delta_{(2)})
	\een
	and 
	\ben
	p_{\max}(p)\geq \frac{p_{\max}(\idv_d/d)}{p_{\max}(\idv_2/2)} p_{\max}(p^\delta_{(2)})=  \frac{2}{d} p_{\max}(p^\delta_{(2)})
	\een
	where
	\be
	p_{(2)}^\delta=\left\{\frac{1+\delta/2}{2}, \frac{1-\delta/2}{2}\right\}.
	\ee
\end{lemma}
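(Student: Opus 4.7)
Unpacking the definitions, $p_{\min}(p^\delta_{(2)}) = (1-\delta/2)/2$ and $p_{\max}(p^\delta_{(2)}) = (1+\delta/2)/2$, so the claim is simply
\[
p_{\min}(p) \leq \frac{1-\delta/2}{d}, \qquad p_{\max}(p) \geq \frac{1+\delta/2}{d}.
\]
The prefactor $2/d$ in the statement is just the normalization $p_{\min/\max}(\idv_d/d)/p_{\min/\max}(\idv_2/2)$, cosmetically matching the form of Lemma \ref{lem:f g}. The plan is a short pigeonhole argument on deviations from uniformity, avoiding the twirling construction used in Lemma \ref{lem:f g}.

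First I would partition the index set into $S_- = \{i : p_i < 1/d\}$ and $S_+ = \{i : p_i \geq 1/d\}$. Since $\sum_i p_i = 1 = \sum_i 1/d$, the positive and negative deviations balance: $\sum_{i \in S_+}(p_i - 1/d) = \sum_{i \in S_-}(1/d - p_i)$. Adding these recovers the $\ell_1$ norm, so
\[
\sum_{i \in S_-}\!\left(\tfrac{1}{d}-p_i\right) = \sum_{i \in S_+}\!\left(p_i-\tfrac{1}{d}\right) = \frac{\delta}{2}.
\]

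Next, for the bound on $p_{\min}$, every summand in $S_-$ satisfies $(1/d - p_i) \leq (1/d - p_{\min}(p))$, so
\[
\frac{\delta}{2} \;=\; \sum_{i\in S_-}\!\left(\tfrac{1}{d}-p_i\right) \;\leq\; |S_-|\left(\tfrac{1}{d}-p_{\min}(p)\right) \;\leq\; d\!\left(\tfrac{1}{d}-p_{\min}(p)\right),
\]
whence $p_{\min}(p) \leq (1-\delta/2)/d$. Symmetrically, each summand in $S_+$ is bounded above by $(p_{\max}(p)-1/d)$, giving $\delta/2 \leq |S_+|(p_{\max}(p)-1/d) \leq d(p_{\max}(p)-1/d)$ and hence $p_{\max}(p) \geq (1+\delta/2)/d$.

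There is no real obstacle here: the whole argument rests on the two observations that (i) the $\ell_1$ distance to $\idv_d/d$ equals twice the total ``deficit" in $S_-$ (equivalently, twice the surplus in $S_+$), and (ii) the cardinalities $|S_\pm|$ are trivially bounded by $d$. The only place where even mild care is needed is the handling of indices with $p_i = 1/d$ (they can be placed in either $S_+$ or $S_-$ without affecting the argument, since they contribute zero to both sums), and the degenerate case $\delta = 0$, where both inequalities reduce to equality $p_{\min}(p) = p_{\max}(p) = 1/d$ forced by $p = \idv_d/d$.
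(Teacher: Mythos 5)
Your proof is correct, and it takes a genuinely different route from the paper's. The paper reuses the twirled distribution $\tilde p$ from Eq.~\eqref{eq:twirled p} (a mixture of permutations of $p$ with parameters $k$ and $\delta$), then appeals to convexity of $p_{\max}$ (respectively concavity of $p_{\min}$) to get $p_{\max}(p)\geq p_{\max}(\tilde p)=\tfrac1d+\tfrac{\delta/2}{k}\geq\tfrac1d+\tfrac{\delta/2}{d}$, with the mirror-image argument for $p_{\min}$. Your argument dispenses with the twirling construction entirely and instead works directly with the signed deviations from uniformity: the identity $\sum_{i\in S_-}(1/d-p_i)=\sum_{i\in S_+}(p_i-1/d)=\delta/2$ combined with $|S_\pm|\leq d$ and the trivial pointwise bounds $1/d-p_i\leq 1/d-p_{\min}$, $p_i-1/d\leq p_{\max}-1/d$. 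Both yield exactly the same constants. The paper's route has the virtue of reusing a construction already introduced for Lemma~\ref{lem:f g}, keeping the appendix uniform in technique; yours is shorter, more elementary, and self-contained (no appeal to Schur convexity or majorization implicit in the ``mixture of permutations'' step). One small point worth noting: your inequality $|S_-|(1/d-p_{\min})\leq d(1/d-p_{\min})$ tacitly uses $p_{\min}\leq 1/d$ (which holds because the mean is $1/d$), and likewise $p_{\max}\geq 1/d$; you might state these explicitly, though they are immediate.
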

\begin{proof}
	We use again the twirled version of $p$ of Eq. \eqref{eq:twirled p}
	By convexity of $p_{\max}$, we have 
	\be
	p_{\max}(p)\geq p_{\max}(\tilde p)  \geq \frac1d +\frac{\delta/2}{d}=\frac2d p_{\max} 
	\ee
	Similarly one proves for $p_{\min}$. 
\end{proof}
\begin{prop}
	\label{prop:strict}
	Let  $p,q \in\cP_d$ and define
	\be
	\label{eq:tildeq}
	\tilde q(\ep)=
	\left\{
	\begin{array}{ll}
		\idv_d/d &\text{\rm when } \|q -\idv_d/d\|_1 < \ep \\
		(1-\ep) q + \ep \idv_d/d &\text{\rm when } \|q -\idv_d/d\|_1 \geq \ep \\
	\end{array}
	\right.
	\ee
	Denote also 
	\ben
	\label{eq:ep>}
	&&\ep_T(\alpha)=(16\,\eta_\alpha \, d^{\alpha-1})^\frac13\quad \text{\rm for } \alpha \geq 1  \\
	\label{eq:ep<}
	&&\ep_T(\alpha)=(16\,\eta_\alpha\,  d^{\alpha-1}\alpha^{-1})^\frac13 \quad \text{\rm for } \alpha\in(0,1) \\
	\label{eq:epinfty}
	&&\ep_\infty(\alpha) = 4 \sqrt{\frac{\ln d}{\alpha}+\eta_\infty} \quad \text{\rm for } \alpha > 1 \\
	\label{eq:epzero}
	&&\epzero(\alpha) =  \left(\frac{d-1}{d}\right)^{\frac{1}{2\alpha}}\quad \text{\rm for } \alpha\in(0,1).
	\label{eq:ep1}
	\een
	Now assuming, that all the above epsilons ($\ep_T,\ep_\infty,\ep_0$) are no greater than $1/2$ 
	we have 
	\begin{itemize}
		\item[(i)] for $\alpha>1$ 
		\be
		S_\infty(p) \leq S_\infty (q) + \eta_\infty,\quad \text{\rm implies} \quad S_\alpha(p) \leq  S_\alpha(\tilde q(\ep_\infty(\alpha)) - 
		\min\left\{\eta_\infty, \ln d - S_1(p) \right\}
		\ee
		\item[(ii)] For $\alpha>0 $
		\be
		T_\alpha(p) \leq T_\alpha (q) + \eta_{\alpha},\quad \text{\rm implies} \quad T_\alpha(p) \leq T_\alpha(\tilde q(\ep_T(\alpha))
		- \min\left\{\eta_\alpha,  T_\alpha(\idv_d/d) - T_\alpha(p) \right\}
		\ee
		\item[(iii)] For $\alpha\in(0,1)$, for $p$ not full rank  we have  
		\be
		S_\alpha(p) \leq  S_\alpha(\tilde q(\epzero(\alpha))) - \frac12 \ln \frac{d}{d-1}
		\ee
	\end{itemize}
\end{prop}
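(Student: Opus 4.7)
The plan is to apply the abstract smoothing lemma (Lemma~\ref{lem:approx-strict}) three times, once per part, with different choices of the concave function $f$. For each application, the main technical task will be to verify the sensitivity condition Eq.~\eqref{eq:ep}---that is, to estimate $f(\idv_d/d)-\max_{\|\rho-\idv_d/d\|_1\geq\delta/2} f(\rho)$ from below---and here Lemma~\ref{lem:f g} (with its Tsallis specialisation in Remark~\ref{rem:tsallis}) will be the workhorse, reducing the arbitrary-dimension gap estimate to an explicit computation on the two-dimensional twirled distribution $p_{(2)}^\delta$, which can be handled by a Taylor expansion in $\delta$.

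For part (ii) I would apply Lemma~\ref{lem:approx-strict} directly with $f=T_\alpha$ (concave and non-negative for $\alpha>0$). Eq.~\eqref{eq:Tdto2} gives $T_\alpha(\idv_d/d)-T_\alpha(\rho)\geq (d/2)^{1-\alpha}[T_\alpha(\idv_2/2)-T_\alpha(p_{(2)}^\delta)]$ whenever $\|\rho-\idv_d/d\|_1\geq\delta$, and a second-order expansion of the two-dimensional expression shows it is of order $\alpha\delta^2$ near the origin. Substituting into Eq.~\eqref{eq:ep} yields a cubic inequality in $\ep$ whose solution is precisely $\ep_T(\alpha)$ in Eqs.~\eqref{eq:ep>}--\eqref{eq:ep<}; the explicit factor $\alpha^{-1}$ appearing in the $\alpha\in(0,1)$ case compensates for the curvature of $T_\alpha$ weakening as $\alpha\to 0^+$.

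For part (i) the hypothesis is phrased in terms of $S_\infty$ rather than $S_\alpha$, so Lemma~\ref{lem:approx-strict} does not apply directly; I would split on whether $\|q-\idv_d/d\|_1<\ep_\infty(\alpha)$. In the former case $\tilde q=\idv_d/d$, so $S_\alpha(\tilde q)=\ln d$, and the monotonicity $S_\alpha(p)\leq S_1(p)$ (valid for $\alpha>1$) together with the ``$\min$'' clause closes the argument. In the latter case I would use $p_{\max}(\tilde q)\leq p_{\max}(q)$---immediate from the convex combination defining $\tilde q$---together with $S_\alpha\geq S_\infty$ to pull the hypothesis through, and Pinsker's inequality \eqref{eq:pinsker} to tie the trace-distance condition $\|q-\idv_d/d\|_1\geq\ep_\infty(\alpha)$ to an entropic gap; the specific form $\ep_\infty(\alpha)=4\sqrt{\ln d/\alpha+\eta_\infty}$ then arises from balancing the two resulting error terms.

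For part (iii) no $\eta$ hypothesis is needed: the non-full-rank assumption directly yields $S_\alpha(p)\leq \ln(d-1)$ for $\alpha\in(0,1)$ (Schur-concavity of $f_\alpha$ on distributions supported on at most $d-1$ coordinates). When $\|q-\idv_d/d\|_1<\ep_0(\alpha)$ the bound on $S_\alpha(\tilde q)$ is trivial; otherwise all components of $\tilde q$ are at least $\ep_0(\alpha)/d$, and a majorization argument against the extremal distribution $(1-(d-1)\ep_0/d,\ep_0/d,\ldots,\ep_0/d)$ gives a lower bound on $S_\alpha(\tilde q)$ meeting the prescribed gap $\tfrac12\ln[d/(d-1)]$ exactly at $\ep_0(\alpha)=[(d-1)/d]^{1/(2\alpha)}$. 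The hard part throughout will be the bookkeeping between the ``$\min$'' clauses and the case splits on $\|q-\idv_d/d\|_1$; in part (i) in particular, handling the mismatch between the $S_\infty$ hypothesis and the $S_\alpha$ conclusion without losing a factor like $\alpha/(\alpha-1)$ that blows up near $\alpha=1$ is the main obstacle.
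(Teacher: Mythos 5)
Your plan for part (ii) matches the paper's own proof essentially step for step: apply Lemma~\ref{lem:approx-strict} with $f=T_\alpha$, reduce the gap estimate to dimension $2$ via Lemma~\ref{lem:f g}/Eq.~\eqref{eq:Tdto2}, and finish with a lower bound of order $\alpha\delta^2$. (The paper derives this last bound from the Hellinger/KL divergence route of Lemma~\ref{lem:HvsD} rather than from a direct Taylor expansion of $T_\alpha(p_{(2)}^\delta)$, but the content is the same.) For part (iii), your extremal-majorization argument against $(1-(d-1)\ep_0/d,\ep_0/d,\ldots,\ep_0/d)$ is a tighter variant of the paper's simpler observation that $\sum_i \tilde q_i^\alpha \geq d(\ep_0/d)^\alpha$ because every component of $\tilde q$ is at least $\ep_0/d$; either closes the case once you note $S_\alpha(p)\leq\ln(d-1)$ for non-full-rank $p$.

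The gap is in part (i), in the non-trivial branch $\|q-\idv_d/d\|_1 \geq \ep_\infty(\alpha)$. The ingredients you list --- $p_{\max}(\tilde q(\ep))\leq p_{\max}(q)$ (so $S_\infty(\tilde q)\geq S_\infty(q)$), monotonicity $S_\alpha\geq S_\infty$, and the hypothesis $S_\infty(p)\leq S_\infty(q)+\eta_\infty$ --- chain together to give only
\[
S_\alpha(\tilde q(\ep)) \geq S_\infty(\tilde q(\ep)) \geq S_\infty(q) \geq S_\infty(p)-\eta_\infty \geq \tfrac{\alpha-1}{\alpha}S_\alpha(p)-\eta_\infty \geq S_\alpha(p)-\tfrac{\ln d}{\alpha}-\eta_\infty,
\]
which is strictly weaker than the target $S_\alpha(\tilde q(\ep))\geq S_\alpha(p)+\eta_\infty$ by a margin of $\ln d/\alpha + 2\eta_\infty > 0$, no matter how $\ep$ is chosen. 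What you need, and what the paper supplies, is the \emph{quantitative} strengthening $S_\infty(\tilde q(\ep)) - S_\infty(q) \geq \ep^2/4$: mixing a far-from-uniform $q$ with $\idv_d/d$ strictly lowers $p_{\max}$ by a controlled amount. This comes from Lemma~\ref{lem:pmax}, which gives $q_{\max}\geq (1+\delta/2)/d$ when $\|q-\idv_d/d\|_1=\delta\geq\ep$, whence
\[
S_\infty(\tilde q(\ep)) - S_\infty(q) = -\ln\left((1-\ep)+\tfrac{\ep}{d\,q_{\max}}\right) \geq -\ln\left(1-\tfrac{\ep\,(\ep/2)}{1+\ep/2}\right) \geq \tfrac{\ep^2}{4}.
\]
Pinsker's inequality (Eq.~\eqref{eq:pinsker}) is the wrong tool here: applied to $q$ versus $\idv_d/d$ it gives $\ln d - S_1(q) \geq \tfrac12\|q-\idv_d/d\|_1^2$, which is an \emph{upper} bound on $S_1(q)$ (hence on $S_\infty(q)$), not a lower bound on the $S_\infty$-gain under mixing; the two quantities do not match up. Pinsker does appear in this proof, but in part (ii) inside Lemma~\ref{lem:HvsD}, not in part (i). Also, the factor $\alpha/(\alpha-1)$ you were worried about near $\alpha=1$ is handled by writing $\tfrac{\alpha-1}{\alpha}S_\alpha(p)=S_\alpha(p)-\tfrac{1}{\alpha}S_\alpha(p)\geq S_\alpha(p)-\tfrac{\ln d}{\alpha}$, which is bounded, and the regime $\alpha\to 1^+$ is in any case excluded by the standing hypothesis $\ep_\infty(\alpha)\leq 1/2$.
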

\begin{proof}
	
	{\it Proof of (i)}.
	Fix some $\epsilon$, and consider $\tilde q(\ep)$ given by \eqref{eq:tildeq}. 
	Consider first the case when $\|q-\idv_d/d\|_1 \leq \ep$. Let us show, that in this case, $(i)$ holds for any $\epsilon$. Indeed, we have $\tilde q(\ep)=\idv_d/d$, hence 
	\be
	S_\alpha(p)=S_\alpha(\tilde q(\ep))-\left(\ln d - S_\alpha(p)\right).
	\ee
	Since $S_\alpha$ is monotonically decreasing in $\alpha$, 
	we can
	replace on the right-hand-side $S_\alpha(p)$ with  $S_1 (p)$
	\be
	\label{eq:leq-ep}
	S_\alpha(p)\leq S_\alpha(\tilde q(\ep))  - \left(S(\idv_d/d)-S_1(p)\right)
	\ee
	which is what we want.
	%
	Now we turn to less trivial case when  $\|q -\idv_d/d\|_1 \geq \ep$. We write 
	\be
	S_\infty(\tilde q(\ep)) - S_\infty(q)=- \ln\left((1-\ep) q_{\max} + \ep/d\right) + \ln q_{\max}= 
	- \ln\left((1-\ep)  + \frac{\ep}{d q_{\max}}\right).
	\ee
	By lemma \ref{lem:pmax} and $e^{-x}\geq 1-x$  we bound it further as follows 
	\be
	S_\infty(\tilde q(\ep)) - S_\infty(q) \geq -\ln\left(1-\ep\left(1 -\frac{1}{1+\ep/2}\right)\right) \geq \ep \frac{\ep/2}{1+\ep/2}
	\geq \frac{\ep^2}{4}
	\ee
	We now use the fact that  for $\alpha>1$ 
	\be
	S_\alpha(p) \leq \frac{\alpha}{1-\alpha} S_{\infty}(p).
	\ee
	We get 
	\begin{equation}
	S_\alpha(q_\ep) \geq S_\infty(q_\ep) \geq S_\infty(q) +\frac{\ep^2}{4} \geq 
	S_\infty(p) +\frac{\ep^2}{4}  - \eta_\infty
	\geq 
	\frac{\alpha-1}{\alpha} S_\alpha(p)  +\frac{\ep^2}{4}- \eta_\infty \geq 
	S_\alpha(p) - \frac{\ln d}{\alpha}  +\frac{\ep^2}{4}- \eta_\infty,
	\end{equation}
	where the second inequality holds by assumption of $(i)$, while the last inequality comes from 
	$S_\alpha(p)\leq \ln d$ for any $p\in\cP_d$.
	Thus, provided 
	\be
	\label{eq:lndalpha}
	- \frac{\ln d}{\alpha}  +\frac{\ep^2}{4}> 2 \eta_\infty,
	\ee
	we have $S_\alpha(q_\ep)>S_\alpha(p)+\eta_\infty$. 
	If we now take $\ep=\ep_\infty(\alpha)$ we see that \eqref{eq:lndalpha} is satsifed, hence 
	we obtain that for $\|q-\idv_d/d\|_1 \geq \ep_\infty(\alpha)$
	\be
	S_\alpha(p) \leq S_\alpha(\tilde q(\ep)) -\eta_\infty. 
	\ee
	Using this and \eqref{eq:leq-ep} we get 
	that for arbitrary $q$ we have 
	\be
	S_\alpha(p) \leq S_\alpha(\tilde q(\ep)) -\min\left\{\eta_\infty, \ln d-S_1(p)\right\}.
	\ee
	This ends the proof of part (i).
	
	{\it Proof of (ii)}. 
	Since $T_\alpha$ is concave function (in probability distributions) for all  $\alpha>0$ 
	from lemma \ref{lem:approx-strict} we obtain
	that for $\ep$ given by 
	\be
	\ep=\min_\delta\max\left\{\delta, \frac{2 \eta}{T_\alpha(\idv_d/d)-\max_{p} T_\alpha(p)}\right\},
\end{equation}  
where maximum is taken over all $p\in\cP_d$ satisfying $\|p-\idv_d/d\|_1 \geq \delta/2$, we have 
\be
T_\alpha(p)\leq T_\alpha(\tilde q_\ep) - \min\left\{\eta_\alpha, T_\alpha(\idv_d/d)-T_\alpha(p)   \right\}.
\ee
By Eq.  \ref{eq:Tdto2} (a consequence of lemma \ref{lem:f g}) we have for such $p$
\be
T_\alpha(\idv_d/d)-\max_p T_\alpha(p)\geq \left(\frac{d}{2}\right)^{1-\alpha} \left( T_\alpha(\idv_2/2)- T_\alpha(p_{(2)}^\delta) \right)
\ee
for all $\alpha>0$.  The right hand side can be expressed in terms of Hellinger relative entropy  (Eq. \ref{eq:def Hellinger}) by virtue of Eq. \eqref{eq:THel}
\be
\left(\frac{d}{2}\right)^{1-\alpha} ( T_\alpha(\idv_2/2)- T_\alpha(p_{(2)}^\delta) =d^{1-\alpha}\hcal_\alpha(p_{(2)}^\delta|\idv_2/2).
\ee
We thus obtained that for all $\alpha >0$ 
\be
\ep\leq \min_{\delta}\max \left\{ \delta, \frac{2\eta}{d^{1-\alpha}\hcal_\alpha(p_{(2)}^\delta|\idv_2/2)}\right\}.
\end{equation}
Now from lemma \ref{lem:HvsD} we have
\begin{equation}
\hcal_\alpha(p_{(2)}^\delta|\idv_2/2) \geq \frac{\alpha}{8} \delta^2
\end{equation}
for $\alpha \in (0,1)$ 
and 
\begin{equation}
\hcal_\alpha(p_{(2)}^\delta|\idv_2/2) \geq \frac{1}{8} \delta^2
\end{equation}
for $\alpha>1$. 
Thus  for $\alpha>1$ 
\be
\ep\leq \min_\delta\left\{\delta,(16 \eta d^{\alpha-1})/\delta^2\right\}= (16 \eta d^{\alpha-1})^{1/3}
\ee
and for $\alpha\in(0,1)$ 
\be
\ep\leq \min_\delta\left\{\delta,(16 \eta d^{\alpha-1}\alpha^{-1})/\delta^2\right\}= (16 \eta d^{\alpha-1}\alpha^{-1})^{1/3}.
\ee
The case $\alpha=1$ we get by taking the limit $\alpha\to 1$.

{\it Proof of (iii)}. 
For $\alpha>1$ for all  full rank distributions  $p\in\cP_d$ we have 
\be
S_\alpha(p) \geq d p_{\min}^\alpha 
\ee
where $p_{\min}$ is the minimal element of $p$.  
For distribution $\tilde q(\epzero)=(1-\epzero)q +\epzero \idv_d/d$, since $p_{\min}(\tilde q(\epzero)) \geq \epzero/d$, 
irrespectively of what was $q$ we obtain 
\be
S_\alpha(\tilde q(\epzero)) \geq \frac{1}{1-\alpha} \ln \left(d \left(\frac{\epzero}{d}\right)^\alpha\right)
\ee
Now, since $p$ was assumed to be not full rank, we have 
\be
S_\alpha(p) \leq \ln (d-1)
\ee
Now,  \eqref{eq:epzero} assures that 
\be
\frac{1}{1-\alpha} \ln \left(d \biggl(\frac{\epzero}{d}\biggr)^\alpha\right) -\ln(d-1) \geq   \frac12 \ln\frac{d}{d-1}
\ee
for $\alpha\leq \alpha_0$. 
Using it, we then get
\be
S_\alpha(p) = S_\alpha(\tilde q(\epzero)- \left( S_\alpha(\tilde q(\epzero)  - S_\alpha(p)\right) \leq 
S_\alpha(\tilde q(\epzero) - \left( \frac{1}{1-\alpha} \ln \left(d \biggl(\frac{\epzero}{d}\biggr)^\alpha\right)  - \ln(d-1)\right)
\leq  S_\alpha(\tilde q(\epzero) - \frac12 \ln\frac{d}{d-1}
\ee
i.e.  $S_\alpha(p) \geq S_\alpha(\tilde q) + \frac12 \ln\frac{d}{d-1}$. 
\end{proof}


\begin{lemma}\label{lem:up bound on S and T epsilons}
For all probability distributions $q$, and $\ep\leq \ep'$, $\ep,\ep'\in(0,1)$ the Renyi and Tsalis entropies 
satisfy
\ba 
T_\alpha(\tilde q(\ep)) &\leq 	T_\alpha(\tilde q(\ep'))\label{eq:Tsalis monotonic q tilda}\\
S_\alpha(\tilde q(\ep)) &\leq  	S_\alpha(\tilde q(\ep'))\label{eq:Renyi monotonic q tilda}
\ea 
for all $\alpha\geq 0$, where $\tilde q(\ep)$ (introduced in Prop. \ref{prop:strict}), is given by
\be
\label{eq:tildeq 2}
\tilde q(\ep)=
\left\{
\begin{array}{ll}
\idv_d/d &\text{\rm if } \|q -\idv_d/d\|_1 < \ep \\
(1-\ep) q + \ep \idv_d/d &\text{\rm if } \|q -\idv_d/d\|_1 \geq \ep \\
\end{array}
\right.
\ee
\end{lemma}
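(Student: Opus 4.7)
The plan is to reduce both claimed inequalities to the single statement that $\tilde q(\ep)$ majorizes $\tilde q(\ep')$ whenever $\ep \le \ep'$, and then invoke the standard fact that the R\'enyi and Tsallis entropies are Schur-concave functions of their probability argument on $\cP_d$ for every $\alpha \ge 0$. Both these Schur-concavity facts can be read off directly from the representations $S_\alpha(p)=\ln f_\alpha(p)/(1-\alpha)$ and $T_\alpha(p)=(1-f_\alpha(p))/(\alpha-1)$ together with the Schur convexity/concavity of $f_\alpha$ in the regimes $\alpha>1$ and $\alpha\in(0,1)$ (with the endpoints $\alpha=0,1$ handled by continuity). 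So once the majorization statement is in place, the lemma is immediate.

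First I would split into three cases according to how $\|q-\idv_d/d\|_1$ compares with $\ep$ and $\ep'$. If $\|q-\idv_d/d\|_1 < \ep$, then since $\ep\le\ep'$ we also have $\|q-\idv_d/d\|_1<\ep'$, so $\tilde q(\ep)=\tilde q(\ep')=\idv_d/d$ and both inequalities are trivial equalities. If $\ep \le \|q-\idv_d/d\|_1 < \ep'$, then $\tilde q(\ep')=\idv_d/d$ while $\tilde q(\ep)=(1-\ep)q+\ep\,\idv_d/d$; since $\idv_d/d$ is majorized by every element of $\cP_d$, Schur-concavity of $T_\alpha$ and $S_\alpha$ gives $T_\alpha(\tilde q(\ep))\le T_\alpha(\idv_d/d)=T_\alpha(\tilde q(\ep'))$ and analogously for $S_\alpha$.

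The main case is $\|q-\idv_d/d\|_1 \ge \ep'$, where both $\tilde q(\ep)$ and $\tilde q(\ep')$ are the non-trivial mixtures. Setting $\lambda:=(1-\ep')/(1-\ep)\in[0,1]$, one verifies by direct substitution that
\be
\tilde q(\ep') \;=\; \lambda\,\tilde q(\ep) + (1-\lambda)\,\idv_d/d.
\ee
Since $\idv_d/d$ is majorized by $\tilde q(\ep)$, forming a convex combination with the uniform state can only move us further down in the majorization order, so $\tilde q(\ep)\succ \tilde q(\ep')$. By Schur-concavity of $T_\alpha$ and $S_\alpha$ this yields $T_\alpha(\tilde q(\ep))\le T_\alpha(\tilde q(\ep'))$ and $S_\alpha(\tilde q(\ep))\le S_\alpha(\tilde q(\ep'))$, completing the three cases and hence the proof.

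There is no real obstacle here: the content of the lemma is essentially that mixing any distribution with the uniform distribution cannot decrease a Schur-concave entropy. The only mild care needed is in handling the piecewise definition of $\tilde q(\ep)$ cleanly, which the three-case split above resolves; one could alternatively give a one-line argument by concavity of $T_\alpha$ (valid for all $\alpha>0$) but this does not subsume the R\'enyi case for $\alpha>1$, so the Schur-concavity route is the most uniform.
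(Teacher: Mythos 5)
Your proof is correct and follows essentially the same route as the paper: the same three-way case split on $\|q-\idv_d/d\|_1$ versus $\ep,\ep'$, the same key identity $\tilde q(\ep')=\lambda\,\tilde q(\ep)+(1-\lambda)\,\idv_d/d$ with $\lambda=(1-\ep')/(1-\ep)$, and the same majorization conclusion $\tilde q(\ep)\succ\tilde q(\ep')$ from mixing with the uniform distribution. The only cosmetic divergence is in the last step: you appeal to Schur-concavity of $T_\alpha$ and $S_\alpha$ separately, whereas the paper proves the R\'enyi inequality by Schur-concavity and then obtains the Tsallis inequality from it via the monotone functional relation $T_\alpha=\bigl(\me^{(1-\alpha)S_\alpha}-1\bigr)/(1-\alpha)$; both are valid and equally short, so this is not a genuinely different argument.
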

\begin{proof}
We will start by proving Eq. \eqref{eq:Renyi monotonic q tilda} first. \\
The proof of Eq. \eqref{eq:Renyi monotonic q tilda} will be divided into two sub cases. We start with the easiest case.
\begin{itemize}
\item [Case 1:] $\|q -\idv_d/d\|_1 < \ep$\\
It follows that $S_\alpha(\tilde q(\ep))= S_\alpha(\idv_d/d)$ and also since $\|q -\idv_d/d\|_1 < \ep'$ we have  $S_\alpha(\tilde q(\ep'))= S_\alpha(\idv_d/d)$ and thus Eq. \eqref{eq:Renyi monotonic q tilda} holds for this case.
\item [Case 2:] $\|q -\idv_d/d\|_1 \geq  \ep$. \\
It follows that $S_\alpha(\tilde q(\ep))= S_\alpha\big((1-\ep)q+\ep\idv_d/d\big)$. We now have to further sub-divide into two possibilities.
\begin{itemize}
	\item [Case 2.1:] $\|q -\idv_d/d\|_1   <\ep'$ $\implies$  $S_\alpha(\tilde q(\ep'))= S_\alpha(\idv_d/d)$\\
	\item [Case 2.2:] $\|q -\idv_d/d\|_1   \geq \ep'$ $\implies$  $S_\alpha(\tilde q(\ep'))= S_\alpha\big((1-\ep')q+\ep'\idv_d/d\big)$.
\end{itemize}
Therefore, for Case 2 we have to prove that 
\ba 
S_\alpha\big((1-\ep)q+\ep\idv_d/d\big) &\leq S_\alpha\big((1-\ep')q+\ep'\idv_d/d\big) \label{eq:seond in proce xAW 2.1}\\
S_\alpha\big((1-\ep)q+\ep\idv_d/d\big) &\leq S_\alpha(\idv_d/d)\label{eq:seond in proce xAW 2.2}
\ea 
both hold under the quantifies stated in the Lemma.  We first observe that Eq. \eqref{eq:seond in proce xAW 2.1} implies Eq. \eqref{eq:seond in proce xAW 2.2} by setting $\ep'=1$. Thus we only need to prove Eq. \eqref{eq:seond in proce xAW 2.1}. For this, we first observe that 
\begin{equation}
(1-\ep')q+\ep'\idv_d/d = \gamma\big((1-\ep)q+\ep\idv_d/d\big)+ (1-\gamma)\idv_d/d,
\end{equation}
where $\gamma:=(1-\ep')/(1-\ep)\in[0,1]$. Hence the vector $(1-\ep')q+\ep'\idv_d/d$ is a mixture of $(1-\ep)q+\ep\idv_d/d$ with the uniform distribution $\idv_d/d$. As such $(1-\ep)q+\ep\idv_d/d$ majorises $(1-\ep')q+\ep'\idv_d/d$, and since the Renyi entropy is Schur concave for all $\alpha \geq 0$, Eq. \eqref{eq:seond in proce xAW 2.1} follows by Schur concavity. 
\end{itemize} 
We now need to prove Eq. \eqref{eq:Tsalis monotonic q tilda} to complete the proof of the lemma. From the definitions of the Renyi and Tsalis entropy, 
it follows
\begin{equation}
T_\alpha\left(S_\alpha\right)=\frac{\exp\left(\left(\frac{1-\alpha}{\alpha}\right) S_\alpha\right)-1}{1-\alpha},
\end{equation}
which is manifestly a non-decreasing function for all $\alpha>0$, thus $S_\alpha(p)\leq S_\alpha(p')$ iff $T_\alpha(p)\leq T_\alpha(p')$, and Eq. \eqref{eq:Tsalis monotonic q tilda} follows from Eq. \eqref{eq:Renyi monotonic q tilda}.
\end{proof}

\begin{lemma}
\label{lem:cont}
Let $p,p'\in\cP_d$
Denote $\ep=\|p-p'\|_1 $ 
. Then
\bei
\item For $\alpha\in (0,1/2]$ we have 
\be 
\label{eq:Thalf}
|T_\alpha(p) - T_\alpha(p')|\leq  6 d \left(\frac{\ep}{d}\right)^\alpha     
\ee 
\item for $\alpha\in [1/2,2]$ we have 
\be 
\label{eq:Thalf2}
|T_\alpha(p) - T_\alpha(p')|\leq -32 
d \sqrt{\frac{\ep}{d}} \ln \sqrt{\frac{\ep}{d}} \quad  \text{if } \ep\leq \frac1{32 d^2}
\ee 
\item for $\alpha\in [2,\infty)$ we have 
\be 
\label{eq:Tgeq2} 
|T_\alpha(p) - T_\alpha(p')|\leq 6 \sqrt{d\ep}
\ee 
\item We have 
\be 
\label{eq:Sinfty-cont}
|S_\infty(p)-S_\infty(p')|\leq d \ep
\ee 
\eei
\end{lemma}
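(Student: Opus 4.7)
The proof proceeds by treating the four bounds separately, ranging from the simplest to the most delicate.

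For the $S_\infty$ bound in Eq.~\eqref{eq:Sinfty-cont}, I would use $S_\infty(p) = -\ln\max_i p_i$ together with the elementary fact that $\max_i p_i \in [1/d,1]$. The mean value theorem applied to $-\ln$ on $[1/d,1]$ (whose derivative has magnitude at most $d$) gives $|S_\infty(p)-S_\infty(p')| \leq d\,|\max_i p_i - \max_j p'_j|$, and the difference of the maxima is bounded by $\|p-p'\|_\infty \leq \|p-p'\|_1 = \ep$, yielding the claimed $d\ep$.

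For $\alpha \in (0,1/2]$ (Eq.~\eqref{eq:Thalf}), my plan is to use the subadditivity $(a+b)^\alpha \leq a^\alpha + b^\alpha$ valid for $a,b \geq 0$ and $\alpha \in (0,1)$. Applied with $a = \min(p_i,p'_i)$, $b = |p_i-p'_i|$, it yields the pointwise H\"older-type estimate $|p_i^\alpha - p'^\alpha_i| \leq |p_i-p'_i|^\alpha$. Summing and applying Jensen's inequality to the concave map $t \mapsto t^\alpha$ gives $\sum_i |p_i^\alpha - p'^\alpha_i| \leq d^{1-\alpha}\ep^\alpha = d(\ep/d)^\alpha$, and dividing by $|1-\alpha| \geq 1/2$ produces $|T_\alpha(p)-T_\alpha(p')| \leq 2d(\ep/d)^\alpha$, which is stronger than the claimed $6d(\ep/d)^\alpha$. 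For $\alpha \in [2,\infty)$ (Eq.~\eqref{eq:Tgeq2}) the mean value theorem, combined with $p_i,p'_i \in [0,1]$ and $\alpha \geq 1$, gives $|p_i^\alpha - p'^\alpha_i| \leq \alpha\max(p_i,p'_i)^{\alpha-1}|p_i-p'_i| \leq \alpha|p_i-p'_i|$; summing yields $\sum_i |p_i^\alpha - p'^\alpha_i| \leq \alpha\ep$, and dividing by $\alpha-1 \geq 1$ produces $|T_\alpha(p)-T_\alpha(p')| \leq \alpha\ep/(\alpha-1) \leq 2\ep \leq 6\sqrt{d\ep}$, where the last inequality follows from $\ep \leq 2 \leq 9d$.

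The hard part is the middle regime $\alpha \in [1/2,2]$ (Eq.~\eqref{eq:Thalf2}), where the $1/|1-\alpha|$ factor appearing in both previous arguments diverges at $\alpha = 1$. My plan is to work directly with the function $g_\alpha(t) := (t-t^\alpha)/(\alpha-1)$ (with $g_1(t) = -t\ln t$ by continuity), so that $T_\alpha(p) = \sum_i g_\alpha(p_i)$; the family $\{g_\alpha\}_{\alpha\in[1/2,2]}$ is jointly continuous in $(\alpha,t)$ and uniformly bounded on $[0,1]$. To obtain a modulus of continuity for $g_\alpha$ that is uniform in $\alpha$ near $1$, I would split according to whether $\min(t,t')$ is larger or smaller than $\sqrt{|t-t'|}$: in the ``large'' regime a mean value theorem argument controls $|g_\alpha(t)-g_\alpha(t')|$ by $C\,(|\ln\sqrt{|t-t'|}|+1)\,|t-t'|$, while in the ``small'' regime one bounds $|g_\alpha(t)|$ and $|g_\alpha(t')|$ separately using the near-$(-t\ln t)$ behaviour of $g_\alpha$ to obtain $|g_\alpha(t)|,|g_\alpha(t')| \leq C'\sqrt{|t-t'|}\,\bigl|\ln|t-t'|\bigr|$. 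Optimising over the threshold yields the uniform-in-$\alpha$ estimate $|g_\alpha(t)-g_\alpha(t')| \leq C''\sqrt{|t-t'|}\,\bigl|\ln|t-t'|\bigr|$ valid whenever $|t-t'|$ is below some absolute constant. Summing over $i$ and applying Jensen's inequality to the concave, increasing map $x \mapsto \sqrt{x}\,|\ln x|$ (whose concavity and monotonicity require small arguments, which is precisely where the hypothesis $\ep \leq 1/(32d^2)$ enters, guaranteeing $\ep/d$ lies in the admissible range) promotes the pointwise bound into $\sum_i|g_\alpha(p_i)-g_\alpha(p'_i)| \leq C'''\,d\sqrt{\ep/d}\,\bigl|\ln\sqrt{\ep/d}\bigr|$, and a careful tracking of constants delivers the coefficient $32$. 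Establishing this uniform H\"older-log estimate for $g_\alpha$ is the crucial technical step that ties the whole lemma together.
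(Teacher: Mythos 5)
Your route is genuinely different from the paper's, most visibly in the hard middle regime $\alpha\in[1/2,2]$. The paper invokes a separately established Tsallis continuity theorem whose proof applies the mean value theorem in the $\alpha$-variable to $\|p\|_\alpha^\alpha$, rewrites the $\alpha$-derivative as a Shannon entropy of a reweighted distribution, and then appeals to Fannes. You instead work pointwise in the state variable: decompose $T_\alpha(p)=\sum_i g_\alpha(p_i)$ with $g_\alpha(t)=(t-t^\alpha)/(\alpha-1)$, derive a modulus of continuity for $g_\alpha$ uniform over $\alpha\in[1/2,2]$, and sum with Jensen. Your treatments of the other three bullet points are elementary, self-contained, and correct; your $\alpha\ge2$ argument even gives $2\ep$, which is sharper than the claimed $6\sqrt{d\ep}$, and your $(0,1/2]$ case gives $2d(\ep/d)^\alpha$ versus the paper's $4d(\ep/d)^\alpha$.

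The middle regime, however, has a genuine gap in the intermediate estimates: you repeatedly invoke the ``near-$(-t\ln t)$ behaviour'' of $g_\alpha$, but that description is only accurate for $\alpha$ close to $1$. For $\alpha$ near $1/2$ one has $g_\alpha(t)\sim t^\alpha/(1-\alpha)$ and $g_\alpha'(s)\sim s^{\alpha-1}$, a \emph{polynomial} rather than logarithmic singularity. Concretely, with $\delta=|t-t'|$, in your ``small'' regime $\min(t,t')<\sqrt{\delta}$ the value $\max(t,t')$ can be as large as $\sqrt{\delta}+\delta$, and $g_{1/2}(\sqrt{\delta})\approx 2\delta^{1/4}$, which is \emph{not} $O(\sqrt{\delta}\,|\ln\delta|)$: bounding $|g_\alpha(t)|$ and $|g_\alpha(t')|$ separately cannot produce the claimed estimate. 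Likewise your ``large'' regime bound $|g_\alpha'(s)|\le C(|\ln s|+1)$ fails at $\alpha=1/2$, where $|g_{1/2}'(s)|=|s^{-1/2}-2|$; the mean value theorem there gives $\delta\cdot\delta^{-1/4}=\delta^{3/4}$ (still $\le\sqrt{\delta}\,|\ln\delta|$, but not via the stated step).

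The final uniform modulus you aim for is nonetheless true and can be reached more cleanly through concavity. Each $g_\alpha$ is concave on $[0,1]$ since $g_\alpha''(t)=-\alpha t^{\alpha-2}<0$, so the increment $g_\alpha(t+\delta)-g_\alpha(t)$ is monotone decreasing in $t$ and $\max_t|g_\alpha(t+\delta)-g_\alpha(t)|$ is attained at an endpoint, namely $\max\{g_\alpha(\delta),\,g_\alpha(1-\delta)\}$, of which $g_\alpha(\delta)$ dominates for small $\delta$. A direct check shows $\sup_{\alpha\in[1/2,2]}g_\alpha(\delta)$ is attained at $\alpha=1/2$, where $g_{1/2}(\delta)=2(\sqrt{\delta}-\delta)\le 2\sqrt{\delta}$, and $2\sqrt{\delta}\le C\sqrt{\delta}\,|\ln\delta|$ once $\delta$ is below an absolute constant, which the hypothesis $\ep\le 1/(32d^2)$ guarantees since every $|p_i-p'_i|\le\ep$. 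With that substitution your Jensen step goes through. As currently written, though, the step you yourself single out as ``the crucial technical step that ties the whole lemma together'' is not merely unfinished — the sketched derivation of the modulus is incorrect for $\alpha$ bounded away from $1$.
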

\begin{proof} We will prove Eqs. \eqref{eq:Thalf} to \eqref{eq:Sinfty-cont} individual.\\

{\it Proof of \eqref{eq:Thalf}.} 
from \eqref{eq:Tsalis discontinous at 1 bound} we have  for $\alpha\in(0,1)$
\be 
|T_\alpha(p) - T_\alpha(p')|\leq  2\frac{ \lceil \alpha \rceil}{|\alpha-1|} d \left(\frac{ \ep}{d} \right)^{\alpha/\lceil \alpha \rceil}
\ee 
We then estimate for $\alpha\in(0,1/2]$
\be  \label{eq:cont leq 1}
2\frac{ \lceil \alpha \rceil}{|\alpha-1|} d \left(\frac{ \ep}{d} \right)^{\alpha/\lceil \alpha \rceil}\leq 
4 d \left(\frac{\ep}{d} \right)^{\alpha} 
\ee 
{\it Proof of \eqref{eq:Thalf2}.} 
From \eqref{eq:Tsalis uniform around 1 in theorem alpha 1 infty} we have for $\alpha>1$  and 
$\ep\leq \frac{1}{2 e \lceil \alpha 	\rceil d^\alpha }$
\be 
\left| T_\alpha(p)- T_\alpha(p') \right|    \leq  8 \left[ \ep\ln \left(\frac{d^{3/2}}{4}\right) - \ep \ln \ep\right]
\ee 
We then have 
\be 
8 \left[ \ep\ln \left(\frac{d^{3/2}}{4}\right) - \ep \ln \ep\right] \leq - 12 d 	 \frac{\ep}{d} \ln\frac{\ep}{d}
\ee 
so that for $\ep \leq \frac{1}{8 d^2}$ and $1\leq \alpha\leq 2 $
\begin{equation}
\label{eq:12}
\left| T_\alpha(p)- T_\alpha(p') \right| \leq  - 12 d 	 \frac{\ep}{d} \ln\frac{\ep}{d}
\end{equation}	
We now combine it with \eqref{eq:Tsalis uniform around 1 in theorem alpha 0 1} which  for $\alpha\in[1/2,1]$ implies that for $\ep \leq d \left(\frac{1}{2 \me d}\right)^{2}\leq \frac{1}{30 d}$,
\ba
\label{eq:half1}
\left| T_\alpha(p)- T_\alpha(p') \right| \leq &   4\, d \left[ \left(\frac{3}{2 \alpha}+1 \right) \left(\frac{\ep}{d}\right)^\alpha \ln d - \left(\frac{\ep}{d}\right)^\alpha \ln \ep \right] \\ 
\leq & - 32 d 	 \sqrt{\frac{\ep}{d}} \ln\sqrt{\frac{\ep}{d}}.
\ea
Taking worse of the two bounds we get  that for $\alpha\in [1/2,2]$ and $\ep\leq 1/(32 d^2)$ 
\be
\left| T_\alpha(p)- T_\alpha(p') \right| \leq  - 32 d 	 \sqrt{\frac{\ep}{d}} \ln\sqrt{\frac{\ep}{d}}.
\ee
{\it Proof of \eqref{eq:Tgeq2}.} 
From \eqref{eq:Tsalis discontinous at 1 bound} for all $\alpha\in (0,1)\cup (1,\infty)$ we have 
\be
|T_\alpha(p) - T_\alpha(p')|\leq  2\frac{ \lceil \alpha\rceil}{|\alpha-1|} d \left(\frac{ \ep}{d} \right)^{\alpha/\lceil \alpha \rceil}
\ee
Using  $\alpha\geq 2$ and $\ep/d\leq 1$  (always true, since $d\geq 2$ and $\ep \leq 2$) we have
\be
2\frac{  \lceil \alpha\rceil}{|\alpha-1|} d \left(\frac{ \ep}{d} \right)^{\alpha/\lceil \alpha \rceil}\leq 6 d 
\left(\frac{ \ep}{d} \right)^{\alpha/(\alpha+1)}\leq 6 d \sqrt{\frac{\ep}{d}} =6 \sqrt{\ep d}
\ee
{\it Proof of \eqref{eq:Sinfty-cont}.} 
\be
|\ln  p_{\max} - \ln p'_{\max}| =\left| \ln\left( \frac{|p_{\max}-p'_{\max}|}{\min\{p_{\max},p'_{\max}\}}+1\right) \right|
\leq \frac{|p_{\max}-p'_{\max}|}{\min\{p_{\max},p'_{\max}\}} \leq d \ep
\ee

\end{proof}

	\begin{lemma}
	\label{lem:alpha-min-max}
	Under the notation from the proof of Theorem \ref{thm:noemb}
	consider  the expression
	\begin{equation}\label{eq:ep res-lemma}
	 \max \left\{ \varepsilon_\textup{min}(\alphazero), \varepsilon_\textup{max}(\alphamax), \bar\ep_{Tmid}  \right\}
	\end{equation}
	where
	\ba 
	\varepsilon_\textup{min}(\alphazero)& = \max \left\{\bar\ep_{Tmin}(\alphazero), \bar\ep_0(\alphazero)\right\}\\
	\varepsilon_\textup{max}(\alphamax)&= \max \left\{  \bar\ep_{Tmax}(\alphamax), \bar\ep_\infty(\alphamax)  \right\}.
	\ea
	There exists $\alphazero\in(0,1)$ and $\alphamax\in(2,\infty)$ such that 
	\begin{align}
	    \max \left\{ \varepsilon_\textup{min}(\alphazero), \varepsilon_\textup{max}(\alphamax), \bar\ep_{Tmid}  \right\} \leq  \epres(\epemb,d_\Sy, D_\cat)
	\end{align}
	where $\epres(\epemb,d_\Sy, D_\cat)$
	is given by  Eq. \eqref{eq:long ep emb form}, i.e., 
 	\begin{align}
	    \epres(\epemb,d_\Sy, D_\cat):= 5\sqrt{\frac{d_\Sy^{5/3} + 4(\ln d_\Sy D_\cat) \ln d_\Sy}{\ln (1/\epemb)}+ d_\Sy D_\cat \epemb^{1/6} + 5\left( (d_\Sy D_\cat)^2 \sqrt{\frac{\epemb}{d_\Sy D_\cat}} \ln\sqrt{\frac{d_\Sy D_\cat}{\epemb}}  \right)^\frac23}.
	\end{align}
	\end{lemma}
	
	\begin{proof}
	We will start with what appears to be the most significant term, $\varepsilon_\textup{max}(\alphamax)$. Writing it explicitly, using Eqs. \eqref{eq:ep tilde up bound}, \eqref{eq:ep infinity up bound} we have
	\begin{equation}
	\varepsilon_\textup{max}(\alphamax)= \max \left\{  \big(96   \sqrt{D\epemb}  D^\alphamax\big)^\frac13,  4 \sqrt{\frac{\ln d_\Sy}{\alphamax}+D \epemb} \right\}.
	\end{equation}
	We now re-parametrizing $\alphamax$ in terms of a parameter $\beta_0>0$ via $\alphamax= -\beta_0 {(\ln \epemb)}/{(2\ln D)}$, to find  
	\begin{equation} \label{eq:reparametrisation}
	\max\left\{  \big(96   \sqrt{D\epemb}  D^\alphamax\big)^\frac13,  4 \sqrt{\frac{\ln d_\Sy}{\alphamax}+D \epemb} \right\}= \max\left\{  (96)^{1/3} D^{1/6} \epemb^{(1-\beta_0)/6}   ,  4 \sqrt{\frac{2(\ln D)(\ln d_\Sy)}{-\beta_0 \ln \epemb}+D \epemb} \right\}.
	\end{equation}
	With this parametrisation, we see that we need $(1-\beta_0)>0$ if the 1st term in the square brackets is to tend to zero as $\epemb$ goes to zero. Taking this and the requirement $\alphamax\geq 2$ into account we have 
	\begin{equation}\label{eq:beta 0 constraints}
	\frac{4 \ln D}{-\ln \epemb} \leq \beta_0 < 1.
	\end{equation}
	From Eqs. \eqref{eq:beta 0 constraints} and \eqref{eq:reparametrisation} we see that we need $\epemb$ to decay faster than any power of $D$. Specifically, it has to be of the form $\epemb(D)=\me^{-f(D) (\ln D)}$ where $\lim_{D\rightarrow+\infty} f(D)=+\infty$. Taking this into account, a reasonably good bound can be deduced by observing
	\begin{equation}
	(96)^{1/3} D^{1/6} \epemb^{(1-\beta_0)/6} \leq 5 \sqrt{D} \sqrt{\epemb^{(1-\beta_0)/3}}<5  \sqrt{\frac{2(\ln D)(\ln d_\Sy)}{-\beta_0 \ln \epemb} + D \epemb^{(1-\beta_0)/3} }
	\end{equation}
	and 
	\begin{equation}
	4 \sqrt{\frac{2(\ln D)(\ln d_\Sy)}{-\beta_0 \ln \epemb}+D \epemb} < 5  \sqrt{\frac{2(\ln D)(\ln d_\Sy)}{-\beta_0 \ln \epemb} + D \epemb^{(1-\beta_0)/3} }.
	\end{equation}
	Given the constraints which $\beta_0$ must satisfy, its exact choice is of little relevance. We therefore set it to $\beta_0=1/2$. Thus we have for the appropriate $\alphamax$, 
	\begin{equation}\label{eq:ep max with contraint}
	\varepsilon_\textup{max}(\alphamax) \leq 5  \sqrt{\frac{2(\ln D)(\ln d_\Sy)}{-\beta_0 \ln \epemb} + D \epemb^{(1-\beta_0)/3} }= 5  \sqrt{\frac{4(\ln D)(\ln d_\Sy)}{- \ln \epemb} + D \epemb^{1/6} } ,\quad \text{if }\;\;\; \frac{8 \ln D}{-\ln \epemb} \leq 1.
	\end{equation}
	
	We will now deal with the term $\varepsilon_\textup{min}(\alphazero)$ which plunging in Eqs. \eqref{eq:ep tilde up bound} and \eqref{eq:ep zero up bound},  reads
	\begin{equation}
	\varepsilon_\textup{min}(\alphazero)^3 =\max \left\{   96 D\, \frac{\epemb^\alphazero}{\alphazero}  ,  \left[ \left( \frac{d_\Sy-1}{d_\Sy}\right)^{3/2} \right]^{1/\alphazero}     \right\} \leq \frac{1}{\alphazero} \max \left\{   96 D\, \epemb^\alphazero  ,  \left[ \left( \frac{d_\Sy-1}{d_\Sy}\right)^{3/2} \right]^{1/\alphazero}     \right\}.
	\end{equation}
	We will now solve for $\alphazero$ the equation 
	\begin{equation}\label{eq:D epsilon alpha min}
	96 D\, \epemb^\alphazero =  \left[ \left( \frac{d_\Sy-1}{d_\Sy}\right)^{3/2} \right]^{1/\alphazero} ,
	\end{equation}
	which can be written as 
	\begin{equation}
	\alpha^2_\textup{min} \ln \left( \epemb \right) +\alphazero \ln \left(96 D \right) - \frac{3}{2}\ln\left( \frac{d_\Sy-1}{d_\Sy} \right) .
		\end{equation}
	Noting that $\alphazero\in(0,1)$ we take only the non-negative root, namely
	\begin{equation}\label{eq:alpha min}
	\alpha^{\large{*}}_\textup{min}=\frac{\ln \left( 96 D \right) +\sqrt{ \ln^2\left( 96 D \right)+ 6 \left( \ln\epemb \right) \left(\ln (1-1/d_\Sy) \right)} }{-2 \ln(\epemb)}.
	\end{equation} 
	Now note that the l.h.s. of Eq. \eqref{eq:D epsilon alpha min} is monotonically increasing with $\alphazero$ while the r.h.s. is monotonically decreasing with $\alphazero$. Therefore, since $\alphazero\in(0,1)$, we conclude 
	\begin{equation}\label{eq:ep min up bound}
	\varepsilon_\textup{min} (\alphazero)^3 \leq \frac{96 D \epemb^\gamma}{\gamma}, \quad \gamma =
	\begin{cases}
	\alpha^{\large{*}}_\textup{min} \quad &\text{if } \alpha^{\large{*}}_\textup{min}<1\\ 
	1 \quad &\text{otherwise.}
	\end{cases}
	\end{equation}
	
	Finally we will derive conditions for when $\alpha^{\large{*}}_\textup{min}<1$. To do so, we generalise the constraint in Eq. \ref{eq:ep max with contraint} to 
	\begin{equation}
	\frac{\beta_1 \ln D}{-\ln \epemb} \leq 1, \quad 8 \leq \beta_1 .
	\end{equation}
	Eq. \eqref{eq:alpha min} can now be upper bounded by
	\ba 
	\alpha^{\large{*}}_\textup{min} &= \frac{\ln(96 D)}{-2\ln \epemb} +\frac{1}{2}\sqrt{\left(\frac{\ln(96 D)}{\ln\epemb}\right)^2 + 6\; \frac{\ln(1-1/d_\Sy)}{\ln\epemb}} \leq \frac{\ln(96 D)}{2\beta_1 \ln D} +\frac{1}{2}\sqrt{\left(\frac{\ln(96 D)}{\beta_1 \ln D}\right)^2 - 6\; \frac{\ln(1-1/d_\Sy)}{\beta_1 \ln D}}\\
	&= \frac{\ln(96)}{2\beta_1 \ln (D_\cat d_\Sy)}+\frac{1}{2\beta_1} +\frac{1}{2}\sqrt{\left(\frac{\ln(96 )}{\beta_1 \ln (D_\cat d_\Sy)}+\frac{1}{\beta_1}\right)^2 - 6\; \frac{\ln(1-1/d_\Sy)}{\beta_1 \ln(D_\cat d_\Sy)}} \label{eq:alpa min lower bound}
	\ea 
	Thus recalling $D= D_\cat d_\Sy$ and noting that Eq. \eqref{eq:alpa min lower bound} is monotonically decreasing in $D$ and $d_\Sy$, we conclude that for all $D_\cat\geq D^{\large{*}}_\cat$ and $d_\Sy\geq d^{\large{*}}_\Sy$,
	\begin{equation}
	\alpha^{\large{*}}_\textup{min} \leq  \frac{\ln(96)}{2\beta_1 \ln (D^{\large{*}}_\cat d^{\large{*}}_\Sy)}+\frac{1}{2\beta_1} +\frac{1}{2}\sqrt{\left(\frac{\ln(96 )}{\beta_1 \ln (D^{\large{*}}_\cat d^{\large{*}}_\Sy)}+\frac{1}{\beta_1}\right)^2 - 6\; \frac{\ln(1-1/d^{\large{*}}_\Sy)}{\beta_1 \ln(D^{\large{*}}_\cat d^{\large{*}}_\Sy)}}.\label{eq:up bound alpah max final}
	\end{equation}
	Therefore, for $\beta_1=10$, $D^{\large{*}}_\cat=1$, $d^{\large{*}}_\Sy=2$; Eq. \eqref{eq:up bound alpah max final} gives $\alpha^{\large{*}}_\textup{min} \leq 0.921\ldots$. For larger values of $D^{\large{*}}_\cat$, $d^{\large{*}}_\Sy$, we can use $\beta_1=8$ and still achieve $\alpha^{\large{*}}_\textup{min}\leq 1$.
	We will thus assume 
	\begin{equation}\label{eq:containt value 10}
	\frac{10 \ln D}{-\ln \epemb} \leq 1
		\end{equation}
	in the rest of this proof. We can now write Eq. \eqref{eq:ep min up bound} in the form 
	\ba 
	\varepsilon_\textup{min} (\alphazero)^3 &\leq \frac{96 D \epemb^{\alpha^{\large{*}}_\textup{min}}}{\alpha^{\large{*}}_\textup{min}} = \frac{96 D (-2) \ln(\epemb)}{\ln \left( 96 D \right) +\sqrt{ \ln^2\left( 96 D \right)+ 6 \left( \ln\epemb \right) \left(\ln (1-1/d_\Sy) \right)} } \;\me^{-\alpha^{\large{*}}_\textup{min} \ln \epemb }\\
	& =\frac{96 D (-2) \ln(\epemb)}{\ln \left( 96 D \right) +\sqrt{ \ln^2\left( 96 D \right)+ 6 \left( \ln\epemb \right) \left(\ln (1-1/d_\Sy) \right)} } \sqrt{\me^{-\ln \left( 96 D \right) -\sqrt{ \ln^2\left( 96 D \right)+ 6 \left( \ln\epemb \right) \left(\ln (1-1/d_\Sy) \right)} }}.\label{eq:ep min up bound 2}
	\ea 
	We now observe that in the large $D$ limit, if $\ln^2\left( 96 D \right)+ 6 \left( \ln\epemb \right) \left(\ln (1-1/d_\Sy) \right) \approx \ln^2\left( 96 D \right)$ then the upper bound Eq. \eqref{eq:ep min up bound 2} is approximately proportional to $(-\ln \epemb)/ \ln D$. This quantity is necessarily greater than 10 due to constraint Eq. \eqref{eq:containt value 10}, and thus cannot be arbitrarily small. We will thus demand
	\begin{equation}
	\left( \ln\epemb \right) (\ln (1-1/d_\Sy) \geq \ln^2 ( D), \label{eq:2nd constaint}
	\end{equation}
	 in order to have a non-trivial bound.\footnote{This choice is so that we can use lemma \ref{lem:easy inequality}, but one could make other choices if one made a different version of the bound.} Thus using Lemma \ref{lem:easy inequality}, if follows from Eq. \eqref{eq:ep min up bound 2},
	\ba 
	\varepsilon_\textup{min} (\alphazero)^3 &\leq 
	\frac{96 D (-2) \ln(\epemb)}{\sqrt{ 6 \left( \ln\epemb \right) \left(\ln (1-1/d_\Sy) \right)} } \sqrt{\me^{-\ln \left( 96 D \right)}\, \me^{ -\sqrt{ \ln^2\left( 96 D \right)+ 6 \left( \ln\epemb \right) \left(\ln (1-1/d_\Sy) \right)} }}\\
	&\leq 
	\frac{96 D (-2) \ln(\epemb)}{\sqrt{ 6 \left( \ln\epemb \right) \left(\ln (1-1/d_\Sy) \right)} } \sqrt{\me^{-\ln \left( 96 D \right)}\, \frac{5^6 \,\me^{ -\sqrt{ \ln^2\left( 96 D \right)} }   }{\left( (\ln \epemb)(\ln(1-1/d_\Sy))  \right)^4} }\\
	&=2\frac{5^3}{\sqrt{6}} \frac{1}{(-\ln (1-1/d_\Sy))^{5/2} (-\ln\epemb)^{3/2} }.\\
	\ea
	Furthermore, we can use the standard inequality $\ln(1-x)\leq -x$ for all $x<1$ (which is sharp for small $x$.) with the identification $x=1/d_\Sy$, to achieve
	\begin{equation}
	\varepsilon_\textup{min} (\alphazero)^3 \leq  2\frac{5^3}{\sqrt{6}} \frac{d_\Sy^{5/2}}{(-\ln\epemb)^{3/2} }.
	\end{equation}
	Thus
	\ba
	\varepsilon_\textup{min} (\alphazero) &\leq 5\frac{ d_\Sy^{5/6}}{ \sqrt{-\ln\epemb} }
	.\label{eq:ep min up bound 3}
	\ea
	Therefore, by comparing Eqs.\eqref{eq:ep max with contraint} and \eqref{eq:ep min up bound 3}, we see that
	\begin{equation}
	\max\left\{ \varepsilon_\textup{min} (\alphazero), \varepsilon_\textup{max} (\alphamax) \right\} \leq 5\sqrt{\frac{f(D_\cat,d_\Sy)}{-\ln\epemb}+ D \epemb^{1/6}},
	\end{equation}
	where
	\begin{equation}
	f(D_\cat, d_\Sy):=\max\left\{ 4\ln(D_\cat d_\Sy) \ln d_\Sy, \, d_\Sy^{5/3}   \right\}=
	\begin{cases}
	4\ln(D_\cat d_\Sy) \ln d_\Sy &\text{if } D_\cat \geq \frac{1}{d_\Sy} \exp\left[\frac{d_\Sy^{5/3} }{4 \ln d_\Sy}\right]\\
	d_\Sy^{5/3} & \text{otherwise}
	\end{cases}
	\end{equation}
	as long as constraints Eqs. \eqref{eq:containt value 10}, \eqref{eq:2nd constaint} are both satisfied. Taking into account the expression for $\bar \ep_{Tmid}$ in Eq. \eqref{eq:ep tilde up bound}, have the bound,
	\ba
	\max \left\{ \varepsilon_\textup{min}(\alphazero), \varepsilon_\textup{max}(\alphamax), \bar\ep_{Tmid}  \right\} &\leq  5\sqrt{\frac{f(D_\cat,d_\Sy)}{-\ln\epemb}+ D \epemb^{1/6} + 5^{-2}\Big(-1024\, D^2 \sqrt{\frac{\epemb}{D}} \ln\sqrt{\frac{\epemb}{D}}  \Big)^\frac23}\\
	&< 5\sqrt{\frac{f(D_\cat,d_\Sy)}{-\ln\epemb}+ D \epemb^{1/6} + 5\left( D^2 \sqrt{\frac{\epemb}{D}} \ln\sqrt{\frac{D}{\epemb}}  \right)^\frac23}\\
	&< 5\sqrt{\frac{d_\Sy^{5/3} + 4(\ln D) \ln d_\Sy}{-\ln\epemb}+ D \epemb^{1/6} + 5\left( D^2 \sqrt{\frac{\epemb}{D}} \ln\sqrt{\frac{D}{\epemb}}  \right)^\frac23},\label{eq:last line}
	\ea 
	if constraints Eqs. \eqref{eq:containt value 10}, \eqref{eq:2nd constaint} are satisfied. Thus if we set $\epres$ (defined via Eq. \ref{eq:ep res}) to
	\begin{equation}\label{eq:old ep res bound}
	\epres =5\sqrt{\frac{d_\Sy^{5/3} + 4(\ln D) \ln d_\Sy}{-\ln\epemb}+ D \epemb^{1/6} + 5\left( D^2 \sqrt{\frac{\epemb}{D}} \ln\sqrt{\frac{D}{\epemb}}  \right)^\frac23},
	\end{equation}
	we conclude the proof.
\end{proof}

\subsubsection{Auxiliary lemmas}
\begin{lemma}\label{lem:easy inequality}
Let $x \geq  \ln^2 (D)$, then for all $D\geq 2$,
\begin{equation}
\me^{-\sqrt{\ln^2(96 D) +6x}} \leq 5^6\, \frac{\me^{-\sqrt{\ln^2(96 D)}}}{x^4}.
\end{equation}
\end{lemma}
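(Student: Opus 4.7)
The plan is to take logarithms of both sides and reduce the target inequality to
$$\sqrt{a^{2}+6x}-a \;\geq\; 4\ln x-6\ln 5,$$
where $a:=\ln(96D)>0$ (positivity holds since $D\geq 2$ gives $a\geq\ln 192>0$, so in particular $\sqrt{a^{2}}=a$). The hypothesis $x\geq\ln^{2}D$, combined with the identity $\ln D=a-\ln 96$, supplies the essential one-sided control
$$a \;\leq\; \sqrt{x}+\ln 96.$$

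First I would dispose of the regime where the right-hand side is nonpositive, namely $x\leq 5^{3/2}$: there $4\ln x-6\ln 5\leq 0$ while the left-hand side is $\geq 0$, so the inequality is immediate. For $x>5^{3/2}$ I would rationalize,
$$\sqrt{a^{2}+6x}-a \;=\; \frac{6x}{\sqrt{a^{2}+6x}+a}\;\geq\; \frac{6x}{2a+\sqrt{6x}},$$
using the elementary fact $\sqrt{a^{2}+6x}\leq a+\sqrt{6x}$. Substituting the upper bound $a\leq\sqrt{x}+\ln 96$ (plugging it into $a$ only weakens the lower bound, which is the direction we want) reduces the claim to the one-variable inequality, with $y:=\sqrt{x}$,
$$\frac{6y^{2}}{(2+\sqrt{6})\,y+2\ln 96} \;\geq\; 8\ln y-6\ln 5,\qquad y>5^{3/4}.$$

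The remaining step is to verify this one-variable inequality. Clearing the positive denominator turns it into a mixed polynomial/logarithmic inequality; the $y\ln y$ and $\ln y$ terms can be controlled by the elementary bound $\ln y\leq\sqrt{y}$ (valid for all $y\geq 1$), after which the inequality becomes polynomial in $\sqrt{y}$ and is readily checked above an explicit threshold of the order $y\geq 36$. The bounded residual window $5^{3/4}\leq y\leq 36$ can then be dispatched by showing that the difference of the two sides is positive throughout — either by checking monotonicity of this difference directly, or by observing that on this window the left-hand side is of order $y$ while the right-hand side grows only like $\ln y$.

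The main obstacle is precisely this one-variable verification in the intermediate range: cruder approaches such as $\sqrt{a^{2}+6x}-a\geq\sqrt{6x}-a$ combined with $a\leq\sqrt{x}+\ln 96$ fall short near $y\approx 5$, so the rationalized form is genuinely necessary. The constant $5^{6}$ on the right-hand side of the target bound is exactly what provides enough slack (through the $6\ln 5$ that appears after taking logs) for the argument to close.
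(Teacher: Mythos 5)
Your reduction is correct and genuinely different from the paper's. The paper works directly with the exponential: it substitutes $\ln^2(96D)\geq\ln^2(192)\approx 27$ and $\ln D\leq\sqrt{x}$ inside the exponent, reducing to showing $\tfrac{96}{5^6}x^4\,\me^{-\sqrt{27+6x}+\sqrt{x}}\leq 1$, and then locates the unique critical point of $F(x)=x^4\,\me^{-\sqrt{27+6x}+\sqrt{x}}$ explicitly and evaluates $\tfrac{96}{5^6}F(x_0)\approx 0.708<1$. You instead take logs first, rationalize the difference of square roots, and apply $a\leq\sqrt{x}+\ln 96$, landing on the one-variable claim
\begin{equation}
g(y)\;:=\;\frac{6y^2}{(2+\sqrt 6)\,y+2\ln 96}\;-\;8\ln y\;+\;6\ln 5\;\geq\;0,\qquad y>5^{3/4}.
\end{equation}
That claim is true (and so is every step leading to it), but the reduction is no less work than the paper's: it is a different one-variable extremal problem of comparable difficulty, not a shortcut around it.

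The real gap is that you do not actually verify $g(y)\geq 0$, and the two finishing moves you sketch would fail as stated. First, the substitution $\ln y\leq\sqrt{y}$ is far too lossy in the middle of the range: the weakened inequality $\tfrac{6y^2}{(2+\sqrt 6)y+2\ln 96}\geq 8\sqrt{y}-6\ln 5$ is already false at $y=20$ (left side $\approx 24.5$, right side $\approx 26.1$) and remains false down to $y=5^{3/4}$, so the polynomial route covers only, say, $y\gtrsim 25$, leaving a large window untouched. Second, on that window $g$ is \emph{not} monotone: $g'(5^{3/4})<0$ while $g'(7)>0$, so $g$ dips to an interior minimum near $y\approx 6.3$ (where $g\approx 1.3>0$). ``Checking monotonicity of the difference'' therefore cannot close the argument, and the observation that the left side ``is of order $y$'' while the right ``grows like $\ln y$'' is an asymptotic comparison, not a proof of positivity on a bounded interval. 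What is actually needed is the same kind of work the paper does: show $g'$ vanishes at a single point (e.g.\ by reducing $g'(y)=0$ to the cubic $3y^2\big((2+\sqrt 6)y+4\ln 96\big)=4\big((2+\sqrt 6)y+2\ln 96\big)^2$), evaluate $g$ there, and check positivity at the left endpoint. Until that critical-point analysis is supplied, the proposal is a correct reduction with an unclosed final step.
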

\begin{proof}
We have that for all $D\geq 2$, 
\ba 
\frac{x^4}{5^6}\, \me^{-\sqrt{\ln^2(96 D)+6x} +\ln(96)+\ln(D)}  &\leq  \frac{x^4}{5^6}\, \me^{-\sqrt{\ln^2(96 \cdot 2)+6x} +\ln(96)+\sqrt{x}}  \leq \frac{96}{5^6}x^4\, \me^{-\sqrt{27+6x}+\sqrt{x}}\\
& =\frac{96}{5^6} F(x).\label{eq:F(x)}
\ea 
We now aim to find the maximum of $F(x):=x^4\, \me^{-\sqrt{27+6x}+\sqrt{x}}$ with domain $x\in[0,\infty).$ Since the extremal points are both zero, namely $F(0)=\lim_{x\rightarrow \infty}F(x)=0$, the maximum will be one of the stationary points, we therefore want the solutions to 
\begin{equation}
\frac{d}{dx} F(x)= \frac{e^{\sqrt{x}-\sqrt{6 x+27}} x^3 \left(-2 \sqrt{3} x+\sqrt{x} \sqrt{2 x+9}+8 \sqrt{2 x+9}\right)}{2 \sqrt{2 x+9}}=0.
\end{equation}
The only solution to $-2 \sqrt{3} x+\sqrt{x} \sqrt{2 x+9}+8 \sqrt{2 x+9}=0$ can be found analytically by hand (or using Mathematica's \textit{Solve} routine) giving 
\ba
\begin{split}
&x_0:=\\
&\frac{1}{2} \Bigg(-\frac{\sqrt[3]{1719926784 \sqrt{1149814}+17320375304957}}{60\ 5^{2/3}}-\frac{1}{300} \sqrt[3]{86601876524785-8599633920 \sqrt{1149814}}+ \frac{354075648}{625}\times\\
&\times \sqrt{\frac{3}{25 \sqrt[3]{5 \left(1719926784 \sqrt{1149814}+17320375304957\right)}+25 \sqrt[3]{86601876524785-8599633920 \sqrt{1149814}}+2754793}} \\
&+\frac{2754793}{3750}\Bigg)^{1/2}+\frac{941}{100}+\frac{1}{100 \sqrt{\frac{3}{25 \sqrt[3]{5 \left(1719926784 \sqrt{1149814}+17320375304957\right)}+25 \sqrt[3]{86601876524785-8599633920 \sqrt{1149814}}+2754793}}},
\end{split}
\ea 
since $x_0$ is within the end points, namely $0<x_0<\infty$ and $F$ evaluated at $x_0$ is larger than at the end points, i.e. $F(x_0)>F(0)=0$, and $F(x_0)>\lim_{x\rightarrow \infty}F(x)=0$; it must be a global maximum. Thus to conclude the proof, we use Eq. \eqref{eq:F(x)} to find
\begin{equation}
\frac{x^4}{5^6}\, \me^{-\sqrt{\ln^2(96 D)+6x} +\ln(96)+\ln(D)} < \frac{96}{5^6} F(x_0)=0.707818 \ldots  <1,
\end{equation}
for all $D\geq 2$ and for all $x \geq  \ln^2 (D)$.
\end{proof}

\begin{conj}
\label{conj:Hconvex}
$\hcal_{\alpha}(p|q)$ is convex in $\alpha$ for $\alpha <0$ and $\alpha>1$ and concave in $\alpha$ for $\alpha\in(0,1)$.
\end{conj}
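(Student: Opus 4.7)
The plan is to reduce the statement to a uniform sign inequality for a single auxiliary function and then attack that via the moment generating function structure behind $\hcal_\alpha$. With the shorthand $g(\alpha):=\sum_i p_i^\alpha q_i^{1-\alpha}$, normalisation of $p$ and $q$ gives $g(0)=g(1)=1$, and $g$ is log-convex since it is the moment generating function of the random variable $X=\ln(p/q)$ under $Q=\{q_i\}$. For $\alpha>0$ the Hellinger divergence takes the form $\hcal_\alpha=(g(\alpha)-1)/(\alpha-1)$, and differentiating twice one obtains
\be
\hcal_\alpha''(\alpha)=\frac{H(\alpha)}{(\alpha-1)^3},\quad H(\alpha):=(\alpha-1)^2 g''(\alpha)-2(\alpha-1)g'(\alpha)+2(g(\alpha)-1).
\ee
Tracking the sign of $(\alpha-1)^3$ together with the overall factor $\textup{sgn}(\alpha)$ in the definition of $\hcal_\alpha$, the three regimes listed in the conjecture (convexity on $(-\infty,0)$ and on $(1,\infty)$, concavity on $(0,1)$) all collapse to the single inequality $H(\alpha)\ge 0$ on $\rr\setminus\{0,1\}$. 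Making this reduction precise is my first step.

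The second step is to rewrite $H$ in a form where its sign can be analysed. A Taylor expansion of $H$ around $\alpha=1$ shows that the constant, linear and quadratic terms vanish identically, leaving
\be
H(\alpha)=\sum_{n=3}^{\infty}\frac{(n-1)(n-2)}{n!}\,g^{(n)}(1)\,(\alpha-1)^n,
\ee
with $g^{(n)}(1)=\sum_i p_i X_i^n$ the $n$-th moment of $X$ under $P=\{p_i\}$. Summing the power series in closed form gives $H(\alpha)=\sum_i p_i\big[\me^{(\alpha-1)X_i}\big(((\alpha-1)X_i-1)^2+1\big)-2\big]$. Alternatively, the primitive $\hcal_\alpha=\int_0^1 g'(1+s(\alpha-1))\,ds$ leads to $\hcal_\alpha''(\alpha)=\int_0^1 s^2\,g'''(1+s(\alpha-1))\,ds$. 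Either form reduces the claim to showing that a probability-weighted quantity is non-negative, with the only structural inputs being log-convexity of $g$ and the normalisation constraints $g(0)=g(1)=1$.

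The decisive obstacle is that neither the summand $\me^z((z-1)^2+1)-2$ in the closed form (which is negative for $z<0$) nor $g'''$ in the integral form has a definite sign in general: the third cumulant of $X=\ln(p/q)$ under $P$ can be negative. Indeed, for the asymmetric pair $p=(1-\epsilon,\epsilon)$, $q=(1/2,1/2)$ with $\epsilon$ small one computes $g'''(1)<0$, and the series then forces $\hcal_\alpha''(1^+)=g'''(1)/3<0$, i.e.\ local concavity just to the right of $\alpha=1$, in tension with the conjectured convexity on $(1,\infty)$. This suggests that the conjecture as stated cannot be proved in full generality; a realistic plan is therefore to (i) pin down the precise regime of validity -- e.g.\ by restricting to distributions for which all odd moments of $X$ under $P$ are non-negative, or by weakening global convexity to an asymptotic statement as $|\alpha|\to\infty$ -- and (ii) prove the resulting restricted claim by combining log-convexity of $g$ with Hermite--Hadamard style estimates on the integral representation of $\hcal_\alpha$. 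Verifying the claim first in the special case $p\ll q$ with bounded likelihood ratio would be a useful stepping stone toward the correct reformulation.
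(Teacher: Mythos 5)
Your reduction to $H(\alpha)\geq 0$ is correct: with $\hcal_\alpha=\textup{sgn}(\alpha)(g(\alpha)-1)/(\alpha-1)$ one has $\hcal_\alpha''=\textup{sgn}(\alpha)\,H(\alpha)/(\alpha-1)^3$, and the prefactor $\textup{sgn}(\alpha)/(\alpha-1)^3$ is positive on $(-\infty,0)\cup(1,\infty)$ and negative on $(0,1)$, so all three regimes of the conjecture collapse to $H\geq 0$. The closed form $H(\alpha)=\sum_i p_i\,\phi\bigl((\alpha-1)X_i\bigr)$ with $\phi(z)=\me^z\bigl((z-1)^2+1\bigr)-2$ is also right; since $\phi'(z)=z^2\me^z\geq 0$ and $\phi(0)=0$, the summand is non-positive when $(\alpha-1)X_i\leq 0$ and non-negative otherwise, so no termwise argument is available. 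You should, however, be bolder: your example does not merely ``suggest'' the conjecture cannot be proved in full generality, it disproves it. With $p=(0.9,0.1)$, $q=(1/2,1/2)$ one gets $g'''(1)=0.9\ln^3(1.8)+0.1\ln^3(0.2)\approx -0.234<0$, hence $\hcal_\alpha''(1^+)=g'''(1)/3<0$, so $\hcal_\alpha$ is strictly concave on some interval $(1,1+\delta)$ (numerically $\hcal_{1.1}\approx 0.3962>0.3959\approx(\hcal_{1.0}+\hcal_{1.2})/2$), contradicting convexity on $(1,\infty)$. Note this is binary $p$ against $q=\idv_2/2$, the exact class the paper's Remark claims is covered ``from the plot''; that remark therefore appears to be in error. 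One small correction: ``for $\epsilon$ small'' is wrong, since $g'''(1)\to(\ln 2)^3>0$ as $\epsilon\to 0^+$; the negativity occurs only for intermediate $\epsilon$ (roughly $\epsilon\in(0.002,0.2)$).

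There is no proof of this statement in the paper to compare against: the authors mark it as a conjecture and say explicitly that they have not proved it. The quantity actually invoked in Proposition~\ref{prop:strict} is Lemma~\ref{lem:HvsD}, which is established independently via monotonicity of $\hcal_\alpha$ in $\alpha$ (Eq.~\eqref{eq:HD}) and Pinsker for $\alpha\geq 1$, and via the direct series argument of Lemma~\ref{lem:Hbound01} for $\alpha\in(0,1)$, without invoking the conjectured convexity. So your counterexample leaves the paper's theorems intact, but Conjecture~\ref{conj:Hconvex} as stated should be withdrawn or restricted, for instance to distributions for which all odd moments of $X=\ln(p/q)$ under $P$ are non-negative, as you propose.
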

{\it Remark.} From the plot it follows at least for $\hcal_\alpha(p |I/2)$ for binary distributions, which is enough for us.
We have not proven the conjecture, but we are able to prove the following 
\begin{lemma}
\label{lem:HvsD}
For arbitrary binary probability distribution  $p\in\cP_2.$ 
\bei
\item   For $\alpha\in(0,1)$
\be
\label{eq:H2x}
\hcal_\alpha(p|I/2) \geq \alpha D(p|I/2) \geq  \frac{\alpha}{8}\delta^2
\ee
\item For $\alpha>1$
\be
\label{eq:H3x}
\hcal_\alpha(p|I/2) \geq D(p|I/2). 
\ee
\begin{proof}
The inequality \eqref{eq:H2x} comes from the lemma \ref{lem:Hbound01} below and Pinsker inequality \eqref{eq:pinsker}. The inequality \eqref{eq:H3x}
comes from  \eqref{eq:HD} and Pinsker inequality. 
\end{proof}

\eei
\end{lemma}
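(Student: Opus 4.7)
My plan is to handle the lemma's two parts separately, with the bulk of the work in~(i). Part~(ii), $\hcal_\alpha(p|I/2)\geq D(p|I/2)$ for $\alpha>1$, is immediate from \eqref{eq:HD} (Hellinger monotonicity in $\alpha$ combined with $\hcal_1=D$); Pinsker plays no role here. In part~(i), the right-hand inequality $\alpha D(p|I/2)\geq \alpha\delta^2/8$ is a one-line application of Pinsker \eqref{eq:pinsker}: the parametrisation $p=((1+\delta/2)/2,(1-\delta/2)/2)$ gives $\|p-I/2\|_1=\delta/2$, so $D\geq\delta^2/8$, and multiplication by $\alpha>0$ closes the step.

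The substantive claim is the left-hand inequality $\hcal_\alpha(p|I/2)\geq\alpha D(p|I/2)$ on $\alpha\in(0,1)$. I propose a direct power-series comparison in $y:=\delta/2\in[0,1]$. With $u=1+y$, $v=1-y$, the identity $\hcal_\alpha(p|I/2)=(u^\alpha+v^\alpha-2)/(2(\alpha-1))$ combined with the binomial expansion (odd powers of $y$ cancel by the symmetry $y\mapsto -y$) produces a series whose coefficient of $y^{2k}$ equals $\alpha\prod_{j=2}^{2k-1}(\alpha-j)/(2k)!$, the factor $(\alpha-1)$ having cancelled against the denominator. A parallel Mercator-type expansion of $(1\pm y)\ln(1\pm y)$ gives a series for $D(p|I/2)$ whose coefficient of $y^{2k}$ reduces to $(2k-2)!/(2k)!$ after using $(2k)!=(2k)(2k-1)(2k-2)!$. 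Hence the coefficient of $y^{2k}$ in $\hcal_\alpha-\alpha D$ takes the clean form $\frac{\alpha}{(2k)!}\bigl[\prod_{j=2}^{2k-1}(\alpha-j)-(2k-2)!\bigr]$, and the whole inequality reduces to the elementary bound $\prod_{j=2}^{2k-1}(\alpha-j)\geq(2k-2)!$ for $\alpha\in[0,1]$. But on this interval every factor $\alpha-j$ with $j\geq 2$ has magnitude $j-\alpha\geq j-1>0$ and all share the same negative sign, and since there are $2k-2$ of them (an even number) the product is positive and dominates $\prod_{i=1}^{2k-2}i=(2k-2)!$ factor-by-factor. Thus every coefficient of $\hcal_\alpha-\alpha D$ is non-negative on $\alpha\in[0,1]$, and summing over $k$ yields the inequality wherever the series converges (equality at $k=1$ reflects the fact that the two quantities agree to leading order $y^2$).

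The main wrinkle I anticipate is the boundary $y=1$ (i.e.\ $v=0$), where the power series for $\hcal_\alpha(p|I/2)$ fails to converge absolutely even though the quantity itself remains finite. I would handle this by a standard continuity argument: for $\alpha\in(0,1)$ both $\hcal_\alpha(p|I/2)$ and $D(p|I/2)$ are continuous in $p$ under the conventions $0\ln 0=0$ and $0^\alpha=0$, so non-negativity on $y\in[0,1)$ extends to $y=1$ by taking limits. The only routine work remaining is verifying the two series identities, which reduce to elementary binomial and telescoping manipulations. Note that this power-series route is strictly stronger than it needs to be: it bypasses the hinted concavity conjecture~\ref{conj:Hconvex} for binary distributions, which would also yield the inequality via $\hcal_0=0$, $\hcal_1=D$ and concavity, but whose verification requires an explicit second-derivative computation that is not noticeably cleaner than the series argument above.
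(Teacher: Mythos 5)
Your proposal is correct and reproduces the paper's own route: the power-series comparison of $\hcal_\alpha$ against $\alpha D$ (binomial and Mercator expansions, cancellation of the $(\alpha-1)$ factor against the denominator, and the factor-by-factor bound $\prod_{j=2}^{2k-1}(j-\alpha)\geq(2k-2)!$ for an even number of negative factors) is precisely the proof of the paper's helper Lemma~\ref{lem:Hbound01}, which you have simply inlined rather than cited. Your two side remarks are also accurate: Pinsker plays no role in part~(ii), only \eqref{eq:HD} is needed (the paper's one-line proof text mentions Pinsker superfluously there), and the $y=1$ boundary, which the paper sidesteps by restricting $\delta<1$ in Lemma~\ref{lem:Hbound01}, is equally well handled by your continuity argument.
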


\begin{lemma}
\label{lem:Hbound01}
For any probability distributions $p,q\in\cP_d$ and for all $\alpha\in[0,1)$ and  $\delta\in(0,1)$ we have 
\be
H_\alpha(p|I/2) \geq \alpha D(p|I/2)
\ee
Equivalently for all $\alpha\in(0,1)$  and $\delta<1$ we have 
\be
\frac{(1+\delta)^\alpha+(1-\delta)^\alpha-2}{\alpha-1}\geq \alpha ((1+\delta)\ln(1+\delta)+(1-\delta)\ln(1-\delta) )
\ee
\end{lemma}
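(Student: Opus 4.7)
The plan is to rewrite the inequality and then carry out a careful sign analysis of a single auxiliary function. Multiplying the stated inequality by $\alpha - 1 < 0$ reverses the direction and reduces the claim to showing that
\[
\phi(\alpha) := (1+\delta)^\alpha + (1-\delta)^\alpha - 2 + \alpha(1-\alpha)\, K \;\leq\; 0
\]
for all $\alpha \in [0,1]$, where $K := (1+\delta)\ln(1+\delta) + (1-\delta)\ln(1-\delta) \geq 0$. Setting $f(\alpha) := (1+\delta)^\alpha + (1-\delta)^\alpha$, one has $K = f'(1)$, and one checks immediately that $\phi(0) = \phi(1) = 0$ and $\phi'(1) = 0$. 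The remaining boundary datum is $\phi'(0) = (2+\delta)\ln(1+\delta) + (2-\delta)\ln(1-\delta)$, and the first auxiliary claim is that this is strictly negative for $\delta \in (0,1)$. I would prove this by differentiating in $\delta$: the derivative simplifies to $\ln\frac{1+\delta}{1-\delta} - \frac{2\delta}{1-\delta^2}$, and term-by-term comparison of the power series $2\sum_{k\geq 0} \delta^{2k+1}/(2k+1)$ and $2\sum_{k\geq 0} \delta^{2k+1}$ shows this derivative is strictly negative on $(0,1)$; combined with vanishing at $\delta = 0$ this gives $\phi'(0) < 0$.

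The main structural observation is that $f''(\alpha) = (1+\delta)^\alpha \ln^2(1+\delta) + (1-\delta)^\alpha \ln^2(1-\delta)$ is strictly convex in $\alpha$ (a positive combination of two exponentials with distinct rates), hence so is $\phi''(\alpha) = f''(\alpha) - 2K$. Consequently $\phi''$ has at most two zeros in $[0,1]$, and its possible sign patterns are uniformly positive, uniformly negative, $+\to -$, $-\to +$, or $+\to - \to +$. The second auxiliary inequality I need is $\phi''(1) < 0$, i.e.\ $(1+\delta)\ln^2(1+\delta) + (1-\delta)\ln^2(1-\delta) < 2K$; this vanishes at $\delta = 0$, and a direct computation shows its $\delta$-derivative collapses, after cancellation, to $\ln^2(1-\delta) - \ln^2(1+\delta)$, which is strictly positive on $(0,1)$ because $|\ln(1-\delta)| > \ln(1+\delta)$. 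The inequality $\phi''(1) < 0$ rules out every sign pattern of $\phi''$ except $+\to -$ and uniformly negative; the latter is itself impossible, for then $\phi'$ would be strictly decreasing with $\phi'(0) < 0$, contradicting $\phi'(1) = 0$. Hence $\phi''$ changes sign exactly once, from positive to negative, and $\phi'$ is unimodal on $[0,1]$: increasing then decreasing.

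The conclusion then follows by a short sign chase. Since $\phi'$ is unimodal with $\phi'(0) < 0$ and $\phi'(1) = 0$: if the maximum of $\phi'$ were nonpositive, $\phi$ would be nonincreasing on $[0,1]$, and combined with $\phi(0) = \phi(1) = 0$ this would force $\phi \equiv 0$, contradicting $\phi'(0) < 0$. Hence the maximum of $\phi'$ is strictly positive, and $\phi'$ has exactly two zeros in $[0,1]$: some $\alpha_1 \in (0,1)$ and the endpoint $\alpha = 1$. Therefore $\phi$ decreases on $[0,\alpha_1]$ from $\phi(0) = 0$ to a negative minimum and then increases on $[\alpha_1, 1]$ back to $\phi(1) = 0$, yielding $\phi \leq 0$ throughout $[0,1]$, which is the desired inequality. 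The main obstacle is the verification of the two auxiliary one-variable inequalities $\phi'(0) < 0$ and $\phi''(1) < 0$ uniformly in $\delta \in (0,1)$; without them, the strict convexity of $f''$ by itself cannot pin down the decisive $+\to -$ sign pattern of $\phi''$ needed to force the unimodality of $\phi'$.
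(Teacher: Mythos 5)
Your proof is correct, and it takes a genuinely different route from the paper's. The paper fixes $\alpha$ and compares the two sides as power series in $\delta$: writing $(1\pm\delta)^\alpha$ and $(1\pm\delta)\ln(1\pm\delta)$ as even power series, the desired inequality reduces coefficient-by-coefficient to $(2-\alpha)(3-\alpha)\cdots(n-1-\alpha)\geq (n-2)!$ for even $n$, which is immediate from $\alpha\in[0,1)$. That argument is short and entirely algebraic once the series are written down. You instead fix $\delta$ and run a calculus argument in $\alpha$: after establishing $\phi(0)=\phi(1)=\phi'(1)=0$, the two auxiliary inequalities $\phi'(0)<0$ and $\phi''(1)<0$, and the strict convexity of $\phi''$ as a positive combination of exponentials in $\alpha$, you pin down the sign pattern $+\to-$ of $\phi''$, deduce unimodality of $\phi'$, and finish with a sign chase on $\phi$. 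Both are elementary, but the paper's approach is leaner --- a one-step factorial comparison versus your two separate $\delta$-monotonicity lemmas plus the convexity bookkeeping. Your version does have the minor virtue of making the strictness transparent (you in fact get $\phi<0$ on the open interval for $\delta\in(0,1)$), and of not requiring the reader to trust term-by-term comparison of two series. I also note that the paper's displayed claim ``$G(\alpha,\delta)\geq 0$'' has the wrong sign --- the needed and actually proved direction is $G\leq 0$, exactly your $\phi\leq 0$ --- so your careful sign bookkeeping at the reduction step is a genuine improvement over the paper's exposition even though the paper's underlying coefficient argument is sound.
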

\begin{proof}
We will prove that 
\be
G(\alpha,\delta):=(1+\delta)^\alpha+(1-\delta)^\alpha-2   - \alpha(\alpha-1) \bigl((1+\delta)\ln(1+\delta)+(1-\delta)\ln(1-\delta) 
\bigr) \geq 0.
\ee
For $|x|<1$ we have 
\be
(1+x)^\alpha=1 +\alpha x  + \sum_{n=2}^\infty\frac{a_n(\alpha)}{n!} x^n, \quad \ln(1+x)=\sum_{n=1}^\infty\frac{(-1)^{n+1}}{n}x^n
\ee
where 
\be
a_n(\alpha)=\alpha (\alpha-1) \ldots (\alpha-(n-1)).
\ee
One then finds that 
\be
\label{eq:series}
(1+\delta)^\alpha + (1-\delta)^\alpha = 2+2 \sum_{n=2,\, even}^\infty \frac{a_n(\alpha)}{n!} \delta^n,\quad 
(1+\delta)\ln(1+\delta)+(1-\delta)\ln(1-\delta)  =   2\sum_{n=2,\,even}^\infty\frac{1}{n(n-1)}\delta^n
\ee

Then 
\be
G(\alpha,\delta)=2 \sum_{n=2,\,even}^\infty
\left(\frac{a_n(\alpha)}{n!} - \alpha(\alpha-1) \frac{1}{n(n-1)}\right)\delta^n.
\ee
Now, it is enough to  show that 
\be
(\alpha-2) \ldots (\alpha-(n-1)) \geq (n-2)!,
\ee
for all $n\geq 2$ and even. Since there is even number of terms on the left hand side, all negative, the above inequality can be rewritten as 
\begin{equation}
(2-\alpha)(3-\alpha)\ldots (n-1-\alpha) \geq (n-2)!
\end{equation}
Since $\alpha\in [0,1)$, left hand side  is no smaller than $(n-2)!$ and therefore the inequality holds.

\end{proof}

\subsection{Tsalis continuity Theorem} 	

\begin{theorem}[Tsalis uniform continuity]\label{thm:Tsalis continuity} Let $p,p'\in\cP_d$ have entries denoted $p_k$. For the following parameters, we have the following Tsalis entropy (Eq. \ref{eq:Tsalis def}) continuity bounds:\\
	\begin{itemize}
		\item [0)]
		For $\alpha\in (0,1)\cup(1,\infty]$, 
		\be\label{eq:Tsalis discontinous at 1 bound}
		\left| T_\alpha(p)- T_\alpha(p') \right| \leq \frac{2 \lceil \alpha \rceil}{|\alpha-1|} d^{1-\alpha/\lceil \alpha \rceil} \left( \|p-p'\|_1 \right)^{\alpha/\lceil \alpha \rceil} \leq 2\frac{ ( \alpha +1)}{|\alpha-1|} d \left(\frac{ \|p-p'\|_1}{d} \right)^{\alpha/(1+\alpha)}.
		\ee 
		\item [1)]
		For $\alpha\in (0,1]$ and $\|p-p'\|_1 \leq d\,\big(\frac{1 }{2\me \,d}\big)^{1/\alpha}$,
		\be\label{eq:Tsalis uniform around 1 in theorem alpha 0 1}
		\left| T_\alpha(p)- T_\alpha(p') \right|  \leq  4\, d^{1-\alpha} \left[ \left(\frac{3}{2 \alpha}+1 \right) \big(\|p-p'\|_1\big)^\alpha \ln d - \big(\|p-p'\|_1\big)^\alpha \ln \|p-p'\|_1 \right] .
		\ee
		\item [2)]
		For $\alpha\in [1,\infty)$ and $\|p-p'\|_1 \leq \frac{1}{2\me\lceil \alpha\rceil d^\alpha}$,
		\be\label{eq:Tsalis uniform around 1 in theorem alpha 1 infty}
		\left| T_\alpha(p)- T_\alpha(p') \right|    \leq  8 \left[ \|p-p'\|_1 \ln \left(\frac{d^{3/2}}{4}\right) - \|p-p'\|_1 \ln \big( \|p-p'\|_1 \big)\right].
		\ee
	\end{itemize}
\end{theorem}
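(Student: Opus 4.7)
By definition $T_\alpha(p)-T_\alpha(p')=\frac{\mathrm{sgn}(\alpha)}{\alpha-1}\sum_i(p_i'^\alpha-p_i^\alpha)$, so the triangle inequality reduces all three claims to controlling $\frac{1}{|\alpha-1|}\sum_i |p_i^\alpha-p_i'^\alpha|$. For bullet (0) I would apply the elementary pointwise inequality $|x^\alpha-y^\alpha|\le \lceil\alpha\rceil\,|x-y|^{\alpha/\lceil\alpha\rceil}$ for $x,y\in[0,1]$. This is proved by writing $n=\lceil\alpha\rceil$ and $x^\alpha=(x^{\alpha/n})^n$, then combining (a) sub-additivity $|x^{\alpha/n}-y^{\alpha/n}|\le|x-y|^{\alpha/n}$ (valid because $\alpha/n\in(0,1]$) with (b) $|a^n-b^n|\le n\max(a,b)^{n-1}|a-b|\le n|a-b|$ (mean value theorem and $a,b\in[0,1]$). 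Summing and using the power-mean (equivalently reverse H\"older) bound $\sum_i\epsilon_i^{\alpha/n}\le d^{1-\alpha/n}(\sum_i\epsilon_i)^{\alpha/n}$ with $\epsilon_i=|p_i-p_i'|$ yields the first inequality in (0); the factor $2$ is comfortable slack. The second inequality then follows from $\lceil\alpha\rceil\le\alpha+1$ together with the monotonicity of $\beta\mapsto d(\|p-p'\|_1/d)^\beta$ (since $\|p-p'\|_1\le 2\le d$), which allows trading the exponent $\alpha/\lceil\alpha\rceil$ for the smaller $\alpha/(1+\alpha)$.

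\textbf{Bullets (1) and (2).} The estimate of (0) degenerates as $\alpha\to1$ because of the $1/|\alpha-1|$ factor, yet $T_\alpha$ is continuous there and at $\alpha=1$ reduces to Shannon entropy, so the true answer is a Fannes-type bound of order $\epsilon^\alpha\ln(1/\epsilon)$. To capture this I propose to use the integral representation $f_\alpha(p)-f_\alpha(p')=\alpha\int_0^1\sum_i q_i(t)^{\alpha-1}(p_i-p_i')\,dt$ with $q_i(t)=(1-t)p_i'+tp_i$, and then exploit the sum-rule $\sum_i(p_i-p_i')=0$ to centre the integrand and rewrite it as $\alpha\int_0^1\sum_i[q_i(t)^{\alpha-1}-1](p_i-p_i')\,dt$. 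The quotient $\frac{q_i^{\alpha-1}-1}{\alpha-1}$ has the regular limit $\ln q_i$ at $\alpha=1$, and the elementary bound $|e^z-1|\le|z|e^{|z|}$ applied to $z=(\alpha-1)\ln q_i$ gives the pointwise estimate $\frac{|q_i^{\alpha-1}-1|}{|\alpha-1|}\le|\ln q_i|\,q_i^{-|1-\alpha|}$. Inserting this into the integral, splitting the index set into ``small'' ($\epsilon_i$ dominant) and ``large'' ($p_i,p_i'$ dominant) contributions, and finally applying the concavity-based log-sum bound $\sum_i\epsilon_i^\alpha(-\ln\epsilon_i)\le d^{1-\alpha}\epsilon^\alpha\ln(d/\epsilon)$ produces exactly the Fannes-type shape of (1) and (2). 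The $\alpha$-dependent prefactors $\bigl(\frac{3}{2\alpha}+1\bigr)\ln d$ and $\ln(d^{3/2}/4)$, together with the smallness thresholds $\epsilon\le d(2ed)^{-1/\alpha}$ and $\epsilon\le 1/(2e\lceil\alpha\rceil d^\alpha)$, arise naturally by optimising over the small/large split so that the $-\epsilon\ln\epsilon$ contribution strictly dominates the polynomial residual terms.

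\textbf{Main obstacle.} The technical crux is handling two singularities simultaneously: the blow-up of $1/|\alpha-1|$ in the naive term-by-term bound as $\alpha\to1$, and the blow-up of $|\ln q_i|$ (or the companion factor $q_i^{-|1-\alpha|}$) as $q_i\to 0$. Neither can be removed by any index-by-index inequality; both are cured only by the global cancellation $\sum_i(p_i-p_i')=0$, which is precisely what the centred integral representation preserves. Concretely, one must keep the factor $q_i^{-|1-\alpha|}$ uniformly bounded in the regime dictated by the smallness hypothesis on $\epsilon$, and then show that the remaining $-\epsilon\ln\epsilon$ dominates all polynomial corrections. The explicit thresholds on $\epsilon$ in (1) and (2) are essentially forced by this last step. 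The constants $4$, $8$ and the specific $d$-dependences in the stated bounds should then emerge from straightforward (if unpleasant) bookkeeping of the resulting estimates.
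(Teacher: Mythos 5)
Your bullet (0) argument is essentially the same as the paper's: the paper also factorizes through $\lceil\alpha\rceil$ via Lemma~\ref{Lemm:sum diff up bound}, then applies the reverse H\"older ($p$-norm interchange) bound. Your subadditivity route $|x^{\alpha/n}-y^{\alpha/n}|\le|x-y|^{\alpha/n}$ is marginally cleaner than the paper's use of Lemma~\ref{x beta - y beta lem} (you get $\lceil\alpha\rceil$ in place of $2\lceil\alpha\rceil$), but the idea is the same.

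For bullets (1) and (2) you propose a genuinely different route. The paper applies the mean value theorem \emph{in the parameter $\alpha$} to $G_\alpha=\|p\|_\alpha^\alpha-\|p'\|_\alpha^\alpha$, exploits $G_1=0$, identifies $\partial_\alpha\|x\|_\alpha$ with a Shannon entropy of the escort distribution $q_\alpha(x)_i=|x_i|^\alpha/\|x\|_\alpha^\alpha$, and invokes the Fannes inequality. The boundedness of $S_1$ by $\ln d$ and the $1/c$ prefactor from the intermediate point $c\in(1,\alpha)$ automatically absorb the $\alpha$-dependence. Your route instead integrates along the segment $q(t)=(1-t)p'+tp$ and exploits the normalisation $\sum_i(p_i-p_i')=0$ to subtract $1$ from $q_i^{\alpha-1}$; this is the same cancellation the paper uses via $G_1=0$, and for $\alpha\in(0,1)$ your exponent $q_i^{-|1-\alpha|}$ has $|1-\alpha|<1$, so the $t$-integral converges and the calculation is plausible. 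This version is more elementary (no escort distributions, no Fannes).

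However, there is a concrete gap in bullet (2). Your stated pointwise estimate
\begin{equation}
\frac{|q^{\alpha-1}-1|}{|\alpha-1|}\le|\ln q|\,q^{-|1-\alpha|}
\end{equation}
comes from $|e^z-1|\le|z|e^{|z|}$, which is very lossy when $z=(\alpha-1)\ln q<0$, i.e.\ precisely when $\alpha>1$ and $q\in(0,1]$. For $\alpha\ge 2$ the exponent $|1-\alpha|=\alpha-1\ge 1$, and the resulting $t$-integral $\int_0^1 |\ln q_i(t)|\,q_i(t)^{-(\alpha-1)}|p_i-p_i'|\,dt$ diverges whenever $p_i'=0$ (or $p_i=0$), since then $q_i(t)\sim t\,p_i$ vanishes at an endpoint and $\int_0^1 t^{-(\alpha-1)}\,dt=\infty$. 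This is an index-by-index divergence, so your proposed ``small vs.\ large'' split over $i$ does not cure it; nor does the smallness hypothesis on $\|p-p'\|_1$, which does not keep $q_i$ bounded away from zero. The repair is to use the one-sided inequality $|e^z-1|\le|z|$ for $z\le 0$, i.e.\ for $\alpha>1$,
\begin{equation}
\frac{1-q^{\alpha-1}}{\alpha-1}\le \min\Big\{-\ln q,\ \tfrac{1}{\alpha-1}\Big\},
\end{equation}
which has no singular $q$-factor. With this replacement the $t$-integral converges and the $\alpha/(\alpha-1)$ prefactor from $f_\alpha'=\alpha q^{\alpha-1}$ is tamed (use $-\ln q$ for $\alpha\in(1,2]$, use $1/(\alpha-1)$ for $\alpha\ge 2$), after which the Fannes-type shape of (2) should follow; but this case split is necessary and is not present in your sketch, and the explicit constants $4$, $8$ and the thresholds on $\|p-p'\|_1$ would then have to be rederived and verified rather than assumed to ``emerge from bookkeeping.''
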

\begin{remark}
	Eqs. \ref{eq:Tsalis discontinous at 1 bound}, \eqref{eq:Tsalis uniform around 1 in theorem alpha 0 1}, \eqref{eq:Tsalis uniform around 1 in theorem alpha 1 infty}, provide continuity bounds for the Tsalis entropy, which are uniform in $\alpha>0$ bounded away from zero. For $\alpha$ near zero Eq. \eqref{eq:Tsalis discontinous at 1 bound} is best, while for $\alpha$ in the vicinity of 1, Eqs.  \eqref{eq:Tsalis uniform around 1 in theorem alpha 0 1}, \eqref{eq:Tsalis uniform around 1 in theorem alpha 1 infty} are optimal. For large $\alpha$ one can either use Eq. \ref{eq:Tsalis discontinous at 1 bound} or Eq. \eqref{eq:Tsalis uniform around 1 in theorem alpha 1 infty} depending on the circumstances. If the condition $\|p-p'\|_1 \leq \frac{1}{2\me\lceil \alpha\rceil d^\alpha}$ is fulfilled (which becomes more stringent the larger $\alpha$ is), then it is likely to be preferable to use Eq. \eqref{eq:Tsalis uniform around 1 in theorem alpha 1 infty} which only grows logarithmically with $d$. On the other hand, if $\|p-p'\|_1 \leq \frac{1}{2\me\lceil \alpha\rceil d^\alpha}$ cannot be guaranteed to be fulfilled (such as in the limiting case $\alpha\rightarrow \infty$), then Eq. \ref{eq:Tsalis discontinous at 1 bound}, with sub-linear scaling with $d$, is the only option. For related, but less explicit, continuity bounds see \cite{hanson2017tight}.
\end{remark}
\begin{proof}
	We start by proving  Eq. \eqref{eq:Tsalis discontinous at 1 bound}. From the definition of the Tsalis entropy, Eq. \eqref{eq:Tsalis def}, it follows
	\be 
	\left| T_\alpha(p)- T_\alpha(p') \right| = \frac{1}{|1-\alpha|} \bigg| \sum_i p_i^\alpha - p_i^{\prime \alpha} \bigg|\leq  \frac{1}{|1-\alpha|}  \sum_i \left| p_i^\alpha - p_i^{\prime \alpha} \right|.
	\ee 
	We now apply Lemma \ref{Lemm:sum diff up bound} to find
	\be
	\left| T_\alpha(p)- T_\alpha(p') \right| \leq   \frac{  2  \lceil \alpha \rceil }{|1-\alpha|} \, d ^{(1-\alpha/ \lceil \alpha \rceil)} \left( \|p-p'\|_1 \right)^{\alpha/ \lceil \alpha \rceil},
	\ee 	
	from which the bound follows by noting $\lceil \alpha\rceil\leq \alpha+1$, and $\|p-p'\| /d \leq 1$.
	We will now prove Eqs. \eqref{eq:Tsalis uniform around 1 in theorem alpha 0 1}, \eqref{eq:Tsalis uniform around 1 in theorem alpha 1 infty}. To start with, 
	\be\label{eq:Tsalis cont}
	\left| T_\alpha(p)- T_\alpha(p') \right|= \frac{1}{|1-\alpha|} \bigg| \sum_i p_i^\alpha - p_i^{\prime \alpha} \bigg| \leq \frac{1}{|\alpha-1|} \left|G_\alpha (p,p')  \right|,
	\ee
	where we have defined 
	\be 
	G_\alpha (p,p')= \|p\|_\alpha^\alpha - \|p'\|_\alpha^\alpha.
	\ee 
	From the definition of $\|\cdot \|_p$ we see that for all $\alpha\in[0,\infty)$, $G_\alpha(p,p')$ is continuous. Furthermore, from Eq. \eqref{eq:div sudo p norm} we observe that $G_\alpha(p,p')$ is differentiable for $\alpha\in(0,\infty)$. As such, we can apply the mean value theorem to it as follows.
	
	Using the notation $a,b,c,$ from the mean value theorem \ref{lemm:MVT}, we have
	\begin{itemize}
		\item [1)] For $a=1$, $b=\alpha$, \,\,$\alpha> 1$ 
		\be \label{eq:g alpha p p 1 0}
		G_\alpha(p,p')=G_1(p,p') +G_c'(p,p') (\alpha-1)=G_c'(p,p') (\alpha-1), \quad \text{for some } c\in(1,\alpha).
		\ee 
		\item[2)] For $b=1$, $a=\alpha$, \,\,$0\leq \alpha< 1$
		\be\label{eq:g alpha p p 2 0}
		G_\alpha(p,p')=G_1(p,p') +G_c'(p,p') (-1+\alpha)= G_c'(p,p') (\alpha-1), \quad \text{for some } c\in(\alpha,1).
		\ee 
	\end{itemize}
	Where in both cases we have used $\|p\|_1=\|p'\|_1=1$.
	We thus conclude
	\be 
	g_\alpha(p,p')= g_c'(p,p')(\alpha-1),\quad \text{for some } c\in \begin{cases}
		(\alpha,1) &\text{ if } 0\leq \alpha < 1\\
		(1,\alpha) &\text{ if } \alpha  > 1.
	\end{cases}
	\ee 
	Plugging in to Eq. \eqref{eq:Tsalis cont}, we thus have for all $\alpha > 0$, 
	\be\label{eq:Tsalis cont 2}
	\left| T_\alpha(p)- T_\alpha(p') \right| \leq  \left| \frac{d}{d\alpha} \|p\|_\alpha^\alpha- \frac{d}{d\alpha} \|p'\|_\alpha^\alpha  \right|_{\alpha=c}= c \left| \|p\|_\alpha^{\alpha-1}\frac{d}{d\alpha} \|p\|_\alpha- \|p'\|_\alpha^{\alpha-1}\frac{d}{d\alpha} \|p'\|_\alpha  \right|_{\alpha=c}.
	\ee
	Now plugging in Eq. \ref{eq:div p-norm for x},
	\be 
	\left| T_\alpha(p)- T_\alpha(p') \right| \leq \frac{1}{c} \Big| \|p\|_\alpha^\alpha\, S_1\left(q_\alpha(p)\right) -\|p'\|_\alpha^\alpha\, S_1\left(q_\alpha(p')\right)\Big|_{\alpha=c},
	\ee
	where
	\be 
	[q_\alpha(x)]_i := \frac{|x_i|^\alpha}{\|x\|_\alpha^\alpha}, \quad i=1,2,3,\ldots ,d.
	\ee
	Thus we find, 
	\ba 
	\left| T_\alpha(p)- T_\alpha(p') \right| & \leq\frac{1}{\ccc} \|p\|_\ccc^\ccc \bigg|\,\left(\frac{\|p'\|_\ccc^\ccc}{\|p\|_\ccc^\ccc}-1\right)  S_1\left( q_\ccc(p') \right)+S_1\left( q_\ccc(p')\right) - S_1\left( q_\ccc(p) \right)\bigg| \\
	&\leq  \frac{1}{\ccc} \|p\|_\ccc^\ccc \bigg(\,\left|\frac{\|p'\|_\ccc^\ccc}{\|p\|_\ccc^\ccc}-1\right|  \left|S_1\left( q_\ccc(p') \right)\right|+\left|S_1\left( q_\ccc(p') \right) -  S_1\left( q_\ccc(p) \right)\right|\bigg)\\
	& = \frac{1}{\ccc} \bigg( \left|S_1\left( q_\ccc(p') \right)\right| \big|\|p'\|_\ccc^\ccc-\|p\|_\ccc^\ccc\big|  + \|p\|_\ccc^\ccc\big|S_1\left( q_\ccc(p') \right) -  S_1\left( q_\ccc(p) \right)\big| \bigg) \\
	& \leq  \frac{1}{\ccc} \bigg(  \left(  \max_{q\in\cP_d}\left|S_1(q)\right|\right) \Big|\|p'\|_\ccc^\ccc-\|p\|_\ccc^\ccc\Big|  + \|p\|_\ccc^\ccc\Big| S_1\left(q_\ccc(p')\right) -  S_1\left(q_\ccc(p)\right)\Big|  \bigg)  \\
	&= \frac{1}{\ccc} \bigg(  \ln d\, \Big|\|p'\|_\ccc^\ccc-\|p\|_\ccc^\ccc\Big|  + \|p\|_\ccc^\ccc\Big| S_1\left(q_\ccc(p')\right) -  S_1\left(q_\ccc(p)\right)\Big| \bigg).\label{eq:T -T inter 3}
	\ea
	Applying the Fannes inequality (Lemma \ref{Lemm:Fannes}), we find 
	\ba 
	\left| T_\alpha(p)- T_\alpha(p') \right| & \leq \frac{1}{\ccc} \bigg(  \Big|\|p'\|_\ccc^\ccc-\|p\|_\ccc^\ccc\Big|\ln d\  + \|p\|_\ccc^\ccc\Big(  \| q_\ccc(p)- q_\ccc(p')\|_1 \ln d  - \| q_\ccc(p)- q_\ccc(p')\|_1 \ln\left( \|q_\ccc(p)- q_\ccc(p')\|_1 \right)  \Big) \bigg)\\
	\ea
	We now pause a moment to bound $\|q_\alpha(p)- q_\alpha(p')\|_1 $. Using the definition of $q_\alpha(p)$, we have
	\ba
	\|q_\alpha(p)- q_\alpha(p')\|_1 & =\sum_{i=1}^d \left| \frac{p_i^\alpha}{\|p\|_\alpha^\alpha}+ \frac{p_i^{\prime\alpha}}{\|p'\|_\alpha^\alpha} \right| = \sum_{i=1}^d \frac{1}{\|p\|_\alpha^\alpha} \Bigg| p_i^\alpha -p_i^{\prime \alpha} + p_i^{\prime \alpha}\left( 1-\frac{\|p\|_\alpha^\alpha}{\|p'\|_\alpha^\alpha}  \right) \Bigg|\\
	& \leq   \sum_{i=1}^d \frac{1}{\|p\|_\alpha^\alpha} \Bigg( \left|p_i^\alpha -p_i^{\prime \alpha}\right|  + p_i^{\prime \alpha}\left| 1-\frac{\|p\|_\alpha^\alpha}{\|p'\|_\alpha^\alpha}  \right| \Bigg)  = \frac{1}{\|p\|_\alpha^\alpha} \Bigg(  \Big| \|p\|_\alpha^\alpha -  \|p'\|_\alpha^\alpha  \Big| +   \sum_{i=1}^d \left|p_i^\alpha -p_i^{\prime \alpha}\right|\Bigg)\\
	&\leq  \frac{2}{\|p\|_\alpha^\alpha} \Bigg(    \sum_{i=1}^d \left|p_i^\alpha -p_i^{\prime \alpha}\right|\Bigg) = \frac{\Delta_\alpha (p,p')}{\|p\|_\alpha^\alpha} ,\label{eq:q-q Tsalis}
	\ea 
	where in the last line, we have used Lemma \ref{Lemm:sum diff up bound} and defined,
	\be \label{eq: F beta up bound 2 0}
	\Delta_\alpha(p,p'):=2\sum_{i=1}^d \big| p_i^\alpha-p_i^{\prime \alpha}  \big|.
	\ee 
	Plugging this into Eq. \eqref{eq:T -T inter 3}, we find for $\| q_\ccc(p)- q_\ccc(p')\|_1 \leq 1/\me \approx 0.37$, 
	\ba 
	\left| T_\alpha(p)- T_\alpha(p') \right| & \leq \frac{1}{\ccc} \bigg(  \Big|\|p'\|_\ccc^\ccc-\|p\|_\ccc^\ccc\Big|\ln d\  + \|p\|_\ccc^\ccc\Big(  \frac{\Delta_\ccc (p,p')}{\|p\|_\ccc^\ccc} \ln d  - \frac{\Delta_\ccc (p,p')}{\|p\|_\ccc^\ccc} \ln\left( \frac{\Delta_\ccc (p,p')}{\|p\|_\ccc^\ccc} \right)  \Big) \bigg)\\
	& \leq \frac{1}{\ccc} \bigg(  \frac{\Delta_\ccc(p,p')}{2}\ln d\  +  \Delta_\ccc (p,p') \ln d  - \Delta_\ccc (p,p') \ln\left( \frac{\Delta_\ccc (p,p')}{\|p\|_\ccc^\ccc} \right)   \bigg),\\
	& = \frac{1}{\ccc} \left( \Delta_\ccc (p,p')\ln\left(d^{3/2} \|p\|_\ccc^\ccc  \right) -\Delta_\ccc (p,p')\ln \Delta_\ccc (p,p') \right).\label{eq:T-T  30}
	\ea
	We now find bounds for $\Delta_\ccc (p,p')$. To start with, from Lemma \ref{Lemm:sum diff up bound} we have
	\be 
	\Delta_\ccc (p,p') \leq 4  \lceil \ccc \rceil \, d  \left(\frac{ \|p-p'\|_1}{d} \right)^{\ccc/ \lceil \ccc \rceil},\ee
	where $\|p-p'\|_1/d \leq 1$ since $d\geq 2$ and $\|p-p'\|_1\leq 2$ holds for all $p,p'\in\cP_d$. \\
	For $\alpha\in[0,1)$, $c\in(\alpha,1)$ we find 
	\be \label{eq:Delta for alpha 0 1}
	\Delta_\ccc (p,p') \leq 4  d  \left(\frac{ \|p-p'\|_1}{d} \right)^{\ccc} \leq 4  d  \left(\frac{ \|p-p'\|_1}{d} \right)^{\alpha} \quad \forall\, \ccc \in(\alpha,1).
	\ee 
	For $\alpha> 1$, $c\in(1, \alpha)$ and we find 
	\be \label{eq:Delta for alpha 1 infty}
	\Delta_\ccc (p,p')  \leq 4  \lceil \ccc \rceil \, d  \left(\frac{ \|p-p'\|_1}{d} \right)= 4  \lceil \ccc \rceil \,  \|p-p'\|_1 \quad \forall\, \ccc \in(1,\alpha).
	\ee 
	We will now upper bound $\|p\|_\ccc^\ccc$ using Eq. \eqref{eq:p r inequalities} from Lemma \ref{norm lem trangle inequality}.\\
	For $\alpha\in(0,1)$, $c\in(\alpha,1)$\,:
	\begin{itemize}
		\item [1)] Noting that $0\leq p_i\leq 1$ for all $i$, 
		\be \label{eq: norm c  for alpha 0 1 eq1}
		\|p\|_\ccc^\ccc =\sum_{i=1}^d p_i^\ccc \leq \lim_{\ccc \rightarrow 0^+} \sum_{i=1}^d p_i^\ccc \leq d\quad \forall \, c\in(\alpha,1)
		\ee  
		\item[2)] Setting $r=\ccc$, $p=1$, in Eq. \eqref{eq:p r inequalities}, we have
		\be \label{eq:alpha 0 1 for constriant}
		1=\|p\|_1 \leq \|p\|_\ccc \implies  1 \leq \|p\|_\ccc^\ccc
		\ee
	\end{itemize} 
	For $\alpha\in(1,\infty)$, $c\in(1,\alpha)$\,:
	\begin{itemize}
		\item [1)]   Setting $r=1$, $p=\ccc$, in Eq. \eqref{eq:p r inequalities}, we have 
		\be \label{eq:norm case 1 alpah 1 infinity}
		\|p\|_\ccc \leq \|p\|_r=1 \quad\forall\, c\in(1,\alpha)  \implies \ln \|p\|_\ccc^\ccc = \ccc \ln \|p\|_\ccc \leq 0.
		\ee
		\item [2)]   Seetting $r=c$, $p=\infty$, in Eq. \eqref{eq:p r inequalities}, and noting that $\max_i\{p_i\}\geq 1/d$ we have 
		\be \label{eq:alpha 1 infinity for constriant} 
		\|p\|_\infty = \max_{i}\{p_i\} \leq \|p\|_c \quad\forall\, c\in(1,\alpha)  \implies  \left(\frac{1}{d}\right)^\ccc \leq \|p\|_\ccc^\ccc
		\ee
		
	\end{itemize}
	
	Plugging Eqs. \ref{eq:Delta for alpha 0 1}, \ref{eq: norm c  for alpha 0 1 eq1} into Eq. \eqref{eq:T-T  30}, we find for $\alpha\in(0,1)$ and for all $\ccc\in(\alpha,1)$,
	\ba
	\left| T_\alpha(p)- T_\alpha(p') \right|  & \leq \frac{1}{\ccc} \left[ 4 d \left(\frac{\|p-p'\|_1}{d}\right)^\alpha \ln\left(d^{3/2} d\right) -4 d \left(\frac{\|p-p'\|_1}{d}\right)^\alpha \ln\left( 4d \left(\frac{\|p-p'\|_1}{d}\right)^\alpha \right) \right]\\
	&\leq  4 \frac{d}{\alpha} \left(\frac{\|p-p'\|_1}{d}\right)^\alpha \ln\left(\frac{d^{3/2}}{4} \right) -4 d \left(\frac{\|p-p'\|_1}{d}\right)^\alpha \ln \left(\frac{\|p-p'\|_1}{d}\right)\\
	& \leq  4 \frac{d}{\alpha} \left(\frac{\|p-p'\|_1}{d}\right)^\alpha \ln\left(d^{3/2} \right) -4 d \left(\frac{\|p-p'\|_1}{d}\right)^\alpha \ln \left(\frac{\|p-p'\|_1}{d}\right)\\
	&= 4\, d^{1-\alpha} \left[ \left(\frac{3}{2 \alpha}+1 \right) \big(\|p-p'\|_1\big)^\alpha \ln d - \big(\|p-p'\|_1\big)^\alpha \ln \|p-p'\|_1 \right]  . \label{eq:T-T bound alpha}
	\ea 
	Since the above equation is continuous around $\alpha=1$, it must also hold for $\alpha\in(0,1]$\footnote{One can easily prove this by contradiction. Imagine that the bound Eq. \eqref{eq:T-T bound alpha} does not hold for $\alpha=1$. Then there must exist an $\epsilon>0$ such that it also does not hold for $\alpha=1-\epsilon$, but this would be a contradiction.}
 which is Eq. \eqref{eq:Tsalis uniform around 1 in theorem alpha 0 1} of the theorem. For $\alpha\in(0,1)$ we can also find an appropriate bound on $\|p-p'\|_1$ so that the constraint $\| q_\ccc(p)- q_\ccc(p')\|_1 \leq 1/\me$ is satisfied. Using Eqs, \eqref{eq:q-q Tsalis}, \eqref{eq:Delta for alpha 0 1}, \eqref{eq:alpha 0 1 for constriant}, we find
	\be 
	\| q_\ccc(p)- q_\ccc(p')\|_1 \leq \frac{\Delta_\ccc(p,p')}{\|p\|_\ccc^\ccc}\leq \frac{2 d\left(\|p-p'\|_1/d\right)^\alpha}{\|p\|_\ccc^\ccc}\leq 2 d\left(\|p-p'\|_1/d\right)^\alpha.
	\ee 
	Thus imposing the constraint $2 d\left(\|p-p'\|_1/d\right)^\alpha \leq 1/\me$ we achieve the condition
	\be 
	\|p-p'\|_1 \leq d \left(\frac{1}{2\me\,d}\right)^{1/\alpha},
	\ee 
	which is the condition in the text above Eq. \eqref{eq:Tsalis uniform around 1 in theorem alpha 0 1} in the theorem.\\
	
	Similarly, plugging Eqs. \ref{eq:Delta for alpha 1 infty}, \ref{eq:norm case 1 alpah 1 infinity} into Eq. \eqref{eq:T-T  30}, we find for $\alpha\in(1,\infty)$
	\ba 
	\left| T_\alpha(p)- T_\alpha(p') \right| & \leq \frac{1}{\ccc} \left(\Delta_\ccc(p,p') \ln d^{3/2} -\Delta_\ccc(p,p') \ln \Delta_\ccc(p,p') \right) \\
	& \leq \frac{4 \lceil \ccc \rceil}{\ccc} \left( \|p-p'\|_1 \ln d^{3/2} - \|p-p'\|_1 \ln \left(4 \lceil \ccc \rceil \|p-p'\|_1 \right)\right)\\
	&\leq \frac{4 ( \ccc +1)}{\ccc} \left( \|p-p'\|_1 \ln \left(\frac{d^{3/2}}{4}\right) - \|p-p'\|_1 \ln \left(  \|p-p'\|_1 \right) - \|p-p'\|_1 \ln \left( \lceil \ccc \rceil\right) \right)\\
	& \leq 8 \left( \|p-p'\|_1 \ln \left(\frac{d^{3/2}}{4}\right) - \|p-p'\|_1 \ln \left( \|p-p'\|_1 \right)\right),\quad \forall \ccc\,\in(1,\alpha)\label{eq:Y-T final in proof}
	\ea
	which is Eq. \eqref{eq:Tsalis uniform around 1 in theorem alpha 1 infty} in the theorem. Similarly to above, for $\alpha\in(1,\infty)$ we can also find an appropriate bound on $\|p-p'\|_1$ so that the constraint $\| q_\ccc(p)- q_\ccc(p')\|_1 \leq 1/\me$ is satisfied. Using Eqs, \eqref{eq:q-q Tsalis}, \eqref{eq:alpha 1 infinity for constriant}, we find
	\be 
	\| q_\ccc(p)- q_\ccc(p')\|_1 \leq \frac{\Delta_\ccc(p,p')}{\|p\|_\ccc^\ccc}\leq \frac{2 \lceil \ccc\rceil \|p-p'\|_1}{\left(\frac{1}{d}\right)^c}= 2 \lceil \ccc\rceil d^\ccc\, \|p-p'\|_1 \quad \forall\, \ccc\in(1,\alpha).
	\ee 
	Thus imposing the constraint $2 \lceil \ccc\rceil d^\ccc\, \|p-p'\|_1 \leq 1/\me$ for all $\ccc\in(1,\alpha)$. We observe that this is satisfied for all $\ccc\in(1,\alpha)$ if
	\be \label{eq:contraint for alpha greater than 1}
	\|p-p'\|_1 \leq \frac{1}{2\me\lceil \alpha\rceil d^\alpha} 
	\ee 
	which is the condition in the text above Eq. \eqref{eq:Tsalis uniform around 1 in theorem alpha 1 infty} in the theorem.

\end{proof}

\subsection{R\'enyi entropy continuity theorem}\label{Renyi Entropy Continuity Theorem}

\begin{theorem}[R\'enyi uniform continuity]\label{thm:renyi continuity} Let $p,p'\in\cP_d$ have entries denoted $p_k$. For the following parameters, we have the following R\'enyi entropy (Eq. \ref{eq:deef renyi entrpies}) continuity bounds:
	\begin{itemize}
		\item[0)] For $\alpha\in[-\infty,-1]$, we have
		\be \label{Eq: 01 cont thm 1}
		\frac{\alpha}{1-\alpha} \ln \left( \frac{\min_k\{ p_k \}}{\min_l \{p_l p_l'\}} \|p-p'\|_1+1 \right) \leq S_\alpha(p)-S_\alpha(p') \leq \frac{|\alpha|}{1-\alpha} \ln \left( \frac{\min_k\{ p_k' \}}{\min_l \{p_l p_l'\}} \|p-p'\|_1+1 \right)
		\ee 
		and 
		\be \label{Eq: 01 cont thm 2}
		\frac{\alpha}{1-\alpha} \ln \left( \frac{\|p-p'\|_1}{\min_l \{p_l' \}} +1 \right) \leq S_\alpha(p)-S_\alpha(p') \leq \frac{|\alpha|}{1-\alpha} \ln \left( \frac{\|p-p'\|_1}{\min_l \{p_l\}} +1 \right)
		\ee 
		\item[1)]
		For $\alpha\in(0,1)$, we have
		\be \label{eq:thm eq alpha between 0 and 1}
		|S_\alpha(p)-S_\alpha(p')| \leq  \frac{\me^{(\alpha-1) S_\alpha(p)}}{1-\alpha}  \, d^{(1-\alpha)} \Big( \|p-p'\|_1 \Big)^\alpha  .
		\ee		
		\item [2)]
		Let $\epsilon_0>1$. Then for all $\alpha\in[\epsilon_0,\infty]$, we have 
		\be\label{eq:thm eq alpha gre 1}
		|S_\alpha(p)-S_\alpha(p')| \leq \frac{\epsilon_0}{\epsilon_0-1} \ln\big(1+ \|p-p'\|_1 d\,\big).
		\ee 
		\item[3.1)]
		Let $\alpha\in[1/2,1)$, and $\|p-p'\|_1 \leq {1}/{[(4 \me)^2 d]}$, then
		\be\label{eq:thm eq alpha beibourghood 2}
		|S_\alpha(p)-S_\alpha(p')| \leq  8\, d \left( 6\ln d -\ln 2 \right) \sqrt{\|p-p'\|_1} - 4\, d \sqrt{\|p-p'\|_1} \ln \sqrt{\|p-p'\|_1} 
		\ee
		\item[3.2)]
		Let $\alpha\in[1,2]$, and $\|p-p'\|_1 \leq d/(8 \me)^2$, then
		\be\label{eq:thm eq alpha beibourghood 1}
		|S_\alpha(p)-S_\alpha(p')| \leq 2\sqrt{d} \left( \|p-p'\|_1 \ln d + \sqrt{\|p-p'\|_1}\, 4d\ln\left( d/64 \right)- 8d \sqrt{\|p-p'\|_1}\ln \sqrt{\|p-p'\|_1} \right).
		\ee	
	\end{itemize}

\end{theorem}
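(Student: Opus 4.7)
The plan is to reduce all five cases to bounds on $|f_\alpha(p) - f_\alpha(p')|$ where $f_\alpha(p) = \sum_k p_k^\alpha$, using the identity $S_\alpha(p) = \tfrac{\mathrm{sgn}(\alpha)}{1-\alpha}\ln f_\alpha(p)$. The main conversion tool will be $|\ln a - \ln b| \leq |a-b|/\min(a,b)$, or equivalently $|\ln(a/b)| \leq \ln\!\bigl(1 + |a-b|/\min(a,b)\bigr)$ when the $\ln(1+\,\cdot\,)$ form is desired.

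For Part~0 ($\alpha \leq -1$), I would write
\[
(p_k')^\alpha - p_k^\alpha \;=\; \frac{p_k^{|\alpha|} - (p_k')^{|\alpha|}}{(p_k p_k')^{|\alpha|}},
\]
apply the mean-value theorem to $x^{|\alpha|}$, and bound the resulting sum: keeping the denominator factor $(p_kp_k')^{|\alpha|}$ intact yields the tighter bound involving $\min_l\{p_lp_l'\}$, while collapsing it to a single factor gives the cleaner bound involving $\min_l p_l$ or $\min_l p_l'$. For Part~1 ($\alpha \in (0,1)$), I would use the subadditivity $|a^\alpha - b^\alpha| \leq |a-b|^\alpha$ (since $x^\alpha$ is concave with $0^\alpha = 0$) followed by H\"older to get
\[
|f_\alpha(p) - f_\alpha(p')| \;\leq\; \sum_k |p_k - p_k'|^\alpha \;\leq\; d^{1-\alpha}\|p-p'\|_1^\alpha,
\]
and combine with $|\ln f_\alpha(p) - \ln f_\alpha(p')| \leq |f_\alpha(p) - f_\alpha(p')|/\min\{f_\alpha(p),f_\alpha(p')\}$; labelling $p$ so that $f_\alpha(p)$ realises this minimum, the factor $1/f_\alpha(p) = \mathrm{e}^{(\alpha-1)S_\alpha(p)}$ then produces the stated bound.

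Part~2 (uniform for $\alpha \geq \epsilon_0 > 1$) is the convex mirror of Part~1: from $|a^\alpha - b^\alpha| \leq \alpha\max(a,b)^{\alpha-1}|a-b|$ and H\"older, together with $\|p\|_\infty \leq 1$, one obtains $|f_\alpha(p) - f_\alpha(p')| \leq \alpha\|p-p'\|_1$. Convexity of $x^\alpha$ gives the lower bound $f_\alpha(p) \geq d^{1-\alpha}$ (attained at the uniform distribution), so the ratio $|f_\alpha(p)-f_\alpha(p')|/f_\alpha \leq d\|p-p'\|_1$; the estimate $|\ln(f_\alpha(p)/f_\alpha(p'))| \leq \ln(1+d\|p-p'\|_1)$ produces the advertised $\ln(1+\|p-p'\|_1 d)$ dependence, and the prefactor $\alpha/(\alpha-1)$ is monotone-decreasing in $\alpha$ and hence bounded by $\epsilon_0/(\epsilon_0-1)$. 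For Parts~3.1 and 3.2 ($\alpha$ close to $1$), the cleanest path is via Theorem~\ref{thm:Tsalis continuity}: using the exact identity $f_\alpha(p) = 1 + (1-\alpha)T_\alpha(p)$, one writes
\[
|S_\alpha(p) - S_\alpha(p')| \;=\; \frac{|\ln(1+(1-\alpha)T_\alpha(p)) - \ln(1+(1-\alpha)T_\alpha(p'))|}{|1-\alpha|},
\]
and applies $|\ln(1+x) - \ln(1+y)| \leq |x-y|/(1+\min(x,y))$. The factor $|1-\alpha|$ in the numerator cancels the $1/|1-\alpha|$ outside, leaving essentially $|T_\alpha(p) - T_\alpha(p')|/f_\alpha$; plugging in the $\sqrt{\|p-p'\|_1}\,\ln(\cdot)$ Tsalis bounds from Theorem~\ref{thm:Tsalis continuity} for the two sub-ranges yields the stated square-root form, with the smallness hypotheses on $\|p-p'\|_1$ and the factors of $d$ inherited from that theorem.

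The main obstacle will be controlling the constants near $\alpha=1$, where the $1/(1-\alpha)$ in $S_\alpha$ and the $(1-\alpha)$-vanishing of the Tsalis conversion both behave singularly and must be combined so that the $(1-\alpha)$ factors cancel exactly, while the numerical constants $8d(6\ln d - \ln 2)$ in Part~3.1 and $2\sqrt{d}$ in Part~3.2 emerge correctly. A secondary point is fixing the labelling convention for the asymmetric bounds of Parts~0 and~1: the natural reading is that $p$ is the argument whose value makes the stated denominator tightest, with the complementary bound following by relabelling $p \leftrightarrow p'$.
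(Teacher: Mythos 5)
Your Part~1 is correct and matches the paper's argument, and the overall strategy is in the right spirit, but Part~2 contains a concrete arithmetic error and Parts~0 and~3 do not deliver the theorem's stated form. In Part~2 you bound $|f_\alpha(p)-f_\alpha(p')|\leq\alpha\|p-p'\|_1$ and $f_\alpha\geq d^{1-\alpha}$ and then assert that the ratio is $\leq d\|p-p'\|_1$; in fact it is $\alpha d^{\alpha-1}\|p-p'\|_1$, and the $\tfrac{1}{\alpha-1}$ prefactor does not rescue this: $\tfrac{1}{\alpha-1}\ln\bigl(1+\alpha d^{\alpha-1}\|p-p'\|_1\bigr)\to\ln d$ as $\alpha\to\infty$, much weaker than the claimed $\tfrac{\epsilon_0}{\epsilon_0-1}\ln(1+d\|p-p'\|_1)$. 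The paper sidesteps this by working with the norm $\|p\|_\alpha$ itself rather than $f_\alpha=\|p\|_\alpha^\alpha$: from $S_\alpha=\tfrac{\alpha}{1-\alpha}\ln\|p\|_\alpha$, the $p$-norm triangle inequality plus monotonicity ($\alpha\geq1$) give $\bigl|\|p\|_\alpha-\|p'\|_\alpha\bigr|\leq\|p-p'\|_\alpha\leq\|p-p'\|_1$ with no $\alpha$-dependent factor, and $\|p\|_\alpha\geq\|p\|_\infty\geq1/d$ is $\alpha$-free, so the entire $\alpha$-dependence lives in the monotone prefactor $\alpha/(\alpha-1)\leq\epsilon_0/(\epsilon_0-1)$. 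A parallel issue affects your Part~0: the mean-value theorem on $x^{|\alpha|}$ leaves $|\alpha|$-dependent powers of $\min_l\{p_lp_l'\}$ in the denominator, whereas the statement has a single power; the paper instead applies the $p$-norm triangle inequality ($|\alpha|\geq1$) to the inverted vectors $p^{-1}=(1/p_1,\ldots,1/p_d)$, using $\|p\|_\alpha=1/\|p^{-1}\|_{|\alpha|}$.

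For Parts~3.1 and~3.2 your reduction $|S_\alpha(p)-S_\alpha(p')|\leq|T_\alpha(p)-T_\alpha(p')|/\min(f_\alpha(p),f_\alpha(p'))$ is correct, but feeding in Theorem~\ref{thm:Tsalis continuity} does not close the argument: for $\alpha\in[1,2]$ that theorem's part~2) requires $\|p-p'\|_1\leq\tfrac{1}{2\me\lceil\alpha\rceil d^\alpha}$ (so of order $1/d^2$ at $\alpha=2$), far more restrictive than Part~3.2's hypothesis $\|p-p'\|_1\leq d/(8\me)^2$; and the Tsallis bound there is purely linear in $\|p-p'\|_1$, whereas the stated Part~3.2 bound has the mixed shape $\|p-p'\|_1\ln d+\sqrt{\|p-p'\|_1}\,\ln(\cdot)$. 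The paper bypasses Tsallis entirely: it applies the scalar mean-value theorem in $\alpha$ to $g_\alpha=\|p'\|_\alpha-\|p\|_\alpha$ (which vanishes at $\alpha=1$, so the $1/(\alpha-1)$ singularity cancels against the factor $(\alpha-1)$ produced by the MVT), evaluates $\tfrac{d}{d\alpha}\|p\|_\alpha=-\tfrac{\|p\|_\alpha}{\alpha^2}S_1(q_\alpha(p))$ with $q_\alpha(p)_i\propto p_i^\alpha$, applies Fannes to $|S_1(q_\alpha(p))-S_1(q_\alpha(p'))|$, and the $\sqrt{\|p-p'\|_1}$ dependence then enters through the estimate of $\|q_\alpha(p)-q_\alpha(p')\|_1$ via Lemma~\ref{Lemm:sum diff up bound}.
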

\begin{proof}
	This proof is divided into subsections, one for each $\alpha$ regime, 0), 1), 2), 3). 
	\subsubsection{Proof of 0). $\alpha\in[-\infty,-1]$}
	For $\alpha<0$, we have
	\be 
	S_\alpha(p)-S_\alpha(p')= \frac{-\alpha}{1-\alpha}\ln \|p\|_\alpha - \frac{-\alpha}{1-\alpha}\ln \|p'\|_\alpha = \frac{|\alpha|}{1-\alpha} \ln \left( \frac{\|p^{-1}\|_{|\alpha|}}{\|p^{\prime-1}\|_{|\alpha|}} \right),
	\ee 
	where $\|x^{-1}\|_\alpha:= \sum_{i=1}^d (1/x_i)^\alpha$. Therefore, 
	\be\label{eq:S-S neg}
	S_\alpha(p)-S_\alpha(p')\geq 0
	\ee 
	iff $\|p^{-1}\|_{|\alpha|}/ \|p^{\prime-1}\|_{|\alpha|}\geq 0$. \\
	We will now provide a proof for $\alpha\in[-\infty,- 1]$ and finalise the proof for the remaining negative interval afterwards. Assume $\|p^{-1}\|_{|\alpha|}/ \|p^{\prime-1}\|_{|\alpha|}\geq 0$,\footnote{Note that the upper bound Eq. \eqref{eq:S-S neg} is non-negative. Therefore, if assumption $\|p^{-1}\|_{|\alpha|}/ \|p^{\prime-1}\|_{|\alpha|}\geq 0$ does not hold, the bound will be trivially true since $S_\alpha(p)-S_\alpha(p')$ will be negative.} we then have using the $p$-norm triangle inequality (Lemma \ref{norm lem trangle inequality}), 
	\be 
	S_\alpha(p)-S_\alpha(p')=   \frac{|\alpha|}{1-\alpha} \ln \left( \frac{\|p^{-1}-p^{\prime-1}+p^{\prime-1}\|_{|\alpha|}}{\|p^{\prime-1}\|_{|\alpha|}} \right) \leq \frac{|\alpha|}{1-\alpha} \ln \left( \frac{\|p^{-1}-p^{\prime-1}\|_{|\alpha|}}{\|p^{\prime-1}\|_{|\alpha|}} +1 \right).
	\ee 
	Also from Lemma \ref{norm lem trangle inequality}, it follows
	\ba
	\|p^{-1}-p^{\prime-1}\|_{|\alpha|} & \leq \|p^{-1}-p^{\prime-1}\|_{1}   = \sum_{i=1}^d \frac{|p_i-p_i'|}{p_i p_i'} \leq \frac{1}{\min_k{p_k p_k'}} \|p-p'\|_1\\
	\|p^{\prime-1}\|_{|\alpha|} &\geq  \|p^{\prime-1}\|_{\infty}= \max_i \{1/p'_i\}=1/\min_i\{p'_i\}. 
	\ea 
	Therefore, by plugging in the above inequalities,
	\be\label{up bound for o1 neg 1} 
	S_\alpha(p)-S_\alpha(p') \leq \frac{|\alpha|}{1-\alpha} \ln \left( \frac{\min_k\{ p_k' \}}{\min_l \{p_l p_l'\}} \|p-p'\|_1+1 \right)
	\ee
	which is the R.H.S. of Eq. \eqref{Eq: 01 cont thm 1}. For the proof of the L.H.S. of Eq. \eqref{Eq: 01 cont thm 1}, we note that this term is negative. Therefore, it is trivially true if $S_\alpha(p)-S_\alpha(p')\geq 0$. When $S_\alpha(p)-S_\alpha(p')< 0$, we can write $S_\alpha(p)-S_\alpha(p')=-( S_\alpha(p')-S_\alpha(p))$, where the term in brackets in positive. Thus from Eq. \eqref{eq:S-S neg}, we can use upper bound Eq. \eqref{up bound for o1 neg 1} with $p\mapsto p'$, $p'\mapsto p$ to achieve the L.H.S. of Eq. \eqref{Eq: 01 cont thm 1}. Eqs. \eqref{Eq: 01 cont thm 2} follow from noting $\min_k\{p_kp_k'\}\geq \min_k\{p_k\}\min_k\{p_k'\}$.
	\subsubsection{Proof of 1). $\alpha\in(0,1)$}		
	From notes from old bound [delete old notes and just put that section here], we have 
	\be
	|S_\alpha(p)-S_\alpha(p')| \leq   \frac{1}{1-\alpha} \, \frac{1}{\sum_{i=1}^d  p_i^\alpha} \sum_{i=1}^d | p_i^\alpha -p_i^{\prime\alpha} |= \frac{1}{1-\alpha} \, \frac{1}{\| p_i\|_\alpha^\alpha} \sum_{i=1}^d | p_i^\alpha -p_i^{\prime\alpha} |.
	\ee 
	Eq. \eqref{eq:thm eq alpha between 0 and 1} now follows directly by bounding $ \sum_{i=1}^d | p_i^\alpha -p_i^{\prime\alpha} |$ using Lemma \ref{Lemm:sum diff up bound}, and the relationship between R\'enyi entropies, and $\| p\|_\alpha$, namely $(1-\alpha) S_\alpha(p)= \ln \|p\|_\alpha^\alpha$.

	\subsubsection{Proof of 2). $\alpha\in[\epsilon_0,\infty]$}\label{sec:of (2)}
	We start with proving Eq. \eqref{eq:thm eq alpha gre 1}.
	\ba\label{eq:reyi difference bound general}
	\left| S_\alpha(p)-S_\alpha(p') \right| &= \left|  \frac{\alpha}{1-\alpha} \ln \|p'\|_\alpha - \frac{\alpha}{1-\alpha} \ln \|p\|_\alpha   \right| = \left|  \frac{\alpha}{1-\alpha}\right|\, \left| \ln\left(\frac{ \|p'\|_\alpha}{\|p\|_\alpha}\right) \right| \\
	&=
	\begin{cases}
		\left|  \frac{\alpha}{1-\alpha}\right|\,  \ln\left(\frac{ \|p'\|_\alpha- \|p\|_\alpha}{\|p\|_\alpha}+1\right) ,\quad \text{for } \|p'\|_\alpha- \|p\|_\alpha \geq 0, \vspace{0.3cm}\\
		\left|  \frac{\alpha}{1-\alpha}\right|\,  \ln\left(\frac{ \|p\|_\alpha- \|p'\|_\alpha}{\|p'\|_\alpha}+1\right),\quad \text{for }  \|p\|_\alpha- \|p'\|_\alpha \geq 0,
	\end{cases}	\\
	&= 	\left|  \frac{\alpha}{1-\alpha}\right|\,  \ln\left(\frac{\Big| \|p\|_\alpha- \|p'\|_\alpha \Big|}{\|p''\|_\alpha}+1\right),
	\ea 
	where
	\be 
	\|p''\|_\alpha = \begin{cases}
		\|p\|_\alpha \quad \text{if } \|p'\|_\alpha-\|p\|_\alpha \geq 0 \\ 
		\|p'\|_\alpha \quad \text{otherwise}.
	\end{cases}
	\ee  
	For $\alpha > 1$, from Lemma \ref{Lemm:sum diff up bound}, we have $\Big| \|p\|_\alpha- \|p'\|_\alpha \Big|\leq \|p-p'\|_1$. Furthermore, using Eq. \eqref{eq:p r inequalities}, we have $\|p''\|_\alpha \geq \|p''\|_\infty$. However, we also have that $\|p''\|_\infty=\max_{i}\{p_i''\}\geq 1/d$. Thus Eq. \eqref{eq:thm eq alpha gre 1} follows since the logarithm is an increasing function.

	\subsubsection{Proof of 3.1) and 3.2). $\alpha\in[1/2, 2]$}
	
	We now move on to the proof of Eqs. \eqref{eq:thm eq alpha beibourghood 1},\eqref{eq:thm eq alpha beibourghood 2}. To start with, we define 
	the function $F$ to be any upper bound to 
	\be 
	\left| \frac{1}{\alpha-1} \frac{\|p'\|_\alpha -\|p\|_\alpha}{\|p''\|_\alpha}\right| \leq F_\alpha(p,p').
	\ee  
	Using Eq. \eqref{eq:reyi difference bound general}, we can now write
	\ba\label{eq:S alpha change}
	| S_\alpha(p)-S_\alpha(p')| & = \left|  \frac{\alpha}{1-\alpha}\right|\,  \ln\left(\frac{\Big| \|p\|_\alpha- \|p'\|_\alpha \Big|}{\|p''\|_\alpha}+1\right) \leq \frac{\alpha}{|\alpha-1|} \ln\left( |\alpha-1| F_\alpha(p,p')+1 \right)\\
	& \leq \alpha F_\alpha(p,p').
	\ea
	where in the last line we have used the inequality $\ln(x+1)\leq x$ for $x\geq 0$. 
	
	We now set out to find an appropriate expression for $F_\alpha(p,p')$. For this we will use the mean value theorem, Theorem \ref{lemm:MVT}. We start by finding two separate expressions for the function
	\footnote{Here it will be assumed that $g_\alpha(p,p')$ is differentiable w.r.t. $\alpha$ on the interval $(1,\alpha)$ for $\alpha\geq 1$ and $(\alpha,1)$ for $\alpha\leq 1$. Later we will calculate explicitly its derivative, thus verifying this assumption.}
	\be 
	g_\alpha(p,p'):=  {\|p'\|_\alpha -\|p\|_\alpha}.
	\ee 
	Using the notation $a,b,c,$ from the mean value theorem (Thm. \ref{lemm:MVT}), we have
	\begin{itemize}
		\item [1)] $a=1$, $b=\alpha$, $\alpha\geq 1$ 
		\be \label{eq:g alpha p p 1}
		g_\alpha(p,p')=g_1(p,p') +g_c'(p,p') (\alpha-1)=g_c'(p,p') (\alpha-1), \quad \text{for some } c\in(1,\alpha).
		\ee 
		\item[2)] $b=1$, $a=\alpha$, $\alpha\leq 1$
		\be\label{eq:g alpha p p 2}
		g_\alpha(p,p')=g_1(p,p') +g_c'(p,p') (-1+\alpha)= g_c'(p,p') (\alpha-1), \quad \text{for some } c\in(\alpha,1).
		\ee 
	\end{itemize}
	Where in both cases we have used $\|p\|_1=\|p'\|_1=1$. 
	We thus conclude
	\be 
	g_\alpha(p,p')= g_\beta'(p,p')(\alpha-1),\quad \text{for some } \beta\in \begin{cases}
		(\alpha,1) &\text{ if } \alpha <1\\
		(1,\alpha) &\text{ if } \alpha  \geq 1.
	\end{cases}
	\ee 
	We thus have 
	\be \label{eq:F beta def}
	\left| \frac{1}{\alpha-1} \frac{\|p'\|_\alpha -\|p\|_\alpha}{\|p''\|_\alpha}\right|=  \frac{|g_\beta'(p,p')|}{\|p''\|_\alpha}=\frac{1}{\|p''\|_\alpha}\left| \frac{d}{d\beta}  \bigg(\|p'\|_\beta -\|p\|_\beta\bigg) \right| := F_\beta(p,p').
	\ee 
	We thus have taking into account eq. \eqref{eq:S alpha change},
	\be\label{eq:S alpha up bound dev}
	|S_\alpha(p)-S_\alpha(p')|  \leq \alpha\,\sup_{\beta\in\mathcal{I}} F_\beta(p,p'),
	\ee
	where $\mathcal{I}=[1/2,1]$ when $\alpha\leq 1$, and $\mathcal{I}=[1,2]$ when $\alpha \geq 1$.
	We have so-far managed to remove the singularity at $\alpha=1$ in our upper bound to the R\'enyi entropies. We will now set out to prove a relationship between this upper bound and the distance $\|p-p'\|_1$. We start by find the derivative. For convenience, note that for $x\in\cc^n$, 
	\be 
	\|x\|_{\alpha}  = \exp \left({\frac 1 {\alpha}  \ln \sum_{i=1}^\infty |x_i|^{\alpha}}\right).
	\ee 
	Hence the derivative is
	\ba
	{\frac{d}{d\alpha}} \|x\|_{\alpha} &=  \|x\|_{\alpha} {\frac{d}{d\alpha}} \left({\frac 1 {\alpha}  \ln \sum_{i=1}^\infty {|x_i|}^{\alpha} }\right) = 
	\left( { \left({ {\frac{d}{d\alpha}} \frac 1 {\alpha} } \right) \ln \sum_{ i =1 }^\infty {|x_i|}^{\alpha} +
		\frac 1 {\alpha} {\frac{d}{d\alpha}} \ln \sum_{i=1}^\infty {|x_i|}^{\alpha}} \right) \\
	&=\|x\|_{\alpha}  \left( { \frac {-1} {{\alpha}^2} \ln \sum_{ i =1 }^\infty {|x_i|}^{\alpha} +
		\frac 1 {\alpha} \frac 1 { \sum_{i=1}^\infty {|x_i|}^{\alpha} } {\frac{d}{d\alpha}} \sum_{i=1}^\infty {|x_i|}^{\alpha} }\right).
	\ea 
	However, ${\frac{d}{d\alpha}} \sum_{i=1}^\infty {|x_i|}^{\alpha} = {\frac{d}{d\alpha}} \sum_{i=1}^\infty e^{{\alpha} \ln {|x_i|}} = \sum_{i=1}^\infty {|x_i|}^{\alpha} \ln {|x_i|}$, therefore
	\ba
	{\frac{d}{d\alpha}} \|x\|_{\alpha} &=\|x\|_{\alpha} \left( { \frac {-1} {{\alpha}^2} \ln \sum_{ i =1 }^\infty {|x_i|}^{\alpha} +
		\frac 1 {\alpha} \frac 1 { \sum_{i=1}^\infty {|x_i|}^{\alpha} } \sum_{i=1}^\infty {|x_i|}^{\alpha} \ln {|x_i|} }\right)\\
	&=\frac{\|x\|_{\alpha}}{\alpha}  \left( - \ln \|x\|_{\alpha}  +\frac{1}{\|x\|_{\alpha}^\alpha} \sum_{i=1}^d |x_i|^\alpha \ln |x_i| \right). \label{eq:div sudo p norm}
	\ea
	Now, by direct calculation we observe that the above line can be written in terms of the $S_1$, Shannon entropy for a probability distribution which depends on $\alpha$, namely\footnote{Recall that $S_1(x) = -\sum_{i=1}^d |x_i| \ln |x_i|$.}
	\be 
	S_1\left(q_\alpha(x)\right)= -\alpha \left( - \ln \|x\|_{\alpha}  +\frac{1}{\|x\|_{\alpha}^\alpha} \sum_{i=1}^d |x_i|^\alpha \ln |x_i| \right),
	\ee 
	where the components of the normalised probability vector $q_\alpha(x)$ are 
	\be 
	[q_\alpha(x)]_i := \frac{|x_i|^\alpha}{\|x\|_\alpha^\alpha}, \quad i=1,2,3,\ldots ,d.
	\ee 
	Therefore, 
	\be\label{eq:div p-norm for x}
	{\frac{d}{d\alpha}} \|x\|_{\alpha} = -\frac{\|x\|_\alpha}{\alpha^2} S_1\left( q_\alpha(x) \right).
	\ee 
	From Eq. \eqref{eq:F beta def},
	\be 
	\beta^2 \,{\|p''\|_\alpha} F_\beta(p,p') =\beta^2\left| \frac{d}{d\beta}  \bigg(\|p'\|_\beta -\|p\|_\beta\bigg) \right| = \Big|\|p'\|_{\beta}\, S_1\left( q_\alpha(p') \right) - \|p\|_{\beta} \, S_1\left( q_\alpha(p) \right)\Big|,
	\ee
	Thus,
	\ba
	\beta^2\,{\|p''\|_\alpha} F_\beta(p,p') & =\|p\|_{\beta} \bigg|\,\left(\frac{\|p'\|_{\beta}}{\|p\|_{\beta}}-1\right)  S_1\left( q_\alpha(p') \right)+S_1\left( q_\alpha(p')\right) - S_1\left( q_\alpha(p) \right)\bigg| \\
	&\leq  \|p\|_{\beta} \bigg(\,\left|\frac{\|p'\|_{\beta}}{\|p\|_{\beta}}-1\right|  \left|S_1\left( q_\alpha(p') \right)\right|+\left|S_1\left( q_\alpha(p') \right) -  S_1\left( q_\alpha(p) \right)\right|\bigg)\\
	& = \left|S_1\left( q_\alpha(p') \right)\right| \big|\|p'\|_{\beta}-\|p\|_{\beta}\big|  + \|p\|_{\beta}\big|S_1\left( q_\alpha(p') \right) -  S_1\left( q_\alpha(p) \right)\big|\\
	& \leq   \left(  \max_{q\in\cP_d}\left|S_1(q)\right|\right) \Big|\|p'\|_{\beta}-\|p\|_{\beta}\Big|  + \|p\|_{\beta}\Big| S_1\left(q_\beta(p')\right) -  S_1\left(q_\beta(p)\right)\Big|.
	\ea
	
	Therefore, noting that the Shannon entropy is maximized for the uniform distribution and applying the Fannes inequality (Lemma \ref{Lemm:Fannes}), we achieve
	\ba
	\beta^2 \,{\|p''\|_\alpha} F_\beta(p,p') & \leq {\ln d} \, \Big|\|p'\|_{\beta}-\|p\|_{\beta}\Big|  +{\|p\|_{\beta}}  \,      \Big( \| q_\beta(p)- q_\beta(p')\|_1 \ln d  - \| q_\beta(p)- q_\beta(p')\|_1 \ln\left( \|q_\beta(p)- q_\beta(p')\|_1 \right)    \Big).
	\ea
	We now pause a moment to bound $\|q_\beta(p)- q_\beta(p')\|_1 $. Using the definition of $q_\alpha(p)$, we have
	\ba
	\|q_\beta(p)- q_\beta(p')\|_1 & =\sum_{i=1}^d \left| \frac{p_i^\beta}{\|p\|_\beta^\beta}+ \frac{p_i^{\prime\beta}}{\|p'\|_\beta^\beta} \right| = \sum_{i=1}^d \frac{1}{\|p\|_\beta^\beta} \Bigg| p_i^\beta -p_i^{\prime \beta} + p_i^{\prime \beta}\left( 1-\frac{\|p\|_\beta^\beta}{\|p'\|_\beta^\beta}  \right) \Bigg|\\
	& \leq   \sum_{i=1}^d \frac{1}{\|p\|_\beta^\beta} \Bigg( \left|p_i^\beta -p_i^{\prime \beta}\right|  + p_i^{\prime \beta}\left| 1-\frac{\|p\|_\beta^\beta}{\|p'\|_\beta^\beta}  \right| \Bigg)  = \frac{1}{\|p\|_\beta^\beta} \Bigg(  \Big| \|p\|_\beta^\beta -  \|p'\|_\beta^\beta  \Big| +   \sum_{i=1}^d \left|p_i^\beta -p_i^{\prime \beta}\right|\Bigg)\\
	&\leq  \frac{2}{\|p\|_\beta^\beta} \Bigg(    \sum_{i=1}^d \left|p_i^\beta -p_i^{\prime \beta}\right|\Bigg) = \frac{\Delta(p,p')}{\|p\|_\beta^\beta} ,
	\ea 
	where in the last line, we have used Lemma \ref{Lemm:sum diff up bound} and defined,
	\be \label{eq: F beta up bound 2}
	\Delta(p,p'):=2\sum_{i=1}^d \big| p_i^\beta-p_i^{\prime \beta}  \big|.
	\ee 
	Therefore, for $\| q_\beta(p)- q_\beta(p')\|_1 \leq 1/\me$, 
	\be\label{eq:up bound final before deivde}
	\beta^2  \,{\|p''\|_\alpha} F_\beta(p,p')  \leq  \, \Big|\|p'\|_{\beta}-\|p\|_{\beta}\Big| {\ln d} +\|p\|_\beta^{1-\beta}  \Big( \Delta(p,p') \ln d-\Delta(p,p') \ln \Delta(p,p') +\beta\,\Delta(p,p') \ln \|p\|_\beta \Big).
	\ee
	We will now proceed to bound Eq. \eqref{eq: F beta up bound 2} separately for $\beta\in [1/2,1)$, and $\beta\in[1,2]$. We start with the easier of the two.
	\begin{itemize}
		\item [For] $\beta\in[1,2]$:\\
		Setting $r=1$, $p=\beta$ in Eq. \eqref{eq:p r inequalities}, and recalling $\|p\|_1=1$, it follows
		\be \label{eq:up low for beta norm 1 to 2}
		d^{1/\beta -1} \leq \| p\|_\beta \leq 1.
		\ee 
		Similarly, from Lemma \ref{Lemm:sum diff up bound}, and assuming $\|p-p'\|_1\leq d$, we have 
		\ba 
		\Delta(p,p') &\leq 8\, d^{1-\beta/2} \Big(\|p-p'\|_1\Big)^{\beta/2} \leq 8\, d \Big(\frac{\|p-p'\|_1}{d}\Big)^{\beta/2} \leq 8 \sqrt{d\, \|p-p'\|_1}\\
		\Big|\|p'\|_{\beta}-\|p\|_{\beta}\Big| & \leq \|p-p'\|_1.
		\ea
		Furthermore more, from Eq. \eqref{eq:up low for beta norm 1 to 2} if follows
		\be 
		\|p\|_\beta^{1-\beta}\leq \frac{1}{\left(d^{1/\beta-1}\right)^{\beta-1}} \leq \sqrt{d}.
		\ee 
		We will now see which of the two constraints, $\|p-p'\|_1\leq d$, and $\| q_\beta(p)- q_\beta(p')\|_1 \leq 1/\me$ is more demanding. 
		\be 
		\| q_\beta(p)-q_\beta(p')\|_1 \leq \frac{8}{d}\sqrt{d \|p-p'\|_1} \leq \frac 1 \me \implies \|p-p'\|_1 \leq \frac{d}{(8 \me)^2} \leq d.
		\ee 
		therefore  $\|p-p'\|_1\leq d$, and $\| q_\beta(p)- q_\beta(p')\|_1 \leq 1/\me$ are both satisfied if $\|p-p'\|_1 \leq {d}/{(8 \me)^2}$. From these bounds, Eq. \eqref{eq:thm eq alpha beibourghood 1} follows.
		\item [For] $\beta\in[1/2,1)$: \\
		Setting $r=\beta$, $p=1$ in Eq. \eqref{eq:p r inequalities}, and recalling $\|p\|_1=1$, it follows
		\be \label{eq:up low for beta norm 1/2 to 1}
		1\leq \|p\|_\beta\leq d^{1/\beta-1}.
		\ee 
		From Eq. \eqref{eq:up low for beta norm 1/2 to 1}
		\be \label{eq:up for 1/2 to 1}
		\| p\|_\beta^{1-\beta} \leq \left( d^{1/\beta-1} \right)^{1-\beta}= d^{-2+\beta+1/\beta} \leq \sqrt{d}.
		\ee 
		
		Similarly, from Lemma \ref{Lemm:sum diff up bound}, and assuming $\|p-p'\|_1\leq d$, we have 
		\ba 
		\Delta(p,p') &\leq 4\, d^{1-\beta} \Big(\|p-p'\|_1\Big)^{\beta} \leq 4 d \left( \frac{\|p-p'\|_1}{d} \right)^\beta \leq 4\sqrt{d\, \|p-p'\|_1},\\
		\Big|\|p'\|_{\beta}-\|p\|_{\beta}\Big| & \leq 8\, d^{1/\beta-1/(2\beta)+1/2-\beta} \left(\|p-p'\|_1\right)^{1/2} \leq 8\, d \sqrt{\|p-p'\|_1} .
		\ea
		
		We will now see which of the two constraints, $\|p-p'\|_1\leq d$, and $\| q_\beta(p)- q_\beta(p')\|_1 \leq 1/\me$ is more demanding. 
		\be 
		\| q_\beta(p)-q_\beta(p')\|_1 \leq \frac{\Delta(p,p)}{\|p\|_\beta^\beta}\leq 4\sqrt{d \|p-p'\|_1} \leq \frac 1 \me \implies \|p-p'\|_1 \leq \frac{1}{(4 \me)^2d} \leq d.
		\ee 
		therefore  $\|p-p'\|_1\leq d$, and $\| q_\beta(p)- q_\beta(p')\|_1 \leq 1/\me$ are both satisfied if $\|p-p'\|_1 \leq {1}/{(4 \me)^2 d}$.
		
		Plugging this all into Eq. \eqref{eq:up bound final before deivde} and simplifying the resultant expression, followed by plugging int Eq. \eqref{eq:S alpha up bound dev}, we arrive at Eq. \eqref{eq:thm eq alpha beibourghood 2}.
		
	\end{itemize}

\end{proof}

\subsection{Miscellaneous Lemmas and Theorems used in the proofs to the entropy continuity Theorems \ref{thm:Tsalis continuity} and \ref{thm:renyi continuity}.}

\begin{lemma}\label{x beta - y beta lem}
	Let $0<\alpha<1$. Then $\forall$\, $x,y\geq 0,$ $\epsilon>0$,
	\begin{equation}
	|x^\alpha-y^\alpha|\leq \epsilon^\alpha+\epsilon^{\alpha-1}|x-y|
	\end{equation} 
\end{lemma}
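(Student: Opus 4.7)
The claimed inequality is symmetric in $x,y$, so the plan is to assume without loss of generality $x\geq y\geq 0$ and drop the absolute values. The proof then splits naturally into two elementary steps, each using a standard property of the map $t\mapsto t^\alpha$ for $\alpha\in(0,1)$.

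First I would establish the well-known subadditivity bound $x^\alpha - y^\alpha \leq (x-y)^\alpha$. This is an immediate consequence of concavity of $t\mapsto t^\alpha$ on $[0,\infty)$ together with vanishing at the origin: writing $x = y + (x-y)$ and using $(a+b)^\alpha \leq a^\alpha + b^\alpha$ for $a,b\geq 0$ when $\alpha\in(0,1)$ yields the claim. So it suffices to show
\begin{equation}
(x-y)^\alpha \;\leq\; \epsilon^\alpha + \epsilon^{\alpha-1}(x-y).
\end{equation}

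Second, I would reduce this scalar inequality to a one-variable statement by substituting $s:=(x-y)/\epsilon\geq 0$ and dividing through by $\epsilon^\alpha$. The inequality becomes $s^\alpha \leq 1 + s$ for all $s\geq 0$, which is trivial: for $s\leq 1$ one has $s^\alpha\leq 1\leq 1+s$, and for $s>1$ (using $\alpha<1$) $s^\alpha<s<1+s$. Combining the two steps gives the result.

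I do not expect any real obstacle here; the lemma is an elementary calculus inequality and each of the two steps is a one-line argument. The only mild subtlety is recording that the reduction $s^\alpha \leq 1 + s$ genuinely uses $\alpha\in(0,1)$ (the hypothesis of the lemma), since otherwise the bound $s^\alpha\leq \max(1,s)$ used in the $s>1$ branch fails.
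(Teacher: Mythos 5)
Your proof is correct. Note that the paper does not actually prove this lemma — it simply cites it as Lemma~5 of the reference \cite{Tvan} — so there is no in-paper argument to compare against. Your two-step route (subadditivity $x^\alpha - y^\alpha \le (x-y)^\alpha$ from concavity of $t\mapsto t^\alpha$, then the substitution $s=(x-y)/\epsilon$ reducing everything to the scalar bound $s^\alpha \le 1+s$) is a clean, self-contained elementary proof that could replace the citation; each step is verified correctly and the dependence on $\alpha\in(0,1)$ is used exactly where you flag it, both in the subadditivity bound $(a+b)^\alpha \le a^\alpha + b^\alpha$ and in the $s>1$ branch $s^\alpha < s$.
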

\begin{proof}
	See Lemma 5 in \cite{Tvan}. 
\end{proof}


\begin{lemma}[$p$-norm inequalities]\label{norm lem trangle inequality}
	For $x\in\cc^n$ and $p\in(0,\infty]$, define
	\begin{equation}\label{eq:seudo p-norm}
	\|x\|_p:=\left( \sum_{q=1}^n |x_q|^p \right)^{1/p}.
	\end{equation}
	For $p\geq 1$, this is a norm, known as the \textit{p-norm}.\\
	For $0<r\leq p$ we have the \textup{$p$-norm interchange inequalities}
	\begin{equation}\label{eq:p r inequalities}
	\|x\|_p\leq \|x\|_r\leq n^{\left(\frac{1}{r}-\frac{1}{p}\right)} \|x\|_p.
	\end{equation}
	Furthermore, for $p\in [1,\infty]$, we have the \textup{$p$-norm triangle inequality},
	\be \label{eq:p-norm triangle inequality}
	\|x+y\|_p\leq \|x\|_p+\|y\|_p
	\ee 
\end{lemma}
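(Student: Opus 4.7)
The three inequalities are classical facts about $\ell^p$-norms (power-mean monotonicity and Minkowski's inequality). My plan is to prove the two interchange inequalities separately, then deduce Minkowski from H\"older's inequality. The edge cases $p=\infty$ and $r=0^+$ can be handled by continuity from finite $r,p$, so I would restrict to $0<r\leq p<\infty$ throughout the bulk of the argument.

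First I would establish $\|x\|_p\leq \|x\|_r$. By homogeneity of both sides in $x$, I may rescale to $\|x\|_r=1$ (the case $x=0$ is trivial). This forces $|x_q|^r\leq 1$, hence $|x_q|\leq 1$, and since $p\geq r$ one has $|x_q|^p\leq |x_q|^r$ for each $q$. Summing and taking $1/p$-powers gives $\|x\|_p\leq 1=\|x\|_r$. This step is the easiest and requires no auxiliary inequalities.

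For the reverse bound $\|x\|_r\leq n^{1/r-1/p}\|x\|_r$, I would apply H\"older's inequality to the pair $(|x_q|^r,1)$ with conjugate exponents $s=p/r\geq 1$ and $s'=s/(s-1)$, which yields
\begin{equation}
\sum_{q=1}^n |x_q|^r \cdot 1 \;\leq\; \Bigl(\sum_q |x_q|^{rs}\Bigr)^{1/s}\Bigl(\sum_q 1\Bigr)^{1/s'} \;=\; \|x\|_p^{\,r}\, n^{1-r/p}.
\end{equation}
Taking $1/r$-powers delivers the stated constant $n^{1/r-1/p}$. The only subtlety is the degenerate case $r=p$, where both sides collapse and the inequality is trivial.

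Finally, for the triangle inequality with $p\in[1,\infty]$, the endpoint cases $p=1$ (sub-additivity of $|\cdot|$) and $p=\infty$ (trivial) are immediate, so I would focus on $1<p<\infty$. The standard Minkowski proof starts from $|x_q+y_q|^p\leq |x_q+y_q|^{p-1}(|x_q|+|y_q|)$, sums over $q$, and applies H\"older with conjugate exponents $p$ and $p/(p-1)$ to each of the two resulting sums, producing $\|x+y\|_p^p\leq \|x+y\|_p^{p-1}(\|x\|_p+\|y\|_p)$; dividing by $\|x+y\|_p^{p-1}$ (or noting the inequality is trivial when this vanishes) gives the claim. The main ``obstacle,'' insofar as there is one, is simply that H\"older itself requires a short Young-inequality argument; however, since H\"older is a textbook result and the lemma is only used as a convenient bookkeeping tool in the continuity estimates above, I would cite H\"older rather than reprove it, keeping the proof to a few lines per inequality.
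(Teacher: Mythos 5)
Your proof is mathematically correct, and all three steps (normalisation for $\|x\|_p\leq\|x\|_r$, H\"older for the factor $n^{1/r-1/p}$, and the standard H\"older-based derivation of Minkowski) are the textbook arguments one would want here. The paper, however, does not actually prove this lemma at all: its ``proof'' consists of the single line ``See [Bourbaki].'' So you have supplied genuine content where the paper delegates to a reference. That is perfectly acceptable --- the lemma is classical and both routes are legitimate --- but it is worth being aware that you did more work than strictly required by the paper's standards. One small slip: in stating the goal of the second step you wrote ``$\|x\|_r\leq n^{1/r-1/p}\|x\|_r$'' where the right-hand side should read $\|x\|_p$; the displayed H\"older computation that follows is correct and delivers the intended $\|x\|_r\leq n^{1/r-1/p}\|x\|_p$, so this is purely a transcription typo, not a gap in the argument.
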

\begin{proof}
	See \cite{Bourbaki}.
\end{proof}

\begin{lemma}[Fannes inequality \cite{Fannes}] \label{Lemm:Fannes}
	For any $p,p'\in\cP_d$, the following continuity bounds hold.
	\be 
	|S_1(p)- S_1(p')| \leq
	\begin{cases}
		\|p -p'\|_1 \ln(d)- \|p -p'\|_1\ln\left(\|p -p'\|_1\right),\quad \text{ for all } \|p -p'\|_1\\
		\|p -p'\|_1 \ln(d)+1/(\me \ln(2)),\quad \text{ if } \|p -p'\|_1\leq 1/(2\me),
	\end{cases}
	\ee 
	where $S_1$ is the Shannon entropy.
\end{lemma}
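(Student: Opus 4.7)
The plan is to reduce the multivariate bound to a one-variable estimate for the function $\eta(x) := -x \ln x$ (with the convention $\eta(0)=0$), and then lift it via Jensen's inequality. Since $S_1(q) = \sum_{i=1}^d \eta(q_i)$, the triangle inequality immediately gives
\begin{equation*}
|S_1(p) - S_1(p')| \;\leq\; \sum_{i=1}^d |\eta(p_i) - \eta(p_i')| ,
\end{equation*}
so everything hinges on a good pointwise estimate for $|\eta(a)-\eta(b)|$ and a way to sum the terms under the constraint $\sum_i |p_i - p_i'| = \|p-p'\|_1$.

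The analytic core is a sublemma: for $a,b \in [0,1]$ (subject to a size restriction on $|a-b|$), $|\eta(a)-\eta(b)| \leq \eta(|a-b|)$. To establish it when $a > b$, I would set $t := a-b$ and introduce $f(a) := \eta(a) - \eta(a-t) - \eta(t)$ on $a \in [t,1]$. One verifies $f(t) = \eta(t) - \eta(0) - \eta(t) = 0$ using $\eta(0)=0$, and
\begin{equation*}
f'(a) \;=\; \eta'(a) - \eta'(a-t) \;=\; \ln\!\left(\frac{a-t}{a}\right) \;\leq\; 0 ,
\end{equation*}
so $f \leq 0$ throughout its domain, yielding $\eta(a) - \eta(b) \leq \eta(|a-b|)$. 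The reverse direction (when $\eta(b) > \eta(a)$) is handled by a symmetric argument exploiting the unique maximum of $\eta$ at $1/\me$.

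Writing $\delta_i := |p_i - p_i'|$ and $\epsilon := \|p-p'\|_1 = \sum_i \delta_i$, the sublemma combined with the triangle inequality gives $|S_1(p) - S_1(p')| \leq \sum_i \eta(\delta_i)$. Concavity of $\eta$ on $[0,1]$ together with $\sum_i \delta_i = \epsilon$ permits Jensen's inequality, producing
\begin{equation*}
\sum_{i=1}^d \eta(\delta_i) \;\leq\; d\,\eta(\epsilon/d) \;=\; -\epsilon\,\ln(\epsilon/d) \;=\; \epsilon \ln d - \epsilon \ln \epsilon ,
\end{equation*}
which is the first claimed bound. For the second bound, valid under $\epsilon \leq 1/(2\me)$, I would use that $\eta$ is increasing on $[0,1/\me]$, so $-\epsilon \ln \epsilon \leq \eta(1/(2\me)) = (1+\ln 2)/(2\me)$; this constant is below $1/(\me \ln 2)$, giving the stated form.

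The main obstacle is that the pointwise sublemma does \emph{not} hold uniformly on $[0,1]$: for instance, $a = 1/\me$, $b = 1$ yields $|\eta(a)-\eta(b)| = 1/\me \approx 0.368$ while $\eta(|a-b|) = \eta(1 - 1/\me) \approx 0.29$. Hence the clean componentwise strategy is rigorous only when each $\delta_i$ lies below roughly $1/\me$. To close the gap for general $\|p-p'\|_1$, one would partition the indices by whether $\delta_i$ exceeds $1/\me$, control the ``large'' contributions separately (using the coarse bound $|\eta(p_i)-\eta(p_i')| \leq 1/\me$), or invoke the absolute estimate $|S_1(p)-S_1(p')| \leq \ln d$ in the complementary regime, then verify by a careful arithmetic that the composite estimate still fits inside $\epsilon \ln d - \epsilon \ln \epsilon$. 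This case analysis, rather than the core concavity step, is where I expect the real work to lie.
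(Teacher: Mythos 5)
The paper itself offers no proof of this lemma---it cites \cite{Fannes} and writes ``See \cite{Fannes}.''---so there is no in-text argument to compare you against; the only question is whether your blind proof is sound.

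Your route (reduce to the scalar function $\eta(x)=-x\ln x$, prove a pointwise sublemma, then sum via Jensen) is the standard elementary derivation of Fannes' bound, and the Jensen step itself is fine: concavity of $\eta$ gives $\sum_i\eta(\delta_i)\leq d\,\eta(\epsilon/d)=\epsilon\ln d-\epsilon\ln\epsilon$ unconditionally, since it needs only $\sum_i\delta_i=\epsilon$. Two things, however, deserve pushback. First, your derivative argument with $f(a)=\eta(a)-\eta(a-t)-\eta(t)$ proves only the one-sided inequality $\eta(a)-\eta(b)\leq\eta(a-b)$ for $a>b$; the reverse case $\eta(b)-\eta(a)\leq\eta(a-b)$ is not ``a symmetric argument'' in any automatic sense. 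The clean way is to note that $g(b):=\eta(b)-\eta(b+t)$ has $g'(b)=\ln\!\bigl((b+t)/b\bigr)>0$, so $g$ is increasing and maximised at $b=1-t$, where $g(1-t)=\eta(1-t)$; one then needs the separate fact that $\eta(1-t)\leq\eta(t)$ for $t\in[0,1/2]$, which is exactly where the restriction $|a-b|\leq 1/2$ enters.

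Second, and more seriously: the closing step you sketch---partition indices by $\delta_i\gtrless 1/\me$ and absorb the large terms so that the bound still ``fits inside $\epsilon\ln d-\epsilon\ln\epsilon$''---cannot be made to work, because the first line of the lemma is simply false without a smallness restriction on $\|p-p'\|_1$. Take $d=3$, $p=(1,0,0)$, $p'=(1/3,1/3,1/3)$. Then $\|p-p'\|_1=4/3$, $|S_1(p)-S_1(p')|=\ln 3\approx 1.0986$, but
\begin{equation}
\|p-p'\|_1\ln d-\|p-p'\|_1\ln\|p-p'\|_1=\tfrac{4}{3}\ln\tfrac{9}{4}\approx 1.0812<\ln 3.
\end{equation}
So no amount of careful arithmetic will rescue the unconditional claim; the two qualifiers in the lemma appear to have been transposed relative to the standard presentation of Fannes' inequality, in which the $-\epsilon\ln\epsilon$ form carries the restriction $\epsilon\lesssim 1/\me$ and it is the constant-augmented form that holds unconditionally. (Within the body of the paper the lemma is only ever invoked after a reduction to $\|q-q'\|_1\leq 1/\me$, so the mis-statement is latent there.) If you add the restriction $\|p-p'\|_1\leq 1/2$ (so that $\delta_i\leq\epsilon\leq 1/2$ for every $i$ and the sublemma applies componentwise), your argument closes correctly: the sublemma plus Jensen gives the first bound, and your observation that $\eta$ is increasing on $[0,1/\me]$ gives the second bound from the first.
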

\begin{proof}
	See \cite{Fannes}.
\end{proof}
\begin{remark}
	Also see \cite{Koenraad} for a nice tightening of the bound and \cite{Winter2016} for bounds for the infinite dimensional case.
\end{remark}

\begin{theorem}[Mean Value Theorem]\label{lemm:MVT}
	Let $f:[a,b]\rightarrow \rr$ be a continuous function on the closed interval $[a,b]$, and differentiable on the open interval $(a,b)$, where $a<b$. The there exists some $c\in(a,b)$ such that
	\be 
	f'(c) = \frac{f(b)-f(a)}{b-a},
	\ee 
	where $f'(c):=\frac{d}{dx} f(x)\Bigg|_{x=c}$.
\end{theorem}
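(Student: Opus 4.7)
The plan is to reduce the Mean Value Theorem to Rolle's Theorem by subtracting off the secant line. First, I would define the auxiliary function
\begin{equation}
g(x) := f(x) - f(a) - \frac{f(b)-f(a)}{b-a}(x-a),
\end{equation}
which is just $f$ minus the affine function whose graph is the secant line joining $(a,f(a))$ and $(b,f(b))$. By construction $g(a) = 0$ and $g(b) = f(b) - f(a) - (f(b)-f(a)) = 0$. Since $f$ is continuous on $[a,b]$ and differentiable on $(a,b)$, and since the affine term is smooth, $g$ inherits both properties: continuous on $[a,b]$ and differentiable on $(a,b)$, with
\begin{equation}
g'(x) = f'(x) - \frac{f(b)-f(a)}{b-a}
\end{equation}
for all $x\in(a,b)$.

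Next I would invoke Rolle's Theorem on $g$: if a function is continuous on a closed interval, differentiable on its interior, and takes equal values at the endpoints, then its derivative vanishes somewhere in the interior. Applied to $g$, this yields some $c\in(a,b)$ with $g'(c) = 0$, i.e.
\begin{equation}
f'(c) = \frac{f(b)-f(a)}{b-a},
\end{equation}
which is the desired conclusion.

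The only genuine content of the argument is thus Rolle's Theorem, which itself follows from the Extreme Value Theorem (a continuous function on a compact interval attains its maximum and minimum) combined with Fermat's interior-extremum principle (at an interior extremum of a differentiable function, the derivative is zero). If the maximum and minimum of $g$ on $[a,b]$ are both attained at the endpoints, then since $g(a)=g(b)=0$ the function $g$ is identically zero and any $c\in(a,b)$ works; otherwise an extremum is attained at some interior point $c$, and Fermat's principle gives $g'(c)=0$. The main subtlety — and really the only one — is that one must have differentiability on the open interval but only continuity at the endpoints, which is why the auxiliary function construction is phrased to preserve exactly this regularity; no obstacle arises beyond bookkeeping.
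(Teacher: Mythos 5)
Your proof is correct and is precisely the standard secant-subtraction argument (reduce to Rolle's Theorem via $g(x)=f(x)-f(a)-\frac{f(b)-f(a)}{b-a}(x-a)$, then Rolle via the Extreme Value Theorem and Fermat's interior-extremum principle) that the paper implicitly relies on, since its stated proof is simply a pointer to any introductory calculus text. Nothing is missing; the endpoint case in your Rolle sketch ($g\equiv 0$ when both extrema sit at the endpoints) is handled correctly.
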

\begin{proof}
	See any introductory book to calculus.
\end{proof}

\begin{lemma}[Sum difference upper bounds]\label{Lemm:sum diff up bound}
	Let $p, p'\in \cP_d$. 
	\begin{itemize} 
		\item [1)]  For all $\alpha>0$: 
		\be \label{eq:sum diff up bound}
		\big| \|p\|_\alpha^\alpha- \|p'\|_\alpha^{ \alpha} \big| \leq 
		\sum_{i=1}^d \left|p_i^\alpha -p_i^{\prime \alpha} \right| \leq 
		2  \lceil \alpha \rceil \Big( \|p-p'\|_{\alpha/ \lceil \alpha \rceil}\Big)^{\alpha/ \lceil \alpha \rceil} \leq 2  \lceil \alpha \rceil \, d ^{(1-\alpha/ \lceil \alpha \rceil)} \left( \|p-p'\|_1 \right)^{\alpha/ \lceil \alpha \rceil}  
		\ee
		\item[2)] 
		\be \label{eq: p - p prime alpha}
		\big| \|p\|_\alpha- \|p'\|_\alpha\big| \leq 
		\begin{cases}
			4 \lceil \alpha^{-1} \rceil\, d^{( \alpha^{-1} -\alpha^{-1}/ \lceil\alpha^{-1} \rceil +1/ \lceil \alpha^{-1} \rceil -\alpha)} \big( \|p-p'\|_1 \big)^{1/\lceil \alpha^{-1}\rceil}\quad  &\text{ for } 0< \alpha < 1 \vspace{0.2cm}\\ 
			\|p-p'\|_1 \quad  &\text{ for } \alpha \geq  1. 
		\end{cases}
		\ee    
	\end{itemize}
\end{lemma}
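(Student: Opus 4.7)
My plan is to establish part 1) in three steps matching the three inequalities, then obtain both subcases of part 2) by using part 1) together with elementary bounds on $|x^s - y^s|$ for $s$ a real exponent, splitting into $s \in (0,1]$ versus $s > 1$.

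For part 1), the leftmost inequality is just the triangle inequality applied to $\|p\|_\alpha^\alpha - \|p'\|_\alpha^\alpha = \sum_i (p_i^\alpha - p_i'^\alpha)$. For the middle inequality, I set $k := \lceil \alpha \rceil \in \nn^+$ and $\beta := \alpha/k \in (0,1]$, so that $p_i^\alpha = (p_i^\beta)^k$. Using the factorisation $x^k - y^k = (x-y)\sum_{j=0}^{k-1} x^{k-1-j} y^j$ with $x = p_i^\beta$, $y = p_i'^\beta$ and the fact that $p_i^\beta, p_i'^\beta \in [0,1]$, one gets $|p_i^\alpha - p_i'^\alpha| \leq k |p_i^\beta - p_i'^\beta|$. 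Then the elementary subadditivity $|u^\beta - v^\beta| \leq |u-v|^\beta$ for $u,v \geq 0$, $\beta \in (0,1]$, gives $|p_i^\beta - p_i'^\beta| \leq |p_i - p_i'|^\beta$. Summing over $i$ yields
\be
\sum_{i=1}^d |p_i^\alpha - p_i'^\alpha| \leq k \sum_{i=1}^d |p_i - p_i'|^\beta = \lceil\alpha\rceil\, \|p-p'\|_{\alpha/\lceil\alpha\rceil}^{\alpha/\lceil\alpha\rceil},\nonumber
\ee
which is the claimed bound with the factor $2\lceil\alpha\rceil$ (the $2$ is a harmless slack). The rightmost inequality in \eqref{eq:sum diff up bound} then follows by applying the $p$-norm interchange inequality \eqref{eq:p r inequalities} of Lemma \ref{norm lem trangle inequality} with $r = \alpha/\lceil\alpha\rceil \leq 1 = p$ to the vector $p-p'$, and raising both sides to the power $\alpha/\lceil\alpha\rceil$.

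For part 2) with $\alpha \geq 1$, $\|\cdot\|_\alpha$ is a bona fide norm, so the reverse triangle inequality gives $|\|p\|_\alpha - \|p'\|_\alpha| \leq \|p-p'\|_\alpha$, and a second application of \eqref{eq:p r inequalities} with $r=1, p=\alpha$ yields $\|p-p'\|_\alpha \leq \|p-p'\|_1$. This completes that case immediately.

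The main obstacle will be part 2) for $\alpha \in (0,1)$, where $\|\cdot\|_\alpha$ is not subadditive, and we must transfer a bound on $|\|p\|_\alpha^\alpha - \|p'\|_\alpha^\alpha|$ (which part 1) already controls) into one on $|\|p\|_\alpha - \|p'\|_\alpha|$ through the map $x \mapsto x^{1/\alpha}$ with $1/\alpha > 1$. My plan is to set $m := \lceil 1/\alpha \rceil$ and write $1/\alpha = m \cdot (1/(\alpha m))$ with $1/(\alpha m) \in (0,1]$, so that repeating the factorisation-plus-subadditivity trick used for part 1) --- now applied to $x = \|p\|_\alpha^\alpha$ and $y = \|p'\|_\alpha^\alpha$ --- gives
\be
|\|p\|_\alpha - \|p'\|_\alpha| \leq m \max(x,y)^{(m-1)/m \cdot 1/\alpha \cdot \text{(appropriate exponent)}} |x-y|^{1/(\alpha m)}.\nonumber
\ee
The bookkeeping here is the delicate part: to obtain the explicit exponent $1/\alpha - 1/(\alpha m) + 1/m - \alpha$ of $d$ claimed in \eqref{eq: p - p prime alpha}, I will need the bound $\max(\|p\|_\alpha^\alpha, \|p'\|_\alpha^\alpha) \leq d^{1-\alpha}$ (from the power mean inequality applied to the probability vectors $p, p'$) as well as the bound on $|x-y|^{1/(\alpha m)}$ coming from part 1) specialised to exponent $\alpha$ (which supplies the $d^{(1-\alpha)/(\alpha m)}$ factor). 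Combining the $d$-powers carefully and absorbing small numerical constants into the overall factor of $4$ should then deliver \eqref{eq: p - p prime alpha}; the constant $2$ from part 1) combined with the factor $m$ from the factorisation naturally produces at most $4 \lceil 1/\alpha\rceil$.
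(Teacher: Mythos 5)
Your proposal is correct and follows the same essential route as the paper: the power-difference factorisation $x^k-y^k=(x-y)\sum_{j=0}^{k-1}x^{j}y^{k-1-j}$ with $k=\lceil\alpha\rceil$ (respectively $k=\lceil\alpha^{-1}\rceil$ in part~2) reduces to a sub-unit exponent, and the norm interchange \eqref{eq:p r inequalities} trades $\|\cdot\|_r$ for $\|\cdot\|_1$. The one genuine difference is in the sub-unit step: you use the elementary subadditivity $|u^\beta-v^\beta|\leq|u-v|^\beta$, $\beta\in(0,1]$, directly, whereas the paper routes through Lemma~\ref{x beta - y beta lem} (i.e.\ $|u^\beta-v^\beta|\leq\epsilon^\beta+\epsilon^{\beta-1}|u-v|$) with the choice $\epsilon=|u-v|$, which recovers the same bound but with a gratuitous factor of~$2$. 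Your version is cleaner and, as you note, makes the ``$2$'' in the middle inequality of part~1) pure slack. For part~2) with $\alpha\in(0,1)$, the bookkeeping you left implicit does close: with $m=\lceil\alpha^{-1}\rceil$, $x=\|p\|_\alpha^\alpha$, $y=\|p'\|_\alpha^\alpha$, the power-mean bound $\max(x,y)\leq d^{1-\alpha}$ and part~1) with $\lceil\alpha\rceil=1$ (giving $|x-y|\leq 2\,d^{1-\alpha}\|p-p'\|_1^\alpha$) yield
\be
\big|\|p\|_\alpha-\|p'\|_\alpha\big|\;\leq\; m\,\max(x,y)^{\frac{m-1}{\alpha m}}\,|x-y|^{\frac{1}{\alpha m}}\;\leq\; m\,2^{\frac{1}{\alpha m}}\,d^{\frac{1-\alpha}{\alpha}}\,\|p-p'\|_1^{1/m},\nonumber
\ee
which implies \eqref{eq: p - p prime alpha} because $m\geq\alpha^{-1}$ forces $(1-\alpha)/\alpha=\alpha^{-1}-1\leq\alpha^{-1}-\alpha^{-1}/m+1/m-\alpha$ and $m\,2^{1/(\alpha m)}\leq 2m\leq 4\lceil\alpha^{-1}\rceil$. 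In fact the exponent of $d$ you obtain is slightly tighter than the one stated; the paper's own derivation lands in the same place, so the discrepancy is just slack in the lemma statement, not a gap in your argument.
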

\begin{proof} It will be partitioned into two subsections, one for Eq. \eqref{eq:sum diff up bound}, the other for Eq. \eqref{eq: p - p prime alpha}.
	
	\begin{center}
		Proof of 1)
	\end{center}
	
	The first line in Eq. \eqref{eq:sum diff up bound} follows directly from the triangle inequality. The remainder of this subsection will refer to the proof of \eqref{eq:sum diff up bound} for $\alpha >0$.
	For the second line, start by defining $\alpha_1:=\alpha/ \lceil \alpha \rceil \leq 1$ where $\lceil x \rceil:= \min y\in \mathbb{Z}$ s.t. $y>x$. Now using the identity $(x^n-y^n)=(x-y)\sum_{k=0}^{n-1}x^k y^{n-1-k}$ for $n=1,2,3,\ldots$, we have
	\begin{align}
	\left|{p_k}^\alpha-{p'_k}^\alpha\right|=\left|\big({p_k}^{\alpha_1}\big)^{\lceil \alpha \rceil}-\big({p'_k}^{\alpha_1}\big)^{\lceil \alpha \rceil}\right|=\left|{p_k}^{\alpha_1}-{p'_k}^{\alpha_1}\right|\,\left|\sum_{n=0}^{\lceil \alpha \rceil-1} \big( p_k^{\alpha_1}\big)^n \big({p'_k}^{\alpha_1}\big)^{\lceil \alpha \rceil-1-n}\right|\leq \lceil \alpha \rceil  \left|{p_k}^{\alpha_1}-{p'_k}^{\alpha_1}\right|.
	\end{align}
	If $\alpha\in\nn^+$ the proof of the second inequality in Eq. \eqref{eq:sum diff up bound} is complete. Otherwise $\alpha_1<1$ and we can employ Lemma \ref{x beta - y beta lem} with $\epsilon= \left|{p_k}-{p'_k}\right|$ to achieve 
	\begin{align}
	\left|{p_k}^\alpha-{p'_k}^\alpha\right| \leq 2\, \lceil \alpha \rceil  \left|{p_k}-{p'_k}\right|^{\alpha_1} = 2 \lceil \alpha \rceil  \left|{p_k}-{p'_k}\right|^{\alpha/   \lceil \alpha \rceil },
	\end{align}
	from which the proof of second inequality in Eq. \eqref{eq:sum diff up bound}  follows. To achieve the third inequality, we employ Lemma \ref{norm lem trangle inequality} with $r=\alpha/\lceil \alpha \rceil$, $p=1$. \\
	
	\begin{center}
		Proof of 2)
	\end{center}
	For $\alpha \geq 1$, this is easy. Using the $p$-norm triangle inequality, Eq. \eqref{eq:p-norm triangle inequality}, twice we have 
	\be 
	- \|p-p'\|_\alpha\leq \|p\|_\alpha -\|p'-p+p\|_\alpha= \| p\|_\alpha-\|p'\|_\alpha = \|p-p'+p'\|_\alpha -\|p'\|_\alpha \leq \|p-p'\|_\alpha.
	\ee 
	Therefore, from the monotonicity of the $p$-norm, Eq. \eqref{eq:p r inequalities}, we find
	\be 
	\big| \|p\|_\alpha- \|p'\|_\alpha\big| \leq \|p-p'\|_\alpha\leq \|p-p'\|_1.
	\ee  
	For $\alpha\in(0,1)$, we have to do a bit more work since the $p$-triangle inequality does not apply. Define $\beta_1:= \beta^{-1}/\lceil \beta^{-1} \rceil \leq 1$. We can write
	\ba 
	\Big|\|p\|_\alpha -\|p'\|_\alpha\Big| &= \Bigg|\left[\left(\|p\|_\alpha^\alpha\right)^{\beta_1} \right]^{\lceil \alpha^{-1}\rceil}-\left[\left(\|p'\|_\alpha^\alpha\right)^{\beta_1} \right]^{\lceil \alpha^{-1}\rceil}  \Bigg| \\
	&= \Bigg|\left(\|p\|_\alpha^\alpha\right)^{\beta_1} -\left(\|p'\|_\alpha^\alpha\right)^{\beta_1}   \Bigg|\, \Bigg| \sum_{n=0}^{\lceil \alpha^{-1} \rceil -1} \left[\left(\|p\|_\alpha^\alpha\right)^{\beta_1} \right]^n \left[\left(\|p'\|_\alpha^\alpha\right)^{\beta_1} \right]^{\lceil \alpha^{-1}\rceil-1 -n}  \Bigg|,
	\ea 
	where in the last line we have applied the identity $(x^n-y^n)=(x-y)\sum_{k=0}^{n-1}x^k y^{n-1-k}$ for $n=1,2,3,\ldots$.
	Applying Eq. \eqref{eq:p r inequalities} for $r=\alpha$, $p=1$, and noting $\|p\|_1=1$, we find 
	\be 
	\left( \|p\|_\alpha^\alpha \right)^{\beta_1} \leq \left(d^{\alpha^{-1}-1}\right)^{\alpha \beta_1} = d^{(\alpha^{-1}-1)/\lceil \alpha^{-1}\rceil}.
	\ee 
	Therefore, plugging in this upper bound we find
	\ba 
	\Big|\|p\|_\alpha -\|p'\|_\alpha\Big| &\leq \Bigg|\left(\|p\|_\alpha^\alpha\right)^{\beta_1} -\left(\|p'\|_\alpha^\alpha\right)^{\beta_1}   \Bigg|\, \Bigg| \sum_{n=0}^{\lceil \alpha^{-1} \rceil -1} \left[  d^{(\alpha^{-1}-1)/\lceil \alpha^{-1}\rceil}\right]^n \left[  d^{(\alpha^{-1}-1)/\lceil \alpha^{-1}\rceil} \right]^{\lceil \alpha^{-1}\rceil-1 -n}  \Bigg| \\
	& \leq \Bigg|\left(\|p\|_\alpha^\alpha\right)^{\beta_1} -\left(\|p'\|_\alpha^\alpha\right)^{\beta_1}   \Bigg|\,  \lceil \alpha^{-1}\rceil  \left[  d^{(\alpha^{-1}-1)/\lceil \alpha^{-1}\rceil} \right]^{\lceil \alpha^{-1}\rceil-1} \\
	& \leq \Bigg|\left(\|p\|_\alpha^\alpha\right)^{\beta_1} -\left(\|p'\|_\alpha^\alpha\right)^{\beta_1}   \Bigg|\,  \lceil \alpha^{-1}\rceil\,  d^{\alpha^{-1}-\alpha^{-1}/\lceil \alpha^{-1}\rceil+1 /\lceil \alpha^{-1}\rceil -1 }.
	\ea 
	Now for $\beta_1<1$ apply Lemma \ref{x beta - y beta lem}, with $\epsilon=\| p-p'\|_\alpha$, $x=\|p\|_\alpha^\alpha$, $y=\|p'\|_\alpha^\alpha$ to achieve
	\ba 
	\Big|\|p\|_\alpha -\|p'\|_\alpha\Big| & \leq \Bigg|\left(\|p\|_\alpha^\alpha\right)^{\beta_1} -\left(\|p'\|_\alpha^\alpha\right)^{\beta_1}   \Bigg|\,  \lceil \alpha^{-1}\rceil\,  d^{\alpha^{-1}-\alpha^{-1}/\lceil \alpha^{-1}\rceil+1 /\lceil \alpha^{-1}\rceil -1 }\\
	& \leq \Big|\|p\|_\alpha^\alpha -\|p'\|_\alpha^\alpha  \Big|^{\beta_1}    \, 2 \lceil \alpha^{-1}\rceil\,  d^{\alpha^{-1}-\alpha^{-1}/\lceil \alpha^{-1}\rceil+1 /\lceil \alpha^{-1}\rceil -1 }.
	\ea 
	By inspection, we observe that the inequality also holds when $\beta_1=1$. We now plug in Eq. \eqref{eq: p - p prime alpha}, to find 
	\ba 
	\Big|\|p\|_\alpha -\|p'\|_\alpha\Big|& \leq \left(\|p-p'\|_1\right)^{1/(\lceil \alpha\rceil \lceil \alpha^{-1}\rceil)}  \left(2 \lceil \alpha \rceil\right)^{\beta_1}  \, 2 \lceil \alpha^{-1}\rceil\,  d^{\alpha^{-1}-\alpha^{-1}/\lceil \alpha^{-1}\rceil+1 /\lceil \alpha^{-1}\rceil -1 }.
	\ea 
	Thus noting that $(2 \lceil \alpha \rceil)^{\beta_1}\leq 2$, we achieve Eq. \eqref{eq: p - p prime alpha} for $0<\alpha<1$.
	
\end{proof}

\subsection{Refinement of Theorem \ref{thm:noemb}.}

The following corollary allows one to write Theorem \ref{thm:noemb} in the form of Theorem \ref{thm:noemb physical}.
\begin{corollary}\label{rem:tigher bound on ep emb}
	For all fixed $d_\Sy$, and in the limits $\epemb\to 0^+$, $D_\cat\to\infty$, Theorem \ref{thm:noemb} holds under the replacement of Eq. \eqref{eq:long ep emb form} by
	\begin{align}\label{eq:old ep res bound 2}
		\epres(\epemb,d_\Sy, D_\cat)&= 
		15\sqrt{ \frac{d_\Sy  \ln D_\cat}{\ln\left(1/\epemb\right) } \left(1 + D_\cat \epemb^{1/7} \right)}
	\end{align}
\end{corollary}
\begin{proof}
	The proof consists in upper bounding line \ref{eq:last line} in the proof of Theorem \ref{thm:noemb} before setting $\epemb$ equal to it. Continuing from line  \ref{eq:last line} but now taking $D$ to be large and $\epemb$ to be small followed by neglecting higher order terms, one finds
	\begin{align}
		&5\sqrt{\frac{d_\Sy^{5/3} + 4(\ln D) \ln d_\Sy}{\ln (1/\epemb)}+ D \epemb^{1/6} + 5\left( D^2 \sqrt{\frac{\epemb}{D}} \ln\sqrt{\frac{D}{\epemb}}  \right)^\frac23}\\
		\leq & \, 5\sqrt{\frac{4(\ln D_\cat) \ln d_\Sy}{\ln (1/\epemb)}+ D \epemb^{1/6} + 5 D\left( \ln\sqrt{D}\right)^{2/3} \left( \sqrt{\epemb}+\sqrt{\epemb} \ln\sqrt{\frac{1}{\epemb}}  \right)^\frac23}\\
		\leq &\, 5\sqrt{\frac{4(\ln D_\cat) \ln d_\Sy}{\ln (1/\epemb)}+ D \epemb^{1/6} + 5 D\left(\frac12\right)^{2/3}\left( \ln D\right)^{2/3} \left( \sqrt{\epemb} \ln\frac{1}{\epemb}  \right)^\frac23}\\
		\leq &\, 5\sqrt{\frac{4(\ln D_\cat) \ln d_\Sy}{\ln (1/\epemb)}+ D \epemb^{1/7}\epemb^{1/42} + 4 D\left( \ln D\right) \epemb^{1/7}  \epemb^{4/21} \left(  \ln\frac{1}{\epemb}  \right)}\\
		\leq &\, 5\sqrt{\frac{4(\ln D_\cat) \ln d_\Sy}{\ln (1/\epemb)}+ D \frac{\epemb^{1/7}}{\ln (1/\epemb)} + \frac{4 D\left( \ln D\right) \epemb^{1/7}}{\ln (1/\epemb)} }\label{line:withcomment}\\
		\leq &\, 5\sqrt{\frac{4(\ln D_\cat)  d_\Sy}{\ln (1/\epemb)}+\frac{5 D\left( \ln D\right) \epemb^{1/7}}{\ln (1/\epemb)}  }\\
		\leq &\, 5\sqrt{\frac{4(\ln D_\cat)  d_\Sy}{\ln (1/\epemb)}+\frac{5 d_\Sy D_\cat \left( \ln d_\Sy +\ln D_\cat\right) \epemb^{1/7}}{\ln (1/\epemb)}  }\\
		\leq &\, 15\sqrt{ \frac{d_\Sy  \ln D_\cat}{\ln\left(1/\epemb\right) } \left(1 + D_\cat \epemb^{1/7} \right)},
	\end{align}
	where in line \ref{line:withcomment}, we have used the observation \begin{align}
		\lim_{x\to 0^+} \frac{x^p} {\ln^2 (1/x)}=\lim_{x\to 0^+} \frac{x^p} {\ln (1/x)}=0
	\end{align}
	for all $p>0$ and thus $x^{p} \leq \ln^2 (1/\epemb)$ for all fixed $p>0$ and sufficiently small $\epemb$.
	Thus if we set $\epres$ (defined via Eq. \ref{eq:ep res}) to 
	\begin{align}
		15\sqrt{ \frac{d_\Sy  \ln D_\cat}{\ln\left(1/\epemb\right) } \left(1 + D_\cat \epemb^{1/7} \right)},
	\end{align}
	we conclude the proof.
\end{proof}





\section{Miscelaneous lemmas for the proof of Theorem \ref{Thm:Implementation with Quasi-Idela clock}}

\begin{lemma}
	\label{lem:ak-phik}
	Let $a_k\geq 0$ satisfy $1-\ep_2\leq \sum_k a_k\leq 1 $, and $\phi_k$ satisfy $|\phi_k-\phi|\leq \ep_1$ for some $\phi$. Let $\ep_1,\ep_2\leq 1$. Then we have 
	\begin{equation}
		\left| \sum_k e^{-i \phi_k} a_k \right| \geq 
		1- \ep_2 - \ep_2
	\end{equation}
\end{lemma}
\begin{proof}
	\begin{align}
		&\left| \sum_k e^{-i \phi_k} a_k \right| 
		= 
		\left| \sum_k e^{-i (\phi+ \phi_k-\phi)} a_k \right| 
		=
		\left| \sum_k e^{-i(\phi- \phi_k)} a_k \right| 
		\geq Re \sum_k e^{-i(\phi- \phi_k)} a_k 
		=\sum_k \cos(\phi- \phi_k) a_k
		\geq  \nonumber \\
		& \sum_k\left(1-\frac{\ep_1^2}{2}\right) a_k
		=\sum_ka_k -  \frac{\ep_1^2}{2} \sum_k a_k 
		\geq 1 - \ep_2 -  \frac{\ep_1^2}{2} \geq 1 
		-\ep_1- \ep_2
	\end{align}
	where we have used $\cos x \geq 1-x^2/2$ 
	and $\ep_2\leq 1$.
\end{proof}

\begin{lemma}
	\label{lem:gaussian-tail}
	For $\delta>0$, and $d\geq 1/(2\delta)$, and for  $\sigma=1/\sqrt{d} $  we have 
	\begin{align}
		\sum_{|k|>\delta d} |(\psi(0,k)|^2
		\leq poly(d) e^{-\delta^2 d}
	\end{align}
\end{lemma}

\begin{proof}
	We have 
	\begin{align}
		\sum_{|k|>\delta d} |(\psi(0,k)|^2= 
		A \sum_{|k|>\delta d} e^{- \frac{2\pi}{\sigma^2}k^2}=2 A\sum_{k\geq \delta d }
		e^{- \frac{2\pi}{\sigma^2}k^2} \leq 
		2 A \int_{\delta d -1}^\infty
		e^{- \frac{2\pi}{\sigma^2}k^2}.
	\end{align}
	We now use exponential bound for gaussian tails:
	\begin{equation}
		\int_r^\infty \dt x  e^{-b^2  x^2 }\leq \frac{\sqrt{\pi}}{b} e^{-b^2  r^2 }
	\end{equation}
	and the assumption $d\geq 1/(2\delta)$ to bound our expression  further by 
	\begin{align}
		2 A \sigma e ^{-\frac{ 2\pi}{\sigma^2}(\delta d -1)^2}\leq A \sigma 
		e ^{-\frac{\pi\delta^2 }{2 \sigma^2}d^2}.
	\end{align}
	Now using estimate on $A$ and putting $\sigma=\sqrt{d}$ we get 
	\begin{align}
		\sum_{|k|>\delta d} |(\psi(0,k)|^2
		\leq poly(d) e^{-\delta^2 d}.
	\end{align}
\end{proof}

Here we prove lemma which bounds potential tail, giving rise to the estimate \eqref{eq:potential-tail-1}:
\begin{lemma}
	\label{lem:tail-potential}
	For $V_0$ given by 
	\begin{align}
		V_0(x)=A_c \cos^{2n}\left(\frac{x}{2}\right),
		\quad \text{with } \quad 
		A_c=\frac{2^{2n}}{2\pi{2n \choose n}}.
	\end{align}
	We have the following estimate for  potential tail for $0<\delta\leq 1/2$
	\begin{align}
		\label{eq:lem-potential-tail}
		\int_0^{\pi-2\pi \delta }
		V_d(x) \dt x \leq \frac{1}{\delta} e^{-n\delta^2}.
	\end{align}
\end{lemma}
\begin{proof}
	We write
	\begin{align}
		&\int_{2 \delta \pi }^\pi V_0(x) \dt x=A_c \int_{2 \delta \pi }^\pi
		\cos^{2n}(x/2) \dt x
		\leq A_c \frac{1}{\sin(\delta \pi)}
		\int_{2 \delta \pi }^\pi 
		\cos^{2n}(x/2) \sin(x/2)\dt x=
		\nonumber\\
		&=A_c  \frac{1}{\sin(\delta \pi)}\frac{2}{2n+1} \cos^{2n+1}(\delta \pi)
		\leq \frac{1}{\sin(\delta \pi)} \frac{1}{\pi} \cos^{2n+1}(\delta \pi)
	\end{align}
	where in the last inequality we have used 
	\begin{align}
		{2n \choose n} \geq 2^{2n} 
		\frac{2^{2n}}{2n+1}.
	\end{align}
	Finally we use $\sin x\geq x/\pi$
	as well as $\cos x\leq 1 - x^2/(2 \pi)$ for $0\leq x\leq \pi$ 
	to arrive at 
	\begin{align}
		\int_{2 \delta \pi }^\pi V_0(x) \dt x \leq 
		\frac{1}{\pi \delta} \left(1 - \frac{\pi}{2}\delta^2\right)^{2n+1} \leq \frac{1}{\delta} e^{-n\delta}, 
	\end{align}
	where in the second inequality we have used $e^{-x}\geq 1-x$ for $x\geq 0$, and have dropped some unnecessary terms.
\end{proof}
In the following lemma, we reproduce bound for $\varepsilon_\nu$ the quantity reporting for deviation of pointer's evolution 
from the simple picture of acquiring phase. 
\begin{lemma}
	\label{lem: ep-nu-estimate}
	For the potential  $V_0$ as in Eq. \eqref{eq:potential} with $n=\lceil d^{1/4}\rceil$
	we have the following estimate on $\||\varepsilon_\nu\>\|$ given by
	\eqref{eq:non free evollution of Quasi ideal}
	\begin{align}
		\||\varepsilon_\nu\>\|\sim
		t\, poly(d) e^{-c d^{1/4}}
	\end{align}
	where $c=1/(32 \pi)$.
\end{lemma}
\begin{proof}
	In \cite{WSO} it is proven that (see Section \emph{4.3. Quasi-canonical Commutation}; page 135)
	\begin{align}
		\| \ket{\varepsilon_\nu}\|_2 &=\varepsilon_\nu(t,d_\cl)=\, \bo\left( t\; poly(d)\; e^{-\frac{\pi}{4} \frac{d}{\zeta}} \right)\,\textup{as } d\rightarrow \infty,\label{eq:ep nu normaisation}
	\end{align}
	where $\zeta \geq 1$ is a measure of the size of the derivatives of $V_0(x)$,
	\begin{align}\label{eq:b zeta main text}
		\begin{split}
			\zeta &=\left( 1+\frac{0.792\, \pi}{\ln(\pi d)}b \right)^2,\quad\text{for any } \\
			b &\geq\; \sup_{k\in\nn^+}\left(2\max_{x\in[0,2\pi]} \left|\Omega_n  V_0^{(k-1)}(x) \right|\,\right)^{1/k},
		\end{split}
	\end{align}
	where $ V_0^{(k)}(x)$ is the $k^\textup{th}$ derivative with respect to $x$ of $ V_0(x)$.
	(Depending on how one chooses the potential $V_0$, the lower bound on $b$ may or may not depend on $d_\cl$).
	
	For the specialised potential $V_0$ of 
	the form \eqref{eq:potential}  
	in \cite{WSO} the value of $b$ was calculated (in Section 9.2. \emph{Examples of Potential Functions: The Cosine Potential}; page 177), and it follows that 
	that   $b$ satisfies
	\begin{align}
		b\leq n\sqrt{n}.
	\end{align}
	We also see that  for $b\geq \ln(d)$,
	and $d\geq3$
	\begin{align}
		\zeta \leq \left(\frac{2\pi}{b \ln d}\right)^2
		\leq (2 \pi b)^2=4\pi^2 n^3
	\end{align}
	Plugging the last estimate into \eqref{eq:ep nu normaisation} we obtain
	\begin{align}
		\varepsilon_\nu(t,d_\cl) \sim
		t\, poly(d) e^{-\frac{\pi}{4}\frac{d}{4 \pi^2 n^3}}
	\end{align}
	Choosing $n=\lceil d^\frac{1}{4}\rceil$ so that $n\leq 2 d^{1/4}$ we obtain the required estimate. 

\end{proof}
\cblack
 
\section{Miscellaneous lemma for the proof of Theorem \ref{thm:noemb physical t CTO}}

Here is the lemma used in the proof of Theorem \ref{thm:noemb physical t CTO} located in Section \ref{Sec:proof of thm 3, t-CTOs generic bound}.
\begin{lemma}
\label{lem:c}
We have 
\begin{align}\label{eq:c bounds}
	&1-|c|^2 \leq |1-c^2| \nonumber \\
	&|1-c| \leq |1-c^2|,
\end{align}
for all $c\in\cc$ satisfying $|c|\leq 1$ and $|1-c|\leq 1$.
\end{lemma}
\begin{proof}
Write $c$ in terms of real and imaginary parts: $c=a+\mi b$. The constraints $|c|\leq 1$ and $|1-c|\leq 1$ imply that $a$ satisfies $0\leq a\leq 1$. We find $(1-|c|^2)^2 = (1-a^2)^2-2(1-a^2)b^2+b^4$ while $|1-c^2|^2= (1-a^2)^2+2(1-a^2) b^2+b^4+4 a^2 b^2$, thus concluding the 1st inequality in \eqref{eq:c bounds}. For the second inequality in Eq. \ref{eq:c bounds}, we start by observing that $(1-a^2)^2 \leq (1-a)^2$ and $b^2 \leq (2+2 a^2) b^2 +b^4= 2(1-a^2) b^2+4 a^2 b^2 + b^4$, and thus $(1-a^2)^2+b^2 \leq (1-a)^2 + 2(1-a^2) b^2+4 a^2 b^2 + b^4$. On the other hand, we find $|1-c|^2= (1-a)^2+b^2$ and $|1-c^2|^2= (1-a^2)^2 + 2(1-a^2) b^2+4 a^2 b^2 + b^4$, thus proving the 2nd inequality in Eq. \eqref{eq:c bounds}.
\end{proof}

\section{Continuity for perturbations: proof of Proposition \ref{prop: H Int to epsig}}
In this section we evaluate 
the norm of the difference between $\me^{\mi(H+V)}$ and $\me^{\mi H}$ for Hermitian $H$ and $V$, along the lines of  \cite{VybornyTaylorReminder1981}.  
For completeness we shall prove 
the result basing our proof solely on the version of mean value theorem for vector valued functions of \cite{Wazewski}, in the form taken from  \cite{McLeod}:
\begin{prop}
	\label{prop:mean-value}
	Let $f$ be defined on interval $[a,b]\subset \rr$  with values in a $d$-dimensional linear space. Let $f$ be continuous on $[a,b]$ and differentiable on $(a,b)$. Then there exist numbers $\{c_k\}_{k=1}^d$ with $c_k\in (a,b)$ 
	and $\{\lambda_k\}_{k=1}^d$ satisfying $\sum_k\lambda_k=1$ such that 
	such that 
	\begin{eqnarray}
	f(b)-f(a)=(b-a) \sum_{k=1}^d \lambda_k f'(c_k).
	\end{eqnarray}
\end{prop}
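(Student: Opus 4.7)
The plan is to reformulate the conclusion as stating that $v := (f(b)-f(a))/(b-a) \in V$ is an affine combination of $d$ values of the derivative, i.e., $v = \sum_{k=1}^d \lambda_k f'(c_k)$ with $\sum_k \lambda_k = 1$. I would establish this in two stages: first, use Hahn--Banach separation together with the scalar mean value theorem to place $v$ inside the closed convex hull of $\{f'(t):t\in(a,b)\}$; second, pass from a Carath\'eodory-type representation on $d+1$ points to one on $d$ points, exploiting the fact that $d+1$ vectors in a $d$-dimensional space are always affinely dependent.

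For the first stage, I would argue by contradiction. If $v$ were strictly separated from $\overline{\mathrm{conv}}\{f'(t):t\in(a,b)\}$, the Hahn--Banach theorem would furnish a linear functional $u\in V^*$ and a constant $\alpha$ with $u(v)>\alpha\geq u(f'(t))$ for all $t\in(a,b)$. Applying the one-dimensional mean value theorem to the scalar function $h(t):=u(f(t))$, which is continuous on $[a,b]$ and differentiable on $(a,b)$ with $h'(t)=u(f'(t))$, yields some $c\in(a,b)$ with $u(v)=(h(b)-h(a))/(b-a)=u(f'(c))$, contradicting the strict separation.

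For the second stage, Carath\'eodory's theorem (applied in the $d$-dimensional ambient space $V$) represents $v$ as a convex combination $v=\sum_{k=1}^{d+1}\mu_k f'(c_k)$ with $\mu_k\geq 0$, $\sum_k\mu_k=1$, and $c_k\in(a,b)$. Since $d+1$ vectors in the $d$-dimensional space $V$ are affinely dependent, there exist scalars $\alpha_1,\ldots,\alpha_{d+1}$, not all zero, satisfying $\sum_k\alpha_k=0$ and $\sum_k\alpha_k f'(c_k)=0$. For every $s\in\rr$ the coefficients $\lambda_k(s):=\mu_k+s\alpha_k$ then still obey $\sum_k\lambda_k(s)=1$ and $\sum_k\lambda_k(s) f'(c_k)=v$. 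Choosing $s=-\mu_j/\alpha_j$ for some index $j$ with $\alpha_j\neq 0$ annihilates the $j$-th coefficient, leaving an affine combination supported on the remaining $d$ indices; after relabelling and multiplying through by $(b-a)$ this is exactly the desired identity.

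The main obstacle is the Carath\'eodory step, which classically produces a representation for points in $\mathrm{conv}$ rather than in $\overline{\mathrm{conv}}$. Under the bare hypotheses of the proposition, $f'((a,b))$ need not be bounded or closed, so a careful limit argument may be needed: either approximate $f$ by smoother functions (for which $f'([a,b])$ is compact and the convex hull is automatically closed) and pass to the limit, or extract convergent subsequences of the $c_k$ directly using compactness of $[a,b]$, taking care that the limit points lie in the open interval. A minor additional point is that the reduction in the third step may produce negative $\lambda_k$, but the proposition only demands $\sum_k\lambda_k=1$ rather than non-negativity, so this costs nothing.
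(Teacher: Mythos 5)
The paper does not prove this proposition; it is quoted from Wa\.zewski and McLeod, and the surrounding remark ``for completeness we shall prove the result'' refers to Propositions~\ref{prop:Taylor-reminder} and \ref{prop:UU0}, which take the present statement as a black box. So there is no paper proof to compare against, and your attempt must stand on its own. It does not: the third stage contains an outright error. You assert that ``$d+1$ vectors in the $d$-dimensional space $V$ are affinely dependent,'' but this is false. It is $d+2$ points that are forced to be affinely dependent in a $d$-dimensional space; $d+1$ points can be, and generically are, affinely independent -- they are exactly the vertex sets of $d$-simplices. Moreover the failure occurs precisely in the case you need to handle: if a minimal Carath\'eodory representation of $v$ uses $d+1$ points, those points are affinely independent and all $\mu_k>0$, so there is no nontrivial $(\alpha_k)$ with both $\sum_k\alpha_k=0$ and $\sum_k\alpha_k f'(c_k)=0$, and the one-parameter family $\lambda_k(s)=\mu_k+s\alpha_k$ you want to move along is frozen at $s=0$. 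Your reduction from $d+1$ to $d$ terms only re-derives the fact that a non-minimal Carath\'eodory representation can be shortened; passing below the Carath\'eodory bound $d+1$ requires the special structure of $S=f'((a,b))$, which your argument does not invoke.

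The $\overline{\mathrm{conv}}$-versus-$\mathrm{conv}$ obstacle you flag is likewise not resolved by your suggested workarounds: uniformly approximating $f$ gives no control over $f'$, and extracting limits of the $c_k^{(n)}$ founders because $f'$ need not be continuous and the limits may hit the excluded endpoints $a,b$. A cleaner route, closer in spirit to Wa\.zewski--McLeod, is to normalise so that $v=0$ (replace $f(t)$ by $f(t)-tv$) and use \emph{weak} rather than strict separation: if $0\notin\mathrm{conv}\{f'(t):t\in(a,b)\}$, there is a nonzero $u\in V^*$ with $u(f'(t))\geq 0$ for all $t$; then $t\mapsto u(f(t))$ is nondecreasing on $[a,b]$ with equal endpoint values, hence constant, so $u\circ f'\equiv 0$ and everything lives in the hyperplane $\ker u$. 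Iterating this lowers the ambient dimension and simultaneously places $v$ in the genuine convex hull. One still needs a further idea when $0$ is interior to the convex hull in the full space, and another to obtain the nonnegative $\lambda_k$ that McLeod in fact proves (the paper's weaker, sign-unconstrained statement suffices for its application). Absent a reconstruction of that argument, citing McLeod, as the paper does, is the safe course.
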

Using this we shall prove a version of 
Taylor's reminder theorem:
\begin{prop}
	\label{prop:Taylor-reminder}
	Let $F$ a function defined on interval $[0,1]$ with values in a $d$-dimensional linear space. Let $F$ be $n+1$ times differentiable on interval $(0,1)$ and continuous on $[0,1]$.  We then have 
	\begin{eqnarray}
	F(1) - \sum_{k=0}^n \frac{F^{(k)}(0)}{k!}=\sum_{l=1}^d \lambda_l\frac{(1-t_l)^n}{n!} F^{(n+1)}(t_l),
	\end{eqnarray}
	for some $\lambda_l$'s satisfying $\sum_{l=1}^d\lambda_l=1$ and some $t_l\in(0,1)$, where $F^{(k)}(x):=\frac{\textup{d}^k}{\textup{d} x^k} F(x)$. 
\end{prop}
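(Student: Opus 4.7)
The plan is to mimic the classical scalar proof of Taylor's remainder theorem, but with the scalar mean value theorem replaced by the vector-valued version (Proposition \ref{prop:mean-value}) already at our disposal. The whole argument hinges on finding an auxiliary function whose derivative telescopes down to a single Taylor-remainder-like term.

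First I would define the auxiliary $d$-dimensional vector-valued function
\begin{equation}
g(t):=F(1)-\sum_{k=0}^{n}\frac{(1-t)^{k}}{k!}F^{(k)}(t),\qquad t\in[0,1].
\end{equation}
Continuity of $g$ on $[0,1]$ and differentiability on $(0,1)$ follow directly from the hypotheses on $F$ (in particular $F^{(n+1)}$ exists on $(0,1)$, so $F^{(k)}$ for $k\leq n$ are differentiable on $(0,1)$ and continuous on $[0,1]$). A short calculation using the product rule gives
\begin{align}
g'(t)=&-\sum_{k=0}^{n}\left[-\frac{k(1-t)^{k-1}}{k!}F^{(k)}(t)+\frac{(1-t)^{k}}{k!}F^{(k+1)}(t)\right]\nonumber\\
=&\sum_{k=1}^{n}\frac{(1-t)^{k-1}}{(k-1)!}F^{(k)}(t)-\sum_{k=0}^{n}\frac{(1-t)^{k}}{k!}F^{(k+1)}(t).
\end{align}
Re-indexing the first sum by $j=k-1$ makes the two sums telescope, and only the $k=n$ term of the second sum survives:
\begin{equation}
g'(t)=-\frac{(1-t)^{n}}{n!}F^{(n+1)}(t).
\end{equation}

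Next I would evaluate the endpoints: $g(1)=0$ (only the $k=0$ term in the sum survives, and it equals $F(1)$), while
\begin{equation}
g(0)=F(1)-\sum_{k=0}^{n}\frac{F^{(k)}(0)}{k!},
\end{equation}
which is precisely the remainder we wish to control.

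Finally I would apply Proposition \ref{prop:mean-value} to $g$ on $[a,b]=[0,1]$, yielding coefficients $\{\lambda_l\}_{l=1}^{d}$ with $\sum_l\lambda_l=1$ and points $\{t_l\}_{l=1}^{d}\subset(0,1)$ such that
\begin{equation}
g(1)-g(0)=\sum_{l=1}^{d}\lambda_l\,g'(t_l)=-\sum_{l=1}^{d}\lambda_l\frac{(1-t_l)^{n}}{n!}F^{(n+1)}(t_l).
\end{equation}
Rearranging and using $g(1)=0$ and the explicit form of $g(0)$ yields the claimed identity. Honestly there is no real obstacle here; the only thing to be careful about is checking that $g$ has all the regularity required by Proposition~\ref{prop:mean-value}, and that the telescoping is executed correctly (in particular handling the $k=0$ boundary term where the derivative of $(1-t)^{k}/k!$ vanishes). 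Everything else is bookkeeping.
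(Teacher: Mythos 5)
Your proof is correct and takes essentially the same route as the paper: the paper's auxiliary function is $G(t)=F(t)+\sum_{k=1}^{n}\frac{(1-t)^{k}}{k!}F^{(k)}(t)$, which satisfies $g(t)=F(1)-G(t)$, so the two constructions differ only by a constant shift and an overall sign, and both hand the result to the vector-valued mean value theorem (Proposition~\ref{prop:mean-value}) on $[0,1]$. You spell out the telescoping of the derivative, which the paper leaves implicit, but the argument is the same.
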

\begin{proof}
	Following \cite{VybornyTaylorReminder1981}
	we consider function $G$ defined as 
	\begin{eqnarray}
	G(t) = F(t) +\sum_{k=1}^n \frac{(1-t)^k}{k!} F^{(k)}(t).
	\end{eqnarray}
	We see that 
	\begin{eqnarray}
	G(1)=F(1), \quad G(0)=\sum_{k=0}^n \frac{F^{(k)}(0)}{k!}
	\end{eqnarray}
	and $G$ satisfies assumptions of Prop. \ref{prop:mean-value}. 
	Applying this Proposition to $G$ for $a=0$, $b=1$, we obtain the desired result. 
\end{proof}

We can apply the above proposition to the case $n=1$ and get 
\begin{eqnarray}
F(1)-F(0) = F'(0) + \sum_{l=1}^d \lambda_l\frac{(1-t_l)^2}{2} F''(t_l)
\end{eqnarray}
which implies (by convexity of norm, and triangle inequality):
\begin{eqnarray}
\label{eq:Fbound}
\|F(1)-F(0)\|_\infty \leq \| F'(0) \|_\infty + \frac12 \max_{t\in (0,1)}\|F''(t)\|_\infty.
\end{eqnarray}
Consider now  $F(t)=\me^{\mi(H+tV)}$.  We obtain the following.

\begin{lemma}
	\label{lem:Fprimbis}
	Let  $F(t)=\me^{\mi(H+tV)}$ for Hermitian matrices $H$, $V$.  We then have 
	\begin{equation}
	\label{eq:Fprimbis}
	\|F'(t)\|_\infty \leq \|V\|_\infty, \quad	\|F''(t)\|_\infty \leq \|V\|_\infty^2
	\end{equation}
	for $t\in(0,1)$. 
\end{lemma}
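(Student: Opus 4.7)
The plan is to bound the derivatives of $F(t) = \me^{\mi(H+tV)}$ using Duhamel's formula (the fundamental identity for differentiating one-parameter families of matrix exponentials), and then exploit the fact that since $H + tV$ is Hermitian the exponential $\me^{\mi(H+tV)}$ is unitary, so its operator norm is exactly $1$. This unitarity is what makes the norms collapse so cleanly.

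First I would derive $F'(t)$. Since $A(t) := \mi(H + tV)$ has derivative $A'(t) = \mi V$, Duhamel's formula gives
\begin{equation}
F'(t) \;=\; \int_0^1 \me^{s\,\mi(H+tV)}\,(\mi V)\, \me^{(1-s)\mi(H+tV)}\,\textup{d}s.
\end{equation}
Taking the operator norm, using submultiplicativity and the fact that $\|\me^{r\mi(H+tV)}\|_\infty = 1$ for every $r\in\rr$ (because $H+tV$ is Hermitian, so the exponential is unitary), gives
\begin{equation}
\|F'(t)\|_\infty \;\leq\; \int_0^1 \|V\|_\infty\,\textup{d}s \;=\; \|V\|_\infty,
\end{equation}
which is the first claim.

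Next I would differentiate once more. Applying Duhamel to each of the two exponentials in the integrand of $F'(t)$ yields a double integral,
\begin{align}
F''(t) \;=\; \int_0^1\!\!\int_0^1 \Big[\, & s\, \me^{rs\,\mi(H+tV)}(\mi V)\me^{(1-r)s\,\mi(H+tV)}(\mi V)\me^{(1-s)\mi(H+tV)} \nonumber\\
\;+\; & (1-s)\,\me^{s\,\mi(H+tV)}(\mi V)\me^{r(1-s)\mi(H+tV)}(\mi V)\me^{(1-r)(1-s)\mi(H+tV)}\,\Big]\,\textup{d}s\,\textup{d}r.
\end{align}
Each of the five exponential factors in each term has operator norm $1$ by unitarity, so taking norms and using submultiplicativity reduces the integrand to $\bigl(s + (1-s)\bigr)\|V\|_\infty^2 = \|V\|_\infty^2$. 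Integrating over the unit square gives $\|F''(t)\|_\infty \leq \|V\|_\infty^2$, which is the second claim.

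The only non-routine point is justifying Duhamel's formula itself (it is standard, and for finite-dimensional matrices follows immediately from writing $\me^{A(t)}$ as a norm-convergent power series and differentiating term by term using the product rule, or alternatively from the identity $\frac{d}{dt}(\me^{-sA(t)}\me^{sA(t+h)}) = \me^{-sA(t)}(A(t+h)-A(t))\me^{sA(t+h)}/h$ followed by integration in $s$ and passage to the limit $h\to 0$). Once Duhamel is in hand there are no obstacles: the non-commutativity of $H$ and $V$ is irrelevant because we never try to combine the exponentials, and the Hermiticity/unitarity assumption kills all the exponential factors. I would not attempt a combinatorial expansion of $\me^{\mi(H+tV)}$ as a nested commutator series, since that route is far messier and does not give the sharp bound directly.
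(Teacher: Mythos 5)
Your proof is correct and takes essentially the same approach as the paper: apply Duhamel's formula once for $F'$, again for $F''$, then use unitarity of $\me^{r\mi(H+tV)}$ and submultiplicativity to strip the exponentials. Your explicit double-integral for $F''$ even has the factor $(1-s)$ written correctly in the second term, whereas the paper's displayed formula has a cosmetic typo ($\alpha$ in place of $1-\alpha$); the final bound is unchanged either way since $\int_0^1(s+(1-s))\,\textup{d}s=1$.
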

\begin{proof}
	We use the following general formula 
	\begin{eqnarray}
	\frac{\dt}{\dt t} \me^{X(t)} = \int_0^1 \me^{\alpha X(t)} \frac{\dt X(t)}{\dt t}  \me^{(1-\alpha) X(t)} \dt \alpha.
	\end{eqnarray}
	For $X=\mi(H+Vt)$ 	we get 
	\begin{eqnarray}
	\frac{\dt}{\dt t} \me^{\mi(H+V t)} = \int_0^1 \me^{\alpha \mi(H+ Vt)} V  \me^{(1-\alpha) \mi(H+Vt)} \dt \alpha
	=  \int_0^1 U_1 V  U_2 \dt \alpha
	\end{eqnarray}
	with $U_1$,$U_2$ unitaries. Using convexity and multiplicativity of operator norm, and $\|U\|_\infty=1$ 
	for unitaries we get 
	\begin{eqnarray}
	F'(t)\leq \int_0^1 \|V\|_\infty\dt \alpha=\|V\|_\infty.
	\end{eqnarray}
	Similarly we  have 
	\begin{eqnarray}
	F''(t)=\frac{\dt^2}{\dt t^2} \me^{\mi(H+ Vt)} &=& \int_0^1 \frac{\dt}{\dt t} \left(\me^{\alpha \mi(H+ Vt)}\right) \mi V  \me^{(1-\alpha) \mi(H+Vt)} + \me^{\alpha \mi(H+ Vt)} \mi V  \frac{\dt}{\dt t} \left(\me^{(1-\alpha) \mi(H+Vt)}\right) \dt \alpha=
	\nonumber \\
	&=& \int_0^1 \left\{
	\int_0^1
	\me^{\beta \alpha \mi(H+V t)}   \alpha \mi V 
	\, \me^{(1-\beta) \alpha \mi ( H+Vt)} \, \mi V \, \me^{(1-\alpha)\mi ( H+Vt)}\dt \beta  
	\right. 
	\nonumber \\
	&+& 
	\left. 
	\int_0^1
	\me^{\alpha \mi (H+V t)}  \,  \mi V \,  \me^{\beta \alpha \mi ( H+Vt)} \alpha \mi V \, \me^{(1-\beta)\alpha \mi ( H+Vt)}\dt \beta
	\right\} \dt \alpha
	\nonumber \\
	&=&\int_0^1 \dt \alpha
	\left\{
	\int_0^1 \mi ^2 \alpha V_1 V V_2 V V_3 \dt \beta +
	\int_0^1 \mi ^2 \alpha W_1 V W_2 V W_3 \dt \beta   
	\right\}
	\end{eqnarray}
	with $V_i$ and $W_j$ being unitary. 
	As before, passing to norms, using convexity of norm, multiplicativity of norm and triangle inequality, we arrive at 
	\begin{eqnarray}
	\|F''(t)\|_\infty\leq 2 \|V\|_\infty^2  \int_0^1\alpha\, \dt \alpha = \|V\|_\infty^2.
	\end{eqnarray}
\end{proof}

\begin{remark}
	Similarly one can prove that 	$\|F^{(k)}(t)\|_\infty \leq \|V\|_\infty^k$, $k\in\nn$.
\end{remark}

Now, combining Lemma \ref{lem:Fprimbis} with formula \eqref{eq:Fbound}
we obtain 
\begin{prop}
	\label{prop:UU0}
	We have 
	\begin{eqnarray}
	\|\me^{\mi(H+V)}-\me^{\mi H}\|_\infty\leq \|V\|_\infty + \frac12 \|V\|_\infty^2.
	\end{eqnarray}
\end{prop}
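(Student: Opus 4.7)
The plan is essentially to combine the two ingredients already developed immediately above the statement: the vector-valued Taylor remainder bound \eqref{eq:Fbound} (which itself followed from Proposition \ref{prop:Taylor-reminder} applied at $n=1$) and the derivative bounds of Lemma \ref{lem:Fprimbis}. Concretely, I would set $F(t)=\me^{\mi(H+tV)}$, so that $F(0)=\me^{\mi H}$ and $F(1)=\me^{\mi(H+V)}$, and note that $F$ is infinitely differentiable on $(0,1)$ and continuous on $[0,1]$ (it is a matrix exponential of a linear function of $t$), hence satisfies the hypotheses of the Taylor remainder proposition.

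From \eqref{eq:Fbound} with this choice one immediately obtains
\begin{equation}
\|\me^{\mi(H+V)}-\me^{\mi H}\|_\infty = \|F(1)-F(0)\|_\infty \leq \|F'(0)\|_\infty + \tfrac{1}{2}\max_{t\in(0,1)}\|F''(t)\|_\infty.
\end{equation}
Then I would plug in the two bounds from Lemma \ref{lem:Fprimbis}, namely $\|F'(0)\|_\infty\leq \|V\|_\infty$ and $\|F''(t)\|_\infty\leq \|V\|_\infty^2$ for all $t\in(0,1)$, to get the claimed inequality $\|V\|_\infty+\tfrac{1}{2}\|V\|_\infty^2$.

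There is no real obstacle: the whole content has been front-loaded into the preceding machinery. The only things one might want to double check en route are (i) that Proposition \ref{prop:Taylor-reminder} really applies to matrix-valued $F$ with the operator norm (which is fine, as the $d$-dimensional linear space in that proposition can be taken to be the space of matrices under $\|\cdot\|_\infty$, since $\|\cdot\|_\infty$ satisfies the triangle inequality and convexity needed to pass from the convex combination $\sum_l \lambda_l(1-t_l)^n F^{(n+1)}(t_l)/n!$ to the uniform bound), and (ii) that no hidden ordering issue with $H$ and $V$ spoils the computation — but the derivative formulas in Lemma \ref{lem:Fprimbis} were written in a non-commutative-safe form using the Duhamel expansion, so they hold regardless of whether $[H,V]=0$. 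Thus the proof is a single display equation followed by substitution.
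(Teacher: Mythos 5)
Your argument is exactly the paper's proof: define $F(t)=\me^{\mi(H+tV)}$, apply the Taylor-remainder bound \eqref{eq:Fbound}, and substitute the derivative estimates from Lemma~\ref{lem:Fprimbis}. The two sanity checks you flag (operator-valued Taylor remainder, non-commutativity handled by the Duhamel form) are both correct and consistent with how the surrounding lemmas are set up.
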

\begin{proof}
	We obtain the above equation by noting that for $F(t)=\me^{\mi(H+ tV)}$ we have
	$F(1)=\me^{\mi(H+V)}$, $F(0)=\me^{\mi H}$, and  inserting  these into \eqref{eq:Fbound} and using 
	\eqref{eq:Fprimbis}.
\end{proof}
To obtain bounds on states, we need the following well known fact (a special case of H\"older type inequalities \cite{HornJohnson}).
\begin{lemma} 
	\label{lem:AB}
	For arbitrary operators $A$,$B$ in finite dimensional Hilbert space we have 
	\begin{equation}
	\|AB\|_1 \leq \|A\|_1 \|B\|_\infty,\label{eq:AB lemma}
	\end{equation}
	where $\|\cdot\|_\infty$ denotes the infinity norm and $\|\cdot\|_1$ the one norm.
\end{lemma}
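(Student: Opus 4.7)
The plan is to reduce the H\"older-type inequality $\|AB\|_1\leq \|A\|_1 \|B\|_\infty$ to a pointwise singular-value estimate and then sum. First I would recall the standard identities $\|X\|_1=\sum_j s_j(X)$ and $\|X\|_\infty=s_1(X)$, where $s_j(X)$ denotes the $j$-th singular value of $X$ arranged in non-increasing order (in finite dimensions both are finite sums/maxima, so no convergence issue). With these in hand the entire problem reduces to establishing the pointwise bound $s_j(AB)\leq \|B\|_\infty\, s_j(A)$ for every index $j$; summing over $j$ then immediately gives $\|AB\|_1=\sum_j s_j(AB)\leq \|B\|_\infty\sum_j s_j(A)=\|B\|_\infty\|A\|_1$.

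The key singular-value bound I would prove via the operator-monotonicity of eigenvalues under the Loewner order. Since $BB^*\preceq \|B\|_\infty^2\,\id$, sandwiching between $A$ and $A^*$ yields the operator inequality $ABB^*A^*\preceq \|B\|_\infty^2\, AA^*$. Weyl's monotonicity principle (a direct consequence of the Courant-Fischer min-max characterisation $\lambda_j(P)=\max_{\dim S=j}\min_{\psi\in S,\,\|\psi\|=1}\langle\psi|P|\psi\rangle$) then gives $\lambda_j(ABB^*A^*)\leq \|B\|_\infty^2\,\lambda_j(AA^*)$. Taking square roots converts this back into singular values, producing exactly $s_j(AB)\leq \|B\|_\infty\, s_j(A)$.

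Honestly there is no real obstacle here: every step is elementary finite-dimensional linear algebra, and the finite-dimensional assumption in the statement removes any functional-analytic subtleties. The one point that deserves a line of care is the Weyl monotonicity invoked above, but this follows in a single display from the min-max formula. As a sanity check / alternative, one could equivalently invoke the duality $\|X\|_1=\sup_{U\text{ unitary}}|\tr(UX)|$ together with cyclicity of the trace: $|\tr(UAB)|=|\tr((BU)A)|\leq \|BU\|_\infty\|A\|_1=\|B\|_\infty\|A\|_1$, after which taking the supremum over $U$ concludes the proof. Both routes ultimately rest on the same fact that pre- or post-multiplication by an operator can only scale singular values by at most its operator norm.
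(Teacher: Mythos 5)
Your proof is correct. Note, however, that the paper does not actually prove Lemma~\ref{lem:AB} at all: it simply states it as a standard H\"older-type inequality and cites Horn and Johnson. So there is no ``paper approach'' to compare against, and any complete argument you give is additional content. Both of your routes are legitimate. The main route --- passing to the pointwise singular-value bound $s_j(AB)\leq \|B\|_\infty s_j(A)$ via the operator inequality $ABB^*A^*\preceq \|B\|_\infty^2 AA^*$ and Weyl's monotonicity, then summing --- is the classical proof and gives the sharper majorisation statement (in fact, weak majorisation of the singular values) from which the trace-norm bound is an immediate corollary. The alternative via $\|X\|_1=\sup_{U}|\tr(UX)|$ and cyclicity is shorter and closer to the spirit of H\"older duality, but implicitly invokes the companion inequality $|\tr(CA)|\leq \|C\|_\infty\|A\|_1$, which itself needs a one-line justification from the singular value decomposition of $A$; it is therefore not a free shortcut, just a repackaging of the same singular-value reasoning. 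Either would serve the paper equally well, where the lemma is used only to bound $\|\rho(U^\dagger-U_0^\dagger)\|_1$ and $\|(U-U_0)\rho\|_1$.
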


For the following proposition, we need to recall Eqs. \eqref{eq:2nd condition of robust embezzelment Ham 2}, \eqref{eq:2nd condition of robust embezzelment Ham}, \eqref{eq:epsig def} from the main text. We reproduce them here for convenience:

\begin{align}
\rho_{\Sy}^1\otimes\rho_{\cat}^0 &=\tr_\G\left[ \me^{- \mi\hat I_{\Sy\cat\G}^\textup{int} } (\rho_\Sy^0\otimes\rho_\cat^0\otimes\tauGibb)\, \me^{\mi\hat I_{\Sy\cat\G}^\textup{int} } \right],\label{eq:2nd condition of robust embezzelment Ham 3}\\
\sigma_{\Sy\cat}^1&:=\tr_\G\left[ \me^{- \mi\left(\hat I_{\Sy\cat\G}^\textup{int}+\delta \hat I_{\Sy\cat\G}^\textup{int}\right) } (\rho_\Sy^0\otimes\rho_\cat^0\otimes\tauGibb)\, \me^{\mi \left(\hat I_{\Sy\cat\G}^\textup{int}+\delta \hat I_{\Sy\cat\G}^\textup{int}\right) } \right],\label{eq:2nd condition of robust embezzelment Ham 1}\\
\epsig & := \|\sigma_{\Sy\cat}^1- \rho_\Sy^1\otimes\rho_\cat^0\|_1.\label{eq:epsig def 1}
\end{align}

\begin{prop}\label{prop: H Int to epsig}
	For all states $\rho^0_\Sy, \sigma_\Sy^1, \rho_\cat^0$ and Gibbs states $\tau_\G$, and for all Hermitian operators $\hat I_{\Sy\cat\G}^\textup{int}$, $\delta\hat I_{\Sy\cat\G}^\textup{int}$ satisfying Eq. \eqref{eq:2nd condition of robust embezzelment Ham 3}, the following bound on $\epsig$, defined in Eq. \eqref{eq:epsig def 1}, holds:
	\begin{align}\label{eq:ep bound app}
	\epsig \leq 2 \| \delta\hat I_{\Sy\cat\G}^\textup{int} \|_\infty+  \| \delta\hat I_{\Sy\cat\G}^\textup{int} \|_\infty^2.
	\end{align}
\end{prop}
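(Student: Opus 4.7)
The plan is to reduce the claim to Proposition~\ref{prop:UU0} (already proved in the excerpt), invoking the contractivity of the partial trace and H\"older's inequality (Lemma~\ref{lem:AB}) as the only additional tools. Let me set $H = \hat I_{\Sy\cat\G}^\textup{int}$, $V = \delta\hat I_{\Sy\cat\G}^\textup{int}$, $U = \me^{-\mi H}$, $\tilde U = \me^{-\mi(H+V)}$, and $\rho = \rho_\Sy^0\otimes\rho_\cat^0\otimes\tauGibb$. Then, by the defining Eqs.~\eqref{eq:2nd condition of robust embezzelment Ham 3} and~\eqref{eq:2nd condition of robust embezzelment Ham 1}, we have $\rho_\Sy^1\otimes\rho_\cat^0 = \tr_\G[U\rho U^\dag]$ and $\sigma_{\Sy\cat}^1 = \tr_\G[\tilde U \rho \tilde U^\dag]$, so
\begin{equation}
\epsig = \|\tr_\G[\tilde U\rho\tilde U^\dag] - \tr_\G[U\rho U^\dag]\|_1 \leq \|\tilde U\rho\tilde U^\dag - U\rho U^\dag\|_1,\nonumber
\end{equation}
where the inequality is just monotonicity of the trace norm under the CPTP map $\tr_\G$.

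Next, I would split the difference by adding and subtracting $U\rho\tilde U^\dag$ (or equivalently $\tilde U\rho U^\dag$), writing
\begin{equation}
\tilde U\rho\tilde U^\dag - U\rho U^\dag = (\tilde U - U)\rho\tilde U^\dag + U\rho(\tilde U^\dag - U^\dag),\nonumber
\end{equation}
apply the triangle inequality, and then apply Lemma~\ref{lem:AB} (together with the fact that $\|\rho\|_1=1$ and the unitaries $U,\tilde U$ have unit operator norm) to each summand to obtain
\begin{equation}
\epsig \leq \|\tilde U - U\|_\infty + \|\tilde U^\dag - U^\dag\|_\infty = 2\|\tilde U - U\|_\infty,\nonumber
\end{equation}
where the last equality uses $\|A^\dag\|_\infty = \|A\|_\infty$.

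Finally, I would invoke Proposition~\ref{prop:UU0} with the replacement $H\to -H$, $V\to -V$ (or simply note that the proof of Proposition~\ref{prop:UU0} is insensitive to an overall sign in the exponent, since it only uses unitarity of the factors $e^{\alpha\mi(\cdot)}$ and convexity/multiplicativity of the operator norm). This yields $\|\tilde U - U\|_\infty \leq \|V\|_\infty + \tfrac12\|V\|_\infty^2$, and substituting back gives
\begin{equation}
\epsig \leq 2\|\delta\hat I_{\Sy\cat\G}^\textup{int}\|_\infty + \|\delta\hat I_{\Sy\cat\G}^\textup{int}\|_\infty^2,\nonumber
\end{equation}
as claimed. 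There is no serious obstacle here: every ingredient has already been prepared in the preceding lemmas, and the only small care required is the sign flip between Proposition~\ref{prop:UU0}'s $e^{+\mi(\cdot)}$ convention and the $e^{-\mi(\cdot)}$ appearing in the definitions of $\sigma_{\Sy\cat}^1$ and $\hat I_{\Sy\cat\G}^\textup{int}$, which is immaterial since the relevant bound on the operator norm is symmetric in the sign of $H$ and $V$.
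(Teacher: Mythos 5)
Your proof is correct and follows essentially the same route as the paper's: monotonicity of the trace norm under the partial trace, a splitting of $\tilde U\rho\tilde U^\dag - U\rho U^\dag$ via a cross term, the triangle inequality plus Lemma~\ref{lem:AB} and unitarity to reach $2\|\tilde U - U\|_\infty$, and then Proposition~\ref{prop:UU0}. The only differences are cosmetic (the paper inserts the cross term $U_{\rm pert}\rho U_{\rm unpert}^\dag$ and applies the partial-trace monotonicity at the end rather than the start; the sign-flip $H\to -H$, $V\to -V$ you mention is indeed harmless since Prop.~\ref{prop:UU0}'s bound only involves $\|V\|_\infty$).
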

\begin{proof} Let $U=\me^{\mi (H+V)}$ and $U_0=\me^{\mi H}$. Then for any state $\rho$ we have
	\begin{eqnarray}
	&&\|	U\rho U^\dagger - U_0 \rho U_0^\dagger\|_1=
	\|	U\rho U^\dagger -U \rho U_0^\dagger + U \rho U_0^\dagger - U_0 \rho U_0^\dagger\|_1\leq 
	\|	U\rho U^\dagger -U \rho U_0^\dagger\|_1 + \|U \rho U_0^\dagger - U_0 \rho U_0^\dagger\|_1\leq \nonumber \\
	&&\leq \|\rho(U^\dagger - U_0^\dagger)\|_1+\|(U-U_0)\rho\|_1
	\leq \|\rho\|_1 \|U^\dagger -U_0^\dagger\|_\infty +\|\rho\|_1 \|U-U_0\|_\infty =2\|U-U_0\|_\infty.
	\end{eqnarray}
	We have here used triangle inequality for first inequality, invariance of trace norm under unitaries for the second one, and Eq. \eqref{eq:AB lemma} for the third one. Next, using Prop. \ref{prop:UU0} we 
	obtain the needed relation
	\begin{equation}
	\| U\rho U^\dagger -U_0 \rho U_0^\dagger\|_1\leq 2\|V\|_\infty + \|V\|_\infty^2.
	\end{equation}
	Now in the above equation, set $\rho=\rho_\Sy^0\otimes\rho_\cat^0\otimes\tau_\G$ and $U_0=\me^{- \mi \hatf(t) \hat I_{\Sy\cat\G}^\textup{int} }$, $U=\me^{- \mi \left(\hat I_{\Sy\cat\G}^\textup{int}+ \delta\hat I_{\Sy\cat\G}^\textup{int}\right) }$. The bound Eq. \eqref{eq:ep bound app} now follows by applying the data processing inequality.
\end{proof}

\section{Proof of Theorem \ref{thm:4}}
\label{subsec:deltaxy}

Since the proof of Theorem \ref{thm:4} is similar to that of Theorem \ref{Thm:Implementation with Quasi-Idela clock}, we have relegated it to this \supp. For its proof, we need the following proposition. Furthermore, the following proposition will require calculations which are also performed in the proof of Proposition \ref{prop:clock-error}
We will refer the reader to the proof of 
Proposition \ref{prop:clock-error}
and not replicate them here.  
\begin{prop}\label{prop:intermedia to proof of thm 4} 
	Consider the potential $\hat H_\cl^\textup{int}=\hat V_d$ of Eq. \eqref{eq:potential-op} and Quasi-Ideal clock described in Subsection \ref{Overview of the quasi-ideal clock}. Set $k_0=0$. Then 
	for $t\in[0,t_1]\cup[t_2,T_0]$, and 
	provided that  $|\Delta(t;\omega,\omega')|\leq 2$, $\tilde\epsilon_V\leq1$ and $\varepsilon_{\nu}\leq 1$, we have 	
	\begin{align}
		\max_{\omega,\omega'\in[-\pi,\pi]}|1-\Delta(t;\omega,\omega')^2|\leq 
		12(\varepsilon_{LR}+\varepsilon_{\nu}+12 \tilde\epsilon_V)
	\end{align}
	where $\tilde\epsilon_V$ is given by Eq. \eqref{eq:potential-tail-1}, $\varepsilon_{LR}$ by Eq. \eqref{eq:epsilon-LR}  and $\varepsilon_\nu=\varepsilon_\nu(t,d_\cl)$ by Eq. \eqref{eq:epsilon-nu-epsilon-c}.
\end{prop}

\begin{proof}
	Let us repeat here useful definitions 
	from Eqs. \eqref{eq:clock plus Ham def}:
	\begin{align}
		\Delta(t;\omega,\omega')&:= \braket{\rho_\cl^0|\hat\Gamma_\cl^\dag(\omega,t)\hat \Gamma_\cl(\omega',t)|\rho_{\cl}^0},\\
		\hat\Gamma_\cl(\omega,t)&:= \me^{-\mi t\hat H_\cl+ \mi \omega \left(\hatf(t) \id_\cl- t\hat H_\cl^\textup{int}\right)}, \quad \omega,t\in\rr.\label{eq:clock plus Ham def2}
	\end{align} 
	Recall that $\theta(t)$ is defined as 
	\begin{align}\label{eq:def target S Cat G2}
		\hatf(t)=
		\begin{cases}
			0 &\mbox{ for }  t \in[0, t_1]\\
			1 &\mbox{ for }  t\in[t_2,T_0].
		\end{cases}
	\end{align}
	The  initial state of the clock is the 
	state of Eq. \eqref{gaussianclock}
	\begin{align}
		\ket{\rho_\cl^0} = 
		\ket{\Psi_\textup{nor}(k_0)}.
	\end{align}
	we then have \begin{align}
		\Delta(t;\omega,\omega')=
		\me^{-\mi(\omega'-\omega) \hatf(t)}
		\bra{\Psi_\textup{nor}(k_0)}
		\me^{-\mi t( \omega \hat V_d+\hat H_\cl)}
		\me^{\mi t( \omega' \hat V_d+\hat H_\cl)}
		\ket{\Psi_\textup{nor}(k_0)}.
	\end{align}
	Using \eqref{eq:non free evollution of Quasi ideal} we write
	\begin{align}
		\label{eq:Delta-ooprim}
		\Delta(t;\omega,\omega')=
		\me^{-\mi(\omega'-\omega) \hatf(t) }\left(
		\braket{\psi_\omega|\psi_{\omega'}}+
		\braket{\psi_\omega|\varepsilon_{\nu}}
		+\braket{\varepsilon_{\nu}|\psi_{\omega'}}+ \braket{\varepsilon_{\nu}|\varepsilon_{\nu}}\right)
	\end{align}
	with
	\begin{align}
		\ket{\psi_\omega}=\ket{\bar\Psi_\textup{nor}(k_0+td/T_0, t d/T_0)}
	\end{align}
	where in definition of  
	$\ket{\bar\Psi_\textup{nor}(k_0+td/T_0, t d/T_0)}$ we take $\omega$ in place of $\Omega_n$. 
	Due to \eqref{eq:epsilon-nu-epsilon-c}
	the last three terms in \eqref{eq:Delta-ooprim} are small, hence we need to consider
	\begin{align}
		\tilde\Delta:= \me^{-\mi(\omega'-\omega) \hatf(t) }\braket{\psi_\omega|\psi_{\omega'}}=
		\me^{-\mi(\omega'-\omega) \hatf(t)}
		\sum_{k:|k- td/T_0|\leq  d/2}  \me^{-\mi (\omega-\omega') \int_{k}^{k+td/T_0} dy { V_d}(y)}\left| \psi_\textup{nor}(td/T_0;k)   \right|^2,
	\end{align}
	where we have set $k_0=0$ as in the proof of Prop. \ref{prop:clock-error}. 
	As in \eqref{eq:split-delta} we write 
	\begin{align}
		\label{eq:Delta-tilde-LCR}
		\tilde\Delta =
		\tilde\Delta_C + \tilde\Delta_{LR}.
	\end{align}
	and as in \eqref{eq:tail-for-Delta} we obtain 
	\begin{align}
		|\tilde \Delta_{LR}|\leq 2 \varepsilon_{LR}
	\end{align}
	So we have to estimate 
	\begin{align} \label{eq:Delta C def2}
		\tilde\Delta_C:= \me^{-\mi(\omega'-\omega)\hatf(t) } \sum_{k: |k- td/T_0|\leq \psidelta d }  \me^{-\mi (\omega-\omega') \int_{k}^{k+td/T_0} dy { V_d}(y)}\left| \psi_\textup{nor}(td/T_0;k)   \right|^2.
	\end{align}
	As in proof of Proposition \ref{prop:clock-error} 
	we  introduce real numbers $\phi_k$, $a_k$ given by 
	\begin{equation}
		\phi_k = 
		\begin{cases}
			(\omega-\omega') \int_{k}^{k+td/T_0} dy { V_d}(y) &\mbox{ for }  t \in[0, t_1]\\
			(\omega-\omega') \left( 1- \int_{k}^{k+td/T_0} dy { V_d}(y) \right) &\mbox{ for }  t\in[t_2,T_0].
		\end{cases}
	\end{equation} 
	and 
	\begin{align}
		a_k=\left| \psi_\textup{nor}(td/T_0;k)   \right|^2,
	\end{align}
	so that 
	\begin{align}
		\tilde\Delta_C = \sum_{k:|k-td/T_0|\leq \psidelta d} a_k e^{-i \phi_k}.
	\end{align}
	Since the state \eqref{gaussianclock} was normalized, and due to estimates \eqref{eq:tail-for-Delta} we know that 
	\begin{align}
		\label{eq:akbound}
		1-2\varepsilon_{LR} \leq  
		\sum_{{k:|k-td/T_0|\leq \psidelta d}} a_k \leq 1. 
	\end{align}
	Similarly as in Prop. \ref{prop:clock-error}
	we obtain that  for 
	$k$ satisfying $|k-t d/T_0|\leq \psidelta d$ we have 
	\begin{align}
		&\int_{k}^{k+td/T_0} dy { V_d}(y)\leq \tilde\epsilon_V \quad \text{for}\quad t\in[0,t_1]
		\nonumber \\
		& 1- 
		\int_{k}^{k+td/T_0} dy { V_d}(y)\leq \tilde\epsilon_V \quad \text{for}\quad t\in[t_2,T_0].
	\end{align}
	hence, since $\omega,\omega'\in[-\pi,\pi]$, we have
	\begin{align}
		\label{eq:phikbound}
		|\phi_k|\leq 2 \pi \tilde\epsilon_V
	\end{align}
	Thus, using $\cos x \geq 1-x^2$ and $|\sin x| \leq |x|$ as well as plugging in estimates \eqref{eq:akbound},\eqref{eq:phikbound} we obtain bounds for real and imaginary parts of $\tilde \Delta_C$:
	\begin{equation}
		\label{eq:bound-Re-Im}
		\Re (\tilde\Delta_C) \geq (1- \varepsilon_{LR})(1- 4\pi^2 \tilde\epsilon_V)
		\geq 1 -\varepsilon_{LR}- 4\pi^2 \tilde\epsilon_V,
		\quad |\Im (\tilde \Delta_C)|
		\leq 2 \pi \tilde\epsilon_V.
	\end{equation}
	We are now in position to deal with 
	$\Delta(t;\omega,\omega')$. Using \eqref{eq:Delta-ooprim} 
	and \eqref{eq:Delta-tilde-LCR} we obtain 
	\begin{equation}
		\Delta(t;\omega,\omega')=
		\Re(\tilde\Delta_C) + z,
	\end{equation}
	where 
	\begin{equation}
		z= i \Im(\Delta_c) + \tilde\Delta_L+ \tilde\Delta_R+
		\me^{-\mi(\omega'-\omega) \hatf(t) }\biggl(
		\braket{\psi_\omega|\varepsilon_{\nu}}
		+\braket{\varepsilon_{\nu}|\psi_{\omega'}}+ \braket{\varepsilon_{\nu}|\varepsilon_{\nu}}\biggr).
	\end{equation}
	Since $\psi_\omega,\psi_{\omega'}$ are normalized, 
	we have 
	\begin{equation}
		\left|\me^{-\mi(\omega'-\omega) \hatf(t) }\biggl(
		\braket{\psi_\omega|\varepsilon_{\nu}}
		+\braket{\varepsilon_{\nu}|\psi_{\omega'}}+ \braket{\varepsilon_{\nu}|\varepsilon_{\nu}}\biggr)\right|
		\leq 2 \varepsilon_{\nu} +  |\varepsilon_{\nu}|^2
	\end{equation}
	so that 
	\begin{align}
		\label{eq:bound-z}
		|z|\leq 2 \pi \tilde\epsilon_V + 2 \varepsilon_{LR}  + 
		2 \varepsilon_{\nu} +  |\varepsilon_{\nu}|^2.
	\end{align}
	Now, using assumption that $\Delta(t;\omega,\omega')\leq 2$ and estimates \eqref{eq:bound-Re-Im} and \eqref{eq:bound-z} we get
	\begin{equation}
		|1-\Delta(t;\omega,\omega')^2|\leq 
		3|1-\Delta(t;\omega,\omega')|\leq 3(|1-\Re(\tilde\Delta_C)|+|z|) 
		\leq  3\left(2 \varepsilon_{LR}+4 \pi^2 \tilde\epsilon_V^2 + 2 \pi \tilde\epsilon_V + 2 \varepsilon_{LR}  + 
		2 \varepsilon_{\nu} +  |\varepsilon_{\nu}|^2\right)
	\end{equation}
	Using the assumption that $\tilde\epsilon_V\leq1$ and $\varepsilon_{\nu}\leq 1$ we obtain
	\begin{equation}
		|1-\Delta(t;\omega,\omega')^2|\leq 12(\varepsilon_{LR}+\varepsilon_{\nu}+12 \tilde\epsilon_V).
	\end{equation}
	We have thus obtained a bound that is independent of the choice of $\omega$ and $\omega'$; this proves the required estimate. 
\end{proof}
\

We can now prove Theorem \ref{thm:4}: 
\begin{proof}
	Due to Prop. \ref{prop:intermedia to proof of thm 4} 
	we have to provide an upper bound for 
	\begin{align}
		\label{eq:eps}
		12(\varepsilon_{LR}+\varepsilon_{\nu}+12 \tilde\epsilon_V).
	\end{align}
	The appropriate bounds for the above epsilons 
	are given by Eqs. \eqref{eq:epsilons-bounds}
	hence the needed scaling follows
\end{proof}


\end{document}